\newcommand{\noun}[1]{\textsc{#1}}
\numberwithin{equation}{section}
\numberwithin{figure}{section}
\theoremstyle{plain}
\newtheorem{thm}{\protect\theoremname}
  \theoremstyle{plain}
  \newtheorem{prop}[thm]{\protect\propositionname}
  \theoremstyle{plain}
  \newtheorem{lem}[thm]{\protect\lemmaname}
  \theoremstyle{remark}
  \newtheorem{rem}[thm]{\protect\remarkname}
  \theoremstyle{plain}
  \newtheorem*{fact*}{\protect\factname}
  \theoremstyle{plain}
  \newtheorem{cor}[thm]{\protect\corollaryname}
  \providecommand{\corollaryname}{Corollary}
  \providecommand{\factname}{Fact}
  \providecommand{\lemmaname}{Lemma}
  \providecommand{\propositionname}{Proposition}
  \providecommand{\remarkname}{Remark}
\providecommand{\theoremname}{Theorem}
\begin{document}

\title{A renormalization group method by harmonic extensions and the classical
dipole gas}

\author{\Large{Hao Shen} \footnote{Email: pkushenhao@gmail.com}
\\
\normalsize{Princeton University} }

\maketitle
\begin{abstract}
In this paper we develop a new renormalization group method, which
is based on conditional expectations and harmonic extensions, to study
functional integrals of small perturbations of Gaussian
fields. In this new method one integrates Gaussian fields inside domains
at all scales conditioning on the fields outside these domains, and
by variation principle solves local elliptic problems. It does not
rely on an a priori decomposition of the Gaussian covariance. We apply
this method to the model of classical dipole gas on the lattice, and
show that the scaling limit of the generating function with smooth
test functions is the generating function of the renormalized Gaussian
free field.
\end{abstract}

\section{Introduction}

In this paper we develop a renormalization group (RG) method to estimate
functional integrals, based on ideas of conditional expectations and
harmonic extensions. We demonstrate this method with the model of
classical dipole gas, which has always been considered as a simple
model to start with for this type of problems. For the classical dipole
model, earlier important works are \cite{frohlich_correlation_1978,frohlich_statistical_1981}.
The renormalization group approach to this model originated from the
works by Gawedzki and Kupiainen \cite{gawedzki_rigorous_1980,gawedzki_block_1983},
based on Kadanoff spin blockings. 
A different method that uses the idea of decomposition of the covariance of
the Gaussian field was initiated from \cite{brydges_grad_1990},
and was simplified and pedagogically presented in the lecture notes
\cite{brydges_lectures_2007}, see also \cite{dimock_infinite_2009}.
The latter method has achieved several important applications in other
problems such as the two-dimensional Coulomb gas model \cite{Dimock_SineGordon,falco_kosterlitz_2012,falco_Critical_2013},
$\phi^{4}$-type field theories \cite{brydges_short_1995,brydges_nonGaussian_1998,brydges_critical_2003,Brydges_2014-8}
and self-avoiding walks \cite{Brydges_functional_2009,brydges_renormalization_2010,bauerschmidt_structural} (see also the recent works \cite{Brydges_2014-1,Brydges_2014-2,Brydges_2014-3,Brydges_2014-4,Brydges_2014-5,Brydges_2014-6}).
The $\phi^{4}$ field theory problems are also studied in the p-adics setting 
by \cite{abdesselam} which yields some strong consequences.

Our method is different from the above two methods, and may be as
well regarded as a variation of the method by Brydges et al. Their
decomposition of covariance scheme, which was also used by other people
such as \cite{gallavotti_1985}, could be implemented by Fourier analysis.
In \cite{brydges_finite_2004}, a decomposition of Gaussian covariance
with every piece of covariance having finite range was constructed
using elliptic partial differential equation techniques, which also
depends to some extent on Fourier analysis, and this decomposition
is the foundation of the simplified version of their RG method (see
also \cite{brydges_finite_2006,bauerschmidt_simple_2012,adams_2013}
for alternative constructions of such decompositions). We do not perform
such a decomposition of covariance. Instead we directly take harmonic
extensions as our basic scheme and use the Poisson kernel to smooth
the Gaussian field. We do not need Fourier analysis; instead, real
space decay rates of Poisson kernels and (derivatives of) Green's
functions are essential. Some complexities in \cite{brydges_finite_2004}
such as proof of elliptic regularity theorem on lattice are avoided.
Many elements of this method such as the polymer expansions and so
on are very close to the method by Brydges et al, especially to \cite{brydges_lectures_2007},
while we also have many new features, such as simpler norms and regulators.
We keep notations as close as possible to \cite{brydges_lectures_2007}
for convenience of the readers who are familiar with \cite{brydges_lectures_2007}. 

Very roughly speaking, our method is aimed to study functional integrals
of the form
\[
Z=\mathbb{E}\big[e^{V(\phi)}\big]
\]
where $\phi$ is a Gaussian field and $\mathbb{E}$ is an expectation
with respect to a Gaussian measure. Similarly with \cite{brydges_lectures_2007}
we will rewrite the integrand into a local expansion over subsets $X$
of an explicit part and an implicit remainder. 
For instance in the
model considered in this paper, the above quantity $Z$ will be rewritten 
into an expression of roughly the following form (more  precisely, see Proposition~\ref{prop:firstprop})
\[
\mathbb{E}\Big[\sum_{X}
	e^{\sigma\sum_{x\notin X}\left(\partial\phi(x)\right)^{2}}
	K(X,\phi)\Big]
\]
where $K(X,\phi)$ depends only on $\phi(x)$ with $x$ in (a neighborhood of) $X$. We will then take
a family of conditional expectations at a sequence of scales parametrized
by integer $j$ - so our approach is a multi-scale analysis.
To give a quick glance of the main idea, 
at a scale $j$ we will have expressions, which up to several subtleties look
 as follows:
\[
\mathbb{E}\Big[
	\sum_{Y}e^{\sigma_{j}\sum_{x\notin Y}
	\mathbb{E} \,\left[\,\partial\phi(x)\,|\,B_{x}^{c}\,\right]^{2}}
	\,\mathbb{E}
	\big[K_{j}^{\prime}(Y,\phi)|Y^{c}\big]
	\Big] \;.
\]
The actual expressions will be slightly different and more complicated and we refer to 
Section~\ref{subsec:outline} for the exact expressions,
but at this stage we point out that some conditional expectations have appeared 
inside the overall expectation.
Indeed, 
for any function of the
field $F(\phi)$, the notation
$\mathbb{E}\left[F(\phi)|X^{c}\right]$  means integrating \emph{all} the variables $\{\phi(x):x\in X\}$
with $\{\phi(x):x\in X^{c}\}$ \emph{fixed} ($X^c$ is the complement of $X$). 
Also, $B_{x}$ is a block
containing $x$, and
 $\sigma_{j}$ is the
most important dynamical parameter (which corresponds to renormalization
of the dielectric constant in the dipole model). 
This idea of conditional expectation is close to
Frohlich and Spencer's work on Kosterlitz-Thouless transition \cite{frohlich_kosterlitzPRL_1981,frohlich_kosterlitzCMP_1981}
where the authors take inside an expectation conditional integrations,
each over \emph{all} variables $\{\phi(x):x\in\Omega\}$ where $\Omega$
is a bounded region around a charge density $\rho$ with diameter
$\sim2^{j}$
at a scale $j$. 

Such conditional expectations can be carried out by minimizing the
quadratic form in the Gaussian measure with conditioning variables
fixed. Since the Gaussian is associated to a Laplacian these minimizers
are harmonic extensions of $\phi$ from $X^{c}$ into $X$. These
harmonic extensions result in smoother dependence of the integrand
of the expectation on the field. Some elliptic PDE methods along with
random walk estimates will be used. We remark that this variational
viewpoint also shows up in Balaban's RG method (see for instance \cite{balaban_1983}
or Section 2.2 - 2.3 of \cite{dimock_renormalization-1}). 

~\\

\noun{Acknowledgement: }I would like to thank my advisor Weinan E
who has been supporting my work on renormalization group methods over
years and giving me many good suggestions. I am very grateful for
the kind hospitality of David Brydges during my visits to University
of British Columbia, as well as a lot of encouragement and helpful
conversations by him from the stage of shaping the basic ideas of this paper to that of the final proofreading. I also appreciate mumerous discussions with
Stefan Adams, Arnulf Jentzen, and especially Roland Bauerschmidt.

\section{Outline of the paper} \label{sec:Outline}

\subsection{Settings, notations and conventions\label{sub:Conventions-about-notations}}

Let $\mathbb{Z}^{d}$ be the $d$ dimensional lattice with $d\geq2$.
Denote the sets of lattice directions as $\mathcal{E}_{+}=\{e_{1},...,e_{d}\}$
and $\mathcal{E}_{-}=\{-e_{1},...,-e_{d}\}$ where $e_{k}=(0,\cdots,1,\cdots,0)$
with only the k-th element being $1$. Let $\mathcal{E}=\mathcal{E}_{+}\cup\mathcal{E}_{-}$.
For $e\in\mathcal{E}$, $\partial_{e}f(x)=f(x+e)-f(x)$ is the lattice
derivative. For $x,y\in\mathbb{Z}^{d}$, we say that $(x,y)$ is a
nearest neighbor pair and write $x\sim y$ if there exists an $e\in\mathcal{E}$
such that $x=y+e$. Denote $E(\mathbb{Z}^{d})$ to be the set of all
nearest neighbor pairs of $\mathbb{Z}^{d}$. For $X\subset\mathbb{Z}^{d}$,
we define $E(X):=\{(x,y)\in E(\mathbb{Z}^{d}):x,y\in X\}$.

Let $L$ be a positive odd integer, and $N\in\mathbb{N}$. Let 
\[
\Lambda=[-L^{N}/2,L^{N}/2]^{d}\cap\mathbb{Z}^{d} \;,
\]
and we will consider functions on $\Lambda$ with periodic boundary
condition. In other words we view $\Lambda$ as a torus by identifying
the boundary points of $\Lambda$ in the usual way.

For $x,y\in\Lambda$, define $d(x,y)$ to be the length of a shortest
path of nearest neighbor sites in the torus $\Lambda$ connecting
$x$ and $y$. Also define $\partial X$ to be the ``outer boundary'':
$\partial X=\{x\in\Lambda:d(x,X)=1\}$. Write $X^{c}$ to be
the complement of $X$. 

For a function $\phi$ on $\mathbb{Z}^{d}$, when it doesn't cause
confusions, we write for short 
\[
\sum_{X}(\partial\phi)^{2}=\sum_{x\in X}(\partial\phi(x))^{2}:=\frac{1}{2}\sum_{x\in X}\sum_{e\in\mathcal{E}}(\partial_{e}\phi(x))^{2}
\]
and similarly for other such type of summations. If $\mathbb{E}$
is the expectation over $\phi$, we will use a short-hand notation
for conditional expectation
\[
\mathbb{E}\left[-\big|X\right]:=\mathbb{E}\left[-\big|\{\phi(x)\big|x\in X\}\right] \;,
\]
namely, the expectation with $\phi|_{X}$ fixed.

Unless we specify otherwise, Poisson kernels and Green's functions
will be associated with the operator $-\Delta+m^{2}$ where $m$ is a
small mass regularization. For any set $X$, $P_{X}$ or $P_{X}(x,y)$
($x\in X$, $y\in\partial X$) is the Poisson kernel for $X$. If
$x\notin X$ then $P_{X}f(x)=f(x)$ is always understood. In other
words, $P_{X}f$ is the harmonic extension of $f$ from $X^{c}$ into
$X$ with $f\big|_{X^{c}}$ unchanged.

\subsection{The dipole gas model and the scaling limit\label{sub:Definition-of-model}}

Let $\mu$ be the Gaussian measure on the space of functions $\{\phi(x):x\in\Lambda\}$
with mean zero and covariance $C_{m}=(-\Delta+m^{2})^{-1}$ where
$m>0$. In other words, $\phi$ is the Gaussian free field on the
$\Lambda$ with covariance $C_{m}$. Let $\mathbb{E}$ be the expectation
over $\phi$. Then the classical dipole gas model is defined by the
following measure:
\[
\nu(\phi)=e^{zW(\phi)}\mu(\phi)
\]
where 
\[
W(\phi):=\sum_{x\in\Lambda}\sum_{\substack{e\in\mathcal{E}}
}\cos\left(\sqrt{\beta}\partial_{e}\phi(x)\right) \;.
\]

Such a measure is obtained by a definition of the model via
the great canonical ensemble followed by a Sine-Gordon transformation,
for instance, see \cite{brydges_grad_1990}.

We would like to study the problem of scaling limit. More precisely,
let $\tilde{\Lambda}:=[-\frac{1}{2},\frac{1}{2}]^{d}\subset\mathbb{R}^{d}$.
Given a mean zero function $\tilde{f}\in C^{\infty}(\tilde{\Lambda})$,
$\int_{\tilde{\Lambda}} \tilde f=0$ with periodic boundary condition, we
study the (real) generating function 
\begin{equation} \label{eq:scalinglimit-2}
Z_{N}(f):=
\lim_{m\rightarrow0}
\frac{
	\mathbb{E} \big[e^{\sum_{x\in\Lambda}f(x)\phi(x)}e^{zW(\phi)}\big]
}{
	\mathbb{E}\big[e^{zW(\phi)}\big]}
\end{equation}
where 
\[
f(x)=f_{N}(x):=L^{-(d+2)N/2}\tilde{f}(L^{-N}x)  \;.
\]
The main question is the scaling limit of $Z_{N}(f)$ as $N\rightarrow\infty$.

\subsection{Some preparative steps before RG\label{sub:Tuning}}

As the start of our strategy to study this problem, we perform an
a priori tuning of the Gaussian measure. 
This tuning anticipates the fact that the best Gaussian approximation  to $\nu$
is not the Gaussian measure currently defined on $\phi$.
For any $X\subseteq \Lambda$ define
\begin{equation} \label{eq:def_V}
V(X,\phi):=\frac{1}{4}
	\sum_{\substack{x\in X,e\in\mathcal{E}}
}\left(\partial_{e}\phi(x)\right)^{2} \;.
\end{equation}
The tuning is to split part of the quadratic form of the Gaussian
measure into the integrand, so that the resulting Gaussian field has
covariance $[\epsilon(-\Delta+m^{2})]^{-1}$, with the associated
expectation called $\mathbb{E}^{\epsilon}$: 
\begin{equation} \label{eq:tuned}
Z_{N}(f)  =
\lim_{m\rightarrow0}
\frac{\mathbb{E}^{\epsilon}\big[e^{\sum_{x\in\Lambda}f(x)\phi(x)}e^{(\epsilon-1)V(\Lambda,\phi)+zW(\Lambda,\phi)}\big]}
{\mathbb{E}^{\epsilon}
	\big[e^{(\epsilon-1)V(\Lambda,\phi)+zW(\Lambda,\phi)}\big]} \;.
\end{equation}
Note that normalization factors caused by re-definition of Gaussian:
\[
\mathbb{E}^{\epsilon}\left[\exp\left((\epsilon-1)V(\Lambda,\phi)\right)\right]
\]
appear in both numerator and denominator and are thus cancelled. 

We would like to make the expectation (and thus the RG maps which
we will define later) independent of $\epsilon$. So we rescale $\phi\rightarrow\phi/\sqrt{\epsilon}$
and let $\sigma=\epsilon^{-1}-1$ and obtain 
\begin{equation} \label{eq:ZprimebyZ2primes}
Z_{N}(f)  =\lim_{m\rightarrow0}
\frac{
	\mathbb{E} \big[
		e^{\sum_{x\in\Lambda}f(x)\phi(x)/\sqrt{\epsilon}} 
		\cdot e^{-\sigma V(\phi)+zW(\sqrt{1+\sigma}\phi)}
	\big]
}{
	\mathbb{E}\big[e^{-\sigma V(\phi)+zW(\sqrt{1+\sigma}\phi)}\big]} \;.
\end{equation}

We also shift the Gaussian field to get rid of the linear term $\sum f\phi/\sqrt{\epsilon}$. Write
$-\Delta_{m}=-\Delta+m^{2}$ and make a translation 
$\phi\rightarrow\phi+\xi_m$
where $\xi_m=(-\sqrt{\epsilon}\Delta_{m})^{-1}f$ in the numerator in
(\ref{eq:ZprimebyZ2primes}). 
Since the function $\xi_m$ appears frequently below,
we will simply write $\xi=\xi_m$ without explicitly referring to its dependence on $m$.
Then, one has
\begin{equation} \label{eq:generating_good}
Z_{N}(f)=\lim_{m\rightarrow0}e^{\frac{1}{2}\sum_{x\in\Lambda}f(x)(-\epsilon\Delta_{m})^{-1}f(x)}Z_{N}^{\prime}(\xi)\big/Z_{N}^{\prime}(0)
\end{equation}
where 
\begin{equation} \label{eq:Zprime}
Z_{N}^{\prime}(\xi)=\mathbb{E}\left[e^{-\sigma V(\Lambda,\phi+\xi)+zW((\phi+\xi)/\sqrt{\epsilon})}\right] \;.
\end{equation}

Let $-\tilde{\Delta}_{m}=-\tilde{\Delta}+m^{2}$, where $\tilde{\Delta}$
is the Laplacian acting on the functions on $\tilde\Lambda$, 
and $\tilde{C}_{m}:=(-\tilde{\Delta}_{m})^{-1}$
and $\tilde{\xi}_m:=(-\sqrt{\epsilon}\tilde{\Delta}_{m})^{-1}\tilde{f}$.
We can verify that 
\[
L^{2N}\tilde C_{L^{N}m}(L^{-N}x)=C_{m}(x)
\qquad
\mbox{and} 
\qquad
L^{-\frac{d-2}{2}N}\tilde \xi_{L^N m}(L^{-N}x)=\xi_m (x) \;.
\] 
Let $q<\frac{d}{d-1}$ and define
\[
R:=\sup_{m>0}\max \Big(\Vert\tilde{C}_{m}\Vert_{L^{q}},\Vert\partial\tilde{C}_{m}\Vert_{L^{q}} \Big)
\]
Note that $R<\infty$ since the worst local singularity is $O(|x|^{1-d})$ 
which is $L^q$ integrable for any $q<\frac{d}{d-1}$.
We will assume that $\Vert\tilde{f}\Vert_{L^{p}}\leq h/R$ ($p>d$),
for a constant $h$ to be specified later, so that for $\alpha=0,1$
\begin{equation}
\left\Vert \partial^{\alpha}\xi\right\Vert _{L^{\infty}}\leq hL^{-(\frac{d-2}{2}+\alpha)N}\label{eq:smallness_xi}
\end{equation}
by Young's inequality. 

Before the RG steps, we write both $Z_{N}^{\prime}(\xi)$ and $Z_{N}^{\prime}(0)$
into a form of ``polymer expansion''. For any set $X\subseteq\Lambda$,
write
\[
W(X,\phi):=\sum_{x\in X}\sum_{\substack{e\in\mathcal{E}}
}\cos\left(\sqrt{\beta}\partial_{e}\phi(x)\right) \;.
\]

\begin{prop} \label{prop:firstprop}
With $W$ defined above and $Z_{N}^{\prime}(\xi)$ given by (\ref{eq:Zprime}), we have 
\begin{equation}
Z_{N}^{\prime}(\xi)=\mathbb{E}\bigg[\sum_{X\subseteq\Lambda}
	I_0(\Lambda\backslash X,\phi+\xi)K_0(X,\phi+\xi)\bigg]\label{eq:Mayer}
\end{equation}
where $I_0(X)=\prod_{x\in X}I_0(\{x\})$ and 
\[
I_0(\{x\},\phi+\xi)=e^{-\frac{1}{4}\sigma\sum_{e\in\mathcal{E}}(\partial_{e}\phi(x)+\partial_{e}\xi(x))^{2}} \;,
\]
\[ 
K_0(X,\phi+\xi)=\prod_{x\in X}e^{-\frac{1}{4}\sigma\sum_{e\in\mathcal{E}}(\partial_{e}\phi(x)+\partial_{e}\xi(x))^{2}}\left(e^{zW\left(\{x\},(\phi+\xi)/\sqrt{\epsilon}\right)}-1\right) \;.
\] 
\end{prop}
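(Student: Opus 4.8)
The plan is to start from the definition (\ref{eq:Zprime}) and unfold the two exponential factors $e^{-\sigma V(\Lambda,\phi+\xi)}$ and $e^{zW((\phi+\xi)/\sqrt{\epsilon})}$ into products over single sites of $\Lambda$, then apply a Mayer-type expansion to the $W$-factor only. First I would use the fact that $V(\Lambda,\phi+\xi)=\frac14\sum_{x\in\Lambda}\sum_{e\in\mathcal{E}}(\partial_e\phi(x)+\partial_e\xi(x))^2$ and $W(\Lambda,\psi)=\sum_{x\in\Lambda}\sum_{e\in\mathcal{E}}\cos(\sqrt\beta\,\partial_e\psi(x))$ are both sums over $x\in\Lambda$, so the integrand factorizes as
\[
e^{-\sigma V(\Lambda,\phi+\xi)+zW((\phi+\xi)/\sqrt\epsilon)}
=\prod_{x\in\Lambda} I_0(\{x\},\phi+\xi)\,e^{zW(\{x\},(\phi+\xi)/\sqrt\epsilon)}
\]
with $I_0(\{x\},\phi+\xi)$ exactly as defined in the statement.

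Next I would write $e^{zW(\{x\},\cdot)}=1+\big(e^{zW(\{x\},\cdot)}-1\big)$ for each $x$ and expand the product over $x\in\Lambda$. Each term in the resulting expansion is indexed by the subset $X\subseteq\Lambda$ of sites where the factor $\big(e^{zW(\{x\},\cdot)}-1\big)$ is chosen rather than $1$; this gives
\[
\prod_{x\in\Lambda} I_0(\{x\})\,e^{zW(\{x\})}
=\sum_{X\subseteq\Lambda}\Big(\prod_{x\in\Lambda\setminus X} I_0(\{x\})\Big)\Big(\prod_{x\in X} I_0(\{x\})\big(e^{zW(\{x\})}-1\big)\Big),
\]
where the arguments are all $\phi+\xi$ (or $(\phi+\xi)/\sqrt\epsilon$ inside $W$) and I have suppressed them. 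Recognizing the first product as $I_0(\Lambda\setminus X,\phi+\xi)$ and the second as $K_0(X,\phi+\xi)$, exactly matching the definitions in the statement, and then taking $\mathbb{E}[\,\cdot\,]$ of both sides, yields (\ref{eq:Mayer}).

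There is essentially no analytic obstacle here: the identity is a finite algebraic rearrangement, valid termwise since $\Lambda$ is a finite set, and $\mathbb{E}$ passes through the finite sum by linearity. The one point to check is convergence/interchange, but since $|\Lambda|<\infty$ the sum over $X\subseteq\Lambda$ is finite and everything is bounded (the Gaussian expectation of each term is finite because $I_0$ is a Gaussian-type weight bounded by $1$ when $\sigma\ge 0$, and the cosine in $W$ is bounded). So the only ``work'' is bookkeeping to confirm that the product-form definitions of $I_0(X)$ and $K_0(X)$ given in the statement are precisely what the expansion produces. I expect the mild subtlety worth a sentence is just noting that $I_0(\{x\})$ appears for \emph{every} $x\in\Lambda$ — both in $I_0(\Lambda\setminus X)$ and folded into $K_0(X)$ — so that no Gaussian weight is lost in the rearrangement.
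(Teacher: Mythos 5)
Your proposal is correct and takes essentially the same route as the paper: factorize the integrand as a product over sites $\prod_{x\in\Lambda} I_0(\{x\})\,e^{zW(\{x\})}$, write $e^{zW(\{x\})}=1+(e^{zW(\{x\})}-1)$, and expand the product, identifying the subset $X$ of sites at which the second summand is chosen. Your additional remarks on finiteness of the sum and interchange with $\mathbb{E}$ are sound but (as you note) automatic here since $\Lambda$ is finite.
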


The subscript $0$ indicates that we are at the $0$-th RG step, and 
we will write $\sigma_0=\sigma$.
The quantity $Z_{N}^{\prime}(0)$ has the same expansion with
$\xi=0$. 
\begin{proof}
Consider equation (\ref{eq:Zprime}): following Mayer expansion,
\[
\begin{aligned}
Z_{N}^{\prime}(\xi)= & \mathbb{E}\big[
	e^{zW(\Lambda)-\sigma V(\Lambda)}\big]\\
= & \mathbb{E}\Big[
	\prod_{x\in\Lambda}\Big(
		e^{-\sigma V(\{x\})}+ \big(e^{zW(\{x\})}-1\big) e^{-\sigma V(\{x\})}\Big)
	\Big] \;.
\end{aligned}
\]
Expanding the product amounts to associating
a set $X\subseteq \Lambda$ to the second term and the complement $\Lambda\backslash X$ to the first term.
This proves the statement (\ref{eq:Mayer}).
\end{proof}

\subsection{Outline of main ideas} \label{subsec:outline}

Our renormalization group method is based on the idea of rewriting
the expectation into an expectation of an expression involving many
conditional expectations. We will carry out a multiscale analysis;
an RG map will be iterated from one scale to the next one, during
which we will re-arrange the conditional expectations. A basic algebraic
structure and analytical bound will be propagated to every scale.
In order to describe these structures and bounds, we first give some
definitions.

\subsubsection{Basics of polymers} \label{sec:Basics-Polymers}
\begin{enumerate}
\item We call blocks of size $L^{j}$ j-blocks. These are translations of
$\{x\in\mathbb{Z}^{d}:\left|x\right|<\frac{1}{2}(L^{j}-1)\}$ by vectors
in $\left(L^{j}\mathbb{Z}\right)^{d}$. In particular a 0-block is
a single site in $\mathbb{Z}^{d}$. A j-polymer $X$ is a union of
j-blocks. In particular the empty set is also a j-polymer. The number
of lattice sites in $X\subset\mathbb{Z}^{d}$ is denoted by $\left|X\right|$.
The number of j-blocks in a j-polymer $X$ is denoted by $\left|X\right|_{j}$. 
\item $X\subset\mathbb{Z}^{d}$ is said to be connected if for any two points
$x,y\in X$ there exists a path $(x_{i}:i=0,\dots,n)$ with $\left|x_{i+1}-x_{i}\right|_{\infty}=1$
connecting $x$ and $y$. Here, $|x|_{\infty}$ is the maximum of
all coordinates of $x$; note that for instance $\{(0,0),(1,1)\}$
is connected if $d=2$. Connected sets are not empty. Two sets $X,Y$
are said to be strictly disjoint if there is no path from $x$ to
$y$ when $x\in X$ and $y\in Y$; otherwise we say that they touch.
\item For any $X\subset\mathbb{Z}^{d}$ we let $\mathcal{C}(X)$ be the set of connected components of $X$. 
\item For a j-polymer $X$ we have the following notations. $\mathcal{B}_{j}(X)$
is the set of all j-blocks in $X$. $\mathcal{P}_{j}(X)$ is the set
of all j-polymers in $X$. $\mathcal{P}_{j,c}(X)$ is the set of all
connected j-polymers in $X$. We sometimes just write $\mathcal{B}_{j},\mathcal{P}_{j},\mathcal{P}_{j,c}$
and so on when $X=\Lambda$. 
\item Let $X\in\mathcal{P}_{j}$. Define for $j\geq1$
\[
\begin{aligned}
\hat{X} &:=\cup\{B\in\mathcal{B}_{j}:B\mbox{ touches }X\}  \;,\\
X^{+}&:=\cup\{x\in\Lambda:d(x,X)\leq\frac{1}{3}L^{j}\} \;, \\
\ddot{X}&:=\cup\{x\in\Lambda:d(x,X)\leq\frac{1}{6}L^{j}\} \;,\\
\dot{X}&:=\cup\{x\in\Lambda:d(x,X)\leq\frac{1}{12}L^{j}\} \;.
\end{aligned}
\]
 Note that we have $X\subset\dot{X}\subset\ddot{X}\subset X^{+}\subset\hat{X}$.
Only $X,\hat{X}$ belong to $\mathcal{P}_{j}$. 
\item When $j=0$ and $X\in\mathcal{P}_{0}$, we define $\dot{X}=\ddot{X}=X^{+}=\hat{X}=X$,
and the Poisson kernel at scale $0$ is understood as $P_{X^{+}}:=id$.
\end{enumerate}
We also have the following notations for functions of the fields.
\begin{enumerate}
\item Define $\mathcal{N}$ to be the set of functions of $\phi$ and $\xi$. 
Define
$\mathcal{N}(X) \subseteq\mathcal{N}$ to be the set of functions of
$\{\phi(x),\xi(x) \big|x\in X\}$. $\mathcal{N}^{\mathcal{P}_{j}}$ is the
set of maps $K:\mathcal{P}_{j}\rightarrow\mathcal{N}$ such that $K(X)\in\mathcal{N}(\hat{X})$.
We define $\mathcal{N}^{\mathcal{B}_{j}}$, $\mathcal{N}^{\mathcal{P}_{j,c}}$
similarly.
\item For $I\in\mathcal{N}^{\mathcal{B}_{j}}$ we write
\[
I(X)=I^{X}:=\prod_{B\in\mathcal{B}_{j}(X)}I(B)\qquad\mbox{for }X\in\mathcal{P}_{j} \;.
\]
For $K\in\mathcal{N}^{\mathcal{P}_{j}}$ we say that $K$ factorizes
over connected components if
\begin{equation}\label{eq:factorization}
K(X)=\prod_{Y\in\mathcal{C}(X)}K(Y) \;.
\end{equation}
In this case, $K$ is determined by its value 
on connected polymers, so
we can write $K\in\mathcal{N}^{\mathcal{P}_{j,c}}$.
\end{enumerate}

The \emph{basic structure} that we want to propagate to every scale
of the RG iterations is, for $j\geq0$
\begin{equation} \label{eq:basic_structure}
Z_{N}^{\prime}(\xi)=e^{\mathcal{E}_{j}}\,
\mathbb{E}\bigg[\sum_{X\in\mathcal{P}_{j}(\Lambda)}I_{j}(\Lambda\backslash\hat{X},\phi,\xi)K_{j}(X,\phi,\xi)\bigg] \;.
\end{equation}
Here, $e^{\mathcal{E}_{j}}$ is a $\phi,\xi$ independent constant
factor. This constant will be shown to be the same for $Z_{N}^{\prime}(\xi)$
and $Z_{N}^{\prime}(0)$ and thus cancels. $K_{j}(X,\phi,\xi)$ only
depends on the values of $\phi,\xi$ in a small neighborhood of $X$.
Note that there is a ``corridor'' between each $X$ and $\Lambda\backslash\hat{X}$
(namely, the union of $X$ and $\Lambda\backslash \hat X$ 
is not the entire $\Lambda$, and we call this ``missing part" $\hat X \backslash X$ heuristically as a ``corridor'').
These ``corridors'' will be important in our conditional expectation
method.

Furthermore, for $j<N$, the function $I_{j}$ will have a local form in the sense that it
factorizes over $j$-blocks 
$I_{j}(X,\phi,\xi)=\prod_{B\in\mathcal{B}_j(X)}I_{j}(B,\phi,\xi)$
and 
\begin{equation} \label{eq:defI}
I_{j}(B,\phi,\xi)=e^{-\frac{1}{4}\sigma_{j}\sum_{x\in B,e\in\mathcal{E}}(\partial_{e}P_{B^{+}}\phi(x)+\partial_{e}\xi(x))^{2}} \;.
\end{equation}
$I_{j}(B)$ is essentially determined by the dynamical parameter $\sigma_{j}$.
On the other hand, $K_{j}$ will only factorize over ``connected
components of polymer''.

The \emph{basic bounds} that hold on every scale about $K_{j}$, 
whose form will not be explicit, is as follows. For $X$ connected,
\begin{equation}
\sum_{n=0}^{4} \frac{1}{n!}
\left\Vert 
	K_{j}^{(n)}(X,\phi,\xi)
	\right\Vert 
\leq  \left\Vert K\right\Vert _{j}A^{-|X|_{j}}G(\ddot{X},X^{+}) \;.
\end{equation}
Here, $K_j^{(n)}$ is an $n$-th derivative of $K_j$;
the precise definition of it and the norm will be given later.
For any two sets $X\subset Y$, $G(X,Y)$ is a normalized conditional expectation
called the ``regulator''
\begin{equation}
G(X,Y)=\mathbb{E}\left[e^{\frac{\kappa}{2}\sum_{X}(\partial\phi)^{2}}\big|\phi_{Y^{c}}\right]\big/N(X,Y)
\end{equation}
and the normalization factor is 
\begin{equation}
N(X,Y)=\mathbb{E}\left[e^{\frac{\kappa}{2}\sum_{X}(\partial\phi)^{2}}\big|\phi_{Y^{c}}=0\right] \;.
\end{equation}
This form of regulator is different from the one defined in \cite{brydges_lectures_2007};
in particular it is itself a conditional expectation. It will be shown
to have some interesting properties.

Now we outline the steps to go from scale $j$ to scale $j+1$ while
the structure (\ref{eq:basic_structure}) is preserved.

\subsubsection*{1) Extraction and reblocking. }

Reblocking is a procedure which rewrites (\ref{eq:basic_structure})
into an expansion over ``$j+1$ scale polymers''; and we extract
the components that grow too fast under this reblocking. 
\begin{prop}
\label{prop:extra-reblo}
Suppose that $L$ is sufficiently large. If
at the scale $j$ one has 
\begin{equation}
Z_{N}^{\prime}(\xi)=e^{\mathcal{E}_{j}}\,
\mathbb{E}\bigg[\sum_{X\in\mathcal{P}_{j}}I_{j}^{\Lambda\backslash\hat{X}}(\phi,\xi)K_{j}(X,\phi,\xi)\bigg]\label{eq:form_j}
\end{equation}
with $I_{j}\in\mathcal{N}^{\mathcal{B}_j}$ given by (\ref{eq:defI}),
then there exist $\mathcal{E}_{j+1},I_{j+1}\in\mathcal{N}^{\mathcal{B}_{j+1}}$
and $K_{j}^{\sharp}\in\mathcal{N}^{\mathcal{P}_{j+1,c}}$
(namely $K_{j}^{\sharp}$ factorizes
over connected components in the sense of \eqref{eq:factorization}), 
so that the following expansion at the scale $j+1$ holds 
\begin{equation} \label{eq:next-scale1}
Z_{N}^{\prime}(\xi) =  
	e^{\mathcal{E}_{j+1}}\,
	\mathbb{E}
	\bigg[
		\sum_{U\in\mathcal{P}_{j+1}}I{}_{j+1}^{\Lambda\backslash\hat{U}}(\phi,\xi) \, K_{j}^{\sharp}(U,\phi,\xi)
	\bigg]
\end{equation}
where $\mathcal{E}_{j+1}$ is a constant independent of $\phi,\xi$,
and for every $D\in\mathcal{B}_{j+1}$, 
\[
I_{j+1}(D)=e^{-\frac{1}{4}\sigma_{j+1}\sum_{x\in D,e\in\mathcal{E}}\left(\partial_{e}P_{D^{+}}\phi(x)+\partial_{e}\xi(x)\right)^{2}}
\]
for some constant $\sigma_{j+1}$.
\end{prop}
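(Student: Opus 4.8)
The plan is to carry out a standard RG reblocking step, with the twist that the explicit part $I_j$ involves Poisson-kernel smoothing at scale $j$ rather than a covariance-decomposition fluctuation field, so I must re-express the $I_j(B^+)$-smoothed quadratic form in terms of $P_{D^+}$-smoothed quantities at scale $j+1$, absorbing the difference into $K$. First I would rewrite the sum $\sum_{X\in\mathcal{P}_j}$ in \eqref{eq:form_j} by grouping $j$-blocks into $j+1$-blocks: for each $j+1$-polymer $U$, the $j$-polymer $X$ lies inside $U$ (after enlarging to $\hat X$ at the new scale), and the product $I_j^{\Lambda\backslash\hat X}$ over the complementary $j$-blocks is reorganized into a product over $j+1$-blocks $D$. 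The key identity is that on a $j+1$-block $D$ that is "deep inside" $\Lambda\backslash\hat X$ (i.e. all $j$-subblocks $B\subset D$ satisfy $B^+\subset \Lambda\backslash\hat X$, which is where the corridor width matters), the product $\prod_{B\in\mathcal{B}_j(D)} I_j(B)$ equals $I_{j+1}(D)$ times a correction; this correction, together with the $I_j(B)$ factors on $j$-blocks straddling the boundary of the reblocked region, is what gets moved into $K$.

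Next I would define $\sigma_{j+1}$. The natural choice is $\sigma_{j+1}=\sigma_j$ at this purely algebraic stage — the actual renormalization of $\sigma$ happens in the later "extraction" of the relevant part from $K$, not in reblocking per se — but because $P_{B^+}$ and $P_{D^+}$ differ, writing $\sum_{x\in D}(\partial P_{D^+}\phi)^2$ in place of $\sum_{B\subset D}\sum_{x\in B}(\partial P_{B^+}\phi)^2$ introduces a genuine discrepancy. I would handle this by brute force: define
\[
I_{j+1}(D)=e^{-\frac14\sigma_{j+1}\sum_{x\in D,e\in\mathcal{E}}(\partial_e P_{D^+}\phi(x)+\partial_e\xi(x))^2}
\]
as the target, write $\prod_{B\in\mathcal{B}_j(D)}I_j(B) = I_{j+1}(D)\cdot e^{\delta_j(D,\phi,\xi)}$ with $\delta_j$ an explicit (bounded, by harmonic-extension estimates) quadratic-form difference, and carry $e^{\delta_j(D)}-1$ or the whole $e^{\delta_j(D)}$ into the remainder. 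Then $K_j^\sharp(U)$ is assembled from: (i) the original $K_j(X)$ for the $X$'s reblocked into $U$, (ii) the straddling $I_j(B)$ boundary factors, (iii) the $e^{\delta_j(D)}$ corrections for $D\subset U$, and (iv) combinatorial sums over which $j$-polymers $X$ and which boundary blocks produced a given $U$. The constant $\mathcal{E}_{j+1}=\mathcal{E}_j$ (no new $\phi$-independent constant is generated at this step; the normalization constants that would appear from any Gaussian integration are deferred), so one checks that no $\phi,\xi$-independent factor has been dropped.

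I would then verify the two structural claims: that $I_{j+1}\in\mathcal{N}^{\mathcal{B}_{j+1}}$ with the stated form (immediate from the definition, using that $I_{j+1}(D)$ depends only on $\phi,\xi$ on $D^+\subset\hat D$), and that $K_j^\sharp$ factorizes over connected $j+1$-components (this follows because $K_j$ already factorizes over connected $j$-components, $I_j$ factorizes over $j$-blocks, and connected pieces of the reblocked object that are strictly disjoint as $j+1$-polymers have disjoint neighborhoods, so the construction in (i)--(iv) respects the product structure — this is the same bookkeeping as in \cite{brydges_lectures_2007}). The main obstacle I expect is not the combinatorial reorganization, which is routine, but controlling the Poisson-kernel discrepancy $\delta_j(D,\phi,\xi)$: one needs that replacing $P_{B^+}$ by $P_{D^+}$ only perturbs the quadratic form by something small and local enough to be absorbed into $K$ without destroying the basic bounds (the regulator estimate) at the next scale. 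This requires real-space decay estimates on differences of Poisson kernels at adjacent scales, which is exactly the place where this method substitutes PDE/random-walk estimates for the finite-range covariance decomposition of \cite{brydges_finite_2004}; getting the constants to work out (and seeing why $L$ must be large) is the technical heart of the argument.
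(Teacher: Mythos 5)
Your proposal captures the broad strokes—reblock $j$-polymers into $j{+}1$-polymers, rewrite $P_{B^+}$-smoothed quadratic forms in terms of $P_{D^+}$-smoothed ones, absorb the discrepancy into the remainder—but it misreads the architecture of the paper in a way that matters, and it misidentifies where the actual difficulty of this proposition lies.

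First, the paper does \emph{not} defer the extraction of the relevant parts to a later step. You fix $\sigma_{j+1}=\sigma_j$ and $\mathcal{E}_{j+1}=\mathcal{E}_j$, saying ``the actual renormalization of $\sigma$ happens in the later extraction.'' There is no such later extraction step: Proposition~\ref{prop:extra-reblo} is the ``extraction and reblocking'' step, and its entire point is to introduce $\sigma_{j+1}$ and $E_{j+1}$ as \emph{free parameters} of the map. The paper splits $I(B)=\delta I(B)+\tilde{I}(B)$ with $\tilde{I}(B)=e^{E'-\frac14\sigma'\sum_B(\partial_e P_{(\bar B)^+}\phi+\partial_e\xi)^2}$ and $1=(1-e^{E'})+e^{E'}$ on corridor blocks, shoving $\delta I$, $1-e^{E'}$, etc.\ into $K^\sharp$. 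Only later, in Section~\ref{sec:Linearized-RG}, are $E_{j+1}$ and $\sigma_{j+1}$ chosen as functions of $(\sigma_j,K_j)$ to kill the non-contractive directions; Proposition~\ref{prop:smoothness} proves smoothness of the map $(\sigma_j,E_{j+1},\sigma_{j+1},K_j)\mapsto K_{j+1}$ in \emph{all four} arguments. Your rigid choice would collapse that four-parameter map and make the stable manifold argument unusable.

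Second, you say ``the combinatorial reorganization\dots is routine'' and locate the difficulty in ``controlling the Poisson-kernel discrepancy.'' In fact this proposition is a purely algebraic identity requiring no analytic estimates at all—the Poisson-kernel analysis appears later in Proposition~\ref{prop:smoothness} and Lemma~\ref{lem:estimateV}. What \emph{is} hard here is precisely the combinatorics, specifically two issues the paper handles explicitly and your sketch does not. (i) The factors $\tilde I(B)$ and $\delta I(B)$ depend on $\phi$ through $P_{(\bar B)^+}$, which lives on a set of size $O(L^{j+1})$ and can therefore overlap $\hat X$ (the old corridor has only width $L^j$); the paper cures this by restricting the $\delta I$-extraction to blocks in $\langle X\rangle^c$ and doing a different expansion inside $\langle X\rangle$. (ii) The naive reblocked polymer $V$ has no corridor separating it from $\Lambda\backslash\hat V$; the paper performs a \emph{second} expansion $\tilde I=(\tilde I-e^{E'})+e^{E'}$ on $V^c\cap\langle X\rangle^c$, enlarging $V$ to $U$ so that $\tilde I^{\Lambda\backslash\hat U}$ and $K^\sharp(U)$ have genuinely separated supports. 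Your ``deep inside'' heuristic gestures at (i) but gives no mechanism for either; without (ii) the claim $K_j^\sharp\in\mathcal{N}^{\mathcal{P}_{j+1,c}}$, i.e.\ that $K^\sharp(U)$ depends on $\phi,\xi$ only on $\hat U$, is not established.
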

We will prove this Lemma in Section \ref{sec:The-renormalization-group}.

\subsubsection*{2) Conditional expectation. }

This step is the main difference between this new method and \cite{brydges_lectures_2007}.
First of all,
we observe that
in \eqref{eq:next-scale1},
the sets $\Lambda\backslash\hat U$ and $U$
do not touch. In other words,
there exists a corridor $\hat{U}\backslash U$ around the set $U$
where $K_j^\sharp$ evaluates on,
and this corridor has width
$L^{j+1}$. We then take conditional expectation
and thus re-write the expectation in \eqref{eq:next-scale1} as follows:
\begin{equation} \label{eq:outlinecond}
\mathbb{E}\bigg[
	\sum_{U\in\mathcal{P}_{j+1}}
	I_{j+1}^{\Lambda\backslash\hat{U}}(\phi,\xi) \,\mathbb{E}\Big[K_{j}^{\sharp}(U,\phi,\xi)\big|(U^{+})^{c}\Big]
	\bigg]
\end{equation}
where $U\subset U^{+}\subset\hat{U}$. For notation conventions, see
subsection \ref{sub:Conventions-about-notations}. 
In order to obtain \eqref{eq:outlinecond},
one switches the expectation and the sum in \eqref{eq:next-scale1},
then take the conditional expectation right inside 
the expectation. Since 
$I_{j+1}^{\Lambda\backslash\hat{U}}$ only
depends on the values of $\phi$ being fixed,
the conditional expectation can be taken only on the $K_j^\sharp$ factor.
One then switches back the expectation and the sum.

This followed by factoring out $\phi,\xi$ independent constant
gives $K_{j+1}$ and we are back to the form (\ref{eq:basic_structure})
with all $j$ replaced by $j+1$. In case $U=\Lambda$, we just integrate
(unconditionally): $\mathbb{E}\big[K_{j}^{\sharp}(\Lambda,\phi)\big]$,
but to streamline expressions we still write (\ref{eq:outlinecond})
keeping in mind the special treatment for the $U=\Lambda$ term.

\begin{rem}
Our discussion below will frequently involve 
Laplacian operators acting on functions on a set $U$
with zero Dirichlet boundary condition on $\partial U$,
so we simply refer to them as {\it Dirichlet Laplacian} for $U$.
Similarly, for the Green's function of the Laplacian on $\partial U$
with zero Dirichlet boundary condition on $\partial U$,
we simply call it {\it Dirichlet Green's function} for $U$.
Finally, if $\zeta$ is a Gaussian field on $U$ with Dirichlet Green's function for $U$
as its covariance, then we simply say that $\zeta$ is 
the {\it Dirichlet Gaussian field} on $U$.
\end{rem}

We point out two important facts about the conditional expectation
step. The first one is that we can write the Gaussian field $\phi$
into a sum of two decoupled parts. Let $P_{U}$ be the Poisson kernel
for $U$ and recall our convention that $P_{U}\phi(x)=\phi(x)$ for
$x\notin U$ as in subsection \ref{sub:Conventions-about-notations}. 
\begin{prop}
Let $U\subset V\subset\Lambda$. 
Define $\zeta$ via $\phi(x)=P_{U}\phi(x)+\zeta(x)$.
Then the quadratic form 
\begin{equation}
-\sum_{x\in V}\phi(x)\Delta\phi(x)
= -\sum_{x\in U}\zeta(x)\Delta_{U,m}^{D}\zeta(x)
  -  \sum_{x\in V}P_{U}\phi(x)\Delta_{m}P_{U}\phi(x)
\end{equation}
where $-\Delta_{U,m}^{D}=-\Delta_{U}^{D}+m^{2}$ and $\Delta_{U}^{D}$
is the Dirichlet Laplacian for $U$, $m\geq0$.
\end{prop}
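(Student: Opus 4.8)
The plan is to substitute $\phi = P_U\phi + \zeta$ into the left-hand quadratic form (reading the bare $\Delta$ there as the operator $\Delta_m$ of the Gaussian measure, so that the mass terms balance with the right-hand side), expand it bilinearly into two ``diagonal'' pieces and two cross pieces, and show that both cross pieces vanish. The one structural input, used throughout, is the support of $\zeta$: since $P_U\phi$ coincides with $\phi$ on $U^c$ by the convention of Section~\ref{sub:Conventions-about-notations}, the field $\zeta = \phi - P_U\phi$ vanishes identically on $U^c$; in particular it is supported in $U$ and is zero on $\partial U$. Thus
\[
-\sum_{x\in V}\phi\,\Delta_m\phi
=-\sum_{x\in V}P_U\phi\,\Delta_m P_U\phi
 -\sum_{x\in V}\zeta\,\Delta_m P_U\phi
 -\sum_{x\in V}P_U\phi\,\Delta_m\zeta
 -\sum_{x\in V}\zeta\,\Delta_m\zeta,
\]
and the first term is already the second term claimed on the right.

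Next I would kill the two cross terms. For $-\sum_{x\in V}\zeta\,\Delta_m P_U\phi$: because $\zeta$ is supported in $U\subseteq V$ the sum collapses to $x\in U$, where $\Delta_m P_U\phi(x)=0$ since $P_U\phi$ is the $(-\Delta_m)$-harmonic extension of $\phi|_{U^c}$; so this term is $0$. For the symmetric term $-\sum_{x\in V}P_U\phi\,\Delta_m\zeta$ I would pass to the whole torus instead: $\Delta_m\zeta$ is supported in $U\cup\partial U$, so — provided $\partial U\subseteq V$, which holds in every situation where we invoke this proposition, e.g.\ whenever $U^+\subseteq V$ — the sum over $V$ equals the sum over $\Lambda$; by self-adjointness of $\Delta_m$ on the torus (summation by parts on $\Lambda$ produces no boundary contribution, using that $\mathcal{E}$ is symmetric) it equals $-\sum_{x\in\Lambda}\zeta\,\Delta_m P_U\phi=-\sum_{x\in U}\zeta\,\Delta_m P_U\phi=0$ by harmonicity once more. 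Both cross terms therefore drop out.

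It remains to identify the surviving $\zeta$-piece $-\sum_{x\in V}\zeta\,\Delta_m\zeta$ with the Dirichlet form. The sum reduces to $x\in U$ by the support of $\zeta$, and on $U$ the lattice Laplacian of $\zeta$ agrees with its Dirichlet Laplacian: for $x\in U$, $\Delta\zeta(x)=\sum_{e\in\mathcal{E}}(\zeta(x+e)-\zeta(x))$, and each term with $x+e\notin U$ already has $\zeta(x+e)=0$, which is exactly the boundary condition built into $\Delta_U^D$; hence $\Delta_m\zeta(x)=\Delta_{U,m}^D\zeta(x)$ on $U$, giving $-\sum_{x\in U}\zeta\,\Delta_{U,m}^D\zeta$. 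Together with the diagonal term $-\sum_{x\in V}P_U\phi\,\Delta_m P_U\phi$, copied verbatim to the right-hand side, this is the asserted identity. The only point needing care is the boundary bookkeeping in the second cross term — tracking $\partial U$ when moving between the sums over $V$, over $U\cup\partial U$, and over $\Lambda$, and matching the lattice Laplacian with its Dirichlet restriction on $\mathrm{supp}\,\zeta$ — but this is elementary combinatorics of the outer boundary and involves no analytic estimate.
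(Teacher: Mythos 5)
Your proof is correct, but it takes a genuinely different route than the paper. The paper deduces the identity from its abstract Fact \eqref{eq:quadratic_fact} about quadratic forms: it decomposes $Q(\phi)=Q_1(\phi_U)+L(\phi_U,\phi_{U^c})+Q_2(\phi_{U^c})$, identifies the minimizer of $Q$ in $\phi_U$ with the harmonic extension $P_U\phi$, and reads off the identity from the completed square. You instead substitute $\phi = P_U\phi + \zeta$ directly, expand bilinearly into four pieces, and kill the two cross terms by hand — one via $\operatorname{supp}\zeta\subseteq U$ together with $\Delta_m$-harmonicity of $P_U\phi$ on $U$, the other via self-adjointness of $\Delta_m$ on the torus followed by the same harmonicity — before matching the $\zeta$-$\zeta$ piece with the Dirichlet form by tracking the boundary condition. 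The direct expansion is arguably more self-contained and makes visible exactly where each structural ingredient is used. You also correctly flag an implicit hypothesis the paper leaves unstated: for the boundary bookkeeping (and for the variational claim ``the minimizer is the harmonic extension'' in the paper's version) to hold, one needs $V\supseteq U\cup\partial U$. Indeed, for $U=V=\{0\}$ on $\mathbb{Z}$ the two sides of the stated identity differ, so this hypothesis is not cosmetic. It holds in every application in the paper (effectively $V=\Lambda$), so nothing downstream is affected, but your noting it is a genuine improvement in precision over the paper's presentation.
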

Notice that $x\in U$ does not contribute to the last summation since
$\Delta_{m}P_{U}\phi(x)=0$ in $U$. By this proposition, taking expectation
of a function $K(\phi)$ conditioned on $\{\phi(x)\big|x\in U^{c}\}$
is simply integrating out a Gaussian field $\zeta$:
\begin{equation}
\mathbb{E}\left[K(\phi,\xi)\big|U^{c}\right]=\mathbb{E}_{\zeta}\left[K(P_{U}\phi+\zeta,\xi)\right]\label{eq:simply_int_zeta}
\end{equation}
where the covariance of $\zeta$ is the $C_{U}^{D}$ - the Dirichlet
Green's function for $U$. In particular, we observe that $I_{j}$
defined in (\ref{eq:defI}) has an alternative representation
\begin{equation}
I_{j}(B,\phi,\xi)=e^{-\frac{1}{4}\sigma_{j}\sum_{x\in B,e\in\mathcal{E}}\mathbb{E}\left[\partial_{e}\phi(x)+\partial_{e}\xi(x)\big|(B^{+})^{c}\right]^{2}} \;.
\end{equation}
It is conceptually helpful to keep in mind that we are just re-arranging
the following structure (comparing with \eqref{eq:Mayer})
\begin{equation}
\mathbb{E}\Big[
	\sum_{X\in\mathcal{P}_{j}}
	e^{-\frac{1}{4}\sigma_{j}\sum_{x\notin\hat{X},e\in\mathcal{E}}\mathbb{E}\left[\partial_{e}\phi(x)+\partial_{e}\xi(x)\big|(B^{+})^{c}\right]^{2}}
	\mathbb{E}\big[\cdots\big|(X^{+})^{c}\big]\Big]
\end{equation}
namely an outmost (unconditional) expectation of a simple combination
of many conditional expectations.
\begin{rem}
\label{rem:wellposedness}
In the paper, $P_{U}\phi$ will always be
well-defined: by Prop 1.11 of \cite{kumagai_random_2010}, if the
probability that the random walk starting from any point in $U$ exits
$U$ in finite time is $1$, then the harmonic extension exists and
is unique. Domains $U\subsetneq\Lambda$ will always satisfy this
condition because the random walk hits any point in $\Lambda$ in
finite time with probability one.
\end{rem}
The next fact is as follows:
\begin{prop} \label{prop:scaling}
Let $d\geq2$, $x\in X\subset U\subset\Lambda$. If $d(x,\partial X)\geq cL^{j}$,
then 
\begin{equation}
|(\partial_{x}P_{X})C_{U}^{D}(\partial_{x}P_{X})^{\star}(x,x)|\leq O(1)L^{-dj}
\end{equation}
where $O(1)$ depends on $c$, and $C_{U}^{D}$ is the Dirichlet Green's
function for $U$. 
\end{prop}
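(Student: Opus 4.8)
The plan is to reduce the assertion, via the Markov property of the Gaussian free field, to a bound on the mixed second lattice difference of the harmonic correction $h_{X}:=C_{m}-C_{X}^{D}$, and then to prove that bound by invoking standard discrete elliptic estimates once in each of the two variables of $h_{X}$. Throughout, set $r:=cL^{j}$ and fix a direction $e$; it suffices to bound the diagonal entry $\mathbb{E}_{\zeta}\big[(\partial_{e}P_{X}\zeta(x))^{2}\big]$, $\zeta\sim N(0,C_{U}^{D})$, since $(\partial_{x}P_{X})C_{U}^{D}(\partial_{x}P_{X})^{\star}(x,x)$ is the positive-semidefinite Gram matrix $\big(\mathbb{E}_{\zeta}[\partial_{e}P_{X}\zeta(x)\,\partial_{e'}P_{X}\zeta(x)]\big)_{e,e'}$ (here $P_{X}\zeta(x)=\sum_{y\in\partial X}P_{X}(x,y)\zeta(y)$), whose norm is bounded by its trace. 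We may also assume $r$ exceeds any prescribed constant, since otherwise $L^{-dj}$ is bounded below in terms of $c,d$ while Step~1 gives $\mathbb{E}_{\zeta}[(\partial_{e}P_{X}\zeta(x))^{2}]\le\mathrm{Var}(\partial_{e}\phi(x))=O(1)$ uniformly in small $m$, which already suffices. For a symmetric kernel $A$ write $(\partial_{e}\!\otimes\!\partial_{e}A)(x):=A(x,x)+A(x{+}e,x{+}e)-2A(x,x{+}e)$.

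\emph{Step 1 (reduction).} Realise $\zeta$ as $\zeta=\phi-P_{U}\phi$ with $\phi$ the free field of covariance $C_{m}$, as in \eqref{eq:simply_int_zeta}. Since $P_{U}\phi$ is already $(-\Delta_{m})$-harmonic on $U\supseteq X$ and agrees with itself off $X$, uniqueness of the harmonic extension (Remark~\ref{rem:wellposedness}) gives $P_{X}(P_{U}\phi)=P_{U}\phi$, hence $\partial_{e}P_{X}\zeta(x)=\partial_{e}P_{X}\phi(x)-\partial_{e}P_{U}\phi(x)$. By the Markov property of $\phi$ one has $\partial_{e}P_{V}\phi(x)=\mathbb{E}[\partial_{e}\phi(x)\,|\,V^{c}]$ for $V\in\{X,U\}$, and, as $U^{c}\subseteq X^{c}$, $\partial_{e}P_{U}\phi(x)=\mathbb{E}[\partial_{e}P_{X}\phi(x)\,|\,U^{c}]$; since a conditional expectation never increases variance this already gives
\[
\mathbb{E}_{\zeta}\big[(\partial_{e}P_{X}\zeta(x))^{2}\big]=\mathrm{Var}\big(\partial_{e}P_{X}\phi(x)-\mathbb{E}[\partial_{e}P_{X}\phi(x)\,|\,U^{c}]\big)\ \le\ \mathrm{Var}\big(\partial_{e}P_{X}\phi(x)\big),
\]
so $U$ plays no further role. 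Again by the Markov property, conditionally on $\phi|_{X^{c}}$ the field $\phi|_{X}$ is a deterministic harmonic extension plus an independent Dirichlet field of covariance $C_{X}^{D}$, so $\mathrm{Var}(\partial_{e}\phi(x)\,|\,X^{c})=(\partial_{e}\!\otimes\!\partial_{e}C_{X}^{D})(x)$ is deterministic and the law of total variance yields $\mathrm{Var}(\partial_{e}P_{X}\phi(x))=\mathrm{Var}(\partial_{e}\phi(x))-(\partial_{e}\!\otimes\!\partial_{e}C_{X}^{D})(x)=(\partial_{e}\!\otimes\!\partial_{e}h_{X})(x)$. Thus it remains to show $(\partial_{e}\!\otimes\!\partial_{e}h_{X})(x)\le C(c,d)L^{-dj}$.

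\emph{Step 2 (two elliptic estimates).} I would use the representation $h_{X}(y,z)=\sum_{w\in\partial X}P_{X}(y,w)C_{m}(w,z)=\mathbb{E}_{y}[\lambda^{\tau_{X}}C_{m}(S_{\tau_{X}},z)]$ for $y\in X$, with $S$ the lattice walk, $\tau_{X}$ its exit time from $X$, $\lambda=2d/(2d+m^{2})$ the survival factor ($\mathbb{E}_{y}[\lambda^{\tau_{X}}]\le1$); for fixed $z$, $y\mapsto h_{X}(y,z)$ is $(-\Delta_{m})$-harmonic on $X$. Setting $g(y):=h_{X}(y,x{+}e)-h_{X}(y,x)$ gives $(\partial_{e}\!\otimes\!\partial_{e}h_{X})(x)=g(x{+}e)-g(x)=\partial_{e}g(x)$, and $g$ is $(-\Delta_{m})$-harmonic on the discrete ball $B_{r/2}(x)$, which lies in $X$ because $d(x,\partial X)\ge r$. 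For $y\in B_{r/2}(x)$, since $|S_{\tau_{X}}-x|\ge d(x,\partial X)\ge r$,
\[
|g(y)|=\big|\mathbb{E}_{y}[\lambda^{\tau_{X}}(C_{m}(S_{\tau_{X}},x{+}e)-C_{m}(S_{\tau_{X}},x))]\big|\ \le\ \sup_{|u|\ge r/2}\max_{e'}|\partial_{e'}C_{m}(u)|\ \le\ C\,r^{1-d},
\]
by the decay bound $|\partial_{e'}C_{m}(u)|\le C(1+|u|)^{1-d}$ for the (massive, periodic) lattice Green's function; note that only $|\partial C_{m}|$, never $|C_{m}|$ or $|h_{X}|$, is used. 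Then the interior gradient estimate for $(-\Delta_{m})$-harmonic functions gives $|\partial_{e}g(x)|\le\frac{C}{r}\sup_{B_{r/2}(x)}|g|\le C\,r^{-d}=C(c,d)L^{-dj}$, which completes the proof.

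\emph{Main obstacle.} The substantive point is arranging Step~1 so that the quantity to estimate is the \emph{mixed} second difference $(\partial_{e}\!\otimes\!\partial_{e}h_{X})(x)$: this is what makes one lattice derivative available in each variable of $h_{X}$, so that in Step~2 only the decay of $\partial C_{m}$ — which is of the clean order $|u|^{1-d}$ in \emph{every} dimension $d\ge2$ — enters, and the size of $C_{m}$, unbounded as $m\to0$ when $d=2$, never does. What is then left is two textbook discrete-elliptic facts — the Green's-function gradient decay and the interior gradient estimate for harmonic functions — plus the elementary observation that $d(x,\partial X)\ge r$ forces $B_{r/2}(x)\subset X$; the only care needed is that both elliptic estimates be quantified uniformly in the small mass $m$ on the torus, which is exactly the ``real space decay of Poisson kernels and Green's functions'' flagged in the introduction.
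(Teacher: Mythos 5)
Your argument is correct, and its ingredients coincide exactly with those in the paper's proof of Proposition~\ref{prop:covest}: the reduction from $C_U^D$ to $C_\Lambda$ (in the paper $C_U^D\le C_\Lambda$ as quadratic forms, which is precisely your ``conditioning never increases variance''), the decay $|\nabla C_\Lambda(u)|=O(|u|^{1-d})$ from Corollary~\ref{cor:Cdecay}, the fact $\sum_{y\in\partial X}P_X(x,y)\le 1$, and one application of a lattice gradient estimate for harmonic functions. The assembly, though, is genuinely different. The paper keeps the Poisson kernel explicit: it collapses one sum via $\sum_{y_1\in\partial X}P_X(x,y_1)C_\Lambda(y_1,y_2)=C_\Lambda(x,y_2)$ to get $\sum_{y_2\in\partial X}\partial_{x,e}C_\Lambda(x,y_2)\,\partial_{x,e}P_X(x,y_2)$, then bounds the second factor pointwise by $cR^{-1}P_X(x,y_2)$ using the Harnack-type gradient estimate \eqref{eq:diff-est-positive} for \emph{nonnegative} harmonic functions, and finally sums over $y_2$. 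You instead recognize the bilinear form (again after reducing to $C_\Lambda$) as the mixed second difference $(\partial_e\otimes\partial_e h_X)(x)$ via the law of total variance, package one of the two differences into the harmonic function $g(y)=h_X(y,x+e)-h_X(y,x)$, bound $\sup_{\mathcal{B}_{r/2}(x)}|g|$ by $O(r^{1-d})$ using the same two facts about $\nabla C_\Lambda$ and $\sum P_X$, and only then apply the gradient estimate \eqref{eq:diff-est-any} to $g$. The depth is the same --- one derivative used in each slot and one elliptic gradient bound --- but you need only the general, signed estimate \eqref{eq:diff-est-any} where the paper needs the positivity-based \eqref{eq:diff-est-positive}, and your explicit small-$r$ fallback via $\mathrm{Var}(\partial_e\phi(x))=O(1)$ makes the dependence on $c$ cleaner. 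One shared loose end worth recording: the quoted gradient estimates are stated for $(-\Delta)$-harmonic functions, whereas both you and the paper apply them to $(-\Delta_m)$-harmonic ones; this is innocuous for small $m$ but both proofs leave it implicit.
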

For the proof, see Proposition~\ref{prop:covest}. This result gives the scaling
for the covariance of $\partial P_{X}\zeta$ where $P_{X}$ is a Poisson
kernel obtained from the previous RG step. We take a heuristic test
to see the necessity of this proposition: setting $\xi=0$, for $X\subset U$,
if we perform an expectation conditioned on $\{\phi(x)\big|x\in X^{c}\}$,
followed by another expectation conditioned on $\{\phi(x)\big|x\in U^{c}\}$,
by (\ref{eq:simply_int_zeta}),
\begin{equation}
\mathbb{E}_{\zeta_{U}}\mathbb{E}_{\zeta_{X}}
\Big[K\Big(P_{X}(P_{U}\phi+\zeta_{U})+\zeta_{X}\Big)\Big]
=\mathbb{E}_{\zeta_{U}}\mathbb{E}_{\zeta_{X}}
\Big[K\big(P_{U}\phi+P_{X}\zeta_{U}+\zeta_{X}\big)\Big] \;,
\end{equation}
then we need this proposition to deal with $P_{X}\zeta_{U}$ when
integrating over $\zeta_{U}$.

Proofs of the above two results are in the following sections.

\subsubsection*{Linearization and stable manifold theorem}

We have just outlined a single RG map
\[
(\sigma_{j},\sigma_{j+1},E_{j+1},K_{j})\rightarrow K_{j+1} \;.
\]
We will show smoothness of this map in Section \ref{sec:Smoothness-of-RG}.
Note that two issues have not been discussed: 1) choice of $\sigma_{j+1},E_{j+1}$,
which should be a function of $(\sigma_{j},K_{j})$, so that the RG
map becomes $(\sigma_{j},K_{j})\rightarrow(\sigma_{j+1},K_{j+1})$
(notice that we will not regard $E_{j+1}$ as dynamical parameter and
we will factorize it out); 2) choice of $\sigma$ in the a priori tuning
step. We will outline how to treat these two issues now.

Clearly $(\sigma,K)=(0,0)$ is a fixed point of the RG map. In Section
\ref{sec:Linearized-RG} we show that the linearization of the map
$(\sigma_{j},\sigma_{j+1},E_{j+1},K_{j})\rightarrow K_{j+1}$ around
$(0,0,0,0)$ has a form $\mathcal{L}=\mathcal{L}_{1}+\mathcal{L}_{2}+\mathcal{L}_{3}$
where $\mathcal{L}_{1}$ captures the ``large polymers'' contributions
to $K_{j+1}$, and $\mathcal{L}_{2}$ involves the remainder of second
order Taylor expansion of conditionally expected $K_{j}$ on {}``small
polymers'', both of which will be shown contractive with arbitrarily
small norm by suitable choices of constants $L$ and $A$ introduced
above. Furthurmore, $\mathcal{L}_{3}(D)$ will roughly have a form 
\begin{equation}
 L^{d}E_{j+1}+\sigma_{j+1}\sum_{x\in D}(\partial P_{D^{+}}\phi(x))^{2}-\sigma_{j}\big(\sum_{x\in D}(\partial P_{D^{+}}\phi(x))^{2}+\delta E_{j}\big)+Tay
\end{equation}
where $Tay$ is the second order Taylor expansion of conditionally
expected $K_{j}$ on small polymers, which consists of constant and
quadratic terms, and $D$ is a $j+1$ block. Now it is easy to see
that there is a way to choose $E_{j+1}$ and $\sigma_{j+1}$ so that
$\mathcal{L}_{3}$ is almost $0$, up to a localization procedure
for {}``$Tay$''. For proofs see Section \ref{sec:Linearized-RG}.

Once we have shown a way to choose the constants $\sigma_{j+1},E_{j+1}$
to ensure contractivity of the above linear map, a stable manifold
theorem can be applied to prove that there exists a suitable tuning
of $\sigma$ so that
\begin{equation} \label{eq:sigmaKgoto0}
\left|\sigma_{j}\right|\lesssim2^{-j}\qquad\left\Vert K_{j}\right\Vert _{j}\lesssim2^{-j} \;.
\end{equation}

\subsubsection*{Main result: the scaling limit}
\begin{thm}
For any $p>d$ there exists constants $M>0$ and $z_{0}>0$ so that:
for all $\Vert\tilde{f}\Vert_{L^{p}}\leq M$ and all $\left|z\right|\leq z_{0}$
there exists a constant $\epsilon$ depending on $z$ and
\begin{equation}
\lim_{N\rightarrow\infty} Z_{N}(f)
= \exp\left(\frac{1}{2}\int_{\tilde{\Lambda}}\tilde{f}(x)(-\epsilon\tilde{\Delta})^{-1}\tilde{f}(x)d^{d}x\right)\label{eq:main_thm}
\end{equation}
where $\tilde{\Delta}$ is the Laplacian in continuum,
and $Z_{N}(f)$ is defined in (\ref{eq:generating_good}).
\end{thm}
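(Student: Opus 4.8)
The plan is to combine the polymer-expansion machinery built in Sections~\ref{sec:Outline}ff.\ with the stable manifold result \eqref{eq:sigmaKgoto0} and then pass to the continuum limit carefully, tracking the $m\to0$ limit throughout. First I would fix $p>d$ and choose the constants $L$ and $A$ large enough (and $h$, hence $M$, small enough, and $z_0$ small enough) so that Propositions~\ref{prop:firstprop}, \ref{prop:extra-reblo} and the contraction estimates of the linearized RG map all hold simultaneously; this produces, for each admissible $z$, a tuned dielectric constant $\epsilon=\epsilon(z)$ and a sequence $(\sigma_j,K_j)$ satisfying \eqref{eq:sigmaKgoto0}. The starting point is the representation \eqref{eq:basic_structure}, which at scale $j=0$ is exactly Proposition~\ref{prop:firstprop}, applied both to $Z_N'(\xi)$ and to $Z_N'(0)$; since $\mathcal{E}_j$ is shown to be the same constant for both, \eqref{eq:generating_good} becomes
\begin{equation}\label{eq:ZNplan}
Z_N(f)=\lim_{m\to0}e^{\frac12\sum_{x\in\Lambda}f(x)(-\epsilon\Delta_m)^{-1}f(x)}\,
\frac{\mathbb{E}\big[\sum_{X\in\mathcal{P}_N}I_N(\Lambda\setminus\hat X)K_N(X)\big]}{\mathbb{E}\big[\sum_{X\in\mathcal{P}_N}I_N(\Lambda\setminus\hat X)K_N(X)\big]\big|_{\xi=0}} \;.
\end{equation}

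Next I would iterate the RG map all the way to the terminal scale $j=N$. At $j=N$ the only $N$-polymers in $\Lambda$ are $\emptyset$ and $\Lambda$ itself, so the sum in \eqref{eq:basic_structure} collapses: the $X=\emptyset$ term contributes $I_N(\Lambda)$ and the $X=\Lambda$ term contributes $\mathbb{E}[K_N(\Lambda)]$ (recall the special unconditional treatment of $U=\Lambda$). One then shows, using the bound $\|K_N\|_N\lesssim 2^{-N}$ from \eqref{eq:sigmaKgoto0} together with the regulator estimate, that the $K_N(\Lambda)$ contribution is negligible as $N\to\infty$ relative to the leading $I_N(\Lambda)$ term, so that
\begin{equation}\label{eq:ratioplan}
\frac{Z_N'(\xi)}{Z_N'(0)}=\frac{\mathbb{E}[I_N(\Lambda,\phi,\xi)]+\mathbb{E}[K_N(\Lambda,\phi,\xi)]}{\mathbb{E}[I_N(\Lambda,\phi,0)]+\mathbb{E}[K_N(\Lambda,\phi,0)]}
= \frac{\mathbb{E}[I_N(\Lambda,\phi,\xi)]}{\mathbb{E}[I_N(\Lambda,\phi,0)]}\bigl(1+o(1)\bigr) \;.
\end{equation}
The quantity $I_N(\Lambda,\phi,\xi)=\exp\bigl(-\tfrac14\sigma_N\sum_{x\in\Lambda,e}(\partial_e P_{\Lambda^+}\phi(x)+\partial_e\xi(x))^2\bigr)$ is purely Gaussian in $\phi$ — a shifted quadratic exponential — so $\mathbb{E}[I_N(\Lambda,\phi,\xi)]$ can be computed explicitly by completing the square. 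The $\phi$-dependent normalization cancels in the ratio, leaving a clean Gaussian factor $\exp\bigl(c_N\,\langle \xi,(\text{quadratic form})\,\xi\rangle\bigr)$ depending on $\sigma_N$.

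Then comes the continuum-limit bookkeeping. Collecting \eqref{eq:ZNplan}--\eqref{eq:ratioplan}, $Z_N(f)$ equals $\exp$ of the sum of two quadratic forms in $f$: the explicit $\tfrac12\sum f(-\epsilon\Delta_m)^{-1}f$ coming from the shift, and the contribution from $\mathbb{E}[I_N(\Lambda,\phi,\xi)]/\mathbb{E}[I_N(\Lambda,\phi,0)]$ which involves $\sigma_N$ and the Poisson-kernel-smoothed covariance $(\partial P_{\Lambda^+})C_{\Lambda}^D(\partial P_{\Lambda^+})^\star$. Here I would use the rescaling identities $L^{2N}\tilde C_{L^Nm}(L^{-N}x)=C_m(x)$ and $L^{-(d-2)N/2}\tilde\xi_{L^Nm}(L^{-N}x)=\xi_m(x)$ together with $f_N(x)=L^{-(d+2)N/2}\tilde f(L^{-N}x)$ to convert the lattice sums over $\Lambda$ into Riemann sums converging to integrals over $\tilde\Lambda$; the factor $L^{-(d+2)N}$ from $f_N^2$ times the $\sum_x$ ($\sim L^{dN}$ terms) times the $L^{2N}$ from the covariance scaling produces exactly the finite continuum limit. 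Since $\sigma_N\to0$ as $N\to\infty$ by \eqref{eq:sigmaKgoto0}, the ``correction'' quadratic form coming from $I_N$ with $\sigma_N\ne\epsilon^{-1}-1$... — more precisely, I expect the tuning to be arranged so that in the limit only $\epsilon$ survives in the coefficient, i.e.\ $(-\epsilon\Delta_m)^{-1}$ passes to $(-\epsilon\tilde\Delta)^{-1}$ on $\tilde\Lambda$ and all $\sigma_N$-dependent pieces vanish, yielding \eqref{eq:main_thm}. Finally I would send $m\to0$ first (using $R<\infty$ and the uniform bounds, e.g.\ dominated convergence for the $L^q$ and $L^\infty$ bounds \eqref{eq:smallness_xi}) and then $N\to\infty$, checking the limits can be interchanged or taken in the stated order.

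The main obstacle I anticipate is controlling the interplay between the three limiting operations — the terminal-scale collapse $j\to N$, the infrared $m\to0$, and the continuum $N\to\infty$ — uniformly. In particular, the bound $\|K_N\|_N\lesssim 2^{-N}$ must be shown to dominate $\mathbb{E}[I_N(\Lambda)]$ \emph{after} the $m\to0$ limit, which requires that the regulator $G(\ddot X, X^+)$ and the normalization $N(X,Y)$ remain comparable as $m\to0$ on the torus $\Lambda$ of side $L^N$; this is where the Dirichlet Gaussian field estimates and Proposition~\ref{prop:scaling} do the real work. A secondary subtlety is that $\epsilon$ depends on $z$ through the stable-manifold tuning, so one must verify the tuned $\sigma$ (equivalently $\epsilon$) is $N$-independent — or that its $N$-dependence washes out — so that the limit in \eqref{eq:main_thm} is a single well-defined constant; this should follow from the fact that the stable manifold is constructed for the infinite sequence of RG steps at once, not scale by scale.
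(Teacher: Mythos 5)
Your overall strategy --- split $Z_N(f)$ into the explicit Gaussian prefactor $e^{\frac12\sum f(-\epsilon\Delta_m)^{-1}f}$ times $Z_N'(\xi)/Z_N'(0)$, show the prefactor converges to the continuum integral, and use the decay $\sigma_j,\|K_j\|_j\lesssim 2^{-j}$ to dispose of the ratio --- is the right one. But there are two structural problems.

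First, you iterate all the way to scale $j=N$ and then try to compute $\mathbb{E}[I_N(\Lambda,\phi,\xi)]$ by Gaussian integration and extract a ``clean Gaussian factor $\exp(c_N\langle\xi,Q\xi\rangle)$.'' The paper explicitly cannot perform that last RG step: the objects $\tilde I_{N-1}$ and $I_N$ would involve Poisson kernels $P_{(\bar B)^+}$, $P_{D^+}$ for sets $(\bar B)^+$, $D^+$ that would not be proper subsets of the torus $\Lambda$, so the harmonic-extension construction breaks down (this is why Remark~\ref{rem:wellposedness} stipulates $U\subsetneq\Lambda$). The paper therefore stops at scale $N-1$, where $\Lambda$ consists of $L^d$ blocks of scale $N-1$, and works directly with the representation
$Z_N'(\xi)=e^{\mathcal E_{N-1}}\,\mathbb{E}\bigl[\sum_{X\in\mathcal P_{N-1}}I_{N-1}^{\Lambda\setminus\hat X}K_{N-1}(X)\bigr]$. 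The sum has more than two terms (not just $X=\emptyset$ and $X=\Lambda$), and the paper bounds the deviation of the whole sum from $1$ uniformly in $\xi$ using the regulator integrability Lemma~\ref{lem:intproperties} and the entropy factor $2^{L^d}$.

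Second, and related: your plan to ``complete the square'' in $\mathbb{E}[I_N]$ and combine the resulting $\xi$-quadratic form with the explicit prefactor is doing work that is not needed and cannot quite be made precise as stated. No quadratic form in $\xi$ survives from the polymer sum. The bound \eqref{eq:sigmaKgoto0} gives $|\sigma_{N-1}|\lesssim 2^{-N}$, and together with the smallness of $\xi$ from \eqref{eq:smallness_xi} one gets $|\mathbb{E}[I_{N-1}^\Lambda]-1|\lesssim 2^{-N}$ directly; the $K_{N-1}$ terms are handled the same way. So $Z_N'(\xi)=e^{\mathcal E_{N-1}}(1+O(2^{-N}c^{L^d}))$ with a constant \emph{independent of $\xi$}, hence $Z_N'(\xi)/Z_N'(0)\to1$, and the entire continuum quadratic form in the theorem comes from the explicit prefactor alone. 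Your hedged remark ``I expect the tuning to be arranged so that only $\epsilon$ survives'' is actually the whole content of the step, and it does not come from the tuning of $\sigma$ in any delicate way --- it comes simply from $\sigma_j\to0$ making the $I$-factors close to $1$. Finally, the regulators and norms were built to be uniform in $m$, so the ``interplay of three limits'' is less delicate than you anticipate; one only needs the exponential prefactor to converge, which uses $\int\tilde f=0$ and standard Riemann-sum arguments.
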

The main ingredient of the proof is that 
with $j=N-1$,
by eq. (\ref{eq:basic_structure}) and (\ref{eq:sigmaKgoto0}),
one can bound $Z_{N}^{\prime}(\xi)$ essentially by
\begin{equation}
e^{\mathcal{E}_{N-1}}\sum_{X\in\mathcal{P}_{N-1}}(1+2^{-N})^{\Lambda\backslash\hat{X}}2^{-N} 
\end{equation}
Bounding the number of terms by $ $$2^{L^{d}}$ we see that it is
almost $e^{\mathcal{E}_{N-1}}$ as $N$ becomes large. The constant
$e^{\mathcal{E}_{N-1}}$ will be the same for $Z_{N}^{\prime}(\xi)$
and $Z_{N}^{\prime}(0)$. So only the exponential factor in equation
(\ref{eq:generating_good}) survives in the $N\rightarrow\infty$
limit and it goes to the right hand side of (\ref{eq:main_thm}).
The details are given in Section \ref{sec:Proof-of-scaling}. We remark that
the assumption on $\tilde{f}$, which makes $f$ smooth at the scale
$N$ is for simplicity of the demonstration of the method.

\section{The renormalization group steps\label{sec:The-renormalization-group}}

\subsection{Some additional definitions} \label{sub:Definitions}
\begin{enumerate}
\item A j-polymer $X$ is called a small set or small polymer if it is connected
and $\left|X\right|_{j}\leq2^{d}$. Otherwise it is called large. We
write by $\mathcal{S}_{j}(X)$ the set of all small j-polymers in
$X$. 
\item Define $\hat{\mathcal{S}_{j}}$ to be the set of pairs $(B,X)$ so
that $X\in\mathcal{S}_{j}$ and $B\in\mathcal{B}_{j}(X)$.
\item We also introduce a notation $Y\in_{X}\mathcal{P}_{j}$ which means
$Y\in\mathcal{P}_{j}$ and that if $X=\emptyset$ then $Y=\emptyset$.
\item Let $X\in\mathcal{P}_{j}$. Define its closure $\bar{X}\in\mathcal{P}_{j+1}$
to be the smallest (j+1)-polymer that contains $X$. 
\item 
We define a notation $\chi_{\mathcal{A}}^{j}$ where $\mathcal{A}$
is a set of polymers: $\chi_{\mathcal{A}}^{j}=1$ if any two polymers
in $\mathcal{A}$ are strictly disjoint as $j$-polymers and $\chi_{\mathcal{A}}^{j}=0$
otherwise. Also, if $\mathcal{A}$ is a set of polymers, we write
$X_{\mathcal{A}}$ to be the union of all elements of $\mathcal{A}$.
\end{enumerate}

\subsection{Renormalization group steps} \label{subsec:RG-steps}

Now we focus on a single RG map from scale $j$ to $j+1$. For simpler
notations we omit the subscript $j$ and objects at scale $j+1$ will
be labelled by a prime, e.g. $K^{\prime}$, $\mathcal{P}^{\prime}$.
The guidance principle will be that for all kinds of $I$'s below,
$I-1$ and their difference $\delta I$ and $K$ will be small, so
their products will be higher order small quantities. These remarks
will make more sense after we discuss the linearization of the smooth
RG map in Section~\ref{sec:Linearized-RG}.

\subsubsection*{Extraction and Reblocking}

We start to prove Proposition~\ref{prop:extra-reblo}. 
Before the proof,
we describe here the main ideas in the strategies we use below,
and a reader may find helpful to read the proof along with these descriptions.
The way to construct $I^{\prime},K^{\sharp}$ is certainly
not unique. However, our construction (see \eqref{eq:Ksharp} below)
must have the foresight that 
$K^{\sharp}$ will be smooth in its arguments (which will be shown
in Section~\ref{sec:Smoothness-of-RG}), with respect to 
certain norms defined in Section~\ref{sec:Norms}.
Due to the nature of these norms, the proof of this smoothness in
Section~\ref{sec:Smoothness-of-RG}
will reply on some separation properties of different factors 
appeared in the $K^\sharp$ finally constructed in \eqref{eq:Ksharp}.
Ensuring these separation properties complicates the proof.

The proof of Proposition~\ref{prop:extra-reblo} then consists of two steps.
The first step is called an extraction step,
in which we extract $\delta I(B)$ from $I(B)$,
see the third line of \eqref{eq:extraction},
 resulting in 
a new quantity $\tilde I(B)$ defined as \eqref{eq:Itilde}.
The extracted quantities $\delta I(B)$
will show 
up as factors multiplying with $K$ in \eqref{eq:after-expand}. 

The second step is called a reblocking step. 
In this step, summations over various sets in \eqref{eq:after-expand}
will eventually become one single sum over 
next scale polymers
$U\in\mathcal P'$
as in \eqref{eq:combine-everything}.
During this reblocking, some $\tilde I$ factors will also become
factors multiplying with $K$ (see \eqref{eq:Ksharp}).

There are two subtleties which one has to take care and thus complicates the proof. 
The first subtlety is that $\delta I(B)$ and $\tilde I(B)$
involve a Poisson kernel for $(\bar B)^+$ which is a set of length size $O(L^{j+1})$.
When these factors  $\delta I$ and $\tilde I$ 
show up as factors multiplying with $K(X)$ as discussed above, the factor $K(X)$ actually only has a corridor $\hat X\backslash X$ of width $L^j$
(formed from the previous RG step),
so the sets $(\bar B)^+$ may intersect with $X$. 
This intersection would be disastrous when we estimate the norm
of the product of these $\delta I$, $\tilde I$ and $K$ factors.
Therefore, in the proof we actually only extract $\delta I(B)$
for those $B$ far enough from $X$, that is, outside the set $\left\langle X\right\rangle$ defined below. 
Inside $\left\langle X\right\rangle$, we do different extractions
as in the second line of \eqref{eq:extraction}, so that the $L^j$ width corridor of $K(X)$
is sufficient to ensure separation.

The other subtlety is 
 that according
to the conclusion of Proposition~\ref{prop:extra-reblo},
 one has to ensure existence of a corridor around $U$ in \eqref{eq:combine-everything}. 
This is not ensured in the ``naive reblocking"  \eqref{eq:reblocking} below (though
in  \eqref{eq:reblocking} we do obtain one single sum over next scale polymers $V\in\mathcal P'$).
Therefore, as an intermediate 
step between extraction and reblocking,
we will perform another expansion by
$\tilde{I}=(\tilde{I}-e^{E^{\prime}})+e^{E^{\prime}}$
right after \eqref{eq:Itilde-and-eE},
and arrange such that
some of the $\tilde{I}-e^{E^{\prime}}$ also show up as multiplying factors in \eqref{eq:Ksharp},
and the other $\tilde{I}$ will be separated away by a corridor (between $\Lambda\backslash \hat U$ and $U$ in the last line of the proof).

\begin{rem}
We also have a remark on notations. 
The hats in the notation for a set of pairs such as the $\hat{\mathcal{S}_j}$
defined above in Subsection \ref{sub:Definitions} 
and the $\hat{\mathcal Y}$ in the following proof
are simply symbols, which have nothing to do 
with the hat  operation $\hat{\,}$  on a single polymer defined in Subsection \ref{sec:Basics-Polymers}.
\end{rem}

\begin{proof} [Proof of Proposition~\ref{prop:extra-reblo}] 
Define $\tilde{I}\in\mathcal{N}^{\mathcal{B}_{j}}$
as
\begin{equation} \label{eq:Itilde}
\tilde{I}(B)=e^{E^{\prime}-\frac{1}{4}\sigma^{\prime}\sum_{x\in B,e\in\mathcal{E}}\left(\partial_{e}P_{(\bar{B})^{+}}\phi(x)+\partial_{e}\xi(x)\right)^{2}}
\end{equation}
where $E^{\prime}$ and $\sigma^{\prime}$ will be chosen later. 
Note that the above quantity $\tilde I(B)$ differs from 
the quantity $I(B)$ defined in (\ref{eq:defI}) by the new constants $E',\sigma'$ and 
the Poisson kernel $P_{B^+}$ is replaced by the 
 Poisson kernel $P_{(\bar B)^+}$.
For a j-polymer $X$, denote
\[
\left\langle X\right\rangle :=\cup\{B\in\mathcal{B}_{j}:(\bar{B})^{+}\cap\hat{X}\neq\emptyset\}
\]
where the $+$ operation is on the scale $j+1$ and the hat is on
the scale $j$. Then we let
\begin{equation}\label{eq:extraction}
\begin{cases}
1(B)=(1-e^{E^{\prime}})+e^{E^{\prime}} & \mbox{if}\; B\subseteq\hat{X}\backslash X\\
I(B)=(I(B)-e^{E^{\prime}})+e^{E^{\prime}} & \mbox{if}\; B\subseteq\left\langle X\right\rangle \backslash\hat{X}\\
I(B)=\delta I(B)+\tilde{I}(B) & \mbox{if}\; B\subseteq\left\langle X\right\rangle ^{c}\\
K(X)=\sum_{B\in\mathcal{B}(X)}\frac{1}{|X|_{j}}K(B,X) & \mbox{if}\; X\in\mathcal{S}
\end{cases}
\end{equation}
where $\delta I$ is defined implicitly, and $K(B,X):=K(X)$. Insert
these summations into the product factors in (\ref{eq:form_j}), and
expand. We obtain 
\begin{equation} \label{eq:after-expand}
\begin{aligned}
Z_{N}^{\prime}(\xi) 
& = e^{\mathcal{E}}\,\mathbb{E}\bigg[\sum_{X}I^{\Lambda\backslash\hat{X}}1^{\hat{X}\backslash X}\prod_{Y\in\mathcal{C}(X)\backslash\mathcal{S}}K(Y)\prod_{Y\in\mathcal{C}(X)\cap\mathcal{S}}K(Y)\bigg]\\
 & =e^{\mathcal{E}}\,\mathbb{E}\bigg[\sum_{\mathcal{X},\hat{\mathcal{Y}}}\chi_{\mathcal{X}\cup\mathcal{Y}}\sum_{P,Q,Z}(1-e^{E^{\prime}})^{P}(I-e^{E^{\prime}})^{Q}(e^{E^{\prime}})^{(\left\langle X\right\rangle \backslash X)\backslash(P\cup Q)}\\
 & \qquad\cdot
 \delta I^{Z}\tilde{I}^{\left\langle X\right\rangle ^{c}\backslash Z}
 \prod_{Y\in\mathcal{X}}K(Y)\prod_{(B,Y)\in\hat{\mathcal{Y}}}\frac{1}{|Y|_{j}}K(B,Y)\bigg]
\end{aligned}
\end{equation}
where the first summation is over $\mathcal{X}$ which is a family
of connected large polymers, and $\hat{\mathcal{Y}}$ which is a family
of elements in $\hat{\mathcal{S}}$ i.e. $\hat{\mathcal{Y}}=\{(B_{i},Y_{i})\in\hat{\mathcal{S}_{j}}\}_{1\leq i\leq n}$
for some $n\geq0$, and we have defined $\mathcal{Y}:=\{Y_{i}\}_{1\leq i\leq n}$.
In the above equation and in the sequel of this proof,
\[
X:=X_{\mathcal{X}\cup\mathcal{Y}}
\]
and the second summation above is over $P\in\mathcal{P}(\hat{X}\backslash X)$,
$Q\in\mathcal{P}(\left\langle X\right\rangle \backslash\hat{X})$,
and $Z\in\mathcal{P}(\left\langle X\right\rangle ^{c})$. 

Now observe that one can re-arrange the above summations in the following
way:
\begin{equation} \label{eq:reblocking}
\sum_{\mathcal{X},\hat{\mathcal{Y}}}\chi_{\mathcal{X}\cup\mathcal{Y}}\sum_{P,Q,Z}=\sum_{V\in\mathcal{P}^{\prime}}\sum_{(P,Q,Z,\mathcal{X},\hat{\mathcal{Y}})\rightarrow V}
\end{equation}
where the second summation on the right hand side means
\[
\sum_{(P,Q,Z,\mathcal{X},\hat{\mathcal{Y}})\rightarrow V}
:=\sum_{\mathcal{X},\hat{\mathcal{Y}}}\chi_{\mathcal{X}\cup\mathcal{Y}}
	\sum_{\substack{P\in\mathcal{P}(\hat{X}\backslash X)\\Q\in\mathcal{P}(\left\langle X\right\rangle \backslash\hat{X})}}
	\sum_{Z\in\mathcal{P}(\left\langle X\right\rangle ^{c})}1_{\overline{P\cup Q\cup Z\cup\left(\cup_{i=1}^{n}B_{i}\right)\cup X_{\mathcal{X}}}=V} \;.
\]
We would like to write the factors $\tilde{I}$ and $e^{E^{\prime}}$
into parts in $V$ and outside $V$:
\begin{equation} \label{eq:Itilde-and-eE}
\begin{aligned}
\tilde{I}^{\,\left\langle X\right\rangle ^{c}\backslash Z} 
& =\tilde{I}^{\,V^{c}\,\cap\,\left\langle X\right\rangle ^{c}}
\, \tilde{I}^{\,V\,\cap\,(\left\langle X\right\rangle ^{c}\backslash Z)} \;,\\
(e^{E^{\prime}})^{(\left\langle X\right\rangle \backslash X)\backslash(P\cup Q)} & =(e^{E^{\prime}})^{V^{c}\,\cap\,(\left\langle X\right\rangle \backslash X)}\,(e^{E^{\prime}})^{V\,\cap\,(\left\langle X\right\rangle \backslash X)\backslash(P\cup Q)} \;.
\end{aligned}
\end{equation}

Note that $V^{c}\cap\left\langle X\right\rangle ^{c}$ (where some
$\tilde{I}$ live on) could possibly touch $V$, so our next step
is to make a corridor so that such touchings will be avoided. Write
$\tilde{I}=(\tilde{I}-e^{E^{\prime}})+e^{E^{\prime}}$ and expand,
\[
\tilde{I}^{V^{c}\cap\left\langle X\right\rangle ^{c}}=\sum_{W\in\mathcal{P}^{\prime}(V^{c})}(\tilde{I}-e^{E^{\prime}})^{W\cap\left\langle X\right\rangle ^{c}}(e^{E^{\prime}})^{(V^{c}\backslash W)\cap\left\langle X\right\rangle ^{c}} \;.
\]
For each $V$ and $W$, define $U$ to be the smallest union of connected
components of $V\cup W$ that contains $V$:
\[
U=U_{W,V}:=\cap\{T\big|T\in\mathcal{UC}(V\cup W),T\supseteq V\}\in\mathcal{P}^{\prime}
\]
where $\mathcal{UC}(V\cup W)$ is the set of unions of ($j+1$ scale)
connected components of $V\cup W$. Observe that if $L$ is sufficiently
large, one has $\left\langle X\right\rangle \subseteq\hat{V}\subseteq\hat{U}$.
So
\[
\begin{aligned}
\tilde{I}^{\,V^{c}\,\cap\,\left\langle X\right\rangle ^{c}}
=\sum_{W\in\mathcal{P}^{\prime}(V^{c})}
& (\tilde{I}-e^{E^{\prime}})^{W\backslash\hat{U}}
(\tilde{I}-e^{E^{\prime}})^{W\,\cap \,U\,\cap\,\left\langle X\right\rangle ^{c}} \\
& \times
(e^{E^{\prime}})^{(V^{c}\backslash W)\backslash\hat{U}}
(e^{E^{\prime}})^{(V^{c}\backslash W)\,\cap\,\hat{U}\,\cap\,\left\langle X\right\rangle ^{c}} \;.
\end{aligned}
\]
Let $R:=W\backslash U=W\backslash\hat{U}$. Note that one has the
following identities for the sets appearing in the above equation:
$W\cap U=U\backslash V$ and 
\[
(V^{c}\backslash W)\backslash\hat{U}=(\hat{U})^{c}\backslash R \;,
\]
\[
(V^{c}\backslash W)\cap\hat{U}=\hat{U}\backslash U  \;.
\]
The summation over $W$ amounts to a summation over $U$ and $R$:
\begin{equation}
\begin{aligned}
\tilde{I}^{\,V^{c}\,\cap\,\left\langle X\right\rangle ^{c}} 
& =\sum_{U\in_{V}\mathcal{P}^{\prime},U\supseteq V}
\sum_{R\in\mathcal{P}^{\prime}(\Lambda\backslash\hat{U})}
(\tilde{I}-e^{E^{\prime}})^{R}
(\tilde{I}-e^{E^{\prime}})^{(U\backslash V)\,\cap\,\left\langle X\right\rangle ^{c}}\\
&\qquad\qquad\qquad\times
(e^{E^{\prime}})^{(\hat{U})^{c}\backslash R}
(e^{E^{\prime}})^{(\hat{U}\backslash U)\,\cap\,\left\langle X\right\rangle ^{c}}\\
 & =\sum_{U\in_{V}\mathcal{P}^{\prime},U\supseteq V}\tilde{I}^{\Lambda\backslash\hat{U}}(\tilde{I}-e^{E^{\prime}})^{(U\backslash V)\,\cap\,\left\langle X\right\rangle ^{c}}(e^{E^{\prime}})^{(\hat{U}\backslash U)\,\cap\,\left\langle X\right\rangle ^{c}} \;.
\end{aligned}
\label{eq:ItildeVcXc}
\end{equation}

The factor $(e^{E^{\prime}})^{V^{c}\,\cap\,(\left\langle X\right\rangle \backslash X)}$
appearing in (\ref{eq:Itilde-and-eE}) is treated as follows. Since
$\left\langle X\right\rangle \subseteq\hat{U}$ 
\begin{equation}
\begin{aligned}
(e^{E^{\prime}})^{V^{c}\,\cap\,(\left\langle X\right\rangle \backslash X)} & =(e^{E^{\prime}})^{V^{c}\,\cap\,\left\langle X\right\rangle }(e^{-E^{\prime}})^{V^{c}\,\cap\, X}\\
 & =(e^{E^{\prime}})^{(\hat{U}\backslash U)\,\cap\,\left\langle X\right\rangle }(e^{E^{\prime}})^{V^{c}\,\cap\,\left\langle X\right\rangle \,\cap\, U}(e^{-E^{\prime}})^{V^{c}\,\cap\, X} \;.
\end{aligned}
\label{eq:eEtildeVcXX}
\end{equation}
Combine (\ref{eq:after-expand}) - (\ref{eq:eEtildeVcXX}),
\begin{equation}\label{eq:combine-everything}
Z_{N}^{\prime}(\xi)=e^{\mathcal{E}} \,\mathbb{E}\bigg[\sum_{U\in\mathcal{P}^{\prime}}\tilde{I}^{\Lambda\backslash\hat{U}}(e^{E^{\prime}})^{\hat{U}}K^{\sharp}(U)\bigg]
\end{equation}
where for $U\neq\emptyset$
\begin{equation} \label{eq:Ksharp}
\begin{aligned}
K^{\sharp}(U):= & 
\sum_{V\subseteq U,V\neq\emptyset}
\sum_{(P,Q,Z,\mathcal{X},\hat{\mathcal{Y}})\rightarrow V}
\!\!\!\!\!\!\!\!\!\!
(1-e^{E^{\prime}})^{P}
(I-e^{E^{\prime}})^{Q}
\delta I^{Z}
(\tilde{I}-e^{E^{\prime}})^{(U\backslash V)\,\cap\,\left\langle X\right\rangle ^{c}}\\
 & \qquad\times
 \,(e^{E^{\prime}})^{(\left\langle X\right\rangle \backslash X)\,\cap\, U\backslash(P\cup Q)}
 \,(e^{-E^{\prime}})^{U\,\cup\, X}
 \,\tilde{I}^{\,V\,\cap\,(\left\langle X\right\rangle ^{c}\backslash Z)} \\
 & \qquad\times
 \prod_{Y\in\mathcal{X}}K(Y)\prod_{(B,Y)\in\hat{\mathcal{Y}}}\frac{1}{|Y|_{j}}K(B,Y)\;.
\end{aligned}
\end{equation}
Factorizing the constant $e^{E^{\prime}}$ by letting
\[
\mathcal{E}^{\prime}=\mathcal{E}+E^{\prime}|\Lambda|_{j}
\]
\[
I^{\prime}(D)=e^{-L^{d}E^{\prime}}\prod_{B\in\mathcal{B}(D)}\tilde{I}(B)=e^{-\frac{1}{4}\sigma_{j+1}\sum_{x\in D,e\in\mathcal{E}}\left(\partial_{e}P_{D^{+}}\phi(x)+\partial_{e}\xi(x)\right)^{2}}
\]
for $D\in\mathcal{B}^{\prime}$, we obtain
\[
Z_{N}^{\prime}(\xi)=  e^{\mathcal{E}^{\prime}}\mathbb{E}\bigg[\sum_{U\in\mathcal{P}^{\prime}}(I^{\prime})^{\Lambda\backslash\hat{U}}K^{\sharp}(U)\bigg] \;.
\]
This is precisely the statement \eqref{eq:next-scale1}.
\end{proof}

\subsubsection*{Conditional expectation}
\begin{lem}
$K^{\sharp}$ factorizes over $j+1$ scale connected components, namely
\begin{equation}
K^{\sharp}(U)=\prod_{V\in\mathcal{C}_{j+1}(U)}K^{\sharp}(V)
\end{equation}
where $\mathcal{C}_{j+1}(U)$ is the set of connected components of
$U$ as a $j+1$ polymer.\end{lem}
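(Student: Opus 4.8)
The plan is to show that the formula \eqref{eq:Ksharp} for $K^{\sharp}(U)$ splits as a product over the $(j+1)$-scale connected components of $U$. The main point is that every summation index and every factor on the right-hand side of \eqref{eq:Ksharp} is attached to a polymer that lives in a definite connected component of $U$, and that the constraint $\chi_{\mathcal{X}\cup\mathcal{Y}}$ together with the localization constraints in $\sum_{(P,Q,Z,\mathcal{X},\hat{\mathcal{Y}})\rightarrow V}$ forces the combinatorial data to decompose accordingly. So the first step is to fix $U$ and write $U=\sqcup_{a}U_{a}$ with $U_{a}\in\mathcal{C}_{j+1}(U)$, and observe that any $V\subseteq U$, $V\neq\emptyset$, decomposes uniquely as $V=\sqcup_a V_a$ with $V_a=V\cap U_a$; since each $V_a$ need not be nonempty, this is exactly an instance of the $\in_{U_a}\mathcal{P}_{j+1}$ bookkeeping, but because $V\neq\emptyset$ at least one $V_a\neq\emptyset$, matching the requirement that each $K^{\sharp}(U_a)$ sums over $V_a\neq\emptyset$ once one checks the $V_a=\emptyset$ terms contribute the empty product $1$.

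The second step is to check that the inner sum $\sum_{(P,Q,Z,\mathcal{X},\hat{\mathcal{Y}})\rightarrow V}$ factorizes. Here I would use that the closure operation and the constraint $\overline{P\cup Q\cup Z\cup(\cup_i B_i)\cup X_{\mathcal{X}}}=V$ are local: a connected large polymer $Y\in\mathcal{X}$ or a small polymer $Y_i$ (hence the pair $(B_i,Y_i)$) has a $(j+1)$-closure that is connected, so it lies entirely in one $U_a$; likewise each connected component of $P$, $Q$, $Z$ lies in a single $U_a$ (after noting $P\in\mathcal{P}(\hat X\backslash X)$, $Q\in\mathcal{P}(\langle X\rangle\backslash\hat X)$, $Z\in\mathcal{P}(\langle X\rangle^c)$ are indexed by $X=X_{\mathcal{X}\cup\mathcal{Y}}$, which itself is now split). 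The constraint $\chi_{\mathcal{X}\cup\mathcal{Y}}$ is automatically compatible with the splitting since polymers in different $U_a$ are strictly disjoint. Thus the whole constrained sum over $(P,Q,Z,\mathcal{X},\hat{\mathcal{Y}})\to V$ equals the product over $a$ of the analogous sums over data mapping to $V_a$.

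The third step is to verify the integrand factors in \eqref{eq:Ksharp} themselves split as products over components. Each of $(1-e^{E'})^P$, $(I-e^{E'})^Q$, $\delta I^Z$, $(\tilde I-e^{E'})^{(U\backslash V)\cap\langle X\rangle^c}$, $(e^{E'})^{(\langle X\rangle\backslash X)\cap U\backslash(P\cup Q)}$, $(e^{-E'})^{U\cup X}$, $\tilde I^{\,V\cap(\langle X\rangle^c\backslash Z)}$ is a product of single-block factors $I(B)$-type quantities raised over a block set; one only needs that each such block set intersected with $U_a$ partitions the original block set, and that the factors $I(B),\tilde I(B),\delta I(B)$ depend only on $\phi,\xi$ near $B$ so there is no cross-component coupling. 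The one subtlety is the factor $(e^{-E'})^{U\cup X}$ and similar global exponents: these are constants raised to $|U\cup X|_j$ (or the relevant $j$-block count), which is additive over components, so they split too. Finally $\prod_{Y\in\mathcal{X}}K(Y)\prod_{(B,Y)\in\hat{\mathcal{Y}}}\frac{1}{|Y|_j}K(B,Y)$ splits because each $Y$ (and each pair) belongs to exactly one $U_a$. Collecting, the summand for data mapping to $V$ is the product over $a$ of the summands for the data mapping to $V_a$, and summing gives $K^{\sharp}(U)=\prod_a K^{\sharp}(U_a)$.

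\textbf{Main obstacle.} I expect the bookkeeping around the empty-$V_a$ terms and the global exponential factors ($e^{-E'}$ to powers like $|U\cup X|_j$, and the $(e^{E'})$ corridor factors) to be the only real friction: one must check that when $V_a=\emptyset$ the corresponding factor of $K^{\sharp}(U_a)$ is exactly $1$ (so it can be silently inserted), and that the exponents of $e^{\pm E'}$ distribute additively over the component decomposition rather than double-counting the corridor blocks $\hat U\backslash U$ or the sets $\langle X\rangle$. Everything else is a routine "all objects are local, hence the product and the constrained sum both factor over connected components" argument.
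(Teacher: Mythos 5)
Your overall approach --- decompose all summation data over the $(j+1)$-connected components of $U$, then use locality of $I$, $\tilde I$, $\delta I$, $K$ and additivity of the constant exponents to split each summand as a product --- is exactly what the paper does. The paper's own proof is just terser: it sets $E^{(p)}:=E\backslash\cup_{q\neq p}V_{q}$ for each datum $E$ and notes that the $E^{(p)}$ are strictly disjoint across components, so factorization follows.

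However, your resolution of the empty-component issue is not correct. If $V_a=\emptyset$ but $U_a\neq\emptyset$, the $U_a$-portion of the integrand in \eqref{eq:Ksharp} is \emph{not} the empty product. Since the data mapping to $V$ all lies away from $U_a$, the restriction to $U_a$ of the factor $(\tilde{I}-e^{E^{\prime}})^{(U\backslash V)\cap\langle X\rangle^{c}}\,(e^{-E^{\prime}})^{U\cup X}$ is $\prod_{B\in\mathcal{B}_{j}(U_a)}(\tilde{I}(B)-e^{E^{\prime}})\,e^{-E^{\prime}}$, which is not identically $1$. The correct resolution is that such $V$ simply do not occur: they are excluded by an implicit constraint built into the reparametrization $W\leftrightarrow (U,R)$ in \eqref{eq:ItildeVcXc}. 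There $U=U_{W,V}$ is by definition the \emph{smallest} union of $(j+1)$-connected components of $V\cup W$ containing $V$, which forces every $(j+1)$-connected component of $U$ to meet $V$. The bijection $W\mapsto(U,R)$ therefore only produces pairs $U\supseteq V$ in which each component of $U$ intersects $V$, so in \eqref{eq:Ksharp} the sum over $V$ is implicitly restricted to those $V$ with $V\cap U_a\neq\emptyset$ for every component $U_a$. With this constraint, each $V_a$ in your decomposition is nonempty, and the factorization $K^{\sharp}(U)=\prod_a K^{\sharp}(U_a)$ goes through exactly as you and the paper describe.
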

\begin{proof}
Let $V_{1},\dots,V_{|\mathcal{C}(U)|}$ be all the connected components
of $U$. For any $E$ which may stand for $U,Z,P,Q$, elements of
$\mathcal{X}\cup\mathcal{Y}$, one of the $B_{i}$, or $X=X_{\mathcal{X}\cup\mathcal{Y}}$,
let $E^{(p)}=E\backslash\cup_{q\neq p}V_{q}$. It is easy to check
that for $i\neq j$, $E^{(i)}$ and $E^{(j)}$ are strictly disjoint
on scale $j$. Then the lemma is proved by the factorization property
of $I,K$ on scale $j$.
\end{proof}

We are now ready to take the expectation of $K^{\sharp}(V)$ conditioned
on $\phi$ outside $V^{+}$ for each $V\in\mathcal{C}(U)\backslash\{\Lambda\}$,
because $\Lambda\backslash\hat{V}$ and $V^{+}$ do not touch. In the
case $V=\Lambda$, we just take expectation of $K^{\sharp}(V)$ without
conditioning, but write 
$\mathbb{E}\big[K^{\sharp}(\Lambda)\big|(\Lambda^{+})^{c}\big]:=\mathbb{E}\big[K^{\sharp}(\Lambda)\big]$
to shorten the notations. So we obtain the following structure as announced in \eqref{eq:outlinecond}:

\begin{equation} \label{eq:def-KnextU}
\begin{aligned}
Z_{N}^{\prime}(\xi)=e^{\mathcal{E}_{j+1}}  \mathbb{E}\bigg[
	\sum_{\substack{U\in\mathcal{P}_{j+1}}} 
	I_{j+1}^{\Lambda\backslash\hat{U}} 
	K_{j+1}(U)
\bigg] \;, \\
	K_{j+1}(U) := \prod_{V\in\mathcal{C}(U)}\mathbb{E}\left[K_{j}^{\sharp}(V)\big|(V^{+})^{c}\right] \;.
\end{aligned}
\end{equation}

Now we have come back to the basic structure (\ref{eq:basic_structure})
with $j$ replaced by $j+1$. Obviously, $K_{j+1}(U)\in\mathcal{P}_{j+1,c}$.
In Section \ref{sec:Norms} we give precise definitions for norms
and spaces of the $K_{j}$ above, and in section \ref{sec:Smoothness-of-RG}
we prove smoothness of the above map $(\sigma_{j},E_{j+1},\sigma_{j+1},K_{j})\mapsto K_{j+1}$.

\subsection{Properties about conditional expectation}

\subsubsection*{The variation principle}

One of our main ideas is to write the Gaussian field $\phi$ into
a sum of two decoupled parts. This is important for the conditional
expectation.
\begin{fact*}
Given any positive definite quadratic form $Q(v)$ for vector $v$,
if $v=(x,y)$, one can write $Q(v)=Q_{1}(x)+L(x,y)+Q_{2}(y)$ where
$Q_{1,2}$ are positive definite quadratic forms and $L(x,y)$ is
the crossing term. Let $\tilde{x}(y)$ be the minimizer of $Q(v)=Q(x,y)$
with $y$ fixed. Then, one can cancel $L(x,y)$ by shifting $x$ by
$\tilde{x}$:
\begin{equation} \label{eq:quadratic_fact}
Q(v)=Q_{1}(x-\tilde{x})+Q\left((\tilde{x},y)\right) \;.
\end{equation}
\end{fact*}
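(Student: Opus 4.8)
The statement to prove is the elementary \textbf{Fact} about positive-definite quadratic forms: writing $Q(v)=Q_1(x)+L(x,y)+Q_2(y)$ and letting $\tilde x(y)$ minimize $Q(x,y)$ over $x$ with $y$ fixed, one has $Q(v)=Q_1(x-\tilde x)+Q(\tilde x,y)$. This is a completion-of-the-square identity, so the plan is a direct algebraic verification.

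\medskip

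First I would fix $y$ and regard $Q(x,y)$ as a function of $x$ alone. Since $Q$ is positive definite, so is $Q_1$, hence $x\mapsto Q(x,y)$ is a strictly convex quadratic and has a unique minimizer $\tilde x=\tilde x(y)$, characterized by the vanishing of the gradient in $x$: in bilinear-form notation, writing $Q_1(x)=\langle A x,x\rangle$ and $L(x,y)=2\langle Bx,y\rangle$ for suitable operators $A$ (positive definite) and $B$, the stationarity condition reads $2A\tilde x + 2By=0$, i.e. $\tilde x=-A^{-1}By$. Next I would simply substitute $x=(x-\tilde x)+\tilde x$ into $Q_1(x)+L(x,y)$ and expand using bilinearity:
\[
Q_1\big((x-\tilde x)+\tilde x\big)+L\big((x-\tilde x)+\tilde x,\,y\big)
= Q_1(x-\tilde x) + 2\langle A\tilde x,\,x-\tilde x\rangle + 2\langle B y,\,x-\tilde x\rangle + Q_1(\tilde x)+L(\tilde x,y).
\]
The two middle cross terms combine to $2\langle A\tilde x+By,\,x-\tilde x\rangle$, which vanishes by the stationarity condition. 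Adding back $Q_2(y)$ to both sides gives $Q(v)=Q_1(x-\tilde x)+Q_1(\tilde x)+L(\tilde x,y)+Q_2(y)=Q_1(x-\tilde x)+Q(\tilde x,y)$, which is exactly \eqref{eq:quadratic_fact}.

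\medskip

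There is essentially no obstacle here: the only point requiring a word of care is the well-definedness of $\tilde x$, which follows from positive-definiteness of $Q$ (equivalently of $Q_1$, which is the restriction of $Q$ to the $x$-subspace). In the intended application $Q$ is the Dirichlet form $-\sum_{x\in V}\phi(x)\Delta\phi(x)$ (plus the mass term), $x$ plays the role of $\phi|_U$ and $y$ of $\phi|_{U^c}$, and the minimizer $\tilde x$ is the harmonic extension $P_U\phi$; the identity \eqref{eq:quadratic_fact} then becomes precisely the decoupling $Q(\phi)=Q^D_U(\zeta)+Q(P_U\phi)$ with $\zeta=\phi-P_U\phi$, which is what the preceding Proposition asserts. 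So the proof is just the two-line expansion above, and the substantive content is deferred to identifying the minimizer with the harmonic extension, which is handled separately.
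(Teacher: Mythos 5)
Your proof is correct, and indeed the paper states this Fact without proof (it is a standard completion-of-the-square identity), so your argument simply fills in what the paper leaves implicit. Your route — parametrizing $Q_1(x)=\langle Ax,x\rangle$, $L(x,y)=2\langle Bx,y\rangle$, identifying $\tilde x$ by the stationarity condition, substituting $x=(x-\tilde x)+\tilde x$, and observing that the cross terms vanish precisely because $\tilde x$ is critical — is the natural one and matches how the paper immediately applies the Fact (with $Q$ the Dirichlet form, $x=\phi|_U$, and $\tilde x=P_U\phi$ the harmonic extension). One small notational slip: having set $L(x,y)=2\langle Bx,y\rangle$, the gradient of $L$ in $x$ is $2B^{*}y$, so the stationarity condition and the cross term should read $A\tilde x + B^{*}y=0$ and $2\langle B^{*}y,\,x-\tilde x\rangle$ rather than $By$; this is purely a matter of adjoints and does not affect the argument.
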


Before introducing the next proposition, let us recall our convention
that $P_{U}\phi(x)=\phi(x)$ for $x\notin U$ as in subsection \ref{sub:Conventions-about-notations}. 
\begin{prop}
Let $U\subset V\subset\mathbb{Z}^{d}$ be two finite sets. Let $\phi_{U}$
and $\phi_{U^{c}}$ be the restriction of $\phi$ to $U$ and $U^{c}$.
Let $P_{U}$ be the Poisson kernel for $U$ and write $\phi(x)=P_{U}\phi(x)+\zeta(x)$.
Then, 
\begin{equation}
-\sum_{x\in V}\phi(x)\Delta_{m}\phi(x)=-\sum_{x\in U}\zeta(x)\Delta_{U}^{D}\zeta(x)-\sum_{x\in V}P_{U}\phi(x)\Delta_{m}P_{U}\phi(x)
\end{equation}
where $\Delta_{U}^{D}$ is the Dirichlet Laplacian for $U$.
\end{prop}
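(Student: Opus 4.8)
The plan is to derive the identity by decomposing the quadratic form via the harmonic extension and invoking the elementary Fact on quadratic forms stated just above. The key observation is that $\zeta = \phi - P_U\phi$ vanishes on $U^c$ (since $P_U\phi = \phi$ there by convention), so $\zeta$ is supported in $U$, while $P_U\phi$ is the harmonic extension determined entirely by $\phi_{U^c}$.

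First I would set up the splitting: write $v = (\phi_U, \phi_{U^c})$ and consider the positive definite quadratic form $Q(v) = -\sum_{x\in V}\phi(x)\Delta_m\phi(x)$. Viewing $\phi_{U^c}$ as the fixed variable $y$ and $\phi_U$ as the free variable $x$ in the Fact, the minimizer $\tilde x(y)$ of $Q$ over $\phi_U$ with $\phi_{U^c}$ fixed is precisely the harmonic extension: indeed, the Euler--Lagrange equations for minimizing $-\sum_{x\in V}\phi(x)\Delta_m\phi(x)$ over the values $\{\phi(x):x\in U\}$ are $(-\Delta_m)\phi(x)=0$ for all $x\in U$ (the boundary terms on $\partial U$ fall outside $V$ in the relevant variation, or more precisely the variation in an interior site $x\in U$ only sees the stencil of $-\Delta_m$ at $x$), with boundary data $\phi|_{U^c}$ unchanged. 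This is exactly $P_U\phi$. By Remark~\ref{rem:wellposedness} this minimizer exists and is unique for the domains under consideration. Applying \eqref{eq:quadratic_fact} gives
\[
Q(v) = Q_1(\phi_U - \widetilde{\phi_U}) + Q(P_U\phi)
\]
where $\widetilde{\phi_U}$ denotes $(P_U\phi)|_U$, so $\phi_U - \widetilde{\phi_U} = \zeta|_U$, and $Q(P_U\phi) = -\sum_{x\in V}P_U\phi(x)\Delta_m P_U\phi(x)$ is the second term on the right-hand side.

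It then remains to identify the leftover quadratic form $Q_1(\zeta)$ with $-\sum_{x\in U}\zeta(x)\Delta_U^D\zeta(x)$. Here $Q_1$ is the part of $Q$ depending only on the $U$-variables, i.e.\ $Q_1(\psi) = -\sum_{x\in V}\bar\psi(x)\Delta_m\bar\psi(x)$ where $\bar\psi$ is $\psi$ extended by zero to $U^c$. Since $\bar\psi$ vanishes outside $U$, each term $\bar\psi(x)\Delta_m\bar\psi(x)$ with $x\in U^c$ contributes nothing except through the stencil reaching into $U$; collecting all contributions, $Q_1(\psi) = -\sum_{x\in U}\psi(x)\Delta_U^D\psi(x)$ by the very definition of the Dirichlet Laplacian $\Delta_U^D$ (which acts on functions on $U$ with zero boundary values on $\partial U$), together with the summation-by-parts convention $\sum_X(\partial\phi)^2$ introduced in Section~\ref{sub:Conventions-about-notations}. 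Since $\zeta$ is supported in $U$, taking $\psi = \zeta|_U$ yields the first term, completing the identity.

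The main obstacle, such as it is, is purely bookkeeping: one must be careful that the sum in $Q$ runs over $x\in V$ (not $x\in U$), so when extracting $Q_1$ one has to check that the cross-stencil terms linking $U$ to $V\setminus U$ indeed reassemble into the Dirichlet Laplacian rather than producing spurious boundary contributions --- this works precisely because $U\subset V$ guarantees that the full Laplacian stencil at every interior point of $U$ lies inside $V$, and because $\zeta$ (resp.\ the test function $\psi$) vanishes on $U^c$. Equivalently, one can bypass the Fact entirely and verify the identity by direct summation by parts: expand $-\sum_{x\in V}\phi\Delta_m\phi$ with $\phi = P_U\phi + \zeta$ into three pieces and check that the cross term $-2\sum_{x\in V}\zeta(x)\Delta_m P_U\phi(x)$ vanishes, using $\Delta_m P_U\phi = 0$ on $U$ (harmonicity) and $\zeta = 0$ on $U^c$ --- these two facts kill the cross term on all of $V$. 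Either route is routine; I would present the direct summation-by-parts version for transparency.
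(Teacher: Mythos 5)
Your proposal is correct and follows essentially the same route as the paper: both invoke the Fact on quadratic forms with $\phi_{U^c}$ fixed, identify the minimizer as the harmonic extension $P_U\phi$, and recognize the residual block-diagonal piece as the Dirichlet form on $U$. Your alternative direct verification, in which the cross terms are killed by $\Delta_m P_U\phi = 0$ on $U$ and $\zeta = 0$ on $U^c$, is the same computation as the proof of the Fact specialized to this setting, and is valid under the (implicit) standing assumption that the full Laplacian stencil at each point of $U$ stays inside $V$.
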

\begin{proof}
We can apply the Fact (\ref{eq:quadratic_fact}) for $\phi=(\phi_{U},\phi_{U^{c}})$,
and
\[
\begin{aligned}Q(\phi) & =-\sum_{x\in V}\phi(x)\Delta\phi(x)\\
 & =-\sum_{x\in U}\phi_{U}(x)\Delta_{U}^{D}\phi_{U}(x)+L(\phi_{U},\phi_{U^{c}})-\sum_{x\in U^{c}}\phi_{U^{c}}(x)\Delta_{U^{c}}^{D}\phi_{U^{c}}(x)
\end{aligned}
\]
where $L$ is the crossing term, and $\Delta_{U^{c}}^{D}$ is the
Dirichlet Laplacian for $U^{c}$. 
Since the minimizer of $Q(\phi)$ with $\phi_{U^{c}}$ fixed is
the harmonic extension of $\phi$ from $U^c$ into $U$,
and the harmonic field is equal to $P_{U}\phi$, one has
\[
\begin{aligned}
Q(\phi)
& =-\sum_{x\in U}\left(\phi_{U}-P_{U}\phi\right)(x)\Delta_{U}^{D}\left(\phi_{U}-P_{U}\phi\right)(x)-Q\left((P_{U}\phi,\phi_{U^{c}})\right)\\
 & =-\sum_{x\in U}\zeta(x)\Delta_{U}^{D}\zeta(x)-\sum_{x\in V}P_{U}\phi(x)\Delta P_{U}\phi(x) \;.
\end{aligned}
\]
This completes the proof. We remark that 
in the last term, the points $x\in U$ do not actually contribute 
to the sum since $\Delta P_U \phi=0$ in $U$. 
\end{proof}

By this proposition, taking expectation of a function $K(\phi)$ conditioned
on $\{\phi(x)\big|x\in U^{c}\}$ is equivalent to simply integrating
out $\zeta$:
\begin{equation}
\mathbb{E}\left[K(\phi)\big|U^{c}\right]=\mathbb{E}_{\zeta}\left[K(P_{U}\phi+\zeta)\right]
\end{equation}
where the covariance of $\zeta$ is the Dirichlet Green's function
for $U$.

As another important fact, we note that $K(X,\phi,\xi)$
constructed above (see \eqref{eq:def-KnextU}) has
a ``special structure'': it only depends on $\phi,\xi$ via $P_{X^{+}}\phi+\xi$;
in other words there exists a function $\widetilde{K}(X,\psi)$ so that
\begin{equation} \label{eq:depend-via}
K(X,\phi,\xi)=\widetilde{K}(X,P_{X^{+}}\phi+\xi) \;.
\end{equation}
In fact, we have the following lemma.
\begin{lem}
\label{lem:only-dep-harmo}
Let $U\subset \Lambda$ be a given set.
For every $k=1,\dots,m$, let $Y_{k}\subset U$, and $H_{k}(\phi,\xi)$ be a given function
of $\phi$ and $\xi$.
Suppose that there exist functions $\widetilde H_k$ such that
\[
H_{k}(\phi,\xi)=\widetilde{H}_{k}(P_{Y_{k}}\phi+\xi) \;,
\]
namely $H_k$ only depends on $\phi$, $\xi$ via $P_{Y_{k}}\phi+\xi$.
Then, the function $\mathbb{E}\left[\prod_{k}H_{k}(\phi,\xi)\big|U^{c}\right]$
only depends on $\phi,\xi$ via $P_{U}\phi+\xi$. 
\end{lem}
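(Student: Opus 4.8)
The plan is to decompose the conditional expectation over $U$ using the variational/harmonic splitting established in the preceding Proposition, and then exploit the fact that each $H_k$ only sees the field through its harmonic extension from $Y_k^c$.

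\medskip

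\textbf{Step 1: Write $\phi = P_U\phi + \zeta$.}
By the Proposition immediately preceding this lemma, conditioning on $\{\phi(x) : x \in U^c\}$ amounts to integrating out the Dirichlet Gaussian field $\zeta$ on $U$ (with covariance the Dirichlet Green's function for $U$), so that
\[
\mathbb{E}\Big[\textstyle\prod_k H_k(\phi,\xi)\,\big|\,U^c\Big]
= \mathbb{E}_\zeta\Big[\textstyle\prod_k H_k(P_U\phi+\zeta,\xi)\Big] \;.
\]
The field $P_U\phi + \xi$ appears here as a fixed (conditioning) datum, while $\zeta$ is the integration variable; so it suffices to show that the integrand, as a function of the conditioning data, depends on $\phi,\xi$ only through $P_U\phi+\xi$.

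\medskip

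\textbf{Step 2: Reduce each factor to its harmonic extension.}
Fix $k$ and set $\psi := P_U\phi + \zeta$, so $H_k(\psi,\xi) = \widetilde H_k(P_{Y_k}\psi + \xi)$ by hypothesis. The key identity is that, since $Y_k \subset U$, the harmonic extension into $Y_k$ of a field that is itself harmonic outside $U$ (plus an arbitrary piece inside $U$) can be re-expressed: I claim
\[
P_{Y_k}\big(P_U\phi + \zeta\big) = P_{Y_k}\big(P_U\phi + P_{Y_k}\zeta\big)
\quad\text{on all of }\Lambda,
\]
which holds because $P_{Y_k}$ acts as the identity outside $Y_k$, is linear, and $P_{Y_k}P_{Y_k} = P_{Y_k}$ (harmonic extension is a projection onto functions harmonic inside $Y_k$). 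More usefully, I want to argue that $P_{Y_k}\psi + \xi$ is measurable with respect to $P_U\phi + \xi$ together with the integration variable $\zeta$, in the following precise sense: since $P_U\phi$ is $(-\Delta_m)$-harmonic inside $U \supseteq Y_k$, we have $P_{Y_k}(P_U\phi) = P_U\phi$ on $\Lambda$ (the harmonic extension of something already harmonic in $Y_k$ is itself, and outside $Y_k$ it is unchanged by convention). Hence by linearity
\[
P_{Y_k}\psi = P_U\phi + P_{Y_k}\zeta \;,
\qquad\text{so}\qquad
P_{Y_k}\psi + \xi = \big(P_U\phi + \xi\big) + P_{Y_k}\zeta \;.
\]
Thus $H_k(\psi,\xi) = \widetilde H_k\big((P_U\phi+\xi) + P_{Y_k}\zeta\big)$, which manifestly depends on $\phi,\xi$ only through $P_U\phi+\xi$ (the remaining dependence is on the integration variable $\zeta$ alone).

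\medskip

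\textbf{Step 3: Conclude.}
Taking the product over $k$ and integrating $\mathbb{E}_\zeta$ preserves this property: the result is a function of $P_U\phi+\xi$ only. For the corner case $U = \Lambda$, where the conditional expectation is unconditional, the same argument applies with $P_\Lambda\phi$ interpreted suitably (or trivially, since then the ``harmonic part'' is constant). I expect the main obstacle to be Step 2 — specifically, justifying cleanly that $P_{Y_k}(P_U\phi) = P_U\phi$ when $Y_k \subset U$, i.e.\ that restricting an $m$-harmonic function on $U$ to $Y_k$ and re-extending harmonically gives back the same function. This is just uniqueness of the harmonic extension (Remark on well-posedness, citing Prop.\ 1.11 of \cite{kumagai_random_2010}): both $P_U\phi$ and $P_{Y_k}(P_U\phi)$ agree on $Y_k^c$ and are $(-\Delta_m)$-harmonic in $Y_k$, hence coincide. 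One should also note carefully that the mass term $m^2$ is carried along consistently in ``harmonic'' (it means $(-\Delta+m^2)$-harmonic throughout), so the projection identity $P_{Y_k}^2 = P_{Y_k}$ and the invariance $P_{Y_k}P_U\phi = P_U\phi$ hold for the massive operator just as for $m=0$.
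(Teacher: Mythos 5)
Your proof is correct and follows essentially the same route as the paper's: write the conditional expectation as $\mathbb{E}_\zeta$ over the Dirichlet field, substitute $\widetilde H_k(P_{Y_k}(P_U\phi+\zeta)+\xi)$, and invoke the projection identity $P_{Y_k}P_U = P_U$. You simply spell out more carefully why $P_{Y_k}(P_U\phi)=P_U\phi$ (uniqueness of the $(-\Delta+m^2)$-harmonic extension on $Y_k\subset U$), which the paper states as a one-line remark.
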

\begin{proof}
We write the expectation conditioned on $\phi\big|_{U^c}$ as expectation over the Dirichlet  Gaussian field  $\zeta$ on $U$, and then exploit the assumption on $H_k$:
\begin{equation} \label{eq:proof-only-dep}
\mathbb{E}\Big[\prod_{k}H_{k}(\phi,\xi)\big|U^{c}\Big]
=\mathbb{E}_{\zeta}\Big[\prod_{k}H_{k}(P_U \phi+\zeta,\xi)\Big]
=\mathbb{E}_{\zeta}\Big[\prod_{k}\widetilde{H}_{k} \Big(P_{Y_{k}}(P_{U}\phi+\zeta)+\xi\Big)\Big] .
\end{equation}
The last quantity 
depends on $\phi,\xi$ via $P_U \phi+\xi$ by 
noting that $P_{Y_{k}}P_{U}=P_{U}$. 
\end{proof}

Note that $K_0(X,\phi,\xi)$ is actually a function
of $\phi+\xi$. By our convention, when $j=0$,
$P_{X^+}$ is understood as the identity operator,
so we do start from functions with this special
structure (\eqref{eq:depend-via}). 
Together with the above lemma and \eqref{eq:Ksharp}, \eqref{eq:def-KnextU},
we see that for every $j\ge 0$, the fact
\eqref{eq:depend-via} holds:

\begin{cor}
Let $K_j(X,\phi,\xi)$ be the functions constructed in  \eqref{eq:def-KnextU}.  Then for every $j\ge 0$, 
there exists a function $\widetilde{K}_j(X)$ such that
$K_j(X,\phi,\xi)=\widetilde{K}_j(X,P_{X^{+}}\phi+\xi) $.
\end{cor}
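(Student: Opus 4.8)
The claim is that the special structure \eqref{eq:depend-via} propagates through the RG iteration, i.e.\ $K_j(X,\phi,\xi)=\widetilde{K}_j(X,P_{X^{+}}\phi+\xi)$ for all $j\ge 0$. The plan is to argue by induction on $j$. For the base case $j=0$, the formula for $K_0$ in Proposition~\ref{prop:firstprop} shows it is a function of $\phi+\xi$; since our convention sets $P_{X^{+}}=\mathrm{id}$ when $j=0$, this is exactly the statement with $\widetilde{K}_0(X,\psi)$ the corresponding function of $\psi$.

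For the inductive step, I would follow the construction of $K_{j+1}$ in two stages, matching \eqref{eq:Ksharp} and \eqref{eq:def-KnextU}. First, examine $K_j^{\sharp}(U)$ as defined in \eqref{eq:Ksharp}: it is built as a sum of products of factors of the form $I(B)$, $\delta I(B)$, $\tilde I(B)$, constants $e^{\pm E'}$, and $K_j(Y)$ (or $K_j(B,Y)$). By the inductive hypothesis each $K_j(Y)$ depends on $\phi,\xi$ only via $P_{Y^{+}}\phi+\xi$, and by inspection of \eqref{eq:defI} and \eqref{eq:Itilde} each $I(B)$, $\delta I(B)$, $\tilde I(B)$ depends on $\phi,\xi$ only through $P_{B^{+}}\phi+\xi$ or $P_{(\bar B)^{+}}\phi+\xi$. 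In the summand defining $K_j^{\sharp}(U)$, all the relevant sets $Y^{+}$, $B^{+}$, $(\bar B)^{+}$ are contained in (a neighborhood of) $U$ — indeed in $U^{+}$ — by the same separation geometry used throughout the extraction–reblocking proof. Hence each individual factor, and therefore the whole product and the sum, depends on $\phi,\xi$ only via $P_{U^{+}}\phi+\xi$; I would record this as: there exists $\widetilde{K}_j^{\sharp}(U,\cdot)$ with $K_j^{\sharp}(U,\phi,\xi)=\widetilde{K}_j^{\sharp}(U,P_{U^{+}}\phi+\xi)$.

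Second, apply Lemma~\ref{lem:only-dep-harmo} to pass to $K_{j+1}$. By \eqref{eq:def-KnextU}, $K_{j+1}(U)=\prod_{V\in\mathcal{C}(U)}\mathbb{E}[K_j^{\sharp}(V)\mid (V^{+})^{c}]$. For each connected component $V$, the factor $K_j^{\sharp}(V)$ has the special structure $\widetilde{K}_j^{\sharp}(V,P_{V^{+}}\phi+\xi)$, so taking $Y_k=V^{+}\subset V^{+}$ (here the conditioning set is $V^{+}$ itself, so the lemma applies with $U$ replaced by $V^{+}$ and a single factor $H_1=K_j^{\sharp}(V)$), Lemma~\ref{lem:only-dep-harmo} gives that $\mathbb{E}[K_j^{\sharp}(V)\mid (V^{+})^{c}]$ depends on $\phi,\xi$ only via $P_{V^{+}}\phi+\xi$. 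Since the components $V^{+}$ for $V\in\mathcal{C}(U)$ are all contained in $U^{+}$ and $P_{V^{+}}P_{U^{+}}=P_{U^{+}}$, every factor in the product — and hence $K_{j+1}(U)$ — depends on $\phi,\xi$ only through $P_{U^{+}}\phi+\xi$ (at scale $j+1$), which is precisely \eqref{eq:depend-via} with $j$ replaced by $j+1$. This closes the induction.

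\textbf{Main obstacle.} The only nontrivial point is bookkeeping the geometry: one must check that every Poisson-kernel set appearing in a summand of \eqref{eq:Ksharp} — the $Y^{+}$ for polymers $Y$ in $\mathcal{X}$ or $\mathcal{Y}$, and the $B^{+}$, $(\bar B)^{+}$ for the relevant blocks $B$ — is contained in $U^{+}$ (at scale $j+1$), so that the harmonic extension $P_{U^{+}}$ dominates them in the sense $P_{Y^{+}}P_{U^{+}}=P_{U^{+}}$. This is exactly the separation property guaranteed by the construction (for $L$ large), but it needs to be invoked carefully rather than assumed; the actual expansion formula \eqref{eq:Ksharp} already encodes the corridor structure that makes it true, so the verification is routine but not empty.
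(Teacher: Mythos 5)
Your overall plan --- induct on $j$, use the base case from Proposition~\ref{prop:firstprop}, and invoke Lemma~\ref{lem:only-dep-harmo} at the conditional-expectation step --- is the right one and matches the paper's intent. But there is a genuine conceptual error in your intermediate step, not just a bookkeeping chore.

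You assert that $K_j^\sharp(U,\phi,\xi)=\widetilde K_j^\sharp(U,P_{U^+}\phi+\xi)$, i.e.\ that the special structure already holds for $K_j^\sharp$ \emph{before} the conditional expectation is taken, on the grounds that every factor is of the form $\widetilde H(P_Y\phi+\xi)$ with $Y\subset U^+$. That inference is invalid: if $H(\phi,\xi)=\widetilde H(P_Y\phi+\xi)$, then $H$ depends on $\phi$ restricted to $Y^c\cap U^+$, while $P_{U^+}\phi+\xi$ carries no information about $\phi$ there (it replaces $\phi$ on $U^+$ by the harmonic extension from $(U^+)^c$). The operators $P_Y$ and $P_{U^+}$ are genuinely different smoothings of the field on $U^+\setminus Y$, and a function of $P_Y\phi+\xi$ is not in general a function of $P_{U^+}\phi+\xi$. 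A quicker way to see the problem: if $K_j^\sharp(V)$ were really a function of $P_{V^+}\phi+\xi$, it would be measurable with respect to $\phi|_{(V^+)^c}$, so $\mathbb{E}\bigl[K_j^\sharp(V)\,\big|\,(V^+)^c\bigr]=K_j^\sharp(V)$ and the conditional-expectation step of the RG map would be trivial --- which it manifestly is not. The entire content of Lemma~\ref{lem:only-dep-harmo} is that the special structure is \emph{produced by} the conditional expectation, not inherited from the product of factors; your application of the lemma with a single factor $H_1=K_j^\sharp(V)$ and $Y_1=V^+$ presupposes what it should be proving and is vacuous for the same reason.

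The correct use of Lemma~\ref{lem:only-dep-harmo} is to expand $K_j^\sharp(V)$ from \eqref{eq:Ksharp} as a finite sum of products (in particular also expanding $\delta I=I-\tilde I$, since $\delta I(B)$ itself involves two different Poisson-kernel sets $B^+$ and $(\bar B)^+$ and is therefore not of the form $\widetilde H(P_{Y}\phi+\xi)$ for a single $Y$), and then, for each summand, take the $H_k$'s to be the individual factors --- the $K_j(Y)$'s, $I(B)$'s, $\tilde I(B)$'s, and constants --- each carrying its own $Y_k\subset V^+$. Linearity of conditional expectation, together with Lemma~\ref{lem:only-dep-harmo} applied summand by summand, then gives that $\mathbb{E}\bigl[K_j^\sharp(V)\,\big|\,(V^+)^c\bigr]$ depends on $\phi,\xi$ only via $P_{V^+}\phi+\xi$. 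Finally, for the product over connected components $V\in\mathcal C(U)$ one uses locality ($K_j^\sharp(V)\in\mathcal N(\hat V)$) together with the fact that the $V^+$'s are the disjoint connected components of $U^+$, so $P_{V^+}\phi$ and $P_{U^+}\phi$ agree on the region on which the $V$-factor actually depends. Without these corrections the inductive step as written does not go through.
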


In the following, it will be helpful to have this 
point of view in mind.

\subsubsection*{The important scaling}

Our main result in this subsection is Proposition~\ref{prop:covest}.
We first collect some general results about harmonic functions on the
lattice. These will include derivative estimates and ``mean value" type bounds.

\begin{lem}
Let $\mathcal B_R$ be 
the discrete ball of radius $R$ centered on the origin, namely
$\mathcal B_R=\{x\in\mathbb Z^d: |x|<R\}$.
There exists a constant $c>0$ such that
the following holds for every $R$ sufficiently large.
\begin{itemize}
\item If $g$ is harmonic in $\mathcal B_R$, then
for every $e\in\mathcal S$,
\begin{equation} \label{eq:diff-est-any}
|\partial_e g(0)| \le c R^{-1} \sup_{x\in\mathcal B}
 	|g(x)| \;.
\end{equation}
\item If $f$ is harmonic and non-negative in $\mathcal B_R$,
then
for every $e\in\mathcal S$,
\begin{equation} \label{eq:diff-est-positive}
|\partial_e f(0)| \le c R^{-1} f(0) \;.
\end{equation}
\end{itemize}
\end{lem}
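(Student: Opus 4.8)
The plan is to derive both estimates from the discrete Poisson representation of harmonic functions on a ball. First I would fix $R$ large and let $P_{\mathcal{B}_R}(x,y)$ denote the discrete Poisson kernel for $\mathcal{B}_R$ (associated with $-\Delta$, i.e.\ $m=0$ here), so that for $g$ harmonic in $\mathcal{B}_R$ one has $g(x)=\sum_{y\in\partial\mathcal{B}_R}P_{\mathcal{B}_R}(x,y)g(y)$ for every $x\in\mathcal{B}_R$. Differentiating in the lattice direction $e$ at the origin gives $\partial_e g(0)=\sum_{y}\bigl(P_{\mathcal{B}_R}(e,y)-P_{\mathcal{B}_R}(0,y)\bigr)g(y)$, where I use $|e|=1\ll R$ so that $e\in\mathcal{B}_R$. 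The whole argument then reduces to the pointwise bound
\begin{equation}\label{eq:poisson-diff}
\sum_{y\in\partial\mathcal{B}_R}\bigl|P_{\mathcal{B}_R}(e,y)-P_{\mathcal{B}_R}(0,y)\bigr|\le cR^{-1}
\end{equation}
for a constant $c$ independent of $R$. Given \eqref{eq:poisson-diff}, the first bullet \eqref{eq:diff-est-any} follows at once by pulling $\sup_{\mathcal{B}_R}|g|$ out of the sum. For the second bullet I would instead keep the signed kernel inside: since $f\ge0$ on $\mathcal{B}_R$ and $\sum_y P_{\mathcal{B}_R}(0,y)=1$, write $\partial_e f(0)=\sum_y\bigl(P_{\mathcal{B}_R}(e,y)/P_{\mathcal{B}_R}(0,y)-1\bigr)P_{\mathcal{B}_R}(0,y)f(y)$ and use a Harnack-type ratio bound $\bigl|P_{\mathcal{B}_R}(e,y)/P_{\mathcal{B}_R}(0,y)-1\bigr|\le cR^{-1}$ uniformly in $y\in\partial\mathcal{B}_R$; then $\sum_y P_{\mathcal{B}_R}(0,y)f(y)=f(0)$ closes the estimate. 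Alternatively one can deduce \eqref{eq:diff-est-positive} directly from \eqref{eq:diff-est-any} by the standard trick of applying \eqref{eq:diff-est-any} to $f$ on the slightly larger ball $\mathcal{B}_{R}$ after noting $\sup_{\mathcal{B}_{R/2}}f\le Cf(0)$ by the discrete Harnack inequality — but the Poisson-ratio route is cleaner and self-contained.

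To prove \eqref{eq:poisson-diff} and the ratio bound I would invoke the comparison between the discrete Poisson kernel and the (gradient of the) Green's function: $P_{\mathcal{B}_R}(x,y)=\sum_{z\sim y,\,z\notin\mathcal{B}_R}\bigl(G_{\mathcal{B}_R}(x,z')-\cdots\bigr)$, or more simply use the known estimates for the Dirichlet Green's function $G_{\mathcal{B}_R}$ of $\mathcal{B}_R$, namely $|G_{\mathcal{B}_R}(x,y)|\le C|x-y|^{2-d}$ (for $d\ge3$; logarithmic for $d=2$) and the gradient estimate $|\partial_e^{(x)}G_{\mathcal{B}_R}(x,y)|\le C|x-y|^{1-d}$ away from the diagonal, plus the representation of $P_{\mathcal{B}_R}$ via the normal derivative of $G_{\mathcal{B}_R}$. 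The difference $P_{\mathcal{B}_R}(e,y)-P_{\mathcal{B}_R}(0,y)$ is then a discrete directional derivative of $P_{\mathcal{B}_R}(\cdot,y)$ evaluated near the center, so it is of order $R^{-1}$ times $P_{\mathcal{B}_R}(0,y)\asymp R^{1-d}$ times the harmless geometry of $\partial\mathcal{B}_R$; summing over the $O(R^{d-1})$ boundary points $y$ yields the total bound $O(R^{-1})$ in \eqref{eq:poisson-diff}. The key quantitative inputs — Green's function size and gradient bounds on a discrete ball, and the Harnack inequality for discrete harmonic functions — are classical (e.g.\ Lawler's book on random walks), and I would cite them rather than reprove them.

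The main obstacle is obtaining the gradient estimate on the Poisson kernel uniformly in $R$ with the correct $R^{-1}$ gain, i.e.\ establishing \eqref{eq:poisson-diff} with a constant that does not deteriorate as $R\to\infty$. On the lattice one cannot simply differentiate a smooth Poisson kernel; one must either (i) use the coupling/reflection argument for the exit distribution of the random walk started from $0$ versus from $e$ — the walks can be coupled to agree after a time of order one, so the two exit distributions differ in total variation by $O(R^{-1})$ — or (ii) carefully track the discrete Green's function gradient estimates up to the boundary layer of $\mathcal{B}_R$, where the ball is not smooth and the estimates are most delicate. I would take route (i): a direct coupling of the two random walks gives \eqref{eq:poisson-diff} cleanly, since after coalescence the contributions to $P_{\mathcal{B}_R}(e,\cdot)-P_{\mathcal{B}_R}(0,\cdot)$ cancel, and the probability of not having coalesced before exiting is $O(R^{-1})$ by a standard gambler's-ruin / martingale estimate on the separation process. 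Everything else is bookkeeping.
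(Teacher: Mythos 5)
The paper does not prove this lemma from scratch: it simply cites Theorem~6.3.8 of Lawler--Limic, \emph{Random Walk: A Modern Introduction}, which gives both gradient estimates for discrete harmonic functions directly. Your proposal instead reproves that theorem via the Poisson representation and a random-walk coupling, which is a legitimate and genuinely different route, so let me flag the one soft spot.

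Your coupling argument delivers the total-variation bound $\sum_{y}\bigl|P_{\mathcal B_R}(e,y)-P_{\mathcal B_R}(0,y)\bigr|\le cR^{-1}$, which is exactly what the first bullet needs. However, for the second bullet you express a preference for the ``Poisson-ratio route'', which requires the pointwise estimate $\bigl|P_{\mathcal B_R}(e,y)/P_{\mathcal B_R}(0,y)-1\bigr|\le cR^{-1}$ uniformly over $y\in\partial\mathcal B_R$. The coupling does \emph{not} give this: a bound on the (weighted) $\ell^1$ difference of the exit measures says nothing about the ratio at individual boundary points, where $P_{\mathcal B_R}(0,y)$ can itself be anomalously small near corners of the discretized ball. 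That ratio bound is a boundary-Harnack-type statement and is essentially as hard as the lemma itself; asserting it as an input would be circular. The correct and self-contained route is the one you mention only in passing: apply the first bullet on $\mathcal B_{R/2}$ (not $\mathcal B_R$, as you wrote --- one needs the sup to be taken over a compactly contained subball so that Harnack applies), getting $|\partial_e f(0)|\le c(R/2)^{-1}\sup_{\mathcal B_{R/2}}f\le c'R^{-1}f(0)$ by the interior Harnack inequality for $f\ge0$ harmonic in $\mathcal B_R$. With that correction your outline is sound, though strictly heavier machinery than the paper's one-line citation.
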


\begin{proof}
This is \cite[Theorem 6.3.8 of Section 6.3]{lawler_random_2010}.
The statement of that theorem is about harmonic functions
related with general ``$\mathcal P_d$ class" (i.e. symmetric, finite range) random walks.
In particular it is true for harmonic functions associated with standard Laplacian
related with simple random walks. The large $R$ requirement was used to deal with the lattice effect on the boundary of the ball in their proof.
\end{proof}

Note that the constant $c$ in the above lemma does not 
depend on the function $g$ or $f$. In the second statement, non-negativity condition is necessary: the linear function $f(x)=x$ on $[-1,1]$ would violate the bound \eqref{eq:diff-est-positive}.

The next result is a mean value type bound.
For $R>0$ and $a\in\mathbb{Z}^{d}$,
we define a cube of size $R$ centered at $a$ by
\begin{equation}
\mathcal{K}_{R}
:=\left\{ y\in\mathbb{Z}^{d}\big|\left|y-a\right|_{\infty}\leq R\right\}  \;.
\end{equation}

\begin{lem}
\label{lem:derbnd}
Given real numbers $s,t$ such that $0<3s<r<1$.
Let $\mathcal{K}_{R}$ and $\mathcal{K}_{rR}$
be cubes of sizes $R$ and $rR$ respectively centered at the same
point. Assume that $u$ is harmonic in the cube $\mathcal{K}_{R}$.
Let $X=\mathcal{K}_{R}\backslash\mathcal{K}_{rR}$, 
$x\in\mathcal{K}_{rR}$
and $d(x,\partial\mathcal{K}_{rR})>sR$. Then
\begin{equation} \label{eq:average1} 
|u(x)|\leq O(R^{-d})\sum_{y\in X} |u(y)| \;,
\end{equation}
\begin{equation} \label{eq:average2}
u(x)^{2}\leq O(R^{-d})\sum_{y\in X}u(y)^{2} \;.
\end{equation}
Here, the constants in the big-O notation depend on $s,t$.
\end{lem}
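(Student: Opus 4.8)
The plan is to exhibit $u(x)$ as a convex combination $u(x)=\sum_{y\in X}c_{y}u(y)$ with weights $c_{y}\ge0$, $\sum_{y}c_{y}=1$ and $\max_{y}c_{y}=O(R^{-d})$; once this is in hand both \eqref{eq:average1} and \eqref{eq:average2} follow at once. To produce such a representation I would average Poisson representations over concentric cubes. After translating, assume the cubes are centred at the origin. Since $u$ is harmonic in $\mathcal{K}_{R}$ it is harmonic in every smaller cube $\mathcal{K}_{\rho}$ with $\rho\le R-1$, and the hypothesis $d(x,\partial\mathcal{K}_{rR})>sR$ forces $x\in\mathcal{K}_{(r-s)R}$, hence $x$ lies in the interior of $\mathcal{K}_{\rho}$ whenever $\rho\ge\lceil rR\rceil$ (for $R$ large). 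For each integer $\rho$ with $\lceil rR\rceil\le\rho\le R-1$ the Poisson representation gives $u(x)=\sum_{y\in\partial\mathcal{K}_{\rho}}P_{\mathcal{K}_{\rho}}(x,y)u(y)$ with $\sum_{y\in\partial\mathcal{K}_{\rho}}P_{\mathcal{K}_{\rho}}(x,y)=1$ (the walk exits a finite cube almost surely). The boundaries $\partial\mathcal{K}_{\rho}$ for distinct $\rho$ are pairwise disjoint and all contained in $X$. Averaging over the $M:=R-\lceil rR\rceil=\Theta(R)$ admissible values of $\rho$ yields
\[
u(x)=\sum_{y\in X}c_{y}u(y),\qquad c_{y}:=\frac{1}{M}\,P_{\mathcal{K}_{\rho(y)}}(x,y)\ge0,\qquad \sum_{y\in X}c_{y}=1,
\]
where $\rho(y)$ is the unique $\rho$ with $y\in\partial\mathcal{K}_{\rho}$.

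The heart of the matter is the uniform bound $P_{\mathcal{K}_{\rho}}(x,y)=O(R^{1-d})$ over this range of $\rho$ and over $y\in\partial\mathcal{K}_{\rho}$; this is exactly where the interior distance $d(x,\partial\mathcal{K}_{\rho})\ge sR$ enters. I would prove it by a last-exit decomposition together with a half-space comparison. Each $y\in\partial\mathcal{K}_{\rho}$ has a unique neighbour $z_{y}$ inside $\mathcal{K}_{\rho}$; by symmetry take $z_{y}=y-e_{1}$ with $(z_{y})_{1}=\rho$. The last-exit (Green's function) formula gives $P_{\mathcal{K}_{\rho}}(x,y)=\frac{1}{2d}G_{\mathcal{K}_{\rho}}(x,z_{y})$. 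Since $\mathcal{K}_{\rho}\subseteq H:=\{w:w_{1}\le\rho\}$, monotonicity of the Green's function in the domain gives $G_{\mathcal{K}_{\rho}}(x,z_{y})\le G_{H}(x,z_{y})$, and the reflection principle for the half-space gives $G_{H}(x,z_{y})=G_{\mathbb{Z}^{d}}(x,z_{y})-G_{\mathbb{Z}^{d}}(x,z_{y}+2e_{1})$ when $d\ge3$, and $G_{H}(x,z_{y})=\mathfrak{a}(x-z_{y}-2e_{1})-\mathfrak{a}(x-z_{y})$ with $\mathfrak{a}$ the potential kernel when $d=2$. The standard one-step difference bounds $|G_{\mathbb{Z}^{d}}(0,w+e_{1})-G_{\mathbb{Z}^{d}}(0,w)|=O(|w|^{1-d})$ (respectively $|\mathfrak{a}(w+e_{1})-\mathfrak{a}(w)|=O(|w|^{-1})$), combined with $|x-z_{y}|\ge(z_{y})_{1}-x_{1}\ge rR-(r-s)R=sR$, then give $G_{H}(x,z_{y})=O\big((sR)^{1-d}\big)=O_{s,r,d}(R^{1-d})$. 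Hence $\max_{y\in X}c_{y}\le M^{-1}O(R^{1-d})=O(R^{-d})$. (Alternatively one could first reduce to the cube's centre by a Harnack-type chain of length $O(R)$ built from the gradient bound \eqref{eq:diff-est-positive}, each step contributing a factor $1+O(1/R)$, and then invoke the half-space estimate at the centre.)

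Finally, the two conclusions are read off from the representation: \eqref{eq:average1} is $|u(x)|\le\sum_{y\in X}c_{y}|u(y)|\le(\max_{y}c_{y})\sum_{y\in X}|u(y)|=O(R^{-d})\sum_{y\in X}|u(y)|$; and \eqref{eq:average2} follows because the $c_{y}$ are non-negative with $\sum_{y}c_{y}=1$, so by Jensen's inequality $u(x)^{2}=\big(\sum_{y}c_{y}u(y)\big)^{2}\le\sum_{y}c_{y}u(y)^{2}\le(\max_{y}c_{y})\sum_{y\in X}u(y)^{2}=O(R^{-d})\sum_{y\in X}u(y)^{2}$. The one genuinely delicate step is the Poisson-kernel estimate: one must extract the full power $R^{1-d}$ rather than the naive whole-space bound $R^{2-d}$, and this gain of one power of $R$ is exactly what the half-space reflection supplies, the remaining $R^{-1}$ coming from averaging over $\Theta(R)$ shells; the case $d=2$ additionally requires working with the potential kernel in place of the (non-existent) whole-space Green's function.
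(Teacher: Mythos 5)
Your argument is correct and follows essentially the same route as the paper: both average the Poisson representation of $u(x)$ over the $\Theta(R)$ concentric cube boundaries filling the annulus $X$, with the crucial input being the uniform bound $P_{\mathcal{K}_{\rho}}(x,y)=O(R^{1-d})$ coming from the interior gap $d(x,\partial\mathcal{K}_{rR})>sR$. Two differences worth noting. First, the paper simply asserts the Poisson kernel bound (``by the assumption on $x$''), whereas you supply a proof via the last-exit identity $P_{\mathcal{K}_{\rho}}(x,y)=\tfrac{1}{2d}G_{\mathcal{K}_{\rho}}(x,z_y)$, domain monotonicity of the Green's function, and half-space reflection; this is a genuine gap-fill and explains where the gain from the naive $R^{2-d}$ to $R^{1-d}$ comes from (with the potential kernel handling $d=2$). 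Second, for \eqref{eq:average2} the paper applies Cauchy--Schwarz to \eqref{eq:average1} together with $|X|=O(R^{d})$, while you package the shell-averaged Poisson weights as a convex combination $\{c_y\}$ and invoke Jensen; these are equivalent, but your version makes the nonnegativity and normalization of the weights do the work explicitly. Both conclusions follow correctly either way.
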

\begin{proof}
For any integer $rR \leq b<R$, let $\mathcal{K}_{b}$ be
cubes of sizes $b$ co-centered with $\mathcal{K}_{R}$. 
Then since $u$ is harmonic, and the Poisson kernel 
$0\le P_{\mathcal{K}_{b}}(x,y) \le c\, b^{-(d-1)}$ for some constant $c>0$
by the assumption on $x$,
one has
\[
|u(x)|
= \Big|
\sum_{y\in\partial\mathcal{K}_{b}} P_{\mathcal{K}_{b}} (x,y) u(y)
\Big|
\leq c\,b^{-(d-1)} \sum_{y\in\partial\mathcal{K}_{b}} |u(y)| \;.
\]
Multiply both sides by $b^{d-1}$ and sum over $rR \leq b\leq R$,
we have
\begin{equation}  \label{eq:average11}
R^{d}|u(x)|\leq c^{\prime}\sum_{y\in X} |u(y)|
\end{equation}
for some constant $c'>0$ which proves (\ref{eq:average1}). By Cauchy-Schwartz inequality,
\[
|u(x)|
\leq O(R^{-d})\big(\sum_{y\in X}u(y)^{2}\big)^{1/2}|X|^{1/2} \;.
\]
This together with $|X|=O(R^d)$ proves (\ref{eq:average2}). 
\end{proof}
The next Proposition plays an important role in controlling the fundamental
scaling. See the paragraph below Proposition~\ref{prop:scaling} for a motivation.
\begin{prop}
\label{prop:covest}
Let $x\in X\subset U\subset\Lambda$. If $d(x,\partial X)\geq cL^{j}$,
then 
\begin{equation}  \label{eq:decay}
\sum_{y_{1},y_{2}\in\partial X}(\partial_{x,e}P_{X})(x,y_{1})\,
C_{U}(y_{1},y_{2})\,
(\partial_{x,e}P_{X})(x,y_{2})\leq O(1)L^{-dj}
\end{equation}
for all $e\in\mathcal{E}$ where the constant $O(1)$ only depends
on the constant $c$. Here $\partial_{x,e}$ is the discrete derivative w.r.t.
the argument $x$ to the direction $e$.
\end{prop}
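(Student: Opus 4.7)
The plan is to reformulate the left hand side of \eqref{eq:decay} as a variance, exploit harmonicity in both arguments of an auxiliary two-point kernel, and then apply the derivative estimate \eqref{eq:diff-est-any} twice together with a classical Green's function bound.

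Let $\zeta$ be the Dirichlet Gaussian field on $U$, with covariance $C_U$ (the Dirichlet Green's function for $U$). By construction the left hand side of \eqref{eq:decay} equals $\mathbb{E}\bigl[(\partial_{x,e}(P_X\zeta)(x))^2\bigr]$. Invoking the variation principle proved just above, I decompose $\zeta=P_X\zeta+\zeta'$ where $\zeta'$ is an independent Dirichlet Gaussian on $X$ with covariance $C_X^D$; the covariance of the harmonic part $P_X\zeta$ is then
\[
K(y_1,y_2):=C_U(y_1,y_2)-C_X^D(y_1,y_2),
\]
and the left hand side of \eqref{eq:decay} is exactly $\partial_{y_1,e}\partial_{y_2,e}K(y_1,y_2)\big|_{y_1=y_2=x}$. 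The crucial structural fact is that $K$ is $(-\Delta+m^2)$-harmonic separately in each of its variables throughout $X$, since the delta-source in $C_U$ and $C_X^D$ cancels; and $K(y_1,y_2)=C_U(y_1,y_2)$ when $y_1\in\partial X$ (symmetrically in $y_2$) by the Dirichlet boundary condition on $C_X^D$.

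I then apply the derivative estimate \eqref{eq:diff-est-any} twice in succession over a ball $B_R(x)$ with $R\sim L^j$ that fits inside $X$ by the hypothesis $d(x,\partial X)\geq cL^j$. Since $K$ is harmonic in $y_2$, so is its lattice difference in $y_1$, so the estimate applies in both variables and gives
\[
\bigl|\partial_{y_1,e}\partial_{y_2,e}K(x,x)\bigr|\leq c\,R^{-2}\sup_{y_1,y_2\in B_R(x)}|K(y_1,y_2)|.
\]
For the supremum I use the Poisson-kernel representation
\[
K(y_1,y_2)=\sum_{w\in\partial X}P_X(y_1,w)\,C_U(w,y_2).
\]
For $y_2\in B_R(x)$ every $w\in\partial X$ satisfies $|w-y_2|\geq cL^j/2$. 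Bounding $C_U$ pointwise by the free massless lattice Green's function on $\mathbb{Z}^d$, which decays like $|w-y_2|^{2-d}$ for $d\geq 3$, gives $C_U(w,y_2)\leq c''L^{-(d-2)j}$. Since $\sum_w P_X(y_1,w)\leq 1$, the supremum is $O(L^{-(d-2)j})$, and multiplying out yields the claimed $O(L^{-dj})$.

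The main obstacles are two technical points. First, \eqref{eq:diff-est-any} is formulated for $-\Delta$-harmonic functions, while $K$ is $(-\Delta+m^2)$-harmonic, so one must invoke an analogue of that estimate valid uniformly in small $m$, which is standard because the mass is a negligible perturbation on scales below $1/m$. Second, the borderline case $d=2$ requires extra care because the free lattice Green's function is only logarithmically bounded; there one can instead use Cauchy--Schwarz, $|K(y_1,y_2)|\leq K(y_1,y_1)^{1/2}K(y_2,y_2)^{1/2}$, and a direct estimate of $K(y,y)=C_U(y,y)-C_X^D(y,y)$, possibly absorbing a $\log L^j$ factor into the constant depending on $c$.
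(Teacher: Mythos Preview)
Your approach is different from the paper's and is fine for $d\geq 3$, but it has a real gap at $d=2$.

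The paper does not introduce the kernel $K=C_U-C_X^D$. It first dominates $C_U$ by $C_\Lambda$, then uses harmonicity of $C_\Lambda(\cdot,y_2)$ on $X$ (for $y_2\in\partial X$) to collapse one Poisson kernel, $\sum_{y_1}P_X(x,y_1)C_\Lambda(y_1,y_2)=C_\Lambda(x,y_2)$. Differentiating in $x$ leaves the single sum $\sum_{y_2\in\partial X}\partial_{x,e}C_\Lambda(x,y_2)\,\partial_{x,e}P_X(x,y_2)$, and each factor is bounded separately: $|\partial_{x,e}C_\Lambda(x,y_2)|=O(L^{-(d-1)j})$ by Corollary~\ref{cor:Cdecay}, while $|\partial_{x,e}P_X(x,y_2)|\leq cR^{-1}P_X(x,y_2)$ by the estimate \eqref{eq:diff-est-positive} for \emph{non-negative} harmonic functions; then $\sum_{y_2}P_X(x,y_2)\leq 1$ finishes.

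The key point is that the paper never has to bound the Green's function itself, only its gradient, whose $|x|^{-(d-1)}$ decay is adequate for all $d\geq 2$. Your route, after two applications of \eqref{eq:diff-est-any}, requires $\sup_{B_R(x)\times B_R(x)}|K|$, which via your Poisson representation is controlled by $\sup_{w\in\partial X,\,y\in B_R(x)}|C_U(w,y)|$. In $d=2$ this is not bounded in terms of $c$ alone: the two-dimensional Dirichlet Green's function behaves like $C_U(w,y)\asymp \tfrac{1}{2\pi}\log\bigl(d(y,\partial U)/|w-y|\bigr)$, so the constant grows with the size of $U$. Your Cauchy--Schwarz workaround has the same defect, since $K(y,y)=C_U(y,y)-C_X^D(y,y)$ is of order $\log\bigl(d(y,\partial U)/d(y,\partial X)\bigr)$ and is again unbounded in $U$; it is not a $\log L^j$ that can be absorbed into a $c$-dependent constant. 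To make the argument work in $d=2$ you must place at least one derivative on the Green's function before taking a supremum, which is exactly what the paper's factorised single-sum form accomplishes.
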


\begin{proof}
Notice that $C_{U}\leq C_{\Lambda}$ as quadratic forms, so it is enough
to prove the statement with $C_{U}$ replaced by $C_{\Lambda}$. Since
$y_{2}\in\partial X$ and $C_{\Lambda}(x-y_{2})$ is $-\Delta_{m}$-harmonic
in $x\in X$, one has
\[
\sum_{y_{1}\in\partial X}
	P_{X}(x,y_{1}) C_{\Lambda}(y_{1},y_{2})
=C_{\Lambda}(x,y_{2}) \;.
\]
Taking derivative w.r.t. $x$ on the above equation, we obtain that
the left hand side of eq. (\ref{eq:decay}) is equal to
\begin{equation} \label{eq:alpha12-1}
\sum_{y_{2}\in\partial X}
	\partial_{x,e}C_{\Lambda}(x,y_{2}) \, \partial_{x,e}P_{X}(x,y_{2}) \;.
\end{equation}

By Corollary \ref{cor:Cdecay} (for decay rate of $\nabla C_{\Lambda}$)
and the assumption $d(x,\partial X)\geq cL^{j}$,  one has
\[
|\partial_{x,e}C_{\Lambda}(x,y_{2})|\leq O(L^{-(d-1)j}) \;.
\]
Using again the same assumption, there exists a discrete ball 
$\mathcal B_R(x)\subset X$ centered on $x$ with radius $R=\frac{c}{2}L^j$ (and $R$ is independent of $x$).
For every $y_2\in\partial X$,
$P_X(x,y_2)$ is harmonic and non-negative in $\mathcal B_R(x)$.
Applying \eqref{eq:diff-est-positive},
\[
\left|\partial_{x,e}P_{X}(x,y_{2})\right|
\le c_1 R^{-1} P_{X}(x,y_{2})
\]
with $c_1$ depending on $c$ but independent of $x$ and $y_2$ (since it is independent of the harmonic function).
%
So (\ref{eq:alpha12-1}) is bounded by
\[
O(L^{-(d-1)j}) O(L^{-j}) 
\sum_{y_{2}\in\partial X}
	P_{X}(x,y_{2}) \;.
\]
Since $\sum_{y_{2}\in\partial X} P_{X}(x,y_{2})\leq1$
for all $m>0$ (where $m$ is the mass in $\Delta_m$
and $P_X$ depends on $m$),
the above quantity is bounded by $O(L^{-dj})$.
\end{proof}

\begin{rem}
One may find that our method also resembles Gawedzki and Kupiainen's
approach \cite{gawedzki_rigorous_1980,gawedzki_block_1983} because
the Poisson kernel here plays a similar role as their spin blocking
operator. However, there are many differences. For example, our fluctuation
fields $\zeta$ have finite range covariances; the integrands at different
scales do not have to be in Gibbsian forms; and our polymer arrangements
are closer to Brydges \cite{brydges_lectures_2007}.
\end{rem}

\section{Norms} \label{sec:Norms}

Before we define the norms, we have a remark 
about the choices of four important constants: $L$, $A$, $\kappa$ and $h$
where $L$ has already appeared above and $A$, $\kappa$ and $h$ will appear in the definitions of norms below.

We will first fix $L>L_0(d)$ large enough which satisfies
all the largeness requirements in
Lemma~\ref{lem:geometric} (a geometric result), Lemma~\ref{lem:L2} and Proposition~\ref{prop:L3}.
These results establish contractivity of the three linear maps defined in Proposition~\ref{prop:The-linearization}, and $L$ has to be large to overwhelm some $O(1)$ constants appearing in the estimates of the norms of these linear maps.

We then choose $A>A_0(d,L)$ large enough which satisfies
all the largeness requirements in 
Proposition~\ref{prop:smoothness} (smoothness of RG) and
Proposition~\ref{prop:large-sets} (contractivity of the linear map $\mathcal L_1$ defined in Proposition~\ref{prop:The-linearization}).

After this, we choose $0<\kappa <\kappa_0(d,L,A)$ small enough which satisfies
all the smallness requirements in Lemma~\ref{lem:intproperties} (integrating ``regulators" defined in \eqref{eq:def-reg}) and Lemma~\ref{lem:integrab_poly}.
Finally, we choose $h>h_0(d,L,A,\kappa)$ large enough for the arguments in the proof
of Lemma~\ref{lem:integrab_poly}.

\subsection{Definitions of norms}

We now define the norm of the fields,
the norm of a function of the fields (i.e. elements in $\mathcal N$) at a fixed field,
and the norm of a function in $\mathcal N^{\mathcal P_j}$.
For $j>0$, the definitions are as follows.
\begin{enumerate}
\item
Define $h_{j}=hL^{-(d-2)j/2}$ for constant $h>0$. We first
define the norm for the fields. Let us recall that $\xi$ is the field
introduced in Section \ref{sec:Outline}.  
$X\subset Y$ and $\lambda\in\mathbb{R}$,
we define
\begin{equation} \label{eq:def_Phi}
\Vert(f,\lambda\xi)\Vert_{\Phi_{j}(X,Y)}:=h_{j}^{-1}\sup_{x\in X,e}\left|L^{j}\partial_{e}(P_{Y}f(x)+\lambda\xi(x))\right| \;.
\end{equation}
The notation $\Vert f\Vert_{\Phi_{j}(X,Y)}$ where $\xi$ part is dropped will be understood as $\Vert(f,0)\Vert_{\Phi_{j}(X,Y)}$.
As a special case, if $X\in\mathcal{P}_{j}$ then we simply write
\begin{equation} \label{eq:def-Phi-j-X}
\Vert(f,\lambda\xi)\Vert_{\Phi_{j}(X)}:=\Vert(f,\lambda\xi)\Vert_{\Phi_{j}(\dot{X},X^{+})} \;.
\end{equation}

\item
We then define differentials for functions of the fields, and their
norm. 
Let $K(\phi,\xi)$ be a function of $\phi,\xi$.
For test functions 
\[
\left(f,\lambda\right)^{\times n}:=(f_{1},\lambda_{1}\xi,\cdots,f_{n},\lambda_{n}\xi) \;,
\]
the n-th differential of $K(\phi,\xi)$ is 
\[
K^{(n)}(\phi,\xi;\left(f,\lambda\right)^{\times n})
:=\frac{\partial^{n}}{\partial t_{1}\dots\partial t_{n}}
	K(\phi+\sum_{i=1}^{n}t_{i}f_{i},\xi+\sum_{i=1}^{n}t_{i}\lambda_{i}\xi)\bigg|_{t_{i}=0}  \;.
\]
It is normed with a space of test functions $\Phi$ by
\[
\Vert K^{(n)}(\phi,\xi)\Vert_{T_{\phi}^{n}(\Phi)}
:=\sup_{\Vert(f_{i},\lambda_{i}\xi)\Vert_{\Phi}\leq1}
	\big|K^{(n)}(\phi,\xi;\left(f,\lambda\right)^{\times n})\big| \;.
\]
We then measure the amplitude of $K(\phi,\xi)$ at a fixed field
$\phi$ by incorporating all its derivatives at $\phi$ that we want
to control: 
\begin{equation} \label{eq:allders}
\Vert K(\phi,\xi)\Vert_{T_{\phi}(\Phi)}
:=\sum_{n=0}^{4}\frac{1}{n!}\Vert K^{(n)}(\phi,\xi)\Vert_{T_{\phi}^{n}(\Phi)}
\end{equation}
In most of the discussions,
we actually have a function $K(X,\phi,\xi)$ which is element in $\mathcal N^{\mathcal P_j}$. Then the above $T_\phi(\Phi)$ norm 
is taken for every $X\in\mathcal P_j$, and 
 $\Phi$ will be chosen to be $\Phi_{j}(X)$
 defined in \eqref{eq:def-Phi-j-X}.

\item
For $\kappa>0$, we define ``regulators'':
\begin{equation} \label{eq:def-reg}
G(X,Y):=\mathbb{E}\left[e^{\frac{\kappa}{2}\sum_{x\in X,e\in\mathcal{E}}(\partial_{e}\phi(x))^{2}}\big|Y^{c}\right]\big/N(X,Y)
\end{equation}
for $X\subset Y$ where the normalization factor is defined by 
\begin{equation}
N(X,Y):=\mathbb{E}\left[e^{\frac{\kappa}{2}\sum_{x\in X,e\in\mathcal{E}}(\partial_{e}\phi(x))^{2}}\big|\phi_{Y^{c}}=0\right]
\end{equation}

For $K\in \mathcal N^{\mathcal P_j}$, define
\begin{equation}
\Vert K(X)\Vert_{j}:=\sup_{\phi}\Vert K(X,\phi,\xi)\Vert_{T_{\phi}(\Phi_{j}(X))}G(\ddot{X},X^{+})^{-1}\label{eq:def_normKXj}
\end{equation}
Finally, for $A>0$,
\begin{equation}
\Vert K\Vert_{j}:=\sup_{X\in\mathcal{P}_{j}}\Vert K(X)\Vert_{j}A^{|X|_{j}}\label{eq:def_normKj}
\end{equation}

\end{enumerate}

For the case $j=0$: (\ref{eq:def_Phi})-(\ref{eq:allders}) are still
defined for $j=0$ with $P_{Y}=id$ and $\dot{X}=X$ (recall these
conventions made in Section \ref{sec:Outline}). (\ref{eq:def_normKXj})
is defined with $G$ replaced by 
\begin{equation}
G_{0}(X):=e^{\frac{\kappa}{2}\sum_{x\in X,e\in\mathcal{E}}(\partial_{e}\phi(x))^{2}}
\end{equation}

\subsection{Properties}
\begin{lem}
\label{lem:Tphiproperties}
Let $F$ be function of $\phi$, $\xi$, and $X\subset Y\subset U$.
We have the following property for the $T_{\phi}(\Phi)$ norms:
\begin{equation}
\Vert F^{(n)}(\phi,\xi)\Vert_{T_{\phi}^{n}(\Phi_{j}(Y,U))}
\leq\Vert F^{(n)}(\phi,\xi)\Vert_{T_{\phi}^{n}(\Phi_{j}(X,U))}
\end{equation}
which also holds without $n$.
\end{lem}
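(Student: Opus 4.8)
The plan is to prove the monotonicity of the $T_\phi^n(\Phi_j(\cdot,U))$ norm in its first argument by reducing it to a statement about the unit balls of the field norms $\Phi_j(X,U)$ and $\Phi_j(Y,U)$. First I would unwind the definition: the norm $\Vert F^{(n)}(\phi,\xi)\Vert_{T_\phi^n(\Phi)}$ is a supremum of $|F^{(n)}(\phi,\xi;(f,\lambda)^{\times n})|$ over all test function tuples whose components $(f_i,\lambda_i\xi)$ lie in the unit ball of $\Phi$. Since the functional being maximized — namely $|F^{(n)}(\phi,\xi;\cdot)|$ — is the same in both cases, it suffices to show that the unit ball of $\Phi_j(Y,U)$ is contained in the unit ball of $\Phi_j(X,U)$; the supremum over the larger set is then at least the supremum over the smaller set, giving the claimed inequality. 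The version "without $n$" then follows immediately by summing over $n=0,\dots,4$ with the $1/n!$ weights, as in \eqref{eq:allders}.

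The containment of balls in turn reduces to the pointwise inequality
\begin{equation}
\Vert(f,\lambda\xi)\Vert_{\Phi_j(X,U)}\le\Vert(f,\lambda\xi)\Vert_{\Phi_j(Y,U)}
\end{equation}
for all $(f,\lambda)$, which is transparent from the definition \eqref{eq:def_Phi}: both norms are $h_j^{-1}$ times a supremum of $|L^j\partial_e(P_U f(x)+\lambda\xi(x))|$, and the \emph{only} difference is that the left-hand side takes the supremum over $x\in X$ while the right-hand side takes it over $x\in Y$. The key point is that the Poisson kernel appearing inside is $P_U$ in \emph{both} norms — it is keyed to the second argument $U$, which is held fixed — so the expression under the supremum is literally identical function of $x$ in the two cases. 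Since $X\subset Y$, the supremum over $X$ is at most the supremum over $Y$, and the inequality follows.

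I do not expect any real obstacle here; the lemma is essentially a bookkeeping observation that the $\Phi_j(X,Y)$ norm is monotone in $X$ because $X$ enters only as the domain of a supremum, not inside the harmonic extension. The one thing to be careful about is to make sure the second slot is the same ($U$ in both) so that the Poisson kernels match — if the second argument varied, the harmonic extensions $P_U f$ would differ and the comparison would fail. For the degenerate case $j=0$ one uses the convention $P_U=\mathrm{id}$ and the argument is identical (indeed trivial). Thus the proof is just: (i) reduce to containment of unit balls of $\Phi_j(Y,U)\subset\Phi_j(X,U)$; (ii) reduce that to the pointwise norm inequality; (iii) observe that inequality holds because $\sup_{x\in X}\le\sup_{x\in Y}$ when $X\subset Y$ and the integrand is unchanged; (iv) sum over $n$ to get the unindexed version.
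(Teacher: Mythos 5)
Your proof is correct and follows essentially the same approach as the paper: the lemma reduces to the pointwise inequality $\Vert(f,\lambda\xi)\Vert_{\Phi_j(X,U)}\le\Vert(f,\lambda\xi)\Vert_{\Phi_j(Y,U)}$, which holds because $X\subset Y$ enters only as the domain of the supremum in \eqref{eq:def_Phi} while the Poisson kernel $P_U$ is the same in both. The paper states this in a single line; you have merely spelled out the intermediate step about containment of unit balls.
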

\begin{proof}
The proof is immediate because $\left\Vert f\right\Vert _{\Phi_{j}(Y,U)}\geq\left\Vert f\right\Vert _{\Phi_{j}(X,U)}$. 
\end{proof}
Before the discussion on further properties, 
we recall that our functions of the fields have the 
special structure  \eqref{eq:depend-via}.
It turns out that in view of this structure, 
it is sometimes more convenient to consider 
a type of function spaces $\widetilde{\Phi}_{j}(X,Y)$ for $X\subset Y$ defined as follows:
\[
\widetilde{\Phi}_{j}(X,Y):=\{g: \Delta g=0 \mbox{ on }Y, g=0 \mbox{ on }Y^c\}\oplus\mathbb{R}\xi
\]
equipped with semi-norm
\[
\Vert g\oplus\lambda\xi\Vert_{\widetilde{\Phi}_{j}(X,Y)}:=h_{j}^{-1}\sup_{x\in X,e}\left|L^{j}\partial_{e}(g(x)+\lambda\xi(x))\right| \;.
\]
Note that the above sum is really a direct sum since the test function $f$ in \eqref{eq:scalinglimit-2} is not identically zero.
Now if a function $F(\phi,\xi)=\tilde F(\psi)$ with $\psi=P_Y \phi+\xi$,
one actually has
\[
\Vert F^{(n)}(\phi,\xi) \Vert_{T^n_\phi (\Phi_{j}(X,Y))} 
= \!\!\!\!
\sup_{\left\Vert g_{i}\oplus\lambda_{i}\xi\right\Vert _{\widetilde{\Phi}_{j}(X,Y)}\leq 1}
\Big| 
	\partial_{t_{i}}^{n}\big|_{t_{i}=0}
	\tilde F(\psi+  \sum_{i=1}^{n} t_{i} (g_{i}+\lambda_{i}\xi))
\Big|
\]
for any subset $X\subset Y$ since in this situation, varying $\phi$ by 
$t_i f_i$ for generic functions $f_i$ is equivalent with
varying $P_Y \phi$ by harmonic functions on $Y$.

%


\begin{lem}
\label{lem:normEleqEnorm}
Assume the setting of Lemma~\ref{lem:only-dep-harmo}.
For every $k=1,\dots,m$, let $X_k\subset Y_{k}\subset U$.
Define $X:=\cup_{k=1}^m X_k$.
Then, one has
\begin{equation} \label{eq:norm-product}
\Big\Vert 
	\mathbb{E}\Big[\prod_{k=1}^mH_{k}(\phi,\xi)\big|U^{c}\Big]
\Big\Vert_{T_{\phi}(\Phi_{j}(X,U))}
\leq 
\mathbb{E}\Big[\prod_{k=1}^m\left\Vert H_{k}(\phi,\xi)\right\Vert _{T_{\phi}(\Phi_{j}(X_{k},Y_{k}))}\big|U^{c}\Big] \;.
\end{equation}
\end{lem}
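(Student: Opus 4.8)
The plan is to reduce the bound to two ingredients: first, that the conditional expectation $\mathbb{E}[-\,|\,U^c]$ can be moved outside the $T_\phi(\Phi_j(X,U))$ norm at the cost of turning the outer norm into an expectation of norms (a triangle-inequality / Jensen-type step for the family of linear functionals defining the norm), and second, that the $T_\phi$ norm of a product of functions of the fields is bounded by the product of the $T_\phi$ norms, provided the test-function space on the left is at least as restrictive as each of those on the right. I would begin by invoking Lemma~\ref{lem:only-dep-harmo} together with the representation \eqref{eq:simply_int_zeta}: writing $H_k(\phi,\xi)=\widetilde H_k(P_{Y_k}\phi+\xi)$ and $\psi:=P_U\phi+\xi$, the conditional expectation becomes an unconditional expectation over the Dirichlet Gaussian field $\zeta$ on $U$, namely $\mathbb{E}[\prod_k H_k|U^c]=\mathbb{E}_\zeta[\prod_k \widetilde H_k(P_{Y_k}(P_U\phi+\zeta)+\xi)]$, which by $P_{Y_k}P_U=P_U$ is a function of $\psi$ alone. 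This puts us exactly in the situation described in the paragraph after the definition of $\widetilde\Phi_j(X,Y)$, so the $T_\phi(\Phi_j(X,U))$ norm may be computed by differentiating in $\psi$ along harmonic directions $g_i\oplus\lambda_i\xi$ with $\|g_i\oplus\lambda_i\xi\|_{\widetilde\Phi_j(X,U)}\le 1$.

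Next I would handle the two ingredients in order. For the first, fix test functions $g_1\oplus\lambda_1\xi,\dots$; the $n$-th differential of $\mathbb{E}_\zeta[\cdots]$ in $\psi$ is $\mathbb{E}_\zeta$ of the $n$-th differential of $\prod_k \widetilde H_k$ (differentiation passes through the $\zeta$-integral since everything is smooth and the Gaussian integrals converge — this is the kind of routine dominated-convergence point I would not belabor), hence $|(\cdot)^{(n)}(\phi,\xi;(g,\lambda)^{\times n})|\le \mathbb{E}_\zeta\big[\,|(\prod_k \widetilde H_k)^{(n)}(P_U\phi+\zeta,\xi;(g,\lambda)^{\times n})|\,\big]$. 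Taking the supremum over admissible test tuples on the left and pulling it inside the monotone $\mathbb{E}_\zeta$ on the right gives $\|\mathbb{E}[\prod_k H_k|U^c]\|_{T_\phi(\Phi_j(X,U))}\le \mathbb{E}_\zeta\big[\|\prod_k H_k(P_U\phi+\zeta,\xi)\|_{T_\phi(\Phi_j(X,U))}\big]=\mathbb{E}\big[\|\prod_k H_k\|_{T_\phi(\Phi_j(X,U))}\,\big|\,U^c\big]$ after translating back to the conditional-expectation language.

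For the second ingredient — the product bound — I would use the standard property of the $T_\phi$ norms that $\|FG\|_{T_\phi(\Phi)}\le \|F\|_{T_\phi(\Phi)}\|G\|_{T_\phi(\Phi)}$ (Leibniz rule for the differentials, summed against the factorial weights up to order $4$, exactly as in \cite{brydges_lectures_2007}), iterated over $k$, to get $\|\prod_k H_k\|_{T_\phi(\Phi_j(X,U))}\le \prod_k \|H_k\|_{T_\phi(\Phi_j(X,U))}$. Then, since $H_k$ depends on $\phi,\xi$ only via $P_{Y_k}\phi+\xi$ and $X_k\subset X$, $Y_k\subset U$, I would compare $\|H_k\|_{T_\phi(\Phi_j(X,U))}$ with $\|H_k\|_{T_\phi(\Phi_j(X_k,Y_k))}$: shrinking the first argument from $X$ to $X_k$ only weakens the norm by Lemma~\ref{lem:Tphiproperties}, and replacing the harmonic-extension domain from $U$ to $Y_k$ is harmless because the directional derivatives of $H_k$ see $\phi$ only through $P_{Y_k}\phi$, so varying $P_U\phi$ along $\widetilde\Phi_j$-directions is the same as varying $P_{Y_k}\phi$ along the corresponding $\widetilde\Phi_j(\cdot,Y_k)$-directions (using $P_{Y_k}P_U=P_{Y_k}$); this is precisely the $\widetilde\Phi$-reformulation recalled before the lemma. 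Combining, $\|H_k\|_{T_\phi(\Phi_j(X,U))}\le \|H_k\|_{T_\phi(\Phi_j(X_k,Y_k))}$, and inserting this into the expectation of the product yields the claim.

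The main obstacle I anticipate is the second comparison step — carefully justifying that changing the harmonic-extension domain in the field norm from $U$ down to $Y_k$ does not change (indeed can only help) the $T_\phi$ norm of $H_k$. This requires being precise about how the differentials of $H_k$ in $\phi$ factor through $P_{Y_k}$, i.e. that $H_k^{(n)}(\phi,\xi;f_1,\dots)$ depends on the $f_i$ only through $P_{Y_k}f_i$, and then checking that the set of admissible "harmonic-on-$U$" test directions maps, under $f\mapsto P_{Y_k}f$, into (a subset of, hence with no larger sup over) the admissible "harmonic-on-$Y_k$" test directions with the right norm constraint. The Leibniz/product estimate and the Jensen-type move are routine given the norm machinery already in place.
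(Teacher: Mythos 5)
Your overall plan — rewrite the conditional expectation over the Dirichlet field $\zeta$, differentiate in the $\widetilde\Phi$ picture, Leibniz-expand, and reduce to the norms of the individual $H_k$ on $(X_k,Y_k)$ — is the same plan the paper carries out. But the way you decompose it into three independent steps (Jensen, product rule, norm comparison) creates a genuine gap at the norm-comparison step, and this gap is not a matter of polish.

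The problematic claim is the standalone inequality $\Vert H_k\Vert_{T_\phi(\Phi_j(X,U))}\le\Vert H_k\Vert_{T_\phi(\Phi_j(X_k,Y_k))}$. As written it is false, and in fact the left side is generically infinite. The $\Phi_j(X,U)$ semi-norm only controls $\partial P_U f$ on $X$; it places no constraint whatsoever on $f$ restricted to $\partial Y_k$, which sits inside $U$. So you may choose a test function $f$ with $f\equiv 0$ on $U^c$ (hence $P_U f\equiv 0$, and the $\Phi_j(X,U)$ semi-norm of $(f,0)$ vanishes) and $f$ arbitrarily large on $\partial Y_k$. Then $P_{Y_k}f$ is arbitrarily large, and since $H_k^{(n)}(\phi,\xi;f,\dots)$ factors through $P_{Y_k}f$, the supremum in $\Vert H_k\Vert_{T_\phi(\Phi_j(X,U))}$ is unbounded. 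The same objection makes the intermediate quantity $\mathbb{E}\big[\Vert\prod_k H_k\Vert_{T_\phi(\Phi_j(X,U))}\,\big|\,U^c\big]$ in your Jensen step infinite: $\prod_k H_k$ does not have the $P_U\phi+\xi$ special structure, so the $\widetilde\Phi_j(X,U)$-reformulation you invoke to make this finite simply does not apply to it. Related to this, the identity you cite, $P_{Y_k}P_U=P_{Y_k}$, is wrong — the correct identity (which the paper uses) is $P_{Y_k}P_U=P_U$, because $P_U\phi$ is already harmonic on $Y_k\subset U$; this mix-up is a symptom of losing track of which Poisson kernel the test direction passes through.

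The paper's proof avoids the trap precisely by \emph{not} passing through the intermediate quantity $\mathbb{E}\big[\Vert\prod_k H_k\Vert_{T_\phi(\Phi_j(X,U))}\,\big|\,U^c\big]$. It pulls $\mathbb{E}_\zeta$ out and applies the Leibniz rule in a single step, \emph{keeping track that every test direction that reaches $H_k$ is of the form $g=P_U f$}, i.e.\ already harmonic on $U$. It is exactly this harmonicity (and not merely the $\Phi_j(X,U)$ semi-norm bound) that makes the reduction to $\Phi_j(X_k,Y_k)$ legitimate: $g$ harmonic on $U$ implies $g$ harmonic on $Y_k$ (so $P_{Y_k}g=g$), and the constraint $\sup_{X_k}|\partial g|$ small is then precisely the $\widetilde\Phi_j(X_k,Y_k)$ constraint, making $\Vert H_k\Vert_{T^r_\phi(\Phi_j(X_k,Y_k))}$ the correct bound for each Leibniz factor. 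You sensed the subtlety in your final paragraph, but the resolution requires restructuring the argument so that the harmonicity of the test direction is never forgotten along the way, rather than attempting a monotonicity-of-norms lemma that is false for general test functions.
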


\begin{rem}  
Lemma \ref{lem:normEleqEnorm} is stated in terms of generic functions $H_k$. The typical situation in which we apply this lemma is that
$Y_k=X_k^+$, and $H_k(\phi,\xi)=K_k(X_k,\phi,\xi)$
with each $K_k(X_k,\phi,\xi)$ satisfying \eqref{eq:depend-via}.
\end{rem}

\begin{rem}
Lemma \ref{lem:normEleqEnorm}
is analogous with \cite[Lemma~6.7]{brydges_lectures_2007}
(the norm of a product bounded by product of norms)
and \cite[Lemma~6.9]{brydges_lectures_2007} (the norm of an expectation bounded by expectation of the norm).
The difference is that in our approach we combine the two results; in fact, here
both sides of \eqref{eq:norm-product} have the conditional expectation
with the same conditioning, so that the two sides are comparable.
\end{rem}

\begin{proof}[Proof of Lemma~\ref{lem:normEleqEnorm}]
Let $\zeta=\phi-P_U \phi$ and define
\[
F(\phi,\xi):=\mathbb{E}_\zeta \Big[\prod_{k}H_{k}(P_U \phi+\zeta,\xi)\Big] \;.
\]
Lemma~\ref{lem:only-dep-harmo} states that there exists $\widetilde F$ such that $F(\phi,\xi)=\tilde F(P_{U}\phi+\xi)$.
Write $\langle t,f\rangle_n:=\sum_{i=1}^n t_i f_i$.
By the discussion before this lemma,
the $T^n_{\phi}(\Phi_{j}(X,U)) $ norm of $F^{(n)}(\phi,\xi)$
is equal to
\[
\sup_{\left\Vert g_{i}\oplus\lambda_{i}\xi\right\Vert_{\widetilde{\Phi}_{j}(X,U)}\leq 1}
\bigg| 
	\partial_{t_{i}}^{n}\big|_{t_{i}=0}
	\mathbb{E}_\zeta \Big[\prod_{k}H_{k}\Big(
		P_U \phi+\langle t,g\rangle_n+\zeta,\,
		\xi+\langle t,\lambda\xi\rangle_n
	\Big)\Big]
\bigg| \;.
\]
This is bounded by taking the $\mathbb{E}_\zeta$ outside the supremum,
and we apply the product rule of derivatives.
We then obtain
factors of the form
\[
\sup_{ g_{i}\oplus\lambda_{i}\xi}
\Big|  \partial_{t_{i}}^{r}\big|_{t_{i}=0}
H_{k}\Big(
	\phi+\langle t,g\rangle_r,\,\xi+\langle t,\lambda\xi\rangle_r
\Big)
 \Big|
\]
with the sup over the same set as above. 
Since $g_i$ are harmonic on $Y_k$ and by Lemma~\ref{lem:Tphiproperties},
the supremum can be replaced by one taken over all
$g_i\oplus\lambda_i \xi $ such that $g_i$ are harmonic on $Y_k$ and
$ \Vert g_{i}\oplus\lambda_{i}\xi\Vert_{\widetilde{\Phi}_{j}(X_k,Y_k)}\le 1$.
By the assumption on $H_k$, and $P_{Y_k} g =g$,
the above function $H_k$ is equal to
$\widetilde H_k\big(
	P_{Y_k}\phi+\langle t,g\rangle_r + \xi+\langle t,\lambda\xi\rangle_r
\big)$.
Again by the discussion before this lemma,
the above quantity is actually bounded by 
$\Vert H_k(\phi,\xi)\Vert_{T^r_{\phi}(\Phi_{j}(X_k,Y_k))} $.
Summing over multi-indices $(r_1,...,r_m)$ with $|r|=n$,
followed by summing over $n$, one obtains the desired bound.
\end{proof}
Before the next lemma we introduce a short notation
\begin{equation}
(\partial_{m}f)^{2}:=(\partial f)^{2}+m^{2}f^{2}
\end{equation}

\begin{lem}
\label{lem:simproperties}
We have the following properties for the
regulator.
\begin{enumerate}
\item \label{enu:reg1}  $G(X,Y,\phi=0)=1$\;.
\item \label{enu:reg2}If $X_{1}\subset Y_{1}$, $X_{2}\subset Y_{2}$,
and $Y_{1}\cup\partial Y_{1},Y_{2}\cup\partial Y_{2}$ are disjoint,
then
\begin{equation}
G(X_{1},Y_{1})G(X_{2},Y_{2})=G(X_{1}\cup X_{2},Y_{1}\cup Y_{2}) \;.
\end{equation}

\item \label{enu:reg3}
We have an alternative representation of $G(X,Y)$
\begin{equation} \label{eq:alt-rep-G}
G(X,Y)=\exp\left(\frac{\kappa}{2}\sum_{X}(\partial\psi_{1})^{2}-\frac{1}{2}\sum_{Y}(\partial_{m}\psi_{1})^{2}+\frac{1}{2}\sum_{Y}(\partial_{m}\psi_{2})^{2}\right)
\end{equation}
where $\psi_{1}$ is the minimizer of $\sum_{Y}(\partial_{m}\phi)^{2}-\kappa\sum_{X}(\partial\phi)^{2}$
with $\phi_{Y^{c}}$ fixed, and $\psi_{2}$ is the minimizer of $\sum_{Y}(\partial_{m}\phi)^{2}$
with $\phi_{Y^{c}}$ fixed.

\item \label{enu:reg4}
Fixing $Y$, $G(X,Y)$ is monotonically increasing
in $X$ for all $X\subset Y$.
\item \label{enu:reg2bds}With $\psi_{1,2}$ defined in (\ref{enu:reg3}),
\begin{equation}
\exp\Big(\frac{\kappa}{2}\sum_{X}(\partial\psi_{2})^{2}\Big)\leq G(X,Y)
\leq
\exp\Big(\frac{\kappa}{2}\sum_{X}(\partial\psi_{1})^{2}\Big) \;.
\end{equation}
\end{enumerate}
\end{lem}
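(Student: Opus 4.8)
The plan is to treat the five properties in a logical order so that each can draw on the previous ones, beginning with the variational representation since it powers almost everything else. First I would establish \eqref{enu:reg3}: since both $G(X,Y)$ and $N(X,Y)$ are Gaussian integrals over the variables $\{\phi(x):x\in Y\}$ with $\phi|_{Y^c}$ fixed (resp.\ set to zero), I write the quadratic form $\sum_Y(\partial_m\phi)^2-\kappa\sum_X(\partial\phi)^2$ in the exponent, complete the square using the Fact~\eqref{eq:quadratic_fact} (the minimizer being $\psi_1$), and observe that the resulting Gaussian normalization is a constant independent of the boundary data $\phi|_{Y^c}$; the analogous computation for $N(X,Y)$ (minimizer $\psi_2$, no $\kappa$-term) produces the \emph{same} Gaussian normalization constant, which therefore cancels in the ratio $G=\,$(numerator)$/N(X,Y)$. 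What survives is exactly the three boundary terms in \eqref{eq:alt-rep-G}. For this to make sense one needs the quadratic form $\sum_Y(\partial_m\phi)^2-\kappa\sum_X(\partial\phi)^2$ to be positive definite on functions vanishing outside $Y$; this holds for $\kappa$ small (which is part of the standing smallness assumption on $\kappa$ recorded in Section~\ref{sec:Norms}), and I would note this explicitly.

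Given \eqref{enu:reg3}, property \eqref{enu:reg1} is immediate: if $\phi=0$ then the fixed boundary data is zero, so $\psi_1=\psi_2=0$ (the zero function is the minimizer), hence the exponent vanishes and $G=1$. Property \eqref{enu:reg2bds} also follows quickly: since $\psi_1$ minimizes $\sum_Y(\partial_m\phi)^2-\kappa\sum_X(\partial\phi)^2$ and $\psi_2$ is a competitor with the same boundary data, we get $\sum_Y(\partial_m\psi_1)^2-\kappa\sum_X(\partial\psi_1)^2\le \sum_Y(\partial_m\psi_2)^2-\kappa\sum_X(\partial\psi_2)^2$, which rearranges to $\tfrac\kappa2\sum_X(\partial\psi_2)^2\le\tfrac12\big(\sum_Y(\partial_m\psi_2)^2-\sum_Y(\partial_m\psi_1)^2\big)+\tfrac\kappa2\sum_X(\partial\psi_1)^2$; comparing with \eqref{eq:alt-rep-G} gives the upper bound $G\le\exp(\tfrac\kappa2\sum_X(\partial\psi_1)^2)$. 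For the lower bound, since $\psi_2$ minimizes $\sum_Y(\partial_m\phi)^2$ with the same boundary data as $\psi_1$, we have $\sum_Y(\partial_m\psi_2)^2\le\sum_Y(\partial_m\psi_1)^2$, so the last two terms in \eqref{eq:alt-rep-G} combine to something nonpositive, leaving $G\ge\exp(\tfrac\kappa2\sum_X(\partial\psi_2)^2)$.

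The factorization property \eqref{enu:reg2} I would prove directly from the definition: when $Y_1\cup\partial Y_1$ and $Y_2\cup\partial Y_2$ are disjoint, the conditional expectation over $\phi|_{Y_1}$ with $\phi|_{Y_1^c}$ fixed and the one over $\phi|_{Y_2}$ with $\phi|_{Y_2^c}$ fixed are over disjoint sets of integration variables, and the Gaussian measure factorizes accordingly (the quadratic form decouples because there are no nearest-neighbor bonds linking $Y_1$ to $Y_2$); the exponent $\tfrac\kappa2\sum_{X_1\cup X_2}(\partial\phi)^2$ also splits as $X_1,X_2$ sit inside the respective regions. The same splitting applies to the normalization factors $N$, giving the claimed product. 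Finally, monotonicity \eqref{enu:reg4} follows from \eqref{enu:reg3} or more transparently from the defining formula \eqref{eq:def-reg}: enlarging $X$ (with $Y$ fixed) only multiplies the integrand by the extra positive factor $e^{\frac\kappa2\sum_{X'\setminus X}(\partial\phi)^2}\ge 1$ pointwise, both in $G$ and in $N$; a short argument via the FKG/Harris-type inequality or, more elementarily, via the representation \eqref{eq:alt-rep-G} and the fact that the minimizing value $\tfrac\kappa2\sum_X(\partial\psi_1)^2-\cdots$ is monotone in $X$ closes this.

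The step I expect to be the main obstacle is the clean cancellation of the Gaussian normalization constants in \eqref{enu:reg3}: one must verify that the ``volume'' factor from integrating out $\{\phi(x):x\in Y\}$ is genuinely independent of the boundary data $\phi|_{Y^c}$ and is \emph{identical} for the $\kappa$-deformed form and the undeformed form used in $N(X,Y)$. This is true because completing the square shifts the integration variables by the (boundary-data-dependent) minimizer but leaves the quadratic part — and hence the Jacobian-free Gaussian integral — untouched; the $\kappa$-term changes that quadratic part, so strictly speaking the two Gaussian constants are \emph{not} equal, and one must be slightly more careful: the constant in $G$'s numerator involves $\det$ of the $\kappa$-deformed form while $N$ involves $\det$ of the undeformed form, but since $N(X,Y)$ is defined as the numerator evaluated at zero boundary data, it carries exactly the \emph{same} $\kappa$-deformed determinant, so in the ratio $G=\text{num}/N$ it is this determinant that cancels, not the undeformed one. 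Getting this bookkeeping right — i.e.\ recognizing that $N(X,Y)$ is the \emph{same} Gaussian integral with boundary data set to zero, hence carries the matching normalization — is the crux, and once it is in place the formula \eqref{eq:alt-rep-G} drops out and the remaining four items are short.
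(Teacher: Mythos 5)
Your overall strategy mirrors the paper's (establish the variational representation \eqref{enu:reg3} first, then deduce the rest), but the bookkeeping in your argument for \eqref{enu:reg3} is wrong in a way that would not reproduce the $\psi_2$ term, and the ``self-correcting'' paragraph at the end compounds the confusion rather than fixing it.

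The clean way to see \eqref{enu:reg3} is to write $G$ as a ratio of \emph{four} Gaussian integrals (the paper does this explicitly): the numerator of $G$ is itself a conditional expectation, hence a ratio
\[
\mathbb{E}\Big[e^{\frac{\kappa}{2}\sum_{X}(\partial\phi)^{2}}\,\big|\,Y^{c}\Big]
=\frac{\int e^{\frac{\kappa}{2}\sum_{X}(\partial\phi)^{2}-\frac{1}{2}\sum_{\Lambda}(\partial_{m}\phi)^{2}}\,d^{Y}\phi}{\int e^{-\frac{1}{2}\sum_{\Lambda}(\partial_{m}\phi)^{2}}\,d^{Y}\phi}\;,
\]
and $N(X,Y)$ is the same ratio with $\phi|_{Y^{c}}=0$. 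Completing the square (the Fact) in \emph{both} top integrals gives the prefactors: the $\kappa$-deformed one yields $\psi_{1}$, the \emph{undeformed normalizer of the conditional expectation} yields $\psi_{2}$. After the shift, each top integral becomes a boundary-data-independent Gaussian integral that matches one of the two $N$-integrals, and those four pieces cancel in pairs. What survives is exactly $\exp(\frac{\kappa}{2}\sum_{X}(\partial\psi_{1})^{2}-\frac{1}{2}\sum_{Y}(\partial_{m}\psi_{1})^{2}+\frac{1}{2}\sum_{Y}(\partial_{m}\psi_{2})^{2})$.

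Your version misattributes $\psi_{2}$. You write that ``the analogous computation for $N(X,Y)$'' has ``minimizer $\psi_{2}$, no $\kappa$-term'' — but $N(X,Y)$ \emph{does} contain the $\kappa$-term, and its minimizers are both the zero function (the boundary data is zero), so $N$ contributes no prefactor at all. The $\psi_{2}$ prefactor comes entirely from the \emph{denominator of the conditional expectation in $G$'s numerator}, not from $N$. Your closing paragraph is also incorrect on its face: both $G$'s numerator and $N$ carry both the $\kappa$-deformed and the undeformed determinants (each is a ratio), and \emph{both} determinants cancel — not ``this determinant that cancels, not the undeformed one.'' Had you carried out your stated plan consistently after the self-correction, you would have lost the $\psi_{2}$ term. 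The remaining items \eqref{enu:reg1}, \eqref{enu:reg2}, \eqref{enu:reg4}, \eqref{enu:reg2bds} are handled correctly and essentially as in the paper (for \eqref{enu:reg4} the paper's form of the argument is that $\frac{\kappa}{2}\sum_{X}(\partial\psi_{1})^{2}-\frac{1}{2}\sum_{Y}(\partial_{m}\psi_{1})^{2}=-\frac{1}{2}\inf_{\phi}\{\sum_{Y}(\partial_{m}\phi)^{2}-\kappa\sum_{X}(\partial\phi)^{2}\}$ is increasing in $X$ while the $\psi_{2}$-term is $X$-independent — no FKG needed).
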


\begin{proof}
(\ref{enu:reg1})(\ref{enu:reg2}) hold by definition and the fact
that $G(X,Y)$ is a function of $\phi$ on $\partial Y$. For (\ref{enu:reg3}),
\begin{equation}
\begin{aligned}G(X,Y) & =\frac{\int e^{\frac{\kappa}{2}\sum_{X}(\partial\phi)^{2}-\frac{1}{2}\sum_{\Lambda}(\partial_{m}\phi)^{2}}d^{Y}\phi\bigg/\int e^{-\frac{1}{2}\sum_{\Lambda}(\partial_{m}\phi)^{2}}d^{Y}\phi}{\int e^{\frac{\kappa}{2}\sum_{X}(\partial\phi)^{2}-\frac{1}{2}\sum_{Y}(\partial_{m}^{D}\phi)^{2}}d^{Y}\phi\bigg/\int e^{-\frac{1}{2}\sum_{Y}(\partial_{m}^{D}\phi)^{2}}d^{Y}\phi}\end{aligned}
\end{equation}
where $d^{Y}\phi$ is the Lebesgue measure on $\{\phi(x):x\in Y\}\cong\mathbb{R}^{Y}$,
$\partial^{D}$ takes Dirichlet boundary condition on $\partial Y$.
Using Fact (\ref{eq:quadratic_fact}) for both quadratic forms $-\frac{\kappa}{2}\sum_{X}(\partial\phi)^{2}+\frac{1}{2}\sum_{\Lambda}(\partial_{m}\phi)^{2}$
and $\frac{1}{2}\sum_{\Lambda}(\partial_{m}\phi)^{2}$, we obtain
(\ref{enu:reg3}), where the quantity 
\[
\int e^{\frac{\kappa}{2}\sum_{X}(\partial\phi)^{2}-\frac{1}{2}\sum_{Y}(\partial_{m}^{D}\phi)^{2}}d^{Y}\phi
\]
appears in both numerator and denominator and thus cancels, and so
does the quantity 
\[
\int e^{-\frac{1}{2}\sum_{Y}(\partial_{m}^{D}\phi)^{2}}d^{Y}\phi \;.
\]

(\ref{enu:reg4}) holds because of (\ref{enu:reg3}) and that
\begin{equation}
\inf_{\phi}\Big\{ \sum_{Y}(\partial_{m}\phi)^{2}-\kappa\sum_{X}(\partial\phi)^{2}\big|Y^{c}\Big\} 
\end{equation}
is monotonically decreasing in $X$. The two inequalities in (\ref{enu:reg2bds})
hold by replacing $\psi_{1}$ by $\psi_{2}$ or replacing $\psi_{2}$
by $\psi_{1}$, and using definitions of $\psi_{1},\psi_{2}$.
\end{proof}

\begin{rem}
The regulator in \cite{brydges_lectures_2007} has the form $e^{\kappa\sum(\partial\phi^{\prime})^{2}+\mbox{the other terms}}$,
since the smoothed field $\phi^{\prime}$ there is analogous to our
$\psi$, the last property above implies that our regulator has about
the same amplitude as the one in \cite{brydges_lectures_2007}, except
that we no longer need the other terms.
\end{rem}

Before proving a furthur property we recall a formula. If $U$ is
a finite set and $\psi=\{\psi(x):x\in U\}$ is a family of centered
Gaussian random variables with covariance identity, and $T:l^{2}(U)\rightarrow l^{2}(U)$
satisfies $\Vert T\Vert<1$ then
\begin{equation} \label{eq:Gaussianint}
\mathbb{E}
\Big[\exp\Big(\frac{1}{2}\left(\psi,T\psi\right)_{l^{2}(U)}\Big)\Big]
=\det\left(1-T\right)^{-1/2}
=\exp\Big(\frac{1}{2}\sum_{n=1}^{\infty}\frac{1}{n}Tr(T^{n})\Big)
\end{equation}

The next lemma shows that the conditional expectations almost automatically
do the work when one wants to see how the regulators undergo integrations,
except that we need to manually control a ratio of normalizations.
\begin{lem}
\label{lem:intproperties}
Suppose that $\kappa>0$ is sufficiently small.
For $X\subset Y\subset U$, and $d(X,Y^{c})=c_{0}L^{j}$,
one has the bound
\begin{equation}
\mathbb{E}\left[G(X,Y)\big|U^{c}\right]\leq c^{L^{-dj}|X|}G(X,U)
\end{equation}
if $U\neq\Lambda$, for some constant $c$ only depending on $c_{0}$.
One also has, as the special case, the short-hand notation and bound
\[
\mathbb{E}\left[G(X,Y)\big|(\Lambda^{+})^{c}\right]:=\mathbb{E}\left[G(X,Y)\right]\leq c^{L^{-dj}|X|} \;.
\]

In particular if $X=\ddot{X}_{0}$ for some $X_{0}\in\mathcal{P}_{j}$,
then the factor $c^{L^{-dj}|X|}$ can be written as $c^{|X_{0}|_{j}}$.
Furthurmore, $G_{0}$ also satisfies the same bound.
\end{lem}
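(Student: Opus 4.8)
The plan is to reduce the statement, via the tower property of conditional expectation, to a bound on a ratio of normalization constants, and then to estimate that ratio using the Gaussian identity \eqref{eq:Gaussianint} together with Proposition~\ref{prop:covest}. For the reduction: since $Y\subset U$, conditioning on $\phi|_{U^{c}}$ fixes fewer variables than conditioning on $\phi|_{Y^{c}}$, and $G(X,Y)$ is a function of $\phi$ on $\partial Y$ (as noted in the proof of Lemma~\ref{lem:simproperties}), so by the definitions of $G$ and $N$ and the tower property
\[
\mathbb{E}\left[G(X,Y)\big|U^{c}\right]
=\frac{1}{N(X,Y)}\,\mathbb{E}\Big[e^{\frac{\kappa}{2}\sum_{X}(\partial\phi)^{2}}\Big|U^{c}\Big]
=\frac{N(X,U)}{N(X,Y)}\,G(X,U),
\]
where for $U=\Lambda$ one reads $\mathbb{E}[\,\cdot\,|\Lambda^{c}]=\mathbb{E}$ and $G(X,\Lambda)=1$. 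Hence it suffices to prove $N(X,U)/N(X,Y)\le c^{L^{-dj}|X|}$.

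For the ratio, write $\sum_{X}(\partial\phi)^{2}=\langle\phi,\mathcal{D}_{X}\phi\rangle$ with $\mathcal{D}_{X}\ge0$. By \eqref{eq:simply_int_zeta} with the conditioning set to zero, $N(X,Z)=\mathbb{E}_{\zeta_{Z}}\big[e^{\frac{\kappa}{2}\langle\zeta_{Z},\mathcal{D}_{X}\zeta_{Z}\rangle}\big]$ with $\zeta_{Z}$ the Dirichlet Gaussian field on $Z$, for $Z\in\{Y,U\}$. Since $\mathcal{D}_{X}$ is dominated by the quadratic form of $(C_{U}^{D})^{-1}$ (each lattice edge carries no more weight in $\sum_{X}(\partial\,\cdot\,)^{2}$ than in $\sum_{U}(\partial_{m}\,\cdot\,)^{2}$), the operators $S_{Z}:=\mathcal{D}_{X}^{1/2}C_{Z}^{D}\mathcal{D}_{X}^{1/2}$ satisfy $\|S_{Z}\|\le1$, so, $\Lambda$ being finite, for $\kappa$ small \eqref{eq:Gaussianint} gives $N(X,Z)=\exp\big(\tfrac{1}{2}\sum_{n\ge1}\tfrac{\kappa^{n}}{n}\,\mathrm{Tr}(S_{Z}^{n})\big)$. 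By monotonicity of Dirichlet Green's functions $0\le S_{Y}\le S_{U}$, and a standard trace inequality gives $\mathrm{Tr}(S_{U}^{n})-\mathrm{Tr}(S_{Y}^{n})\le n\,\|S_{U}\|^{n-1}\,\mathrm{Tr}(S_{U}-S_{Y})\le n\,\mathrm{Tr}\big(\mathcal{D}_{X}(C_{U}^{D}-C_{Y}^{D})\big)$; summing over $n$,
\[
\log\frac{N(X,U)}{N(X,Y)}\le\frac{\kappa}{2(1-\kappa)}\,\mathrm{Tr}\big(\mathcal{D}_{X}(C_{U}^{D}-C_{Y}^{D})\big).
\]

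It remains to bound this trace. By the variation principle, $\zeta_{U}$ splits as $\zeta_{U}=\zeta'+P_{Y}\zeta_{U}$ with $\zeta'$ a Dirichlet Gaussian field on $Y$ independent of the harmonic part $P_{Y}\zeta_{U}$, so $\mathrm{Tr}\big(\mathcal{D}_{X}(C_{U}^{D}-C_{Y}^{D})\big)=\mathbb{E}\big[\sum_{X}(\partial P_{Y}\zeta_{U})^{2}\big]=\sum_{x\in X}\tfrac{1}{2}\sum_{e}\mathbb{E}\big[(\partial_{e}P_{Y}\zeta_{U}(x))^{2}\big]$. For $x\in X$ one has $d(x,\partial Y)\ge d(X,Y^{c})=c_{0}L^{j}$, so Proposition~\ref{prop:covest} (applied with its ``$X$'' equal to $Y$) bounds each summand by $O(1)L^{-dj}$ with $O(1)$ depending only on $c_{0}$; hence $\mathrm{Tr}\big(\mathcal{D}_{X}(C_{U}^{D}-C_{Y}^{D})\big)\le O(1)\,d\,|X|\,L^{-dj}$ and $N(X,U)/N(X,Y)\le c^{L^{-dj}|X|}$ for some $c$ depending only on $c_{0}$ (with $d,\kappa$ fixed). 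The case $U=\Lambda$ follows since $G(X,\Lambda)=1$; if $X=\ddot{X}_{0}$ with $X_{0}\in\mathcal{P}_{j}$ then $|X|\le O(1)L^{dj}|X_{0}|_{j}$, so the exponent rewrites as $|X_{0}|_{j}$ after enlarging $c$; and for $G_{0}$ one uses $\mathbb{E}[G_{0}(X)|U^{c}]=N(X,U)\,G(X,U)$ with $N(X,U)\le\exp\big(\tfrac{1}{2}d|X|\log\tfrac{1}{1-\kappa}\big)$, the last bound following as above from $\mathrm{Tr}(\mathcal{D}_{X}C_{U}^{D})\le d|X|$ (using $C_{U}^{D}\le C_{\Lambda}$ and $\mathbb{E}[(\partial_{e}\phi(x))^{2}]\le1$ uniformly in $m$).

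The main obstacle is in these last steps: verifying the uniform operator bound $\|\mathcal{D}_{X}^{1/2}C_{U}^{D}\mathcal{D}_{X}^{1/2}\|\le1$ (so that \eqref{eq:Gaussianint} applies with a $\kappa$-independent convergence radius), and recognizing that $\mathrm{Tr}\big(\mathcal{D}_{X}(C_{U}^{D}-C_{Y}^{D})\big)$ is exactly the sum over $x\in X$ of the discrete-gradient variances of the harmonic field $P_{Y}\zeta_{U}$, which is precisely what Proposition~\ref{prop:covest} controls (its proof relies on non-negativity of the Poisson kernel; there is a minor point that when $Y$ touches $\partial U$ one uses the killed-random-walk Poisson kernel for $Y$ in place of $P_{Y}$, to which the same argument applies). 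Everything else is routine.
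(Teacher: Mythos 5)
Your proof is correct and follows essentially the same route as the paper: reduce via the tower property to the ratio $N(X,U)/N(X,Y)$, write the normalizations as Fredholm determinants using \eqref{eq:Gaussianint} with operators of norm $\le 1$, bound $\log$ of the ratio by the trace of the operator difference, and identify that trace with the gradient variance of $P_{Y}\zeta_{U}$ so that Proposition~\ref{prop:covest} applies. Your $S_{Z}=\mathcal{D}_{X}^{1/2}C_{Z}^{D}\mathcal{D}_{X}^{1/2}$ is unitarily equivalent to the paper's $T_{Z}=C_{Z}^{1/2}\mathcal{D}_{X}C_{Z}^{1/2}$, and your $n$-by-$n$ trace inequality plays the role of the paper's Lipschitz bound on $\log(1-\kappa\cdot)$; your direct invocation of Proposition~\ref{prop:covest} for the trace is a clean way of packaging the double sum over $\partial Y$ that the paper writes out explicitly.
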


\begin{proof}
By definition one has
\[
\mathbb{E}\left[G(X,Y)\big|U^{c}\right]=\mathbb{E}\left[e^{\frac{\kappa}{2}\sum_{x\in X,e\in\mathcal{E}}(\partial_{e}\phi(x))^{2}}\big|U^{c}\right]\big/N(X,Y)=G(X,U)\frac{N(X,U)}{N(X,Y)}  \;.
\]
Define $\phi=C_{Y}^{1/2}\psi$ so that $\psi$ has covariance identity,
where $C_{Y}$ is the Dirichlet Green's function for $Y$. Then define
$T_{Y}=\frac{1}{2}\sum_{e\in\mathcal{E}}(\partial_{e}C_{Y}^{1/2})^{\star}1_{X}(\partial_{e}C_{Y}^{1/2})$
as an operator on $l^{2}=l^{2}(\Lambda)$. We define similar operators
$C_{U},T_{U}$ in the same way for $U$. Let $\partial_{e}^{D}$,
$-\Delta_{Y}$ take Dirichlet boundary condition on $\partial Y$.
Because $C_{Y}$ is the inverse of $-\Delta_{Y}$ 
\begin{equation}
\begin{aligned}(f,T_{Y}f)_{l^{2}} & =\frac{1}{2}\sum_{x\in X,e\in\mathcal{E}}(\partial_{e}C_{Y}^{1/2}f(x))^{2}\leq\frac{1}{2}\sum_{x\in Y,e\in\mathcal{E}}(\partial_{e}^{D}C_{Y}^{1/2}f(x))^{2}\\
 & \leq\sum_{x\in Y}C_{Y}^{1/2}f(x)(-\Delta_{Y})C_{Y}^{1/2}f(x)\\
 & \leq(f,f)_{l^{2}}
\end{aligned}
\label{eq:T<1}
\end{equation}
So $\left\Vert T_{Y}\right\Vert \leq1$. Similarly $\left\Vert T_{U}\right\Vert \leq1$.
By (\ref{eq:Gaussianint})
\begin{equation}
\frac{N(X,U)}{N(X,Y)}=\frac{\mathbb{E}\left[e^{\frac{\kappa}{2}(\psi,T_{U}\psi)}\right]}{\mathbb{E}\left[e^{\frac{\kappa}{2}(\psi,T_{Y}\psi)}\right]}=\left(\frac{\det(1-\kappa T_{U})}{\det(1-\kappa T_{Y})}\right)^{-1/2}
\end{equation}
Taking logarithm, we need to compute
\[
Tr\left(\log(1-\kappa T_{U})-\log(1-\kappa T_{Y})\right)\leq O(1)Tr\left(\kappa T_{U}-\kappa T_{Y}\right)
\]
where we have used $\left\Vert T_{Y}\right\Vert \leq1$, $\left\Vert T_{U}\right\Vert \leq1$,
$\kappa$ is small, and $\log(1-x)$ is Lipschitz on $x\in[-\frac{1}{2},\frac{1}{2}]$.
Since $C_{U}-C_{Y}=P_{Y}C_{U}$, 
\begin{equation}
\begin{aligned}Tr\left(T_{U}-T_{Y}\right) & =\frac{1}{2}\sum_{e\in\mathcal{E},x\in X}\partial_{e}(C_{U}-C_{Y})\partial_{e}^{\star}(x,x)\\
 & =\frac{1}{2}\sum_{e\in\mathcal{E},x\in X}\sum_{y\in\partial Y}\partial_{x,e}P_{Y}(x,y)\partial_{x,e}C_{U}(y,x)
\end{aligned}
\label{eq:intprop_trace}
\end{equation}
By Lemma \ref{lem:derbnd} and proceed similarly as eq. (\ref{eq:alpha12-1})
in proof of Proposition \ref{prop:covest}, making use of the $O(L^{j})$
distance between $x$ and $y$, the above expression is bounded by
$O(L^{-jd})\left|X\right|$ which concludes the proof. 

The bound on $\mathbb{E}\left[G(X,Y)\big|(\Lambda^{+})^{c}\right]$
is similar. The only modification is to replace $C_{U}$ by $C_{\Lambda}$
which satisfies periodic instead of Dirichlet boundary condition.
For $G_{0}$, we can directly bound $\mathbb{E}\left[e^{\frac{\kappa}{2}\sum_{x\in X,e\in\mathcal{E}}(\partial_{e}\phi(x))^{2}}\big|U^{c}\right]$
by $c^{|X|}$.
\end{proof}

\section{Smoothness of RG} \label{sec:Smoothness-of-RG}

In this section we prove that the RG map constructed in Section \ref{sec:The-renormalization-group}
is smooth. First of
all, we need some geometric results from \cite{brydges_lectures_2007}.
\begin{lem}
\label{lem:geometric}(Brydges \cite{brydges_lectures_2007}) There
exists an $\eta=\eta(d)>1$ such that for all $L\geq2^{d}+1$ and
for all large connected sets $X\in\mathcal{P}_{j}$, $|X|_{j}\geq\eta|\bar{X}|_{j+1}$.
In addition, for all $X\in\mathcal{P}_{j}$, $|X|_{j}\geq|\bar{X}|_{j+1}$,
and 
\begin{equation}
|X|_{j}\geq\frac{1}{2}(1+\eta)|\bar{X}|_{j+1}-\frac{1}{2}(1+\eta)2^{d+1}|\mathcal{C}(X)|
\end{equation}
\end{lem}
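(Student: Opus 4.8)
This is the geometric lemma of Brydges \cite{brydges_lectures_2007}, so one may simply cite it; below is the plan for a self-contained argument. The three assertions split into two routine counting facts and one genuinely geometric estimate, and I would organise the proof around that split.

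First, the inequality $|X|_{j}\ge|\bar X|_{j+1}$ for arbitrary $X\in\mathcal P_{j}$ is pure counting: by definition $\bar X$ is the union of the $(j+1)$-blocks meeting $X$, each such block $D$ contains at least one $j$-block of $X$, and distinct $(j+1)$-blocks contain disjoint families of $j$-blocks, so $|X|_{j}=\sum_{D\in\mathcal B_{j+1}(\bar X)}|X\cap D|_{j}\ge|\bar X|_{j+1}$. Granting this and the large-connected bound $|X|_{j}\ge\eta|\bar X|_{j+1}$, the third inequality follows by decomposing $X=\bigcup_{i}X_{i}$ into connected components: if $X_{i}$ is large then $|X_{i}|_{j}\ge\eta|\bar X_{i}|_{j+1}\ge\tfrac12(1+\eta)|\bar X_{i}|_{j+1}$ since $\eta\ge1$, whereas if $X_{i}$ is small then $|\bar X_{i}|_{j+1}\le|X_{i}|_{j}\le2^{d}<2^{d+1}$, so $|X_{i}|_{j}\ge0\ge\tfrac12(1+\eta)\bigl(|\bar X_{i}|_{j+1}-2^{d+1}\bigr)$; summing over $i$ and using $\overline{\bigcup_{i}X_{i}}=\bigcup_{i}\bar X_{i}$ (hence $|\bar X|_{j+1}\le\sum_{i}|\bar X_{i}|_{j+1}$) gives the stated bound, with one term $-\tfrac12(1+\eta)2^{d+1}$ per connected component.

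It remains to produce $\eta=\eta(d)>1$ with $|X|_{j}\ge\eta|\bar X|_{j+1}$ for every large connected $j$-polymer $X$; the bound will be uniform in $L$, using only $L\ge3$ (automatic from $L\ge2^{d}+1$). Write $n:=|\bar X|_{j+1}$, $m:=|X|_{j}$. If $n=1$ then $\bar X$ is a single $(j+1)$-block, and largeness gives $m\ge2^{d}+1=(2^{d}+1)\,n$. If $n\ge2$, call a $(j+1)$-block $D$ of $\bar X$ \emph{fringe} if $|X\cap D|_{j}=1$ and \emph{branch} otherwise, and let $F,B$ be the numbers of fringe and branch blocks, so $F+B=n$; since $|X\cap D|_{j}\ge2$ on branch blocks, $m=\sum_{D}|X\cap D|_{j}\ge F+2B=n+B$. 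The geometric heart is the claim that a connected $j$-polymer all of whose $(j+1)$-blocks meet it in at most one $j$-block has at most $2^{d}$ blocks; equivalently, each maximal connected run of fringe blocks has size $\le2^{d}$. The mechanism: the single $j$-block in a fringe block must be $\ell^{\infty}$-adjacent to $j$-blocks lying in other $(j+1)$-blocks, and following such cells along the run one checks, coordinate by coordinate, that each coordinate is either pinned in the interior of its block (and stays so, because $L\ge3$) or sits on a face and can then only oscillate between two consecutive $(j+1)$-block indices; hence all blocks of the run surround a common $(j+1)$-grid vertex and there are at most $2^{d}$ of them. Since $X$ is large there is at least one branch block, so every maximal fringe run is adjacent to a branch block; a branch block has at most $3^{d}-1$ neighbouring $(j+1)$-blocks, so there are at most $(3^{d}-1)B$ such runs, each of size $\le2^{d}$, giving $F\le2^{d}(3^{d}-1)B$ and $n=F+B\le\bigl(1+2^{d}(3^{d}-1)\bigr)B$. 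Therefore $m\ge n+B\ge\bigl(1+\tfrac{1}{1+2^{d}(3^{d}-1)}\bigr)n$, and one takes $\eta(d):=\min\bigl\{\,2^{d}+1,\ 1+\tfrac{1}{1+2^{d}(3^{d}-1)}\,\bigr\}>1$.

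The one step that needs real care is the bracketed claim bounding a maximal fringe run by $2^{d}$ blocks: one must verify that along a connected chain of $j$-blocks lying in pairwise-distinct $(j+1)$-blocks the interior-versus-face type of each coordinate is an invariant and face coordinates only toggle between two block indices, so the chain cannot escape the $2^{d}$ blocks around one vertex — this is exactly where $L\ge3$ (no adjacency between opposite corners of a block) and a correct treatment of diagonal block-adjacency are used. Everything else — the two counting facts, the identity $\overline{\bigcup_{i}X_{i}}=\bigcup_{i}\bar X_{i}$, and the final choice of $\eta$ — is routine.
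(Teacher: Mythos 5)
The paper itself does not prove this lemma --- it cites Brydges (Lemma 6.15--6.16 of \cite{brydges_lectures_2007}) and omits the argument --- so I will evaluate your reconstruction on its own terms. Your treatment of the first inequality $|X|_j\ge|\bar X|_{j+1}$ and the derivation of the third inequality from the first two (component-by-component, using $\eta\ge1$ for large components and $|\bar X_i|_{j+1}\le|X_i|_j\le 2^d<2^{d+1}$ for small ones) are correct, and the fringe/branch bookkeeping $m\ge n+B$, $F\le 2^d(\text{\# fringe runs})\le 2^d(3^d-1)B$ is sound \emph{granted} the bound of $2^d$ on a maximal fringe run.

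However, the mechanism you propose for that key bound is wrong. You assert that, along a fringe run, "the interior-versus-face type of each coordinate is an invariant" and that a face coordinate "can only oscillate between two consecutive $(j+1)$-block indices." Neither holds. Take $d=2$, $L\ge4$, and $j$-blocks at $(L-2,L-1)$, $(L-1,L)$, $(L,L)$ (positions in $j$-block coordinates, $(j+1)$-blocks of the form $[kL,(k+1)L)^2$). This is a connected run, the three $(j+1)$-block indices $(0,0)$, $(0,1)$, $(1,1)$ are distinct, yet the first coordinate starts at $L-2$ (interior, since $L-2\notin\{0,L-1\}$), moves to the face $L-1$, then crosses to $L$; its $(j+1)$-index goes $0\to0\to1$. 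So an interior coordinate does change its $(j+1)$-index, and nothing forces all blocks to "surround a common $(j+1)$-grid vertex" (here $(L-2,L-1)$ is at $\ell^\infty$-distance $2$ from the vertex $(L,L)$). Your claim that only $L\ge3$ is needed is also a red flag: the bound on the fringe run genuinely uses the full hypothesis $L\ge2^d+1$.

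The conclusion ($\le2^d$ blocks per run) is nevertheless true, and a correct argument is short: suppose a connected $j$-polymer $Y$ with all $j$-blocks in distinct $(j+1)$-blocks had $|Y|_j\ge2^d+1$; pruning leaves of a spanning tree gives a connected $Y'\subseteq Y$ with $|Y'|_j=2^d+1$. Any two blocks of $Y'$ are joined in $Y'$ by a path of length $\le2^d$, so each coordinate of the $j$-block positions in $Y'$ ranges over an interval of length $\le2^d\le L-1<L$, which meets at most two of the length-$L$ intervals defining $(j+1)$-blocks. Hence the $(j+1)$-indices of $Y'$ lie in a product of two-element sets, so there are at most $2^d$ of them --- contradicting $|Y'|_j=2^d+1$ with all distinct. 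This is where $L\ge2^d+1$ enters. Replacing your flawed invariance claim with this diameter/minimality argument makes the proof complete; as written, there is a genuine gap at the step you yourself flagged as the one "that needs real care."
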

\begin{proof}
The lemma is the same with \cite{brydges_lectures_2007} (Lemma 6.15
and 6.16), so we omit the proof.
\end{proof}

In the following result, assuming $j\ge 0$,
we omit subscript $j$ for objects at scale $j$ and write a prime
for objects at scale $j+1$, as in Section \ref{sec:The-renormalization-group}.
Recall that the spaces $\mathcal N^{\mathcal P_c}$, $\mathcal N^{\mathcal P^\prime_c}$ are defined in Subsection~\ref{sec:Basics-Polymers}, 
and they are equipped with norms defined in Subsection~\ref{eq:def_normKj}.

\begin{prop}
\label{prop:smoothness}
Let $B^{\prime}(\mathcal{N}^{\mathcal{P}_c^\prime})$
be a ball centered on the origin in $\mathcal{N}^{\mathcal{P}_c^\prime}$.
There exists $A(d,L,B^{\prime})$ and $A^{\star}(d,A)$ such that
for $A>A(d,L,B^{\prime})$ and $A^{\star}>A^{\star}(d,A)$, the map
$(\sigma_{j},E_{j+1},\sigma_{j+1},K_{j})\mapsto K_{j+1}$ defined
in Subsection~\ref{subsec:RG-steps} is smooth 
from $(-A^{\star-1},A^{\star-1})^{3}\times B_{A^{\star-1}}(\mathcal{N}^{\mathcal{P}_c})$
to $B^{\prime}(\mathcal{N}^{\mathcal{P}_c^\prime})$ where $B_{A^{\star-1}}(\mathcal{N}^{\mathcal{P}_c})$
is a ball centered on the origin in $\mathcal{N}^{\mathcal{P}_c}$
with radius $A^{\star-1}$.
\end{prop}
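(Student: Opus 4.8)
The plan is to establish smoothness of the composite map $(\sigma_j, E_{j+1}, \sigma_{j+1}, K_j) \mapsto K_{j+1}$ by decomposing it into the two pieces constructed in Subsection~\ref{subsec:RG-steps}: first the extraction–reblocking map $(\sigma_j, E_{j+1}, \sigma_{j+1}, K_j) \mapsto K^\sharp$ given by \eqref{eq:Ksharp}, and then the conditional expectation map $K^\sharp \mapsto K_{j+1}$ given by \eqref{eq:def-KnextU}. Since composition of smooth maps is smooth, it suffices to treat each piece separately, and the second piece is the easy one: by Lemma~\ref{lem:normEleqEnorm} (norm of a conditional expectation bounded by the conditional expectation of the norm) together with Lemma~\ref{lem:intproperties} (integration of regulators, costing only a factor $c^{|V_0|_{j+1}}$ per connected component), the map $K^\sharp \mapsto \mathbb{E}[K^\sharp(V)\,|\,(V^+)^c]$ is linear and bounded from $\mathcal{N}^{\mathcal{P}'_c}$ to $\mathcal{N}^{\mathcal{P}'_c}$, hence smooth; factorization over connected components of $K^\sharp$ (the Lemma proved right after \eqref{eq:Ksharp}) reduces the product over $\mathcal{C}(U)$ to a bounded multilinear operation.

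For the first piece, the key observation is that each term in the sum \eqref{eq:Ksharp} defining $K^\sharp(U)$ is a product of factors, each of which is individually a smooth function of the four arguments valued in $\mathcal{N}$: the factors $(1-e^{E'})^P$, $(e^{E'})^{\cdots}$, $(e^{-E'})^{\cdots}$ are entire in $E'$; the factors $I(B)$, $\tilde I(B)$, and their differences $I(B)-e^{E'}$, $\tilde I(B)-e^{E'}$, $\delta I(B)$ are smooth (indeed analytic) in $\sigma, \sigma', E'$ with the field dependence entering through $\partial P_{B^+}\phi$ or $\partial P_{(\bar B)^+}\phi$; and the factors $K(Y)$ depend smoothly (linearly) on $K$. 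So the plan is: (i) show each factor is smooth into $\mathcal{N}$ with $T_\phi(\Phi)$-norm estimates, using that $\|I-1\|$, $\|\delta I\|$, $\|\tilde I - e^{E'}\|$ are small — of order $|\sigma|$, $|\sigma'|$, $|E'|$ — so that products over many blocks contribute powers of these small quantities controlled by the geometry; (ii) apply a product rule, estimating the norm of the product of factors by the product of their norms, which requires the separation properties of the supports described in the paragraphs preceding the proof of Proposition~\ref{prop:extra-reblo} (the sets $(\bar B)^+$ must not intersect the polymer $X$ on which $K(X)$ lives, which is why $\delta I$ is only extracted outside $\langle X\rangle$) — here one uses Lemma~\ref{lem:simproperties}(2) to multiply regulators on disjoint sets and Lemma~\ref{lem:normEleqEnorm} / Lemma~\ref{lem:Tphiproperties} to combine the $T_\phi$-norms with matching field domains; (iii) sum over the combinatorial data $(V, P, Q, Z, \mathcal{X}, \hat{\mathcal{Y}})$, absorbing the number of terms into the large constants $A$ and $A^\star$ via the geometric Lemma~\ref{lem:geometric} (which gives the crucial gain $|X|_j \ge \eta |\bar X|_{j+1}$ for large polymers) so that the sum converges and maps the $A^{\star -1}$-ball into the prescribed ball $B'(\mathcal{N}^{\mathcal{P}'_c})$.

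The main obstacle — and the part requiring the most care — is step (ii)–(iii) together: namely, showing that the bookkeeping of supports in \eqref{eq:Ksharp} really does furnish enough separation for the product rule to apply, and that the resulting combinatorial sum over polymers, extraction sets, and small-set decompositions is dominated after paying the reblocking loss. This is where the subtleties flagged before the proof of Proposition~\ref{prop:extra-reblo} bite: one must verify that $\langle X\rangle \subseteq \hat U$ for $L$ large, that the corridor $\hat U \backslash U$ genuinely separates $\tilde I^{\Lambda\backslash\hat U}$ from $K^\sharp(U)$, and that the width-$L^j$ corridor of $K(X)$ suffices to separate the modified extractions $I(B)-e^{E'}$ on $\langle X\rangle \backslash \hat X$ from $K(X)$ — each of these is a geometric statement one must pin down with the definitions of $\hat{\,}, X^+, \ddot X, \dot X$, and $\langle\,\rangle$. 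For smoothness (as opposed to mere boundedness) one differentiates the product in $\mathcal{N}$ a bounded number of times (the norm \eqref{eq:allders} retains derivatives up to order $4$), and each derivative either hits an $I$/$\tilde I$/$\delta I$ factor (producing an extra $\partial P\phi$, harmless since the $\Phi_j$-norm is built to measure exactly such quantities) or hits a $K$ factor (linear, hence trivial); the resulting bounds are uniform on the parameter ball, giving $C^\infty$ dependence. The choices $A > A(d,L,B')$ and $A^\star > A^\star(d,A)$ are then dictated respectively by the requirement that the large-polymer sum (controlled by $\eta$) beat $A$, and that the small-set sum and the $\sigma, E'$ prefactors beat $A^\star$.
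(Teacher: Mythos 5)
Your broad plan---bound each factor in the $T_\phi$-norm, combine via the product rule (Lemma~\ref{lem:normEleqEnorm}), and absorb the combinatorics using Lemma~\ref{lem:geometric} with $A$, $A^\star$ large---matches the paper's strategy, and your discussion of the support/corridor bookkeeping and of the role of $\langle X\rangle$ is on target. But there are two genuine gaps.

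First, the proposed decomposition into a composition $K_j\mapsto K^\sharp\mapsto K_{j+1}$ with the intermediate $K^\sharp$ bounded in $\mathcal{N}^{\mathcal{P}'_c}$ (with its $\|\cdot\|_{j+1}$ norm) does not close. The norm $\|\cdot\|_{j+1}$ divides by the scale-$(j{+}1)$ regulator $G(\ddot U,U^+)$, but the bound on $K^\sharp(U)$ that the factor-by-factor estimates give you is not of this form: each $\tilde I(B)$, $\delta I(B)$, $I(B)-e^{E'}$ contributes an exponential $e^{\frac{\kappa}{2}\sum_B(\partial P_{B^+}\phi)^2}$ or $e^{\frac{\kappa}{2}\sum_B(\partial P_{(\bar B)^+}\phi)^2}$, and each $K(Y)$ contributes $G(\ddot Y, Y^+)$ at scale $j$; the product of all these is not $G(\ddot U,U^+)$ and cannot be compared to it without first integrating. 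The paper sidesteps this by writing $\mathbf{E}^{U^+}=\mathbb{E}\big[\mathbf{E}^{W^+}\prod K\,\big|\,(U^+)^c\big]$ with $W=U\backslash\hat X$: the inner conditional expectation over $(W^+)^c$ decouples $\phi=P_{W^+}\phi+\zeta$, pulls out the clean regulator $e^{\frac{\kappa}{2}\sum_W(\partial P_{W^+}\phi)^2}$, and reduces the $\zeta$-part to a constant; then the outer conditional expectation converts the scale-$j$ regulators $G(\ddot Y,Y^+)$ and $e^{\frac{\kappa}{2}\sum_W(\partial P_{W^+}\phi)^2}$ into $G(\ddot U,U^+)$ via Lemma~\ref{lem:intproperties}. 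This nested structure is essential and cannot be replaced by ``first bound $\|K^\sharp\|_{j+1}$, then compose with a bounded linear map'' --- unless you introduce an explicitly defined intermediate normed space, which you do not.

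Second, the quantity $\mathcal{M}$ in \eqref{eq:mathcalM} is completely absent from your proposal. After extracting the $P_{W^+}\phi$ part, one is left with the $\zeta$-expectation of $e^{\frac{\kappa}{2}\sum_{B}\sum_B(\partial P_{B^+}\zeta)^2}\,e^{\frac{\kappa}{2}\sum_{B}\sum_B(\partial P_{(\bar B)^+}\zeta)^2}$, which is not a regulator in the sense of \eqref{eq:def-reg} (it mixes Poisson kernels on $B^+$ and $(\bar B)^+$ over all $B\subset W$), so Lemma~\ref{lem:intproperties} does not apply. Showing $\mathcal{M}\le c^{|U|_j}$ requires a separate argument using the trace formula \eqref{eq:Gaussianint} together with the scaling estimate of Proposition~\ref{prop:covest} to control $\mathrm{Tr}(T)$ and $\|T\|$; this is where the ``important scaling'' $O(L^{-dj})$ enters the smoothness proof. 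Without this step the proof does not close. (Two smaller points: the map $K^\sharp\mapsto K_{j+1}$ is polynomial, not linear, because of the product over $\mathcal C(U)$; and the combinatorial count used in the paper is $k^{|U|_j}$ with $k=2^7$, coupled with the refined version of Lemma~\ref{lem:geometric} involving $|\mathcal C(X)|$, not just $|X|_j\ge\eta|\bar X|_{j+1}$.)
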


\begin{proof}
Let
$A^{\star-1}\ll\kappa $.
For $U\in\mathcal{P}_{c}^{\prime}$, by definition of $K^{\sharp}$,
\begin{equation} \label{eq:Kj+1_smoothness}
\begin{aligned}
K^{\prime}(U)&=  
\sum_{V\subseteq U,V\neq\emptyset}
\sum_{(P,Q,Z,\mathcal{X},\hat{\mathcal{Y}})\rightarrow V}
\mathbf{E}^{U^{+}} \\
&\qquad\times
\underbrace{
(1-e^{E^{\prime}})^{P}
\,(e^{E^{\prime}})^{(\left\langle X\right\rangle \backslash X)\,\cap\, U\backslash(P\cup Q)}
\,(e^{-E^{\prime}})^{U\,\cup\, X}}_{
	\leq(A^{\star}/2)^{-|P|_{j}}
	\,2^{|(\left\langle X\right\rangle \backslash X)\,\cap\, U\backslash(P\cup Q)|_{j}}
	\,2^{|U\cup X|_{j}}}
\end{aligned}
\end{equation}
where, with $\prod K:=\prod_{Y\in\mathcal{X}}K(Y)\prod_{(B,Y)\in\hat{\mathcal{Y}}}\frac{1}{|Y|_{j}}K(B,Y)$
as a short-hand notation, 
\begin{equation}
\begin{aligned} & \mathbf{E}^{U^{+}}:=\mathbb{E}\bigg[(\tilde{I}-e^{E^{\prime}})^{(U\backslash V)\cap\left\langle X\right\rangle ^{c}}\tilde{I}^{V\cap(\left\langle X\right\rangle ^{c}\backslash Z)}\delta I^{Z}(I-e^{E^{\prime}})^{Q}\prod K\big|(U^{+})^{c}\bigg]\\
= & \mathbb{E}\bigg[\underbrace{\mathbb{E}\bigg[(\tilde{I}-e^{E^{\prime}})^{(U\backslash V)\cap\left\langle X\right\rangle ^{c}}\tilde{I}^{V\cap(\left\langle X\right\rangle ^{c}\backslash Z)}\delta I^{Z}(I-e^{E^{\prime}})^{Q}\big|(W^{+})^{c}\bigg]}_{=:\mathbf{E}^{W^{+}}}\prod K\big|(U^{+})^{c}\bigg]
\end{aligned}
\end{equation}
where $W=U\backslash\hat{X}$ (recall that $X:=X_{\mathcal{X}\cup\mathcal{Y}}$)
and the last step used the corridors around $K(Y)$ in order to make
sense of the $(W^{+})^{c}$ conditional expectation. In the above
$W^{+}$ is a $+$ operation at scale $j$ and $U^{+}$ is a $+$
operation at scale $j+1$. 

We first control $\mathbf{E}^{W^{+}}$. With $\phi=P_{W^{+}}\phi+\zeta$
and the inequality $(a+b)^{2}\leq2a^{2}+2b^{2}$, and using assumption
$A^{\star-1}\ll\kappa $, Lemma \ref{lem:estimateV}, we list the estimates
for each factors.
\[
\Vert(I-e^{E^{\prime}})(B)\Vert_{T_{\phi}(\Phi_{j}(B))}\leq5(\kappa A^{\star})^{-1}e^{\frac{\kappa}{2}\sum_{B}(\partial P_{W^{+}}\phi)^{2}+\frac{\kappa}{2}\sum_{B}(\partial P_{B^{+}}\zeta)^{2}}
\]
for all $B\in Q$, where $B^{+}\subseteq W^{+}$ since $Q\subseteq\left\langle X\right\rangle \backslash\hat{X}$;
and, 
\[
\Vert(\tilde{I}-e^{E^{\prime}})(B)\Vert_{T_{\phi}(\Phi_{j}(B))}\leq5(\kappa A^{\star})^{-1}e^{\frac{\kappa}{2}\sum_{B}(\partial P_{W^{+}}\phi)^{2}+\frac{\kappa}{2}\sum_{B}(\partial P_{(\bar{B})^{+}}\zeta)^{2}}
\]
for all $B\in\mathcal{B}_{j}((U\backslash V)\cap\left\langle X\right\rangle ^{c})$,
where $(\bar{B})^{+}\subseteq W^{+}$ since $\left\langle X\right\rangle $
is designed to ensure that; and
\[
\Vert\tilde{I}(B)\Vert_{T_{\phi}(\Phi_{j}(B))}\leq2e^{\frac{\kappa}{2}\sum_{B}(\partial P_{W^{+}}\phi)^{2}+\frac{\kappa}{2}\sum_{B}(\partial P_{(\bar{B})^{+}}\zeta)^{2}}
\]
for all $B\in\mathcal{B}_{j}(V\cap(\left\langle X\right\rangle ^{c}\backslash Z))$,
where $(\bar{B})^{+}\subseteq W^{+}$ since $B\subseteq\left\langle X\right\rangle ^{c}$;
and
\[
\begin{aligned}\left\Vert \delta I(B)\right\Vert _{T_{\phi}(\Phi_{j}(B))} & \leq\left\Vert I(B)-1\right\Vert _{T_{\phi}(\Phi_{j}(B))}+\Vert\tilde{I}(B)-1\Vert_{T_{\phi}(\Phi_{j}(B))}\\
 & \leq8(\kappa A^{\star})^{-1}e^{\frac{\kappa}{2}\sum_{B}(\partial P_{W^{+}}\phi)^{2}}e^{\frac{\kappa}{2}\sum_{B}(\partial P_{B^{+}}\zeta)^{2}+\frac{\kappa}{2}\sum_{B}(\partial P_{(\bar{B})^{+}}\zeta)^{2}}
\end{aligned}
\]
by $e^{a}+e^{b}\leq2e^{a+b}$ ($a,b>0$) for all $B\in\mathcal{B}_{j}(Z)$,
where $(\bar{B})^{+}\subseteq W^{+}$ since $Z\subseteq\left\langle X\right\rangle ^{c}$.
Combining all above estimates, together with Lemma \ref{lem:normEleqEnorm},
we have
\begin{equation}
\Vert\mathbf{E}^{W^{+}}\Vert_{T_{\phi}(\Phi_{j}(W))}\leq(\kappa A^{\star}/8)^{-|Q\cup Z\cup((U\backslash V)\backslash\left\langle X\right\rangle )|_{j}}e^{\frac{\kappa}{2}\sum_{W}(\partial P_{W^{+}}\phi)^{2}}\mathcal{M}
\end{equation}
where
\begin{equation} \label{eq:mathcalM}
\mathcal{M}
\leq  
\mathbb{E}_{\zeta}\big[
e^{\frac{\kappa}{2}\sum_{B\in\mathcal{B}_{j}(W)}\sum_{B}(\partial P_{B^{+}}\zeta)^{2}}e^{\frac{\kappa}{2}\sum_{B\in\mathcal{B}_{j}\left(W\right)}\sum_{B}(\partial P_{(\bar{B})^{+}}\zeta)^{2}}
\big] \;.
\end{equation}
In the next Lemma we show that $\mathcal{M}\leq c^{|U|_{j}}$.

Now we proceed to control $\mathbf{E}^{U^{+}}$. Instead of $(a+b)^{2}\leq2a^{2}+2b^{2}$
we use properties of the regulator established in Section \ref{sec:Norms}.
Since for all $X\in\mathcal{P}_{j,c}$
\[
\Vert K_{j}(X)\Vert_{T_{\phi}(\Phi_{j}(X))}\leq A^{\star-1}G(\ddot{X},X^{+})A^{-|X|_{j}} \;.
\]
By Lemma \ref{lem:normEleqEnorm}, Lemma \ref{lem:simproperties} (\ref{enu:reg2})(\ref{enu:reg4})(\ref{enu:reg2bds})
and Lemma \ref{lem:intproperties}, 
\begin{equation}
\begin{aligned}
& \Vert \mathbf{E}^{U^{+}}  \Vert_{T_{\phi}(\Phi_{j}(U))}
\leq 
c^{|U|_{j}}\,
(\kappa A^{\star}/8)^{
	-|Z\,\cup\, Q\,\cup\,\big((U\backslash V)\backslash\left\langle X\right\rangle \big)|_{j}-|\mathcal{X}|-|\mathcal{Y}|}
A^{-|X_{\mathcal{X}\cup\mathcal{Y}}|_{j}}  \\
 & \qquad\qquad\qquad\times
 \mathbb{E}\Big[e^{\frac{\kappa}{2}\sum_{W}(\partial P_{W^{+}}\phi)^{2}}
 \prod_{Y\in\mathcal{X}}G(\ddot{Y}_{k},Y_{k}^{+})\prod_{Y\in\mathcal{Y}}G(\ddot{Y}_{i},Y_{i}^{+})\big|(U^{+})^{c}\Big]
 \\
 & \leq c^{|U|_{j}}\cdot(\kappa A^{\star}/8)^{-|Z\cup Q\cup((U\backslash V)\backslash\left\langle X\right\rangle )|_{j}-|\mathcal{X}|-|\mathcal{Y}|}G(\ddot{U},U^{+})c^{\prime|W|_{j}}(A/c^{\prime})^{-|X_{\mathcal{X}\cup\mathcal{Y}}|_{j}}  .
\end{aligned}
\label{eq:EUplus}
\end{equation}

We can bound the number of terms in the summation in (\ref{eq:Kj+1_smoothness})
by $k^{|U|_{j}}$ with $k=2^{7}$, because every $j$-block in $U$
either belongs to $V$ or $V^{c}$, and the same statement is true
if $V$ is replaced by $P,Q,Z,X_{\mathcal{X}},Y_{\mathcal{Y}}$, and
if it is in $Y\in\mathcal{Y}$ it is either the $B$ of $(B,Y)\in\hat{\mathcal{Y}}$
or not. By Lemma \ref{lem:geometric}, for $a=\frac{1}{2}(1+\eta)$,
with $\mathcal{X}=\{X_{k}\}$, $\hat{\mathcal{Y}}=\{(B_{i},Y_{i})\}$,
the quantity $a|U|_{j+1} $ can be bounded by
\[
\begin{aligned}
& 
a|\bar{Z}|_{j+1}+a|\cup_{i}\bar{B}_{i}|_{j+1}+a|\cup_{k}\bar{X}_{k}|_{j+1}+a|\bar{Q}|_{j+1}+a|(U\backslash V)\cap\left\langle X\right\rangle ^{c}|_{j+1}\\
 & \leq\big(|Z|_{j}+a2^{d+1}|\mathcal{C}(Z)|\big)+a|\hat{\mathcal{Y}}|+\big(\sum_{k}|X_{k}|_{j}+a2^{d+1}|\mathcal{X}|\big)\\
 & \qquad+\big(|Q|_{j}+a2^{d+1}|\mathcal{C}(Q)|\big)+aL^{d}|(U\backslash V)\cap\left\langle X\right\rangle ^{c}|_{j}\\
 & \leq(1+a2^{d+1})\,(|Z|_{j}+|Q|_{j})+a|\hat{\mathcal{Y}}|\\
 &\qquad
 +(|X_{\mathcal{X}}|_{j}+a2^{d+1}|\mathcal{X}|)+aL^{d}|(U\backslash V)\cap\left\langle X\right\rangle ^{c}|_{j} \;.
\end{aligned}
\]
Then we can easily check that with $A,A^{\star}$ sufficiently large
as assumed in the proposition 
\[
\left\Vert K^{\prime}\right\Vert _{j+1}=\sup_{U\in\mathcal{P}^{\prime}}\left\Vert K^{\prime}(U)\right\Vert _{j+1}A^{a|U|_{j+1}}A^{(1-a)|U|_{j+1}}<r
\]
where $r$ is the radius of $B^{\prime}(\mathcal{N}_{j+1}^{\mathcal{P}_{j+1}})$,
because $A^{|X_{\mathcal{X}}|_{j}}$ is cancelled by its inverse
in (\ref{eq:EUplus}), and 
\begin{equation}
\begin{aligned}
\lim_{A\rightarrow\infty}
& A^{(1-a)|U|_{j+1}}
\cdot A^{-|X_{\mathcal{Y}}|_{j}}
\,\cdot\, k^{|U|_{j}}
\,\cdot\, c^{|U|_{j}}
\,\cdot\, c^{\prime \,|W|_{j}+|X_{\mathcal{X}\cup\mathcal{Y}}|_{j}}\\
&\qquad\qquad\times
2^{|(\left\langle X\right\rangle \backslash X) \,\cap \,U\backslash(P\cup Q)|_{j}}
\,\cdot\, 2^{|U\cup X|_{j}} =0
\end{aligned}
\end{equation}
\begin{equation}
\begin{aligned}\lim_{A^{\star}\rightarrow\infty} & (\kappa A^{\star}/8)^{-|Q\cup Z\cup((U\backslash V)\backslash\left\langle X\right\rangle )|_{j}-|\mathcal{X}|-|\mathcal{Y}|}\\
 & \cdot A^{(1+a2^{d+1})|Q\cup Z|_{j}+a|\hat{\mathcal{Y}}|+a2^{d+1}|\mathcal{X}|+aL^{d}|(U\backslash V)\cap\left\langle X\right\rangle ^{c}|_{j}}=0 \;.
\end{aligned}
\end{equation}
The derivatives of the map $(\sigma_{j},E_{j+1},\sigma_{j+1},K_{j})\mapsto K_{j+1}$
are bounded similarly.
\end{proof}

\begin{lem}
Let $\mathcal{M}$ be the quantity introduced in the proof of Proposition
\ref{prop:smoothness}. There exists a constant $c$ independent of
$L,A,A^{\star}$ such that
\begin{equation}
\mathcal{M}\leq c^{|U|_{j}} \;.
\end{equation}
\end{lem}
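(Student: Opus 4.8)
The plan is to read the right-hand side of \eqref{eq:mathcalM} as a single Gaussian integral of the exponential of a positive quadratic form in the Dirichlet field $\zeta$ on $W^{+}$ (whose covariance is $C_{W^{+}}^{D}$), to evaluate it by the determinant identity \eqref{eq:Gaussianint}, and then to control the resulting trace. Let $A\geq0$ be the operator determined by $(\zeta,A\zeta)=\sum_{B\in\mathcal{B}_{j}(W)}\big[\sum_{B}(\partial P_{B^{+}}\zeta)^{2}+\sum_{B}(\partial P_{(\bar{B})^{+}}\zeta)^{2}\big]$, substitute $\zeta=(C_{W^{+}}^{D})^{1/2}\psi$ with $\psi$ of covariance the identity on $l^{2}(W^{+})$, and set $T:=(C_{W^{+}}^{D})^{1/2}A(C_{W^{+}}^{D})^{1/2}\geq0$. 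Once $\Vert\kappa T\Vert\leq\tfrac12$, \eqref{eq:Gaussianint} gives $\mathcal{M}\leq\det(1-\kappa T)^{-1/2}=\exp\big(\tfrac12\sum_{n\geq1}\tfrac1n Tr((\kappa T)^{n})\big)$, and $-\log(1-t)\leq2t$ on $[0,\tfrac12]$, applied eigenvalue by eigenvalue to $\kappa T$, yields $\log\mathcal{M}\leq\kappa\,Tr(T)$. It therefore suffices to prove $Tr(T)\leq c_{1}|W|_{j}$ and $\Vert T\Vert\leq c_{2}$ with $c_{1},c_{2}$ depending only on $d$; then choosing $\kappa$ small enough that $\kappa c_{2}\leq\tfrac12$ (a smallness condition on $\kappa$ depending only on $d$) gives $\mathcal{M}\leq e^{c_{1}|W|_{j}}\leq(e^{c_{1}})^{|U|_{j}}$ since $W\subseteq U$, which is the claim with $c=e^{c_{1}}$.

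For the trace, $Tr(T)=Tr(AC_{W^{+}}^{D})$ is, up to the constant $\tfrac12$ in the convention for $(\partial\,\cdot\,)^{2}$, the sum over $B\in\mathcal{B}_{j}(W)$, $x\in B$ and $e\in\mathcal{E}$ of the diagonal entries $(\partial_{x,e}P_{B^{+}})C_{W^{+}}^{D}(\partial_{x,e}P_{B^{+}})^{\star}(x,x)$ and $(\partial_{x,e}P_{(\bar{B})^{+}})C_{W^{+}}^{D}(\partial_{x,e}P_{(\bar{B})^{+}})^{\star}(x,x)$. Since $x\in B$ forces $d(x,\partial B^{+})\geq\tfrac13L^{j}$ and $d(x,\partial(\bar{B})^{+})\geq\tfrac13L^{j+1}$, Proposition~\ref{prop:covest} — applied at scale $j$ to $X=B^{+}$, and at scale $j+1$ to $X=(\bar{B})^{+}$, and, when $(\bar{B})^{+}\not\subseteq W^{+}$, after first using the domain monotonicity $C_{W^{+}}^{D}\leq C_{\Lambda}$ as in the proof of that proposition — bounds each such entry by $O(1)L^{-dj}$, the constant depending only on $\tfrac13$. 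As a $j$-block has $L^{dj}$ sites and $|\mathcal{E}|=2d$, summing over $x$, $e$ and the $|W|_{j}$ blocks of $W$ gives $Tr(T)\leq c_{1}|W|_{j}$.

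The substantive point is the uniform bound $\Vert T\Vert\leq c_{2}$, which is the energy comparison $(\zeta,A\zeta)\leq c_{2}(\zeta,(C_{W^{+}}^{D})^{-1}\zeta)$ for all $\zeta$ supported in $W^{+}$, proved in the spirit of \eqref{eq:T<1}. For $\Omega\in\{B^{+},(\bar{B})^{+}\}$, the harmonic extension $P_{\Omega}\zeta$ minimises the Dirichlet energy on $\Omega$ among functions agreeing with $\zeta$ off $\Omega$, so, using $B\subseteq\Omega$, $\sum_{B}(\partial P_{\Omega}\zeta)^{2}\leq\sum_{\text{edges meeting }\Omega}(\partial P_{\Omega}\zeta)^{2}\leq\sum_{\text{edges meeting }\Omega}(\partial\zeta)^{2}$. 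Moreover, because $\tfrac13L^{j}<\tfrac12L^{j}$ and $\tfrac13L^{j+1}<\tfrac12L^{j+1}$, any fixed edge of $\Lambda$ lies in $B^{+}$ for only $O(1)$ many $j$-blocks $B$ and in $(\bar{B})^{+}$ for only $O(1)$ many $(j+1)$-blocks $\bar{B}$, while $\partial\zeta$ is carried by edges meeting $W^{+}$; summing over $B\in\mathcal{B}_{j}(W)$ thus gives $(\zeta,A\zeta)\leq c_{2}\sum_{W^{+}}(\partial\zeta)^{2}\leq c_{2}(\zeta,(C_{W^{+}}^{D})^{-1}\zeta)$. This uniform control of a sum of many overlapping harmonic-extension forms is the only real obstacle; it works with a purely dimensional constant precisely because of the energy-minimising property of the Poisson kernel together with the bounded overlap of the enlarged blocks $B^{+}$ and $(\bar{B})^{+}$.
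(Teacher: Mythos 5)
Your proof is correct and follows essentially the same approach as the paper's: the Gaussian integral is evaluated via the determinant identity \eqref{eq:Gaussianint}, the trace of $T$ is controlled by Proposition~\ref{prop:covest}, and $\Vert T\Vert$ is bounded uniformly via the energy-minimizing property of the harmonic extension together with the bounded overlap of $B^{+}$ and $(\bar{B})^{+}$, exactly as in \eqref{eq:T<1} and \eqref{eq:traceT}. The paper summarizes the last step by $|Tr(T^{n})|\leq|Tr(T)|\Vert T\Vert^{n-1}$ rather than your eigenvalue-by-eigenvalue Lipschitz bound on $-\log(1-t)$, but the mechanism is identical.
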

\begin{proof}
Defining $\zeta=C_{W^{+}}^{1/2}\psi$ where $C_{W^{+}}$ is the Dirichlet
Green's function for $W^{+}$ and $\psi\in L^{2}(W^{+})$, $\mathcal{M}$
is bounded by 
\begin{equation}
\mathbb{E}_{\psi}\exp\Big\{ 4\kappa\sum_{x\in W}\psi(x)T\psi(x)\Big\} 
\end{equation}
where $\psi$ has identity covariance and 
\begin{equation}
\begin{aligned}
T & =\frac{1}{4}\sum_{B\in\mathcal{B}_{j}(W),e\in\mathcal{E}}
\!\!\!\!\!\!\!\!
\big(C_{U^{+}}^{1/2}P_{B^{+}}^{\star}\partial_{e}^{\star}1_{B}\partial_{e}P_{B^{+}}C_{U^{+}}^{1/2}+C_{U^{+}}^{1/2}P_{(\bar{B})^{+}}^{\star}\partial_{e}^{\star}1_{B}\partial_{e}P_{(\bar{B})^{+}}C_{U^{+}}^{1/2}\big)\\
 & =:T_{1}+T_{2}
\end{aligned}
\end{equation}
is a linear map from $L^{2}(W^{+})$ to itself. $T_{1},T_{2}$ are
defined to be the two terms respectively. We have by Proposition \ref{prop:covest},
\begin{equation}
\begin{aligned}Tr(T) & =\frac{1}{4}\sum_{B\in\mathcal{B}_{j}(W),e\in\mathcal{E}}\bigg(\sum_{x\in B}\partial_{e}P_{B^{+}}C_{U^{+}}\left(\partial_{e}P_{B^{+}}\right)^{\star}(x,x)\\
 & \qquad\qquad
+\sum_{x\in B}\partial_{e}P_{(\bar{B})^{+}}C_{U^{+}}
	\big(\partial_{e}P_{(\bar{B})^{+}}\big)^{\star}(x,x)\bigg)\\
 & \leq O(1)(L^{-dj}+L^{-d(j+1)})|W|\\
 & \leq O(1)|W|_{j}  \;.
\end{aligned}
\label{eq:traceT}
\end{equation}
For the next step we bound $\Vert T\Vert$. In fact, 
\begin{equation}
\begin{aligned}
(f,T_{1}f)_{l^{2}} 
& =\frac{1}{4}\sum_{B\in\mathcal{B}_{j}(W)}
\sum_{x\in B,e}
\big(\partial_{e}P_{B^{+}}C_{U^{+}}^{\frac{1}{2}}f(x)\big)^{2}\\
 & \leq
 \frac{1}{4}\sum_{B\in\mathcal{B}_{j}(W)}
 \sum_{x\in B^{+},e}
 \big(\partial_{e}C_{U^{+}}^{\frac{1}{2}}f(x)\big)^{2}\\
 & \leq 
 c_{d}\sum_{x\in W,e}
 	\big(\partial_{e}C_{U^{+}}^{\frac{1}{2}}f(x)\big)^{2}
\end{aligned}
\end{equation}
where we used the fact that the harmonic extension minimizes the Dirichlet
form to get rid of the Poisson kernels. The constant $c_{d}$ comes
from overlapping of $B^{+}$'s. Then we can proceed as (\ref{eq:T<1})
to bound the above expression by $c_{d}(f,f)_{l^{2}}$. $T_{2}$ is
bounded in the same way. Now by $\left|Tr(T^{n})\right|\leq\left|Tr(T)\right|\left\Vert T\right\Vert ^{n-1}$,
and formula (\ref{eq:Gaussianint}) the proof of the lemma is completed.
\end{proof}

\section{Linearized RG} \label{sec:Linearized-RG}

Having established smoothness, in this section we study the linearization
of the RG map in $\sigma_j,K_j,E_{j+1}$ and $\sigma_{j+1}$. 

In view of Lemma \ref{lem:normEleqEnorm}, we can show, by induction
along all the RG steps, that $K_{j}(X)$ depends on $\phi,\xi$ via
$P_{X^{+}}\phi+\xi$ (at scale $0$, $I_{0},K_{0}$ depend on $\phi,\xi$
via $\phi+\xi$). We write
\[
Tay\,\mathbb{E}\left[K_{j}(X)|(U^{+})^{c}\right]
\]
to be the second order Taylor expansion of $\mathbb{E}\left[K_{j}(X)|(U^{+})^{c}\right]$
in $P_{U^{+}}\phi+\xi$.
\begin{prop}
\label{prop:The-linearization}
The linearization of the map $(\sigma_{j},E_{j+1},\sigma_{j+1},K_{j})\rightarrow K_{j+1}$
around $(0,0,0,0)$ is $\mathcal{L}_{1}+\mathcal{L}_{2}+\mathcal{L}_{3}$
where
\begin{equation}
\mathcal{L}_{1}K_{j}(U)=\sum_{X\in\mathcal{P}_{j,c}\backslash\mathcal{S}_{j},\bar{X}=U}\mathbb{E}\left[K_{j}(X)\big|(U^{+})^{c}\right] \;,
\end{equation}
\begin{equation} \label{eq:def-CL2}
\mathcal{L}_{2}K_{j}(U)
=\sum_{B\in\mathcal{B}_{j},\bar{B}=U}\sum_{X\in\mathcal{S}_{j},X\supseteq B}
\frac{1}{|X|_{j}} (1-Tay)\mathbb{E}\left[K_{j}(X)\big|(U^{+})^{c}\right]  \;,
\end{equation}
\begin{equation}
\begin{aligned} 
\mathcal{L}_{3}\big(\sigma_{j},E_{j+1}, &\sigma_{j+1},K_{j}\big)(U) 
=\sum_{B\in\mathcal{B}_{j},\bar{B}=U}
\bigg(
\frac{\sigma_{j+1}}{4}\sum_{x\in B,e}\left(\partial_{e}P_{(\bar{B})^{+}}\phi(x)+\xi(x)\right)^{2} \\
&\qquad
+E_{j+1}(B)
-\frac{\sigma_{j}}{4}\sum_{x\in B}\mathbb{E}\left[\left(\partial P_{B^{+}}\phi(x)+\xi(x)\right)^{2}\big|(U^{+})^{c}\right]
\\
 & \qquad
+ \sum_{X\in\mathcal{S}_{j},X\supseteq B}
 \frac{1}{|X|_{j}}\,Tay\,\mathbb{E}\left[K_{j}(X)|(U^{+})^{c}\right]\bigg) \;.
\end{aligned}
\end{equation}
\end{prop}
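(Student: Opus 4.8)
The plan is to differentiate, at the fixed point $(\sigma_j,E_{j+1},\sigma_{j+1},K_j)=(0,0,0,0)$, the two-stage map: first the extraction--reblocking map $K_j\mapsto K^\sharp$ given by \eqref{eq:Ksharp}, then the conditioning map $K^\sharp\mapsto K_{j+1}$ given by \eqref{eq:def-KnextU}, i.e. $K_{j+1}(U)=\prod_{V\in\mathcal C(U)}\mathbb E[K^\sharp(V)\,|\,(V^+)^c]$. At the fixed point one has $I(B)=\tilde I(B)=1$ and $K_j=0$, so every ``interaction'' factor occurring in \eqref{eq:Ksharp} --- namely $\delta I$, $I-e^{E'}$, $1-e^{E'}$, $\tilde I-e^{E'}$ and each $K(Y)$ --- vanishes, while the ``background'' factors $\tilde I$ and $e^{\pm E'}$ equal $1$; hence $K^\sharp=0$ and $K_{j+1}=0$ there. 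Since for finite $\Lambda$ each $K^\sharp(U)$ is a finite sum of finite products of factors smooth in the four arguments (Proposition~\ref{prop:smoothness}), the Fréchet derivative at the fixed point is obtained by the product rule: it is the sum over all ways of differentiating exactly one factor of one product while evaluating all other factors at the fixed point. Any term in which the remaining factors still contain one of the interaction factors vanishes, which collapses the index sets $P,Q,Z,\mathcal X,\hat{\mathcal Y},V$ in the surviving terms to their minimal configurations.

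I would then enumerate these configurations. (i) Differentiating a factor $K(Y)$ with $Y$ large connected forces $\hat{\mathcal Y}=\emptyset$, $\mathcal X=\{Y\}$, $P=Q=Z=\emptyset$, hence $X=Y$ and $V=\bar Y$; the admissible $(j+1)$-polymers $U$ are pinned down by the reblocking relations in the proof of Proposition~\ref{prop:extra-reblo} together with $(U\setminus V)\cap\langle Y\rangle^c=\emptyset$, and after applying $\mathbb E[\,\cdot\,|\,(U^+)^c]$ one recovers the large-polymer map $\mathcal L_1$. (ii) Differentiating $K(B,Y)=K(Y)$ with $Y$ small and $B\in\mathcal B_j(Y)$ forces $\mathcal X=\emptyset$, $P=Q=Z=\emptyset$, $X=Y$, $V=\bar B$; using the thinness of the neighbourhood $\langle Y\rangle$ of a small polymer to control the admissible $U$, and keeping the weight $\tfrac1{|Y|_j}$ from the fourth line of \eqref{eq:extraction}, the conditional expectation produces $\sum_{B:\bar B=U}\sum_{X\in\mathcal S_j,X\supseteq B}\tfrac1{|X|_j}\mathbb E[K_j(X)\,|\,(U^+)^c]$ for $U$ a $(j+1)$-block. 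Splitting $1=(1-Tay)+Tay$ assigns the $(1-Tay)$-part to $\mathcal L_2$ and the $Tay$-part --- a constant plus a quadratic form in $P_{U^+}\phi+\xi$ --- to the last term of $\mathcal L_3$; this split is the choice that later lets $\mathcal L_3$ be absorbed by tuning $\sigma_{j+1},E_{j+1}$. (iii) Differentiating $\delta I(B)$ with $B\in Z$ forces all $K$-factors absent, so $\mathcal X=\mathcal Y=\emptyset$, $X=\emptyset$, $\langle X\rangle=\emptyset$, $P=Q=\emptyset$, $Z=\{B\}$ and $V=U=\bar B$; writing $\delta I(B)=I(B)-\tilde I(B)$ and linearising produces (up to the sign conventions fixed in the statement) a combination of $\sigma_j\sum_{x\in B,e}(\partial_e P_{B^+}\phi(x)+\partial_e\xi(x))^2$, of the constant $E_{j+1}(B)$, and of $\sigma_{j+1}\sum_{x\in B,e}(\partial_e P_{(\bar B)^+}\phi(x)+\partial_e\xi(x))^2$. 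Applying $\mathbb E[\,\cdot\,|\,(U^+)^c]$ with $U=\bar B$ and using that $P_{(\bar B)^+}\phi=P_{U^+}\phi$ is harmonic in $U^+$, hence measurable with respect to $(U^+)^c$, the $\sigma_{j+1}$- and $E_{j+1}$-terms pass through the conditioning unchanged while the $\sigma_j$-term picks up a conditional expectation; summing over $B$ with $\bar B=U$ yields the first three terms of $\mathcal L_3$.

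Finally I would check that no other differentiated factor contributes: differentiating $I(B)-e^{E'}$ ($B\in Q$) or $1-e^{E'}$ ($B\in P$) forces $Q$ resp. $P$ nonempty, hence $X\neq\emptyset$ and a surviving vanishing $K$-factor; differentiating $\tilde I-e^{E'}$ or a background $\tilde I$-factor inside $V$ requires, when $X=\emptyset$, $Z\neq\emptyset$ for $V\neq\emptyset$, but then $\delta I^Z=0$; and differentiating a background $e^{\pm E'}$-factor either carries a vanishing $K$-factor or sits on an empty index set with zero derivative. Because $K^\sharp$ vanishes at the fixed point, the product rule applied to $K_{j+1}(U)=\prod_{V\in\mathcal C(U)}\mathbb E[K^\sharp(V)\,|\,(V^+)^c]$ kills all terms with $|\mathcal C(U)|\ge2$, so only connected $U$ survive and $K_{j+1}(U)$ linearises to $\mathbb E[(\text{linearised }K^\sharp)(U)\,|\,(U^+)^c]$, which is $(\mathcal L_1+\mathcal L_2+\mathcal L_3)(U)$. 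The hard part will be the bookkeeping in steps (i)--(iii) --- making sure every surviving linear term has been located and placed in the correct $\mathcal L_i$ with the right multiplicity --- which rests on the reblocking identities from the proof of Proposition~\ref{prop:extra-reblo}, on the geometric control of the neighbourhoods $\langle\cdot\rangle$, and on the measurability observation used for the $\delta I$-term.
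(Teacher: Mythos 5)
Your proof mirrors the paper's own (very terse) argument: you identify the same three first-order configurations $(Z,|\mathcal X|,|\hat{\mathcal Y}|)=(\emptyset,1,0)$, $(\emptyset,0,1)$, $(\{B\},0,0)$ in $K^\sharp$, noting that every other configuration forces at least two vanishing factors at the fixed point; the paper states exactly this list and then ``groups'' the terms into $\mathcal L_1,\mathcal L_2,\mathcal L_3$. The extra detail you supply --- ruling out the remaining differentiated factors, the measurability of $P_{(\bar B)^+}\phi$ with respect to $(U^+)^c$ so the $\sigma_{j+1}$- and $E_{j+1}$-pieces pass through the conditioning, and the observation that the $Tay/(1-Tay)$ split is precisely what lets $\mathcal L_3$ be absorbed by the coupling-constant tuning --- fleshes out the same argument rather than replacing it.
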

\begin{proof}
In Proposition \ref{prop:smoothness} we proved that the map $(\sigma_{j},E_{j+1},\sigma_{j+1},K_{j})\rightarrow K_{j+1}$
is smooth around $(0,0,0,0)$ so that we can linearize the map. In
(\ref{eq:Ksharp}) since $V\neq\emptyset$, the $\tilde{I}_{j}-e^{E_{j+1}}$
factor doesn't contribute to the linear order. Also if $X=\emptyset$
then $\hat{X}=\left\langle X\right\rangle =\emptyset$, so $1-e^{E_{j+1}}$
and $I_{j}-e^{E_{j+1}}$ don't contribute to the linear order either.
The terms that contribute to the linear order correspond to $(Z,|\mathcal{X}|,|\hat{\mathcal{Y}}|)$
equal to $(\emptyset,0,1)$ or $(\emptyset,1,0)$ or $(B,0,0)$ where
$B\in\mathcal{B}_{j}$. Grouping these terms into large sets part
and small sets part with Taylor leading terms and remainder we obtain
the above linear operators.
\end{proof}

\subsection{Large sets}

\begin{prop} \label{prop:large-sets}
Let $L$ be sufficiently large. Let $A$ be sufficiently large depending
on $L$. Then $\mathcal{L}_{1}$ in Proposition \ref{prop:The-linearization}
is a contraction. Moreover, 
$
\lim_{A\rightarrow\infty}\left\Vert \mathcal{L}_{1}\right\Vert =0$.
\end{prop}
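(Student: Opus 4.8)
The plan is to estimate the $\|\cdot\|_{j+1}$-norm of $\mathcal{L}_1 K_j$ directly from the definition. Recall that
\[
\mathcal{L}_{1}K_{j}(U)=\sum_{\substack{X\in\mathcal{P}_{j,c}\setminus\mathcal{S}_{j}\\ \bar{X}=U}}\mathbb{E}\left[K_{j}(X)\big|(U^{+})^{c}\right] \;,
\]
a sum over \emph{large} connected $j$-polymers whose closure is the given $(j+1)$-polymer $U$. First I would fix $U\in\mathcal{P}_{j+1,c}$ and apply Lemma~\ref{lem:normEleqEnorm} with $Y_k=X^+$ (so $X_k=X$, and the conditioning set is $(U^{+})^{c}$; note $X^+\subset U^+$ once $\bar X=U$) to bound
\[
\Big\Vert \mathbb{E}\left[K_{j}(X)\big|(U^{+})^{c}\right]\Big\Vert_{T_{\phi}(\Phi_{j+1}(U))}
\leq \mathbb{E}\Big[\left\Vert K_{j}(X)\right\Vert_{T_{\phi}(\Phi_{j}(X))}\big|(U^{+})^{c}\Big] \;.
\]
Here I need the comparison $\Phi_{j+1}(U)\subseteq$ (a rescaling of) $\Phi_j(X)$ on the relevant sets, i.e.\ that a test function with small $\Phi_{j+1}(\dot U,U^+)$-norm also has small $\Phi_j(\dot X,X^+)$-norm; this follows because $\dot X\subset \dot U$, $X^+\subset U^+$ together with the scale factors $h_j,h_{j+1}$ and $L^j$ versus $L^{j+1}$ in \eqref{eq:def_Phi} — the mismatch produces only a harmless power of $L$ that will be absorbed later. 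Using the definition \eqref{eq:def_normKXj} of $\|K_j(X)\|_j$, the right side is at most $\|K_j(X)\|_j A^{-|X|_j}\,\mathbb{E}\big[G(\ddot X,X^+)\big|(U^+)^c\big]$, and then Lemma~\ref{lem:intproperties} gives $\mathbb{E}\big[G(\ddot X,X^+)\big|(U^+)^c\big]\leq c^{|X|_j}\,G(\ddot X,U^+)$, and by monotonicity (Lemma~\ref{lem:simproperties}\eqref{enu:reg4}) $G(\ddot X,U^+)\leq G(\ddot U,U^+)$.

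Dividing by $G(\ddot U,U^+)$ and multiplying by $A^{|U|_{j+1}}$ as in \eqref{eq:def_normKj}, I obtain
\[
\left\Vert \mathcal{L}_1 K_j\right\Vert_{j+1}
\leq \|K_j\|_j \cdot \sup_{U\in\mathcal{P}_{j+1,c}} A^{|U|_{j+1}}\sum_{\substack{X\in\mathcal{P}_{j,c}\setminus\mathcal{S}_j\\ \bar X=U}} (c/A)^{|X|_j}\cdot(\text{powers of }L) \;.
\]
Now comes the crucial geometric input: Lemma~\ref{lem:geometric} gives, for \emph{large} connected $X$, $|X|_j\geq \eta|\bar X|_{j+1}=\eta|U|_{j+1}$ with $\eta=\eta(d)>1$, and also $|X|_j\geq|U|_{j+1}$ always. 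Splitting $(c/A)^{|X|_j}=(c/A)^{(1-\theta)|X|_j}(c/A)^{\theta|X|_j}$ for a small $\theta\in(0,1)$, I use the first factor with $|X|_j\geq \eta|U|_{j+1}$ to produce $A^{|U|_{j+1}}(c/A)^{(1-\theta)\eta|U|_{j+1}}=\big(A^{1-(1-\theta)\eta}c^{(1-\theta)\eta}\big)^{|U|_{j+1}}$, which for $\theta$ small enough that $(1-\theta)\eta>1$ is summable and small in $A$. The second factor $(c/A)^{\theta|X|_j}$, together with the fact that the number of connected $j$-polymers $X$ of size $|X|_j=n$ containing a fixed block is bounded by $C^n$ for a combinatorial constant $C=C(d)$ (a standard Peierls-type count), controls the sum over $X$ with fixed closure $U$: $\sum_{X:\bar X=U}(c/A)^{\theta|X|_j}\leq \sum_{n\geq |U|_{j+1}}(C c/A)^{\theta n}\cdot(\text{number of starting blocks in }U)$, which is $\leq |U|_{j+1}\cdot O(1)$ once $A$ is large. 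The leftover powers of $L$ (from the norm comparison and from bounding $|U|_{j+1}$ by a geometric factor) are fixed numbers; since we are free to send $A\to\infty$ \emph{after} $L$ is fixed, they are harmless.

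Collecting, $\left\Vert \mathcal{L}_1\right\Vert \leq C'(d,L)\cdot \sup_{U}\big(A^{1-(1-\theta)\eta}c^{(1-\theta)\eta}\big)^{|U|_{j+1}}\cdot |U|_{j+1}$, and since $1-(1-\theta)\eta<0$ and $|U|_{j+1}\geq 1$, this tends to $0$ as $A\to\infty$; in particular it is $<1$ for $A$ large depending on $L$. I expect the main obstacle to be the bookkeeping in the norm-comparison step — verifying that passing from the scale-$(j+1)$ test-function space $\Phi_{j+1}(U)$ to the scale-$j$ space $\Phi_j(X)$ costs only a fixed power of $L$ and not something that grows with $|X|_j$ — since that is exactly where a hidden $L^{|X|_j}$ would destroy the argument. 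The resolution is that each $j$-block of $X$ contributes one factor of $L$ (from $L^j$ vs.\ $L^{j+1}$ and $h_j$ vs.\ $h_{j+1}$), i.e.\ a total of $L^{O(|X|_j)}$, which merely shifts the combinatorial constant $C$ to $CL^{O(1)}$ and is still beaten by the $(c/A)^{\theta n}$ decay once $A\gg L$.
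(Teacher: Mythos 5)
Your proof is correct and follows essentially the same chain of lemmas as the paper: Lemma~\ref{lem:Tphiproperties} to shrink $\dot U$ to $\dot X$, Lemma~\ref{lem:normEleqEnorm} to pass the norm through the conditional expectation, Lemma~\ref{lem:intproperties} and Lemma~\ref{lem:simproperties}(\ref{enu:reg4}) for the regulators, then Lemma~\ref{lem:geometric} with $\eta>1$ to beat the $A^{|U|_{j+1}}$. The paper's proof is slightly leaner in the combinatorial step: rather than splitting $(c/A)^{|X|_j}$ with a Hölder exponent $\theta$ and invoking a Peierls-type $C^n$ count for connected $j$-polymers, it just bounds the number of $X\subseteq U$ with $\bar X=U$ crudely by $2^{L^d|U|_{j+1}}$ (each $j$-block of $U$ is in $X$ or not) and then uses $(c/A)^{|X|_j}\leq (c/A)^{\eta|U|_{j+1}}$ uniformly, giving the bracket $[A^{1-\eta}2^{L^d}c^\eta]^{|U|_{j+1}}\to 0$; the $\theta$-splitting is unnecessary. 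One small imprecision worth flagging: the scale change from $T_\phi(\Phi_{j+1}(\cdot,U^+))$ to $T_\phi(\Phi_j(\cdot,U^+))$ is a single global factor, not per-block, and in fact goes the right way — the $\Phi_{j+1}$ unit ball is $L^{-d/2}$ times the $\Phi_j$ unit ball because of the $h_{j+1}^{-1}L^{j+1}$ versus $h_j^{-1}L^j$ prefactors, so one simply has $\|F^{(n)}\|_{T^n_\phi(\Phi_{j+1})}=L^{-nd/2}\|F^{(n)}\|_{T^n_\phi(\Phi_j)}\leq\|F^{(n)}\|_{T^n_\phi(\Phi_j)}$; there is no $L^{O(|X|_j)}$ cost to absorb. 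You absorb it anyway, so your conclusion is unaffected, but the worry was a non-issue.
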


\begin{proof}
By Lemma \ref{lem:intproperties}
\begin{equation}
\Vert\mathcal{L}_{1}K_{j}(U)\Vert_{j+1}\leq\sum_{X\in\mathcal{P}_{j,c}\backslash\mathcal{S}_{j},\bar{X}=U}\Vert K_{j}\Vert_{j}c^{|X|_{j}}A^{-|X|_{j}} \;,
\end{equation}
therefore by Lemma \ref{lem:geometric},
\begin{equation}
\begin{aligned}\Vert\mathcal{L}_{1}K_{j}\Vert_{j+1} & =\sup_{U\in\mathcal{P}_{j+1}}\Vert\mathcal{L}_{1}K_{j}(U)\Vert_{j+1}A^{|U|_{j+1}}\\
 & \leq\bigg[\sup_{U\in\mathcal{P}_{j+1}}A^{|U|_{j+1}}\sum_{X\in\mathcal{P}_{j,c}\backslash\mathcal{S}_{j},\bar{X}=U}c^{|X|_{j}}A^{-|X|_{j}}\bigg]\Vert K_{j}\Vert_{j}\\
 & \leq\bigg[\sup_{U\in\mathcal{P}_{j+1}}A^{|U|_{j+1}}2^{L^{d}|U|_{j+1}}(A/c)^{-\eta|U|_{j+1}}\bigg]\Vert K_{j}\Vert_{j}
\end{aligned}
\end{equation}
where $\eta>1$ is introduced in Lemma \ref{lem:geometric}. The bracketed
expression goes to zero as $A\rightarrow\infty$.
\end{proof}

\subsection{Taylor remainder}

We prepare to show contractivity of $\mathcal{L}_{2}$. We first show
that the Taylor remainder after the second derivative is bounded by
the third derivative. It is a general result about the $T_{\phi}(\Phi)$
norm with no need to specify the test function space $\Phi$.
\begin{lem}
\label{lem:3rdder}
For $F$ a function of $\phi$ let $Tay_{n}$ be
its n-th order Taylor expansion about $\phi=0$, and $\Phi$ be a
space of test functions, then
\begin{equation}
\begin{aligned}
\left\Vert (1-Tay_2)F(\phi)\right\Vert _{T_{\phi}(\Phi)}\leq\left(1+\Vert\phi\Vert_{\Phi}\right)^{3}\sup_{\substack{t\in[0,1]\\
k=3,4
}
}\left\Vert F^{(k)}(t\phi)\right\Vert _{T_{t\phi}^{k}(\Phi)} \;.
\end{aligned}
\end{equation}
\end{lem}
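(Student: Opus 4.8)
The plan is to reduce the statement to the one–variable Taylor remainder formula applied along the ray $t\mapsto\phi$, and then to commute that formula through the $T_\phi(\Phi)$ seminorm structure. Concretely, for each fixed collection of test functions $(f,\lambda)^{\times n}$ with $n\le 4$, the quantity $\big((1-Tay_2)F\big)^{(n)}(\phi;(f,\lambda)^{\times n})$ is the $n$-th differential of $F$ minus the $n$-th differential of its second-order Taylor polynomial, and the latter vanishes identically once $n\ge 3$ while for $n\le 2$ it is exactly the degree-$(2-n)$ remainder of the Taylor expansion of $s\mapsto F^{(n)}(s\phi;(f,\lambda)^{\times n})$ about $s=0$. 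So the first step is to write, for $n=0,1,2$,
\[
\big((1-Tay_2)F\big)^{(n)}(\phi;(f,\lambda)^{\times n})
=\frac{1}{(2-n)!}\int_0^1 (1-t)^{2-n}\,
\frac{d^{\,3-n}}{ds^{\,3-n}}\Big|_{s=t}F^{(n)}(s\phi;(f,\lambda)^{\times n})\,dt ,
\]
and to recognize that $\frac{d^{\,3-n}}{ds^{\,3-n}}F^{(n)}(s\phi;(f,\lambda)^{\times n})=F^{(3)}(s\phi;\phi^{\times(3-n)},(f,\lambda)^{\times n})$ by the chain rule (differentiating in the direction $\phi$ each time). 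For $n=3,4$ the contribution is simply $F^{(n)}(\phi;\cdot)$ itself, which one bounds by a further Taylor expansion of order $4-n$ with remainder involving $F^{(4)}$.

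Next I would take the $T_\phi^n(\Phi)$ norm, i.e. the supremum over $\|(f_i,\lambda_i\xi)\|_\Phi\le 1$. Inside the integral the differential $F^{(3)}(t\phi;\phi^{\times(3-n)},(f,\lambda)^{\times n})$ is multilinear, so pulling the sup inside and using $\|(f_i,\lambda_i\xi)\|_\Phi\le 1$ together with homogeneity gives a bound by $\|\phi\|_\Phi^{\,3-n}\,\big\|F^{(3)}(t\phi)\big\|_{T_{t\phi}^3(\Phi)}$; the factor $\int_0^1(1-t)^{2-n}dt/(2-n)!\le 1$ is harmless. Summing $\tfrac1{n!}$ times these over $n=0,1,2$ produces $\big(\sum_{n=0}^2 \tfrac1{n!}\|\phi\|_\Phi^{\,3-n}\big)\sup_{t,k=3}\|F^{(k)}(t\phi)\|_{T^k_{t\phi}(\Phi)}$, and the analogous treatment of $n=3,4$ (now with a single extra $\|\phi\|_\Phi$ factor and $F^{(4)}$) contributes $\big(\tfrac1{3!}\|\phi\|_\Phi+\tfrac1{4!}\big)\sup_{t,k=4}\|F^{(k)}(t\phi)\|_{T^k_{t\phi}(\Phi)}$. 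Finally I would crudely bound the polynomial prefactor $\sum_{n=0}^{2}\tfrac{1}{n!}\|\phi\|_\Phi^{3-n}+\tfrac1{6}\|\phi\|_\Phi+\tfrac1{24}$ by $(1+\|\phi\|_\Phi)^3$ (all coefficients are $\le$ the corresponding binomial coefficients), and replace the two separate suprema over $k=3$ and $k=4$ by the joint supremum over $k\in\{3,4\}$, which only enlarges the right-hand side. This yields exactly the claimed inequality.

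The only genuinely delicate point is the bookkeeping of the chain rule: one must check that differentiating $s\mapsto F^{(n)}(s\phi;(f,\lambda)^{\times n})$ in $s$ really inserts the direction $\phi$ (and, in the $\xi$-slot, $\lambda\xi$ appropriately, matching the definition of $F^{(n)}$ with test data $(f,\lambda)^{\times n}$), so that after $3-n$ differentiations one lands precisely in $F^{(3)}$ evaluated on $3-n$ copies of $\phi$ and the original $n$ test functions — and that $\phi$, viewed as a test function, indeed has $\Phi$-seminorm equal to $\|\phi\|_\Phi$ so that the homogeneity argument applies. Everything else is the standard integral form of Taylor's theorem in one real variable plus multilinearity, and I do not anticipate any obstruction there. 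I would also remark that the statement and proof make no use of the specific space $\Phi$, exactly as the lemma asserts, so no properties of $\Phi_j(X)$ or of the regulators are needed here.
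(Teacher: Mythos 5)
Your overall strategy is essentially the one in the paper: you write $\big((1-Tay_2)F\big)^{(n)}=F^{(n)}-Tay_{2-n}\big(F^{(n)}\big)$, use the one-variable integral form of Taylor's theorem for $s\mapsto F^{(n)}(s\phi;(f,\lambda)^{\times n})$, identify $\partial_s^{3-n}F^{(n)}(s\phi;\cdot)=F^{(3)}(s\phi;\phi^{\times(3-n)},\cdot)$ via multilinearity, take suprema over unit test functions to extract factors of $\Vert\phi\Vert_\Phi$, and dominate the resulting polynomial by $(1+\Vert\phi\Vert_\Phi)^3$. That is exactly the paper's route.

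However, your treatment of $n=3,4$ is confused, and as written contains a small logical gap. You say these terms are handled ``by a further Taylor expansion of order $4-n$ with remainder involving $F^{(4)}$'' and claim they contribute $\big(\tfrac1{3!}\Vert\phi\Vert_\Phi+\tfrac1{4!}\big)\sup_{t,k=4}\Vert F^{(k)}(t\phi)\Vert$. This is not right: if you Taylor-expand $F^{(3)}(\phi;\cdot)$ about $\phi=0$ to produce an $F^{(4)}$ remainder, you also produce the constant term $F^{(3)}(0;\cdot)$, which you drop without justification, and the quoted ``order $4-n$'' bookkeeping does not produce $F^{(4)}$ as stated. The correct (and simpler) step, which is what the paper does, is to not expand at all for $n\geq 3$: since $Tay_{2-n}(F^{(n)})\equiv 0$ for $n\geq 3$, the $n$-th term is just $\tfrac{1}{n!}\Vert F^{(n)}(\phi)\Vert_{T^n_\phi(\Phi)}$, which is directly dominated by $\sup_{t\in[0,1],\,k\in\{3,4\}}\Vert F^{(k)}(t\phi)\Vert_{T^k_{t\phi}(\Phi)}$ upon taking $t=1$, $k=n$. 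The resulting prefactor $\sum_{n=0}^2\tfrac{1}{n!(3-n)!}\Vert\phi\Vert_\Phi^{3-n}+\tfrac1{3!}+\tfrac1{4!}$ is still bounded by $(1+\Vert\phi\Vert_\Phi)^3$, so the lemma follows. In short: drop the extra Taylor expansion for $n\geq 3$ and the argument closes cleanly along the lines you already set up for $n\leq 2$.
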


\begin{proof}
By Taylor remainder theorem,  with $f^{\times n}:= (f_1,...,f_n)$,
\begin{equation}
\begin{aligned} 
& \left\Vert (1-Tay_{2})F(\phi)\right\Vert _{T_{\phi}(\Phi)}
=\sum_{n=0}^{4}\frac{1}{n!}
	\sup_{\left\Vert f_{i}\right\Vert _{\Phi}\leq1}
	\left|\left(F-Tay_{2}F\right)^{(n)}(\phi;f^{\times n})\right|\\
= & \sum_{n=0}^{4}\frac{1}{n!}
	\sup_{\left\Vert f_{i}\right\Vert _{\Phi}\leq1}
	\left|\left(F^{(n)}-Tay_{2-n}(F^{(n)})\right)(\phi;f^{\times n})\right|\\
\end{aligned}
\end{equation}
and the absolute valued quantity is equal to
\begin{equation}
\begin{aligned} 
&\bigg|1_{\{n<3\}}\int_{0}^{1}\frac{(1-t)^{2-n}}{(2-n)!}\partial_{t}^{3-n}F^{(n)}(t\phi;f^{\times n})+1_{\{n\geq3\}}F^{(n)}(\phi;f^{\times n})\bigg|\\
&= 
\bigg|1_{\{n<3\}}\int_{0}^{1}\frac{(1-t)^{2-n}}{(2-n)!}F^{(3)}(t\phi;\phi^{\times(3-n)},f^{\times n})+1_{\{n\geq3\}}F^{(n)}(\phi;f^{\times n})\bigg|
\end{aligned}
\end{equation}
where $\phi^{\times(3-n)}$ means $3-n$ test functions $\phi$. Calculating
the t integrals,
\begin{equation}
\begin{aligned}
\Vert(1 & -Tay_{2})F(\phi)\Vert_{T_{\phi}(\Phi)}\\
 & \leq\sum_{n=0}^{3}\frac{1}{n!}
 	\sup_{
		\left\Vert f_{i}\right\Vert _{\Phi}\leq1
		}
	\bigg|\frac{1}{(3-n)!}\sup_{t\in[0,1]}F^{(3)}(t\phi;\phi^{\times(3-n)},f^{\times n})\bigg|+\Vert F^{(4)}(\phi)\Vert_{T_{\phi}^{4}(\Phi)}\\
 & \leq\left(1+\Vert\phi\Vert_{\Phi}\right)^{3}\sup_{t\in(0,1),k=3,4}\Vert F^{(k)}(t\phi)\Vert_{T_{t\phi}^{k}(\Phi)}
\end{aligned}
\end{equation}
where in the last step binomial theorem is applied.
\end{proof}

\begin{lem}
\label{lem:integrab_poly}
Let $(B,X)\in\hat{\mathcal{S}}_{j}$, $\bar{B}=U$,
if $\kappa$ is small enough depending on $L$, and $h$ is large
enough depending on $\kappa$ and $L$, then
\begin{equation}
\left(2+\Vert\phi\Vert_{\Phi_{j+1}(\dot{X},U^{+})}\right)^{3}G(\ddot{X},U^{+})\leq qG(\ddot{U},U^{+})\label{eq:integrab_poly}
\end{equation}
for a constant $q$, where the dot(s) operations on $X$ are at scale
$j$, and the dots and $+$ operations on $U$ are at scale $j+1$.
\end{lem}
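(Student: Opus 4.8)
The plan is to bound the left-hand side factor by factor, using the alternative representations of the regulator from Lemma~\ref{lem:simproperties}(\ref{enu:reg3})(\ref{enu:reg2bds}). First I would use that $\ddot{X}\subseteq\ddot{U}$ as subsets of $\Lambda$ (since $X\subseteq U$ at scale $j$ and the enlargements at scale $j+1$ are wider than those at scale $j$), so by the monotonicity property Lemma~\ref{lem:simproperties}(\ref{enu:reg4}) we have $G(\ddot{X},U^{+})\le G(\ddot{U},U^{+})$. Thus it suffices to dominate the polynomial prefactor $\left(2+\Vert\phi\Vert_{\Phi_{j+1}(\dot{X},U^{+})}\right)^{3}$ by a constant times the ratio $G(\ddot{U},U^{+})/G(\ddot{X},U^{+})$, or more conveniently, to show directly that $\left(2+\Vert\phi\Vert_{\Phi_{j+1}(\dot{X},U^{+})}\right)^{3}G(\ddot{X},U^{+})\le q\,G(\ddot{U},U^{+})$ by comparing exponents.

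The key step is to control the field norm $\Vert\phi\Vert_{\Phi_{j+1}(\dot{X},U^{+})}=h_{j+1}^{-1}\sup_{x\in\dot X,e}|L^{j+1}\partial_e P_{U^+}\phi(x)|$ by the Dirichlet energy appearing in the regulator. Using the lower bound from Lemma~\ref{lem:simproperties}(\ref{enu:reg2bds}), $G(\ddot{U},U^{+})\ge \exp\left(\frac{\kappa}{2}\sum_{\ddot U}(\partial\psi_2)^2\right)$ where $\psi_2$ is the harmonic (mass-$m$ minimizer) extension of $\phi$ from $(U^+)^c$, which is exactly $P_{U^+}\phi$ up to the mass term. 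Since $\dot X\subseteq\ddot U$, the supremum over $x\in\dot X$ of $(\partial_e P_{U^+}\phi(x))^2$ is bounded by the full sum $\sum_{\ddot U}(\partial P_{U^+}\phi)^2$ appearing in the exponent. Hence $\Vert\phi\Vert_{\Phi_{j+1}(\dot X,U^+)}^2\le h_{j+1}^{-2}L^{2(j+1)}\sum_{\ddot U}(\partial P_{U^+}\phi)^2$, and with $h_{j+1}=hL^{-(d-2)(j+1)/2}$ this gives a bound of the form $C(L)h^{-2}L^{d(j+1)}\cdot\frac{2}{\kappa}\log G(\ddot U,U^+)$-type control. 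Then one invokes the elementary inequality that for any $t\ge0$ and any $\delta>0$, $(2+t)^3\le C_\delta e^{\delta t^2}$, choosing $\delta$ small enough (depending on $\kappa$, $L$, and the constant from comparing $G(\ddot X,U^+)$ versus $G(\ddot U,U^+)$) so that the induced exponential growth is absorbed into the gap $G(\ddot U,U^+)/G(\ddot X,U^+)\ge 1$; more carefully one splits $\ddot U$ into $\ddot X$ and $\ddot U\setminus\ddot X$, uses the factorization-type estimates, and notes that for $\kappa$ small the polynomial prefactor is easily swallowed since the exponent carries a factor $\kappa$ while the field-norm bound carries $h^{-2}$, so taking $h$ large makes the absorption immediate.

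The main obstacle I expect is the bookkeeping of which regulator bounds to apply to which piece: one must be careful that $G(\ddot X,U^+)$ and $G(\ddot U,U^+)$ share the same conditioning region $U^+$, so the minimizers $\psi_1,\psi_2$ in their alternative representations are built from the same harmonic extension $P_{U^+}\phi$, and the difference between the two regulators is governed only by the region $\ddot U\setminus\ddot X$ over which the extra Dirichlet energy $\frac{\kappa}{2}\sum(\partial\psi)^2$ is accumulated. The comparison then reduces to showing that this extra nonnegative energy, together with smallness of $\kappa$ and largeness of $h$, dominates three times the logarithm of $(2+\Vert\phi\Vert_{\Phi_{j+1}(\dot X,U^+)})$; since $\dot X\subseteq\ddot X$, the field norm is controlled by the energy over $\ddot X$ itself, and one can afford to spend an arbitrarily small fraction $\kappa\delta$ of that energy on the polynomial, leaving the bulk. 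Hence the inequality holds with $q$ an absolute constant once $\kappa\le\kappa_0(L)$ and $h\ge h_0(\kappa,L)$, matching the constant hierarchy fixed at the start of Section~\ref{sec:Norms}.
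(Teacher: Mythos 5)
Your approach has two genuine gaps, both at the heart of why this lemma is nontrivial.

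\textbf{The sup-by-sum bound loses a fatal $L^{d(j+1)}$ factor.} You bound
$\sup_{x\in\dot X,e}|\partial_e P_{U^+}\phi(x)|^2 \le \sum_{\ddot U}(\partial P_{U^+}\phi)^2$
trivially, and then compute
$\Vert\phi\Vert_{\Phi_{j+1}(\dot X,U^+)}^2 \le h_{j+1}^{-2}L^{2(j+1)}\sum_{\ddot U}(\partial P_{U^+}\phi)^2 = h^{-2}L^{d(j+1)}\sum_{\ddot U}(\partial P_{U^+}\phi)^2$.
To absorb $\exp\big(\tfrac{\delta}{2}\Vert\phi\Vert^2\big)$ into the $\exp\big(\tfrac{\kappa}{2}\sum(\partial\cdot)^2\big)$ gap you would need $\delta h^{-2}L^{d(j+1)}\le\kappa$, i.e.\ $h$ growing like $L^{dj/2}$. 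But $h$ (and $\kappa$) are fixed once and for all at the start of Section~\ref{sec:Norms}; they cannot depend on the RG scale $j$. The paper's proof escapes this by invoking the mean-value bound \eqref{eq:average2} of Lemma~\ref{lem:derbnd} for the \emph{harmonic} function $\partial_e\psi$: one obtains $\sup_{x\in\dot X}(\partial_e\psi)^2\le O(L^{-dj})\sum_{e\in E(Y)}(\partial_e\psi)^2$ over a thin annulus $Y$, and the $L^{-dj}$ exactly cancels the $L^{d(j+1)}$ up to $O(L^d)$, which \emph{can} be killed by taking $h$ large. Your plan omits this essential use of harmonicity.

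\textbf{You conflate $\psi_1$ and $\psi_2$ in the regulator representation.} You write that ``the minimizers $\psi_1,\psi_2$ \dots\ are built from the same harmonic extension $P_{U^+}\phi$,'' and that the difference of the two regulators is governed only by $\ddot U\setminus\ddot X$. But by Lemma~\ref{lem:simproperties}(\ref{enu:reg3}), $\psi_2=P_{U^+}\phi$ solves the unmodified Dirichlet problem, while $\psi_1$ minimizes $\sum_{U^+}(\partial_m\phi)^2-\kappa\sum_{\ddot X}(\partial\phi)^2$, a genuinely different, non-constant-coefficient elliptic problem with $\bar a_e=1-\kappa a_e$; in particular $\psi_1\ne P_{U^+}\phi$, and the comparison of $\sum_Y(\partial\psi_1)^2$ with $\sum_Y(\partial\psi)^2$ is exactly the bound \eqref{eq:polyReg_crucial}, which the paper proves via the claims \eqref{eq:first-claim}--\eqref{eq:second-claim}, a discrete Cacciopoli inequality for $G_{\bar a}$, the representation \eqref{eq:elli-prob-diff} for $\partial_e(\psi_1-\psi)$, and the averaging estimate for $|\Delta_a\psi|$. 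That chain of estimates is the technical core of the lemma, and your proposal does not contain a substitute for it. Without it the claimed absorption of the polynomial prefactor into the gap $G(\ddot U,U^+)/G(\ddot X,U^+)$ has no justification.
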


\begin{proof}
Let $\psi=P_{U^{+}}\phi$.
For each $e\in\mathcal{E}$, $\partial_{e}\psi$ is harmonic in
$U^{+}\cap(U^{+}-e)$. 
Since $X,U$ are $j$ and $j+1$ scale small sets respectively
and $d(X,\partial \ddot{U})=O(L^{j+1})$,
 we can find a set $Y\subset\ddot{U}$,
such that: 1) $Y$ is of the form $\mathcal{K}_{R}\backslash\mathcal{K}_{rR}$
for some $r\in(0,\frac{1}{2})$
as in Lemma~\ref{lem:derbnd};
2) $Y\cap \ddot{X} = \emptyset$ and $d(\ddot{X},Y)= O(L^{j})$; 
3) $d(Y,\partial\ddot{U})=O(L^{j+1})$;
4) $R=\mbox{diam}(Y)=O(L^j)$.

\begin{center}
\begin{tikzpicture} [scale=0.5]
\draw (-5,-5) rectangle (5,5);
\draw (-3,-3) rectangle (3,3);
\draw (-0.3,0.3) -- (-0.3,1) -- (1,1) --(1,-1)--(-1,-1)--(-1,0.3)--(-0.3,0.3);
\node at (0.2,-0.2) {$\ddot{X}$};
\node at (1,3.9) {$Y$};
\draw [dashed] (0,0) circle (2.2);
\node at (0,1.7) {$\tilde X$};
\end{tikzpicture}
\end{center}

Then by \eqref{eq:average2} of Lemma~\ref{lem:derbnd},
\begin{equation}
 \sup_{e\in\mathcal{E},x\in\dot{X}}\left|\partial_{e}\psi(x)\right|^{2}\leq O(L^{-dj})\sum_{e\in E(Y)}\left(\partial_{e}\psi \right)^{2} \;.
\end{equation}
By definition of the norm 
$\Vert\phi\Vert_{\Phi_{j+1}(\dot{X},U^{+})}^2 =h^{-2}L^{d(j+1)}
\sup_{e,x\in\dot{X}} \left|\partial_{e}\psi(x)\right|^{2}$,
if we choose $h$ large enough such that $h^{-1}O(L^d)\le 1$, then
\begin{equation}
\Vert\phi\Vert_{\Phi_{j+1}(\dot{X},U^{+})}^{2}
\leq h^{-1}\sum_{e\in E(Y)}\left(\partial_{e}\psi \right)^{2} \;.
\end{equation}
Since there exists a $q\ge 1$
such that for all $s\ge 0$,
$(2+s)^3\le q e^{s^2/2}$, one has
\begin{equation} \label{eq:existsq-1}
\left(2+\Vert\phi\Vert_{\Phi_{j+1}(\dot{X},U^{+})}\right)^{3}
\leq q
\exp\Big(\frac{h^{-1}}{2}\sum_{e\in E(Y)}\left(\partial_{e}\psi \right)^{2}\Big) \;.
\end{equation}
Apply \eqref{eq:alt-rep-G} of Lemma \ref{lem:simproperties} 
to $G(\ddot{X},U^{+})$,
and use the fact that $\psi=P_{U^+}\phi$
together with \eqref{eq:existsq-1}, then the left hand side of (\ref{eq:integrab_poly}) is bounded by
\[
q\exp\bigg\{ \frac{\kappa}{2}\sum_{e\in E(U^{+})}(a_{e}\partial_{e}\psi_{1})^{2}+\frac{h^{-1}}{2}\sum_{e\in E(Y)}\left(\partial_{e}\psi \right)^{2}-\frac{1}{2}\sum_{U^{+}}(\partial\psi_{1})^{2}+\frac{1}{2}\sum_{U^{+}}(\partial\psi)^{2}\bigg\} 
\]
where the function $a_{e}=1$ if $e\in E(\ddot{X})$ and decays to
zero in a neighborhood of $\ddot{X}$, and the support of $a_{e}$,
that is, $\tilde{X}:=\mbox{supp}(a)=\{x:\exists\bar{e}\in\mathcal{E}\mbox{ s.t. }a_{x,x+\bar{e}}\neq0\}$,
still satisfies $d(\tilde X,Y)=O(L^{j})$,
and $|\nabla^{k}a_{e}|\leq O(L^{-kj})$ for $k=0,...,3$,
and finally
\begin{equation} \label{eq:non-const-prob} 
\psi_{1}\mbox{ maximizes }\quad
\kappa \!\!\!\!\! \sum_{e\in E(U^{+})}(a_{e}\partial_{e}\phi)^{2}-\sum_{U^{+}}(\partial\phi)^{2}\mbox{ fixing }\phi\big|_{(U^{+})^{c}} \;.
\end{equation}
Notice that  applying \eqref{eq:alt-rep-G} of Lemma \ref{lem:simproperties} to $G(\ddot{X},U^{+})$
results in a term $\frac{\kappa}{2}$ times a Dirichlet form over $\ddot{X}$, and
we ``enlarged'' the set $\ddot{X}$ to $\tilde X$ 
by smoothing out the coefficient $a_{e}$, followed by a replacement of that Dirichlet form
with that of the maximizer $\psi_{1}$ solving the new elliptic problem \eqref{eq:non-const-prob} - this only makes the above exponential larger. 
In
the following we show that by choosing $h$ large enough one has
\begin{equation} \label{eq:polyReg_crucial}
\frac{h^{-1}}{2}\sum_{e\in E(Y)}\left(\partial_{e}\psi\right)^{2}
\leq
\frac{\kappa}{2}\sum_{e\in E(Y)}\left(\partial_{e}\psi_1\right)^{2} \;.
\end{equation}
Then the left hand side of (\ref{eq:integrab_poly}) is bounded
by
\[
q\exp\bigg\{ 
	\frac{\kappa}{2}\sum_{e\in E(\ddot{U})}(\partial_{e}\bar\psi)^{2}
	-\frac{1}{2}\sum_{U^{+}}(\partial\bar \psi)^{2}
	+\frac{1}{2}\sum_{U^{+}}(\partial\psi)^{2}
\bigg\} = qG(\ddot{U},U^{+})
\]
which holds by replacing $\psi_{1}$ by $\bar\psi$ which is the maximizer of
$\frac{\kappa}{2}\sum_{e\in E(\ddot{U})}(\partial_{e}\phi)^{2}-\frac{1}{2}\sum_{U^{+}}(\partial\phi)^{2}$ with $\phi\big|_{(U^{+})^{c}}$ fixed.

To show (\ref{eq:polyReg_crucial}), 
let $\bar{a}=1-\kappa  a$.
We have
\[
\begin{cases} \Delta\psi =0 \quad &\mbox{in }U^+  \\ 
	\psi =\phi \quad &\mbox{on }\partial U^+ \end{cases}
\qquad\qquad\qquad
\begin{cases} 
	\Delta_{\bar a}\psi_1 =0 \quad &\mbox{in }U^+  \\ 
	\psi_1 =\phi \quad &\mbox{on }\partial U^+ \end{cases}
\]
where $\Delta_{\bar a} f(x) = \sum_e \bar a_e (f(x+e)-f(x))$.
Subtract them and we obtain a non-constant coefficient elliptic problem for $\psi_1-\psi$
\[
\begin{cases}
\Delta_{\bar{a}} (\psi_1-\psi)  = \kappa \Delta_{a} \psi & \qquad\mbox{in }U^{+}
\\
\psi_1-\psi_0 =0 & \qquad \mbox{on }\partial U^{+}
\end{cases}
\]
One has the following representation of derivative of the solution to the above equation
(note that the support of $a$ is $\tilde X$ so $\Delta_a \psi=0$ outside $\tilde X$)
\begin{equation}\label{eq:elli-prob-diff}
\partial_e (\psi_1-  \psi)(y)
= \kappa\sum_{x\in \tilde{X}}\partial_{y,e} G_{\bar a}(y,x) \Delta_a \psi(x)
\end{equation}
for $y\in Y,e\in \mathcal E$, where  $G_{\bar a}$ is the Dirichlet Green's function associated with $\Delta_{\bar a}$.

Our situation is that for a Laplacian with non-constant coefficient $\Delta_{\bar a}$, although 
one has desired bound for 
the Green's function $G_{\bar a}$ (i.e. bound with the decay rate as if the Laplacian
was a constant coefficient one), the desired bound for $\partial_y G_{\bar a}(y,x)$
does not hold in general. However, we do have 
 bound with desired scaling in an averaging sense, i.e. 
after a summation over $y$ - the variable w.r.t. which $G_{\bar a}$
is differentiated. Consider
\[
\begin{aligned}
& \sum_{e\in E(Y)}  \Big(\partial_e (\psi_1-\psi)\Big)^2 
= \kappa^2  \sum_{e\in E(Y)} 
\Big( \sum_{x\in \tilde{X}} \partial_{y,e} G_{\bar a}(y,x) \Delta_a \psi(x) \Big)^2 \\
& \qquad =  \kappa^2  \!\!\! \sum_{x_1,x_2\in \tilde{X}} \Delta_a \psi(x_1) \Delta_a \psi(x_2)
	\sum_{e\in E(Y)} \partial_{y,e} G_{\bar a}(y,x_1) \partial_{y,e} G_{\bar a}(y,x_2)\\
&\leq \frac{\kappa^2}{2}  \!\!\!
 \sum_{x_1,x_2\in \tilde{X}} \Big|  \Delta_a \psi(x_1) \Delta_a \psi(x_2) \Big|
	\sum_{e\in E(Y)} \!\!\! \Big(\big(\partial_{y,e} G_{\bar a}(y,x_1)\big)^2+\big( \partial_{y,e} G_{\bar a}(y,x_2)\big)^2\Big) .
\end{aligned}
\]
With this bound at hand, our proof of \eqref{eq:polyReg_crucial} now follows from two claims. The first claim is that for every $x\in \tilde X$,
\begin{equation} \label{eq:first-claim}
\sum_{e\in E(Y)} \big(\partial_{y,e} G_{\bar a}(y,x) \big)^2
\le O(L^{-2j})  \sum_{y \in \tilde Y}G_{\bar a}(y,x)^2 
\le O(L^{-(d-2)j}) 
\end{equation}
where $\tilde Y$ is such that $Y\subset \tilde Y$, $d(Y,\tilde Y^c)=O(L^j)$ and $d(\tilde Y,\tilde X)=O(L^j)$.
Note that the last inequality follows from $G_{\bar a}(y,x) \le O(L^{-(d-2)j})$ 
(this is a standard bound for Green's function of non-constant coefficient Laplacian, see for instance \cite{delmotte_parabolic_1999}) and $|Y|=O(L^{dj})$.
Note that the right side of \eqref{eq:first-claim} does not depend on $x_1,x_2$,
so it remains to bound 
$\Big(\sum_{x\in \tilde{X}} \left|\Delta_a \psi(x)\right|\Big)^2$.

The second claim is  that for every $x\in\tilde X$, 
\begin{equation} \label{eq:second-claim}
|\Delta_a \psi(x) |
\leq O(L^{-\frac{d+2}{2}j}) \Big(\sum_Y |\nabla\psi|^2\Big)^{1/2}
\end{equation}
so that one has
\begin{equation}\label{eq:second-claim-1}
\Big(\sum_{x\in \tilde{X}} \left|\Delta_a \psi(x)\right|\Big)^2
\le O\Big( (L^{dj}\cdot L^{-\frac{d+2}{2}j})^2 \Big)  \sum_Y (\nabla\psi)^2
= O(L^{(d-2)j}) \sum_Y (\nabla\psi)^2
\end{equation}
As a consequence of \eqref{eq:first-claim} and \eqref{eq:second-claim-1},
one has
\[
\frac{1}{2} \sum_Y (\nabla \psi)^2 
\le \sum_Y (\nabla \psi-\nabla\psi_1)^2+  \sum_Y(\nabla\psi_1)^2
\le O(1)\kappa^2 \sum_Y (\nabla \psi)^2+ \sum_Y(\nabla\psi_1)^2
\]
Choosing $h$ large enough such that $h^{-1}\le \kappa(1/2-O(1)\kappa^2)$
we obtain \eqref{eq:polyReg_crucial}.

The proof to the first inequality of \eqref{eq:first-claim}
is motivated by Cacciopoli's inequality in the continuum setting, 
which roughly states that for a solution $u$ to an elliptic problem 
one can bound the $L^2$ norm of $u$ by the $L^2$ norm (over a larger domain) of $\nabla u$
(as a reverse of Poincare inequality),
under certain conditions (see for instance \cite[Chapter 3]{giaquinta_multiple_1983}). 
We don't provide the proof of its discrete counterpart in full generality,
but only prove a weak version that is sufficient for our purpose.

Fixing $x\in\tilde X$, let $u(y)=G_{\bar a}(y,x)$, which is $\Delta_{\bar a}$-harmonic
in $U^+$ away from the singular point $y=x$: namely 
$\sum_{e\in\mathcal E} \bar a_e (u(y+e)-u(y)) =0$ for $y\in U^+ \backslash \{x\}$. 
Since $\kappa$ is small, the function $ \bar a_e$ is such that
there exist $0<\lambda<\Lambda$ and $\lambda<\bar a_e<\Lambda $.
Then, for every function $v$ on $\tilde Y$, one has 
\[
\sum_{e\in E(\tilde Y)} \bar a_e \, \partial_e u \,\partial_e v =0\;.
\]
Let $v=u\varphi^2$ for some non-negative function $\varphi$ supported on $\tilde Y$,
then $\partial_e v =\partial_e u\cdot \varphi^2 + 2\varphi \partial_e \varphi\cdot u$.
Substituting this into the identity above, one has
\[
\begin{aligned}
\lambda \sum_{y,y+e\in \tilde Y} & \varphi(y)^2  (\partial_e u(y))^2
\le -\!\!\!\!\!\! \sum_{y,y+e\in \tilde Y} 2\varphi(y)u(y)\bar a_{(y,y+e)} \partial_e u(x)\partial_e\varphi(y) \\
& \le \frac{\lambda}{2} \sum_{y,y+e\in \tilde Y} \varphi(y)^2 (\partial_e u(y))^2
+ \frac{2}{\lambda}  \sum_{y,y+e\in \tilde Y}\bar a_{(y,y+e)}^2 (\partial_e \varphi(y))^2 u(y)^2
\end{aligned}
\]
where the first inequality used $\bar a>\lambda$ and the second inequality is
by Cauchy-Schwartz. 
Therefore,
\[
\frac{\lambda}{2} \sum_{y,y+e\in \tilde Y} \varphi(y)^2 (\partial_e u(y))^2
\le 
\frac{2\Lambda^2}{\lambda}  \sum_{y,y+e\in \tilde Y} (\partial_e \varphi(y))^2 u(y)^2
\]
Choosing $\varphi=1$ on $Y$ and $|\nabla \varphi|\le O(L^{-j})$ we obtain
the first inequality of \eqref{eq:first-claim}.

The proof of \eqref{eq:second-claim} is 
based on the idea of writing $\Delta_a \psi$ 
in terms of (derivatives of) $a$ 
and constant coefficient derivatives of $\psi$,
in a way analogous to
the relation $\nabla\cdot(a\nabla f)=\nabla a\cdot\nabla f+a\Delta f$
in continuum.
Note that $a_e$ above is defined on edges $e$.
For a lattice site $x$, define $a(x)=(2d)^{-1} \sum_e a_{(x,x+e)}$.
Then 
\[
\begin{aligned}
|\Delta_a\psi(x)| &=|\sum_{e\in\mathcal E} a_e (\psi(x+e)-\psi(x)) | \\
&\le
|\sum_{e\in\mathcal E} (a_{(x,x+e)}-a(x)+a(x)) (\psi(x+e)-\psi(x)) |\\
&\leq 
\sup_{e\in\mathcal E} |a_{(x,x+e)} -a(x)| |\nabla\psi(x)|+ |a(x)| |\Delta\psi(x) | 
\end{aligned}
\]
Note that the last term is zero since $\Delta\psi=0$.
The term $|a_{(x,x+e)} -a(x)|$ is 
bounded by $(2d)^{-1} \sum_{e'\in\mathcal E} | a_{(x,x+e)} -a_{(x,x+e')}|$
which by the choice of $a$ is bounded by $O(L^{-j})$. 
Lemma~\ref{lem:derbnd} allows us to bound 
\[
|\nabla\psi(x)| \le O(L^{-dj/2}) (\sum_Y (\nabla\psi)^2)^{1/2} \;.
\]
Therefore we obtain \eqref{eq:second-claim}.
So \eqref{eq:polyReg_crucial} is shown and the proof of the lemma is completed.
\end{proof}
Before the next Lemma we define 
\begin{equation} \label{eq:def-F_X}
F_{X}(U,\phi,\xi):=\mathbb{E}\left[K_{j}(X,\phi,\xi)\big|\left(U^{+}\right)^{c}\right] \;.
\end{equation}
It depends on $\phi,\xi$ via $\psi:=P_{U^{+}}\phi+\xi$,
i.e. there exists a function $\widetilde{F}_{X}$ such that $F_{X}(U,\phi,\xi)=\widetilde{F}_{X}(U,\psi)$. 

\begin{lem}
\label{lem:L2}
Let $L$ be sufficiently large. Then $\mathcal{L}_{2}$
in Proposition \ref{prop:The-linearization} is a contraction
with the norm going to zero as $L\to\infty$.
\end{lem}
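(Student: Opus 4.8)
The plan is to fix a $(j+1)$-block $U$ --- only these receive a nonzero contribution, since the definition \eqref{eq:def-CL2} of $\mathcal{L}_{2}$ forces $\bar{B}=U$ --- and to bound, in the scale $(j+1)$ norm $\Vert\cdot\Vert_{j+1}$, each summand $\frac{1}{|X|_{j}}(1-Tay)F_{X}(U)$ of \eqref{eq:def-CL2}, where $F_{X}(U,\phi,\xi):=\mathbb{E}[K_{j}(X,\phi,\xi)\,|\,(U^{+})^{c}]$ as in \eqref{eq:def-F_X} depends on $\phi,\xi$ only through $\psi:=P_{U^{+}}\phi+\xi$ by \eqref{eq:depend-via}. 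The number of pairs $(B,X)$ with $\bar{B}=U$, $X\in\mathcal{S}_{j}$, $X\supseteq B$ is at most $C(d)L^{d}$ (there are $L^{d}$ blocks in $U$, and at most $C(d)$ small $j$-polymers through each since $|X|_{j}\le2^{d}$). So it suffices to bound each summand in $\Vert\cdot\Vert_{j+1}$ by $O(L^{-3d/2})\,c^{|X|_{j}}\Vert K_{j}(X)\Vert_{j}$ with $c$ universal; then summing, using $\Vert K_{j}(X)\Vert_{j}\le\Vert K_{j}\Vert_{j}A^{-|X|_{j}}$ together with the factor $A^{|U|_{j+1}}=A$ in $\Vert\cdot\Vert_{j+1}$ (which cancels $A^{-|X|_{j}}$ since $|X|_{j}\ge1$), gives $\Vert\mathcal{L}_{2}K_{j}\Vert_{j+1}\le C'(d)L^{-d/2}\Vert K_{j}\Vert_{j}\to0$ as $L\to\infty$. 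Note $X$ may stick out of $U$, but for $L$ large $X^{+}\subseteq U^{+}$, $\ddot X\subseteq U^+$ and $\dot{X}\subseteq\dot{U}$, which we use freely.

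For the per-summand estimate, first use Lemma~\ref{lem:Tphiproperties} to replace $\Phi_{j+1}(U)=\Phi_{j+1}(\dot{U},U^{+})$ by the smaller set $\Phi_{j+1}(\dot{X},U^{+})$, then apply Lemma~\ref{lem:3rdder} to $F_{X}(U)$ (expanded to second order about $\phi=0$ at fixed $\xi$, which is the operation $Tay$ of \eqref{eq:def-CL2}) with test-function space $\Phi_{j+1}(\dot{X},U^{+})$:
\[
\Vert(1-Tay)F_{X}(U,\phi)\Vert_{T_{\phi}(\Phi_{j+1}(\dot{X},U^{+}))}\le\bigl(2+\Vert\phi\Vert_{\Phi_{j+1}(\dot{X},U^{+})}\bigr)^{3}\sup_{\substack{t\in[0,1]\\k=3,4}}\Vert F_{X}^{(k)}(U,t\phi)\Vert_{T_{t\phi}^{k}(\Phi_{j+1}(\dot{X},U^{+}))},
\]
the constant $2$ rather than $1$ absorbing $\Vert\xi\Vert\le1$ (by \eqref{eq:smallness_xi}). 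The prefactor is exactly the quantity Lemma~\ref{lem:integrab_poly} is tailored to handle.

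The smallness in $L$ now comes from the third/fourth differential together with a rescaling of test functions between the two scales. If $\Vert f\Vert_{\Phi_{j+1}(\dot{X},U^{+})}\le1$ then $\sup_{x\in\dot{X},e}|\partial_{e}P_{U^{+}}f(x)|\le h_{j+1}L^{-(j+1)}$; since $P_{U^{+}}f$ is harmonic on $U^{+}\supseteq X^{+}$ it is a legitimate test direction for $K_{j}(X)$, of $\Phi_{j}(X)$-size $h_{j}^{-1}L^{j}\,h_{j+1}L^{-(j+1)}=L^{-d/2}$ (using $h_{j+1}/h_{j}=L^{-(d-2)/2}$). Combining this with the identity $F_{X}^{(k)}(U,\phi';f_{1},\dots,f_{k})=\mathbb{E}_{\zeta}[K_{j}^{(k)}(X,P_{U^{+}}\phi'+\zeta,\xi;P_{U^{+}}f_{1},\dots,P_{U^{+}}f_{k})]$ ($\zeta$ the Dirichlet Gaussian on $U^{+}$) and multilinearity gives
\[
\Vert F_{X}^{(k)}(U,\phi')\Vert_{T_{\phi'}^{k}(\Phi_{j+1}(\dot{X},U^{+}))}\le L^{-dk/2}\,\mathbb{E}_{\zeta}\bigl[\Vert K_{j}^{(k)}(X,P_{U^{+}}\phi'+\zeta,\xi)\Vert_{T^{k}(\Phi_{j}(X))}\bigr].
\]
Now bound $\Vert K_{j}^{(k)}(X,\cdot)\Vert_{T^{k}(\Phi_{j}(X))}\le 4!\,\Vert K_{j}(X)\Vert_{j}\,G(\ddot{X},X^{+})$ via \eqref{eq:def_normKXj}, use $\mathbb{E}_{\zeta}[G(\ddot{X},X^{+})(P_{U^{+}}\phi'+\zeta)]=\mathbb{E}[G(\ddot{X},X^{+})\,|\,(U^{+})^{c}](\phi')\le c^{|X|_{j}}G(\ddot{X},U^{+})(\phi')$ from Lemma~\ref{lem:intproperties} (applicable since $d(\ddot{X},(X^{+})^{c})\ge\tfrac16L^{j}$), remove the intermediate scaling $t$ by $G(\cdot)(t\phi)=G(\cdot)(\phi)^{t^{2}}\le G(\cdot)(\phi)$ for $t\in[0,1]$ (since $G\ge1$, from \eqref{eq:alt-rep-G} and Lemma~\ref{lem:simproperties}), and finally absorb $(2+\Vert\phi\Vert_{\Phi_{j+1}(\dot{X},U^{+})})^{3}G(\ddot{X},U^{+})\le q\,G(\ddot{U},U^{+})$ via Lemma~\ref{lem:integrab_poly}. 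Dividing by $G(\ddot{U},U^{+})$, taking $\sup_{\phi}$, and using $k\ge3$ yields each summand $\le 4!\,q\,c^{|X|_{j}}L^{-3d/2}\Vert K_{j}(X)\Vert_{j}$ in $\Vert\cdot\Vert_{j+1}$; summing over $(B,X)$ as in the first paragraph completes the proof.

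The main obstacle is this rescaling step: one must track carefully the interplay of the two harmonic extensions ($P_{X^{+}}P_{U^{+}}=P_{U^{+}}$ because $X^{+}\subseteq U^{+}$), of the special structure \eqref{eq:depend-via} (so that the derivatives of $F_{X}(U)$ hit only harmonic directions on $U^{+}$, which are correctly rescaled test directions for $K_{j}(X)$), and of the base point of the Taylor remainder (which must be of the form ``harmonic on $X^{+}$ plus $\xi$'' so that $\Vert K_{j}(X)\Vert_{j}$ --- a supremum over $\phi$ at $\xi$ fixed --- applies; the $\xi$-contributions are genuinely lower order by \eqref{eq:smallness_xi}). Once this is in place the decisive arithmetic is merely $L^{-dk/2}\cdot L^{d}\le L^{-d/2}$ for $k\ge3$: the $L^{-d/2}$-per-derivative gain from measuring $K_{j+1}$ in the finer scale $(j+1)$ norm overwhelms the $L^{d}$ loss from summing over the $j$-blocks of $U$. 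No analogue of the geometric Lemma~\ref{lem:geometric} (needed for the large-set map $\mathcal{L}_{1}$) enters here, since every polymer in play is a small set.
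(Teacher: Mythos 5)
Your proof is correct and follows essentially the same route as the paper's: shrink the test-function space to one anchored on $\dot{X}$ via Lemma~\ref{lem:Tphiproperties}, apply the Taylor-remainder bound of Lemma~\ref{lem:3rdder}, exploit the $L^{-d/2}$-per-derivative gain from rescaling test functions between scales $j+1$ and $j$ (using that harmonic directions on $U^+$ are legitimate test directions for $K_j(X)$ because $P_{X^+}P_{U^+}=P_{U^+}$), pass the conditional expectation through the norm and Lemma~\ref{lem:intproperties} to trade $G(\ddot{X},X^+)$ for $G(\ddot{X},U^+)$, absorb the polynomial prefactor by Lemma~\ref{lem:integrab_poly} to land on $G(\ddot{U},U^+)$, and sum over the $O(L^d)$ pairs $(B,X)$ with the arithmetic $L^{-3d/2}\cdot L^d=L^{-d/2}$. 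Two minor remarks. First, the operation $Tay$ in the paper is the second-order Taylor expansion in $\psi=P_{U^+}\phi+\xi$ about $\psi=0$ (so that it dovetails with the $\text{Loc}$ step in $\mathcal{L}_3$), not about $\phi=0$ at fixed $\xi$ as your parenthetical says; the numerical outcome $(2+\Vert\phi\Vert)^3$ is nevertheless what the paper gets via $\Vert\psi\Vert\le\Vert\phi\Vert+1$, so your displayed bound is the right one. Second, you are right that the geometric Lemma~\ref{lem:geometric} is not really needed here: the paper does invoke it, but since $U=\bar{B}$ is a single $(j+1)$-block the inequality $|X|_j\ge|U|_{j+1}=1$ is trivial, and your direct observation ($A^{|U|_{j+1}}A^{-|X|_j}\le1$ with $|X|_j\ge1$) replaces it cleanly. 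Your explicit check that $G(\cdot)(t\phi)\le G(\cdot)(\phi)$ for $t\in[0,1]$ is a useful detail the paper leaves implicit.
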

\begin{proof}
Let $Tay$ be the second order Taylor expansion in $\psi=P_{U^{+}}\phi+\xi$.
With the the $F_X$ defined in \eqref{eq:def-F_X}, we aim to bound
\begin{equation} \label{eq:lem31-aim}
\big\Vert(1-Tay)  F_{X}(U,\phi,\xi) \big\Vert_{T_{\phi}(\Phi_{j+1}(U))} \;.
\end{equation}
Recall that $\Phi_{j+1}(U)$ is short for $\Phi_{j+1}(\dot U,U^+)$
and by Lemma \ref{lem:Tphiproperties} this can be replaced by
$\Phi_{j+1}(\dot X,U^+)$.
Applying Lemma \ref{lem:3rdder} with
the test function space $\Phi := \widetilde{\Phi}_{j+1}(\dot{X},U^{+})$, we can
bound \eqref{eq:lem31-aim} by
\begin{equation} \label{eq:-3d/2}
  \big\Vert(1-Tay)\widetilde{F}_{X}(U,\psi)\big\Vert_{T_{\psi}(\Phi)}
  \leq
 \Big(1+\Vert\psi\Vert_{\Phi}\Big)^{3}
 \sup_{k=3,4}
 \big\Vert\widetilde{F}_{X}^{(k)}(U,\psi)\big\Vert_{T_{\psi}^{k}(\Phi)}
\end{equation}
Now by linearity of $\widetilde{F}_{X}^{(k)}$
in test functions, 
\begin{equation}
\begin{aligned}
\big\Vert  & \widetilde{F}_{X}^{(3)}  (U,\psi)\big\Vert_{T_{\psi}^{3}(\widetilde{\Phi}_{j+1}(\dot{X},U^{+}))}
\leq L^{-\frac{3}{2}d}
\big\Vert \widetilde{F}_{X}^{(3)}(U,\psi)\big\Vert _{T_{\psi}^{3}(\widetilde{\Phi}_{j}(\dot{X},U^{+}))}\\
 & \leq L^{-\frac{3}{2}d}\cdot 3!\cdot \mathbb{E}\left[\Vert K_{j}(X,\phi,\xi)\Vert_{T_{\phi,\xi}(\Phi_{j}(X))}\big|(U^{+})^{c}\right]\\
 & \leq O(L^{-\frac{3}{2}d})\, \Vert K_{j}(X)\Vert_{j} \, c^{|X|_{j}} \,G(\ddot{X},U^{+})
\end{aligned}
\label{eq:Ftilde3-1}
\end{equation}
where in the last step Lemma \ref{lem:intproperties} is applied.
We can prove analogously that $\tilde{F}_{X}^{(4)}(U,\psi)$ satisfies a similar bound with a factor $O(L^{-2d})$.
Next we estimate
\begin{equation} \label{eq:estimate-psiPhi}
\begin{aligned}
\Vert\psi\Vert_{\Phi} 
& \leq h_{j}^{-1}\sup_{x\in\dot{X},e}
	\big|L^{j}\partial_{e}P_{U^{+}}\phi(x)\big|
+h_{j}^{-1}\sup_{x\in\dot{X},e}
	\big|L^{j}\partial_{e}\xi(x)\big|\\
 & \leq\Vert\phi\Vert_{\Phi_{j+1}(\dot{X},U^{+})}+1
\end{aligned}
\end{equation}
by (\ref{eq:smallness_xi}). 
Combining \eqref{eq:-3d/2}--
\eqref{eq:estimate-psiPhi},
and applying Lemma \ref{lem:integrab_poly}, followed by (\ref{enu:reg4}) of Lemma
\ref{lem:simproperties}, one obtains
\begin{equation}
\left\Vert (1-Tay)F_{X}(U)\right\Vert _{j+1}
\leq O(L^{-\frac{3d}{2}})\,c^{|X|_{j}}\Vert K_{j}(X)\Vert_{j}\leq O(L^{-\frac{3d}{2}})(\frac{A}{c})^{-|X|_{j}}\Vert K_j\Vert_{j} .
\end{equation}
Note that the sum over $B$ and $X$ in the definition \eqref{eq:def-CL2} of $\mathcal L_2$ gives a factor $O(L^d)$.
Apply the geometric Lemma \ref{lem:geometric} to $|X|_j$, one then has
\[
\begin{aligned}
\left\Vert \mathcal{L}_{2}K_{j}\right\Vert _{j+1} 
 & \leq O(L^{-3d/2})\bigg[\sup_{U\in\mathcal{P}_{j+1}}A^{|U|_{j+1}}O(L^{d})A^{-|U|_{j+1}}c^{2^{d}}\bigg]\Vert K\Vert_{j}\\
 & =O(L^{-d/2})\Vert K_j\Vert_{j}  \;.
\end{aligned}
\]
As $L\to \infty$, the factor $L^{-d/2}$ overwhelms the constants 
hidden in the big-O notation, and therefore $\mathcal L_2$
has arbitrarily small norm.
\end{proof}

\subsection{$\mathcal{L}_{3}$ and determination of coupling constants}

We now localize the last term in $\mathcal{L}_{3}$, which is the
second order Taylor expansion of $\widetilde{F}_{X}(U,\psi)$ in $\psi$
(which are introduced in \eqref{eq:def-F_X}). To do this we fix a point $z\in B$, and replace
$\psi(x)$ by $x\cdot\partial\psi(z)$ (which according to our convention
means $\frac{1}{2}\sum_{e\in\mathcal{E}}x_{e}\partial_{e}\psi(z)$),
and then average over $z\in B$. We will show that the error of this
replacement is irrelevant. Then 
\[
\frac{1}{2}\tilde{F}_{X}^{(2)}(U,0;\psi,\psi)=\text{Loc}K_{j}(B,X,U)+(1-\text{Loc})K_{j}(B,X,U)
\]
where we have defined
\[
\begin{aligned}
\text{Loc}\, &K_{j}(B,X,U)\\
&:=  \frac{1}{8|B|}\sum_{\substack{z\in B,\mu,\nu\in\mathcal{E}}
}\partial_{t_{1}t_{2}}^{2}\bigg|_{t_{i}=0}\mathbb{E}_{\zeta}\left[K_{j}(X,t_{1}x_{\mu}+t_{2}x_{\nu}+\zeta)\right]\partial_{\mu}\psi(z)\partial_{\nu}\psi(z)
\end{aligned}
\]
and
\begin{equation}
\begin{aligned}
&(1-\text{Loc})  K_{j}(B,X,U):=\frac{1}{2|B|}\sum_{\substack{z\in B}}\Big(\partial_{t_{1}t_{2}}^{2}\bigg|_{t_{i}=0}\mathbb{E}_{\zeta}\left[K_{j}(X,t_{1}\psi+t_{2}\psi+\zeta)\right]\\
 &\qquad
  -\partial_{t_{1}t_{2}}^{2}\bigg|_{t_{i}=0}\mathbb{E}_{\zeta}\left[K(X,t_{1}x\cdot\partial\psi(z)+t_{2}x\cdot\partial\psi(z)+\zeta)\right]\Big)\\
 & =
 \frac{1}{2|B|}\sum_{\substack{z\in B}
}\Big(\widetilde{F}_{X}^{(2)}(U,0;\psi-x\cdot\partial\psi(z),\psi)+\widetilde{F}_{X}^{(2)}(U,0;\psi-x\cdot\partial\psi(z),x\cdot\partial\psi(z))\Big) .
\end{aligned}
\label{eq:1-loc}
\end{equation}
We show that $\psi-x\cdot\partial\psi(z)$ gives additional contractive
factors as going to the next scale:

\begin{lem} \label{lem:psi-xdpsi}
If $\psi=P_{U^{+}}\phi+\xi\in\widetilde{\Phi}_{j}(\dot{X},U^{+})$, then
\begin{equation}\label{eq:lemma28}
\left\Vert \psi-x\cdot\partial\psi(z)\right\Vert _{\widetilde{\Phi}_{j}(\dot{X},U^{+})}\leq O(L^{-\frac{d}{2}-1})\left(\left\Vert \phi\right\Vert _{\Phi_{j+1}(U)}+1\right) \;.
\end{equation}
\end{lem}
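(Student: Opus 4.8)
The plan is to write $\psi=g+\xi$ with $g:=P_{U^{+}}\phi$, harmonic on $U^{+}$, and to bound the two pieces separately. Unwinding the definition of the semi-norm, one has to control $h_{j}^{-1}L^{j}\sup_{y\in\dot{X},\,e}\big|\partial_{e}\big(\psi-x\cdot\partial\psi(z)\big)(y)\big|$, and for each fixed direction $e$ the inner quantity equals, up to signs, $\partial_{e}\psi(y)-\tfrac12\big(\psi(z+e)-\psi(z-e)\big)$, which is $\tfrac12\big(\partial_{e}\psi(y)-\partial_{e}\psi(z)\big)+\tfrac12\big(\partial_{e}\psi(y)-\partial_{e}\psi(z-e)\big)$. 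Since $z\in B\subseteq X$ with $|X|_{j}\le2^{d}$ and $y\in\dot{X}$, we have $|y-z|=O(L^{j})$ for $L$ large; thus this is genuinely a second-order expression in $\psi$, and the point is that it will carry one extra power $L^{-1}$ coming from the ratio $L^{j}/d(B,\partial U^{+})=O(L^{-1})$.

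For the harmonic part, each $\partial_{e}g$ is harmonic on $U^{+}\cap(U^{+}-e)$, hence on a ball $\mathcal{B}_{2\rho}(v_{0})$ centered at a point $v_{0}\in B$ with $\rho:=\tfrac{1}{24}L^{j+1}$. The choice of $\rho$ is made so that $\mathcal{B}_{2\rho}(v_{0})\subseteq\dot{U}$ once $L$ is large, using $B\subseteq U$ and $\dot{U}=\{x:d(x,U)\le\tfrac{1}{12}L^{j+1}\}$. Applying \eqref{eq:diff-est-any} to $\partial_{e}g$ on balls of radius $\rho$ gives $|\partial_{e'}\partial_{e}g(v)|\le O(\rho^{-1})\sup_{\dot{U},\,e''}|\partial_{e''}g|$ for every $v\in\mathcal{B}_{\rho}(v_{0})$. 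Telescoping along an axis-parallel lattice path from $z$ (resp.\ $z-e$) to $y$, of length $O(L^{j})$ and contained in $\mathcal{B}_{\rho}(v_{0})$, yields $|\partial_{e}g(y)-\partial_{e}g(z)|\le O(L^{j}\rho^{-1})\sup_{\dot{U},\,e''}|\partial_{e''}g|=O(L^{-1})\sup_{\dot{U},\,e''}|\partial_{e''}g|$. Since $\sup_{\dot{U},\,e}|\partial_{e}g|=h_{j+1}L^{-(j+1)}\Vert\phi\Vert_{\Phi_{j+1}(U)}$ and $h_{j+1}/h_{j}=L^{-(d-2)/2}$, the $g$-contribution to the left side of \eqref{eq:lemma28} is at most $O(1)\,h_{j}^{-1}L^{j-1}h_{j+1}L^{-(j+1)}\Vert\phi\Vert_{\Phi_{j+1}(U)}=O\big(L^{-(d-2)/2-2}\big)\Vert\phi\Vert_{\Phi_{j+1}(U)}=O\big(L^{-d/2-1}\big)\Vert\phi\Vert_{\Phi_{j+1}(U)}$.

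For the $\xi$-part, the same telescoping bounds $\big|\partial_{e}\big(\xi-x\cdot\partial\xi(z)\big)(y)\big|$ by $O(L^{j})\Vert\partial^{2}\xi\Vert_{L^{\infty}}$, and $\partial^{2}\xi$ satisfies the $\alpha=2$ version of \eqref{eq:smallness_xi}, namely $\Vert\partial^{2}\xi\Vert_{L^{\infty}}\le hL^{-(\frac{d-2}{2}+2)N}$ (since $\tilde{f}$ is smooth one moves both lattice derivatives onto $f$ and applies Young's inequality exactly as in the derivation of \eqref{eq:smallness_xi}). Multiplying by $h_{j}^{-1}L^{j}$ the factors of $h$ cancel and one is left with $O\big(L^{(d+2)(j-N)/2}\big)\le O\big(L^{-d/2-1}\big)$ because $j\le N-1$. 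Adding the two contributions gives \eqref{eq:lemma28}.

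The only delicate step is the geometric one above: in order to invoke \eqref{eq:diff-est-any} with the scaling it provides, the ball on which $\partial_{e}g$ is treated as harmonic must have radius of order $L^{j+1}$, yet it must still lie inside $\dot{U}$, where the norm $\Vert\phi\Vert_{\Phi_{j+1}(U)}$ is measured; this is precisely where "$L$ sufficiently large" is used, and it is the mechanism that produces the single extra factor $L^{-1}$ over the naive $L^{-d/2}$. The rest is bookkeeping of the weights $h_{j}$ and $L^{j}$.
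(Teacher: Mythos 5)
Your treatment of the harmonic part $g=P_{U^{+}}\phi$ is correct and is essentially the paper's argument: write the difference $\partial_e g(y)-\partial_e g(z)$ as a telescoping sum (the paper calls this "Newton--Leibniz along a curve"), apply \eqref{eq:diff-est-any} with radius $O(L^{j+1})$ to the harmonic function $\partial_e g$, and convert $\sup_{\dot U}|\partial g|$ to $h_{j+1}L^{-(j+1)}\|\phi\|_{\Phi_{j+1}(U)}$; the power count $h_j^{-1}h_{j+1}L^{-2}=L^{-(d-2)/2-2}=L^{-d/2-1}$ matches the paper's. Your explicit account of why the balls of radius $\sim L^{j+1}/24$ stay inside $\dot U$ is more careful bookkeeping of the same geometric fact the paper states implicitly ("the distance $O(L^{j+1})$ between $\dot X$ and $\partial\dot U$").

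The $\xi$-part, however, contains a genuine gap. You assert that "$\partial^{2}\xi$ satisfies the $\alpha=2$ version of \eqref{eq:smallness_xi}", i.e.\ $\Vert\partial^2\xi\Vert_{L^\infty}\le hL^{-(\frac{d-2}{2}+2)N}$, by "moving both derivatives onto $f$". But the hypotheses of the theorem only bound $\Vert\tilde f\Vert_{L^p}$ (the constant $R$ is defined from $\Vert\tilde C_m\Vert_{L^q}$ and $\Vert\partial\tilde C_m\Vert_{L^q}$, with at most one derivative on the Green's function; $\Vert\partial^2\tilde C_m\Vert_{L^q}$ is divergent for $d\ge 2$). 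Moving a second derivative onto $\tilde f$ would require a bound on $\Vert\partial^2\tilde f\Vert_{L^p}$ that is not among the assumptions, so the constant you produce would depend on $\tilde f$ beyond what is allowed in the $O(\cdot)$. The paper avoids this entirely: it bounds $|\partial_e\xi(x)-\partial_e\xi(z)|\le 2\Vert\partial\xi\Vert_{L^\infty}$ using only the $\alpha=1$ case of \eqref{eq:smallness_xi}, obtaining $h_j^{-1}L^j\cdot 2\Vert\partial\xi\Vert_\infty\le O(L^{-d(N-j)/2})$; under $j+1<N$ (which is the only range in which this lemma is invoked, since the RG stops at scale $N-1$) and $d\ge 2$, this is $\le L^{-d}\le L^{-d/2-1}$. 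So the correct fix is simply to drop the second-derivative estimate on $\xi$ and use the paper's cruder first-derivative bound together with the $N-j\ge 2$ gap.
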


\begin{proof}
Since $P_{U^{+}}x=x$, the left side of \eqref{eq:lemma28} is equal to
\begin{equation}
h_{j}^{-1}\sup_{x\in\dot{X},e}L^{j}\bigg|\partial_{e}P_{U^{+}}\phi(x)+\partial_{e}\xi(x)-\partial_{e}P_{U^{+}}\phi(z)-\partial_{e}\xi(z)\bigg| \;.
\label{eq:ArgOfx-1}
\end{equation}
We apply Newton-Leibniz formula along a curve connecting
$x,z$, and then apply \eqref{eq:diff-est-any} with $R=O(L^{j+1})$ using
the distance $O(L^{j+1})$ between $\dot{X}$ and $\partial\dot{U}$,
\[
\begin{aligned}
h_{j}^{-1} &\sup{}_{x\in\dot{X},e} L^{j}  \left|\partial_{e}P_{U^{+}}\phi(x)-\partial_{e}P_{U^{+}}\phi(z)\right|\\
 & \leq h_{j}^{-1}\sup{}_{x\in\dot{U}}L^{j}\,\mbox{diam}\,(\dot{X})\,O(L^{-j-1})\left|\partial P_{U^{+}}\phi(x)\right|\\
 & \leq O(L^{-\frac{d+2}{2}})\left\Vert \phi\right\Vert _{\Phi_{j+1}(U)}
\end{aligned}
\]
where $\mbox{diam}(\dot{X})=O(L^{j})$ since $X$ is small. The second
term in (\ref{eq:ArgOfx-1}) can be bounded by
\[
h_{j}^{-1}\sup_{x\in\dot{X},e}L^{j}\left|\partial_{e}\xi(x)-\partial_{e}\xi(z)\right|\leq O(L^{-\frac{d}{2}(N-j)})\leq O(L^{-\frac{d+2}{2}})
\]
as long as $j+1<N$, and by $d\geq2$ and (\ref{eq:smallness_xi}).
Combining the above bounds completes the proof.
\end{proof}

\begin{lem}
\label{lem:nonlocal}
If $L$ be sufficiently large and define
\begin{equation} \label{eq:def-CL3prime}
\mathcal{L}_{3}^{\prime}K_{j}(U)
=\sum_{\bar{B}=U}\sum_{X\in\mathcal{S}_{j},X\supseteq B}
	(1-\text{Loc})K_{j}(B,X,U)
\end{equation}
then $\mathcal{L}_{3}^{\prime}$ is contractive with arbitrarily small
norm; namely, $\left\Vert \mathcal{L}_{3}^{\prime}\right\Vert \rightarrow0$
as $L\rightarrow\infty$.\end{lem}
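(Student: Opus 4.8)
The plan is to estimate $\bigl\Vert(1-\text{Loc})K_{j}(B,X,U)\bigr\Vert_{T_{\phi}(\Phi_{j+1}(U))}$ for each $(B,X)\in\hat{\mathcal{S}}_{j}$ with $\bar B=U$, and then sum over $X\supseteq B$ and over $B$. From \eqref{eq:1-loc}, $(1-\text{Loc})K_{j}(B,X,U)$ is the average over $z\in B$ of the two bilinear expressions $\widetilde{F}_{X}^{(2)}(U,0;\psi-x\cdot\partial\psi(z),\psi)$ and $\widetilde{F}_{X}^{(2)}(U,0;\psi-x\cdot\partial\psi(z),x\cdot\partial\psi(z))$, where $\psi=P_{U^{+}}\phi+\xi$. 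In particular it is a quadratic polynomial in $\psi$, and, crucially, \emph{one} of the two slots of $\widetilde{F}_{X}^{(2)}(U,0;\cdot,\cdot)$ always carries the ``tangent-difference'' structure $(\cdot)-x\cdot\partial(\cdot)(z)$. Since a quadratic has vanishing derivatives of order $\ge3$, the $T_{\phi}(\Phi_{j+1}(U))$ norm involves only the $n=0,1,2$ derivatives; differentiating by the product rule merely replaces a $\psi$ by a test function $g=P_{U^{+}}f+\lambda\xi$ and preserves the tangent-difference structure, so every term produced is again of the form $\widetilde{F}_{X}^{(2)}(U,0;a,b)$ with $a$ of tangent-difference type and $b$ plain, each of $a,b$ being $\psi$, $x\cdot\partial\psi(z)$, a test function, or the tangent-difference of such.

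Then I would bound each such term by multilinearity,
\[
\bigl|\widetilde{F}_{X}^{(2)}(U,0;a,b)\bigr|\le\Vert a\Vert_{\widetilde{\Phi}_{j}(\dot X,U^{+})}\,\Vert b\Vert_{\widetilde{\Phi}_{j}(\dot X,U^{+})}\,\bigl\Vert\widetilde{F}_{X}^{(2)}(U,0)\bigr\Vert_{T_{0}^{2}(\widetilde{\Phi}_{j}(\dot X,U^{+}))}\,.
\]
For the first factor, Lemma~\ref{lem:psi-xdpsi} and its verbatim analogue for test functions (same Newton--Leibniz plus \eqref{eq:diff-est-any} argument) give $\Vert a\Vert_{\widetilde{\Phi}_{j}(\dot X,U^{+})}\le O(L^{-\frac d2-1})(1+\Vert\phi\Vert_{\Phi_{j+1}(U)})$. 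For the second factor, the scale relation $\Vert g\Vert_{\widetilde{\Phi}_{j}(\dot X,U^{+})}=L^{-d/2}\Vert g\Vert_{\Phi_{j+1}(\dot X,U^{+})}$ for $g$ harmonic on $U^{+}$, together with \eqref{eq:smallness_xi} for the $\xi$-part and $\mathrm{diam}\,\dot X=O(L^{j})$ for the $x\cdot\partial\psi(z)$ part, gives $\Vert b\Vert_{\widetilde{\Phi}_{j}(\dot X,U^{+})}\le O(L^{-d/2})(1+\Vert\phi\Vert_{\Phi_{j+1}(U)})$. For the third factor I would argue exactly as for \eqref{eq:Ftilde3-1}, but at the base point $\psi=0$ and with the second rather than the third derivative, that
\[
\bigl\Vert\widetilde{F}_{X}^{(2)}(U,0)\bigr\Vert_{T_{0}^{2}(\widetilde{\Phi}_{j}(\dot X,U^{+}))}\le O(1)\,\mathbb{E}\bigl[\Vert K_{j}(X,\phi,\xi)\Vert_{T_{\phi}(\Phi_{j}(X))}\big|(U^{+})^{c}\bigr]\big|_{\psi=0}\le O(1)\,c^{|X|_{j}}\Vert K_{j}(X)\Vert_{j}\,,
\]
using Lemma~\ref{lem:intproperties} and that $G(\ddot X,U^{+})\big|_{\psi=0}=O(1)$ by (\ref{enu:reg1}) of Lemma~\ref{lem:simproperties} and \eqref{eq:smallness_xi}. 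Multiplying, and summing the (at most five) derivative contributions while averaging over $z\in B$, yields
\[
\bigl\Vert(1-\text{Loc})K_{j}(B,X,U)\bigr\Vert_{T_{\phi}(\Phi_{j+1}(U))}\le O(L^{-d-1})\,c^{|X|_{j}}\Vert K_{j}(X)\Vert_{j}\,\bigl(1+\Vert\phi\Vert_{\Phi_{j+1}(U)}\bigr)^{2}\,.
\]

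Next I would absorb the polynomial prefactor into the regulator, in the spirit of Lemma~\ref{lem:integrab_poly}. Applying the mean-value estimate \eqref{eq:average2} to $\partial_{e}P_{U^{+}}\phi$ on balls of radius $\tfrac1{12}L^{j+1}$ centered in $\dot U$, which all lie inside $\ddot U$, gives $\Vert\phi\Vert_{\Phi_{j+1}(U)}^{2}\le O(1)\,h^{-2}\sum_{\ddot U}(\partial P_{U^{+}}\phi)^{2}$; choosing $h$ large depending on $\kappa,L$ (as in Lemma~\ref{lem:integrab_poly}, so that the constant produced is absolute) one gets $(1+\Vert\phi\Vert_{\Phi_{j+1}(U)})^{2}\le q\exp\bigl(\tfrac\kappa2\sum_{\ddot U}(\partial P_{U^{+}}\phi)^{2}\bigr)\le q\,G(\ddot U,U^{+})$ by (\ref{enu:reg2bds}) of Lemma~\ref{lem:simproperties}. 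Hence $\Vert(1-\text{Loc})K_{j}(B,X,U)\Vert_{j+1}\le O(L^{-d-1})c^{|X|_{j}}\Vert K_{j}(X)\Vert_{j}\,q$, and using $\Vert K_{j}(X)\Vert_{j}\le A^{-|X|_{j}}\Vert K_{j}\Vert_{j}$ with $|X|_{j}\ge1$ and $A>c$, together with $\#\{X\in\mathcal{S}_{j}:X\supseteq B\}=O(1)$, $\#\{B:\bar B=U\}=L^{d}$, and the fact that the only nonempty $U$ in \eqref{eq:def-CL3prime} are single $(j+1)$-blocks, one obtains $\Vert\mathcal{L}_{3}^{\prime}K_{j}\Vert_{j+1}=A\sup_{U}\Vert\mathcal{L}_{3}^{\prime}K_{j}(U)\Vert_{j+1}\le O(L^{-1})\,cq\,\Vert K_{j}\Vert_{j}$, which tends to $0$ as $L\to\infty$ since $cq$ is independent of $L$.

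The step I expect to be the main obstacle is this last absorption. The contractive gain comes only from the tangent-difference slot, whereas the \emph{other} slot, and likewise the slot that produces the prefactor $\Vert\phi\Vert_{\Phi_{j+1}(U)}$ through Lemma~\ref{lem:psi-xdpsi}, unavoidably involves the field norm over the \emph{large} set $\dot U$ rather than over $\dot X$, because the discrete derivative estimate in Lemma~\ref{lem:psi-xdpsi} reaches over a ball of radius $O(L^{j+1})$. Consequently Lemma~\ref{lem:integrab_poly}, which is tailored to the small set $\dot X$, does not apply verbatim, and one must instead run the mean-value/Caccioppoli-type argument above to pass from $\sup_{\dot U}|\partial P_{U^{+}}\phi|$ to $\sum_{\ddot U}(\partial P_{U^{+}}\phi)^{2}$, taking care near $\partial\dot U$ that the balls used stay inside $\ddot U$; the remaining combinatorial bookkeeping parallels the proofs of Lemmas~\ref{lem:L2} and \ref{lem:integrab_poly}.
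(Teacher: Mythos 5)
Your proposal is correct and follows essentially the same route as the paper's proof: bound the bilinear form $\widetilde F_X^{(2)}(U,0;\cdot,\cdot)$ by multilinearity, gain $O(L^{-d/2-1})$ from the tangent-difference slot via Lemma~\ref{lem:psi-xdpsi} and $O(L^{-d/2})$ from the other slot via the $\Phi_j\leftrightarrow\Phi_{j+1}$ scale relation, bound $\widetilde F_X^{(2)}(U,0)$ by Lemma~\ref{lem:intproperties} and Lemma~\ref{lem:simproperties}(\ref{enu:reg1}), absorb $(1+\Vert\phi\Vert_{\Phi_{j+1}(U)})^2$ into $G(\ddot U,U^+)$, and sum. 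The one step you flag as a potential obstacle is in fact easier than Lemma~\ref{lem:integrab_poly}: the regulator here lives on the full $\ddot U$, so after the mean-value estimate one needs only the elementary bound $(2+s)^3\le qe^{s^2/2}$ and the lower bound in Lemma~\ref{lem:simproperties}(\ref{enu:reg2bds}) with $\psi_2=P_{U^+}\phi$, with no Caccioppoli-type comparison required.
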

\begin{proof}
In view of the definition \eqref{eq:1-loc} of $(1-\text{Loc})K_{j}$, we let
\begin{equation}
H_{z,X}(U,\phi,\xi)=\tilde{F}_{X}^{(2)}(U,0;\psi-x\cdot\partial\psi(z),\psi)
\end{equation}
then with $\tilde{f}:=P_{U^{+}}f+\lambda\xi$,
\begin{equation}
\begin{aligned}
H_{z,X}^{(1)} & (U,\phi,\xi;  (f,\lambda\xi))\\
 & =\tilde{F}_{X}^{(2)}(U,0;\psi-x\cdot\partial\psi(z),\tilde{f})
 + \tilde{F}_{X}^{(2)}(U,0;\tilde{f}-x\cdot\partial\tilde{f}(z),\psi)\;;\\
H_{z,X}^{(2)} & (U,\phi,\xi;  (f_{1},\lambda_{1}\xi),(f_{2},\lambda_{2}\xi))\\
 & =\tilde{F}_{X}^{(2)}(U,0;\tilde{f}_{1}-x\cdot\partial\tilde{f}_{1}(z),\tilde{f}_{2})+\tilde{F}_{X}^{(2)}(U,0;\tilde{f}_{2}-x\cdot\partial\tilde{f}_{2}(z),\tilde{f}_{1})
\end{aligned}
\label{eq:devsofH}
\end{equation}
and $H_{z,X}^{(3)}=0$. In the calculations here, though $z$ is fixed,
$P_{U^{+}}\phi(z)$ should also participate in the differentiations:
$P_{U^{+}}\phi(z)\rightarrow P_{U^{+}}(\phi+tf)(z)$. 

We now bound the all the test functions appeared in \eqref{eq:devsofH}. 
The bound for $\psi-x\cdot\partial\psi(z)$ is given in Lemma~\ref{lem:psi-xdpsi}.
Similarly one can bound $\tilde f-x\cdot\partial\tilde f(z)$
by $O(L^{-\frac{d}{2}-1})\big\Vert(f,\lambda\xi)\big\Vert_{\Phi_{j+1}(U)}$.
Since $\left\Vert -\right\Vert _{\Phi_{j}(U)}
\leq L^{-d/2}\left\Vert -\right\Vert _{\Phi_{j+1}(U)}$
we also have estimates 
\begin{equation}
\begin{aligned}
\| \psi \| _{\Phi_{j}(\dot{X},U^{+})} 
& \leq O(L^{-d/2})\big(\| \phi \|_{\Phi_{j+1}(U)}+1\big)\;;\\
\| \tilde f \|_{\Phi_{j}(\dot{X},U^{+})} 
& \leq O(L^{-d/2}) \| (f,\lambda \xi) \|_{\Phi_{j+1}(U)} \;.
\end{aligned}
\label{eq:psiandPUf}
\end{equation}
Therefore we obtain the bound
\begin{equation}
\begin{aligned}\bigg|H_{z,X}^{(n)}(U,\phi,\xi; & (f,\lambda\xi)^{\times n})\bigg|\leq O(L^{-d-1})\big\Vert\tilde{F}_{X}^{(2)}(U,0)\big\Vert_{T_{0}^{2}(\tilde{\Phi}_{j}(\dot{X},U^{+}))}\\
 & \cdot\left(\left\Vert \phi\right\Vert _{\Phi_{j+1}(U)}+1\right)^{2-n}\prod_{i=1}^{n}\big\Vert(f_{i},\lambda_{i}\xi)\big\Vert_{\Phi_{j+1}(U)} \;.
\end{aligned}
\end{equation}
By the same arguments as (\ref{eq:existsq-1}) and Lemma \ref{lem:simproperties}(5), one can bound $(1+\left\Vert \phi\right\Vert _{\Phi_{j+1}(U)})^{2}$
by $G(\ddot{U},U^{+})$. Therefore,
\begin{equation}
\big\Vert  H_{z,X}(U,\phi,\xi)  \big\Vert_{T_{\phi}(\Phi_{j+1}(U))}
  \leq O(L^{-d-1})\big\Vert\widetilde{F}_{X}^{(2)}(U,0)
  	\big\Vert_{T_{\phi}^{2}(\widetilde{\Phi}_{j}(\dot{X},U^{+}))}G(\ddot{U},U^{+}) \;.
\end{equation}
By 
Lemma \ref{lem:intproperties} followed by
Lemma \ref{lem:simproperties}(\ref{enu:reg1}), together with $X\in\mathcal{S}_{j}$
\begin{equation}
\begin{aligned}\big\Vert\tilde{F}_{X}^{(2)}(U,0) & \big\Vert_{T_{\phi}^{2}(\tilde{\Phi}_{j}(\dot{X},U^{+}))}\leq\mathbb{E}\left[\left\Vert K_{j}(X,\phi,\xi=0)\right\Vert _{T_{\phi}(\Phi_{j}(\dot{X},U^{+}))}\big|\phi_{(U^{+})^{c}}=0\right]\\
 & \leq
 \mathbb{E}\left[\left\Vert K_{j}(X)\right\Vert _{j}G(\ddot{X},X^{+})\big|\phi_{(U^{+})^{c}}=0\right]\leq\left\Vert K_{j}(X)\right\Vert _{j}c^{|X|_{j}}\\
 & \leq O(1)A^{-1}\left\Vert K_{j}\right\Vert _{j} \;.
\end{aligned}
\end{equation}
Combining the above inequalities, we obtain
\[
\left\Vert H_{z,X}(U)\right\Vert _{j+1}\leq O(L^{-d-1})A^{-1}\left\Vert K\right\Vert _{j} \;.
\]
It can be shown analogously that the other term on the right side of (\ref{eq:1-loc}) satisfies the same bound.
Finally, the sum over $B$ and $X$ in \eqref{eq:def-CL3prime} gives a factor $O(L^d)$, so one has
\begin{equation}
\left\Vert \mathcal{L}_{3}^{\prime}K(U)\right\Vert _{j+1}
\leq 
O(L^{-1})A^{-1}\left\Vert K\right\Vert _{j} \;.
\end{equation}
Since $\mathcal{L}_{3}^{\prime}K_{j}(U)=0$ unless $U$ is a block,
$\left\Vert \mathcal{L}_{3}^{\prime}K_{j}\right\Vert _{j+1}\leq O(L^{-1})\left\Vert K\right\Vert _{j}$.
\end{proof}
Now we turn to $\text{Loc}\,K_{j}$. We observe that the coefficient of $\partial_{\mu}\psi(z)\partial_{\nu}\psi(z)$
is 
\begin{equation} \label{eq:Bdependent_alpha}
\alpha_{\mu\nu}(B):=\frac{1}{8|B|}\sum_{X\in\mathcal{S}_{j},X\supseteq B}\partial_{t_{1}t_{2}}^{2}\bigg|_{t_{i}=0}\mathbb{E}_{\zeta}\left[K_{j}(X,t_{1}x_{\mu}+t_{2}x_{\nu}+\zeta)\right] \;.
\end{equation}
Note that each summand
above is just derivative of $\mathbb{E}_{\zeta} K_j (X)$ at zero field with test functions $x_\mu$ and $x_\nu$. Since
 $\| x_{\mu}\|_{\Phi_{j}(X)}\leq h^{-1}L^{dj/2}$ (for this one needs the fact that the Poisson kernel in the definition of $\Phi_{j}$ norm acting on $x_\mu$ still gives $x_\mu$),
we have 
\begin{equation}\label{eq:bound-alphamn}
|\alpha_{\mu\nu}(B)|\leq O(1)h^{-2}\| K_{j}\|_{j}A^{-1} \;.
\end{equation}

Note that for a fixed $D\in\mathcal{B}_{j+1}$, and for all $\bar{B}=D$,
$\alpha_{\mu\nu}(B)$ depends on the position of $B$ in $D$ because
$\zeta$ is not translation invariant. This problem was not present
in the method \cite{brydges_lectures_2007}. We cure this problem
by the following lemma.
\begin{lem}
\label{lem:alpha_nontransinv}
Let $D\in\mathcal{B}_{j+1}$, and let
$B_{ct}\in\mathcal{B}_{j}$ be the $j$-block at the center of $D$.
Then with definition (\ref{eq:Bdependent_alpha}), 
\begin{equation}
\left|\alpha_{\mu\nu}(B)-\alpha_{\mu\nu}(B_{ct})\right|\leq O(L^{-d})h^{-4}\| K_{j}\|_{j}A^{-1}
\end{equation}
for all $B\in\mathcal{B}_{j}$ such that $\bar{B}=D$.\end{lem}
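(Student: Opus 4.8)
The plan is to reduce Lemma~\ref{lem:alpha_nontransinv} to a single change‑of‑covariance estimate, using translation invariance on the torus. Write $D$ for the $(j+1)$-block with $\bar B=D$; the Gaussian field $\zeta$ in \eqref{eq:Bdependent_alpha} is then the Dirichlet Gaussian field on $D^{+}$, and, using the special structure \eqref{eq:depend-via} together with $P_{X^{+}}x_{\mu}=x_{\mu}$, each summand in \eqref{eq:Bdependent_alpha} equals $\mathbb{E}_{\zeta}\big[\widetilde{K}_{j}^{(2)}(X,P_{X^{+}}\zeta;x_{\mu},x_{\nu})\big]$. First I would introduce the ``torus version'' $\alpha_{\mu\nu}^{\Lambda}(B)$, defined by the same formula but with $\zeta$ replaced by a Gaussian field of the periodic covariance $C_{\Lambda}$. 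Since the RG construction (at $\xi=0$) is translation covariant and $\widetilde{K}_{j}$ is invariant under adding a constant to the field — a property of $K_{0}$ preserved along the iteration, so that replacing $x_{\mu}$ by $x_{\mu}+\mathrm{const}$ is harmless — the sum over $X\in\mathcal{S}_{j}$ with $X\supseteq B$, together with the $\tfrac1{|B|}$ prefactor, makes $\alpha_{\mu\nu}^{\Lambda}(B)$ independent of the position of $B$ inside $D$, i.e.\ $\alpha_{\mu\nu}^{\Lambda}(B)=\alpha_{\mu\nu}^{\Lambda}(B_{ct})$. Hence
\[
\alpha_{\mu\nu}(B)-\alpha_{\mu\nu}(B_{ct})
=\big(\alpha_{\mu\nu}(B)-\alpha_{\mu\nu}^{\Lambda}(B)\big)-\big(\alpha_{\mu\nu}(B_{ct})-\alpha_{\mu\nu}^{\Lambda}(B_{ct})\big),
\]
and it suffices to bound $\alpha_{\mu\nu}(B)-\alpha_{\mu\nu}^{\Lambda}(B)$ by $O(L^{-d})h^{-4}\Vert K_{j}\Vert_{j}A^{-1}$ \emph{uniformly} over all $B$ with $\bar B=D$.

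For each small $X\supseteq B$, the quantity $\mathbb{E}_{\zeta}\big[\widetilde{K}_{j}^{(2)}(X,P_{X^{+}}\zeta;x_{\mu},x_{\nu})\big]$ changes, when the covariance of $\zeta$ is moved from the Dirichlet Green's function $C_{D^{+}}^{D}$ to $C_{\Lambda}$, by an amount governed by the difference kernel $\dot C:=C_{\Lambda}-C_{D^{+}}^{D}=P_{D^{+}}C_{\Lambda}\ge0$. I would interpolate, $C_{s}:=(1-s)C_{D^{+}}^{D}+sC_{\Lambda}$ (each positive definite), and use the fundamental theorem of calculus together with the standard formula for the derivative of a Gaussian expectation in its covariance (equivalently, write the torus field as $\zeta$ plus an independent Gaussian of covariance $\dot C$ and Taylor expand, the first order term vanishing by symmetry), obtaining $\alpha_{\mu\nu}(B)-\alpha_{\mu\nu}^{\Lambda}(B)$ as $-\tfrac1{8|B|}\sum_{X\supseteq B}\int_{0}^{1}\tfrac12\mathbb{E}_{N(0,C_{s})}\big[\sum_{x,y}\dot C(x,y)\,\partial^{2}_{\zeta(x)\zeta(y)}\widetilde{K}_{j}^{(2)}(X,P_{X^{+}}\zeta;x_{\mu},x_{\nu})\big]\,ds$. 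Because of $P_{X^{+}}$ and the locality of $\widetilde{K}_{j}$, the kernel that effectively acts is $P_{X^{+}}\dot C P_{X^{+}}^{\star}$, which on $X^{+}\times X^{+}$ coincides with $\dot C$ and is harmonic there in each of its two variables.

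The key point is that this effective kernel is smaller by a factor $L^{-d}$ than the generic Proposition~\ref{prop:covest} scaling. Since $X$ is small and $B\subseteq D$, the polymer $X$ lies within $O(L^{j})$ of $D$, hence at distance $\ge\tfrac14 L^{j+1}$ from $\partial D^{+}$ — uniformly in $B$. For $x,y$ in a neighbourhood of such an $X$, $\dot C(x,y)=\sum_{z\in\partial D^{+}}P_{D^{+}}(x,z)C_{\Lambda}(z,y)$ with $|z-y|\gtrsim L^{j+1}$, so $|\dot C(x,y)|\le O(L^{-(d-2)(j+1)})$; and $\dot C$ being harmonic in $x$ and in $y$ on balls of radius $\sim L^{j+1}$, iterating \eqref{eq:diff-est-any} (with \eqref{eq:diff-est-positive} for the nonnegative Poisson kernels) gives $|\partial_{x,e}\partial_{y,e'}\dot C(x,y)|\le O(L^{-d(j+1)})=O(L^{-d})L^{-dj}$ for all $e,e'\in\mathcal{E}$. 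Inserting this into the norm estimates as in Lemma~\ref{lem:intproperties} and the proof of Lemma~\ref{lem:L2}: the two test functions $x_{\mu},x_{\nu}$ each cost $\Vert x_{\mu}\Vert_{\Phi_{j}(X)}\le h^{-1}L^{dj/2}$, which against the $\tfrac1{|B|}=L^{-dj}$ prefactor reproduces the $h^{-2}$ of \eqref{eq:bound-alphamn}; the double contraction of $\widetilde{K}_{j}^{(4)}$ against $\dot C$, organised by a discrete summation by parts so that both discrete derivatives land on $\dot C$ (using that $\widetilde{K}_{j}^{(4)}$ sees the last two test slots only through their gradients, just as in the first‑claim step of Lemma~\ref{lem:integrab_poly}), contributes the extra $O(L^{-d})h^{-2}$ — this is where $|\partial\partial\dot C|\le O(L^{-d(j+1)})$ enters and produces the whole gain; and $\mathbb{E}_{N(0,C_{s})}\big[\,\Vert \widetilde{K}_{j}^{(4)}(X,P_{X^{+}}\zeta)\Vert\,\big]\le 4!\,\Vert K_{j}\Vert_{j}A^{-|X|_{j}}\,\mathbb{E}_{N(0,C_{s})}[G(\ddot X,X^{+})]\le O(1)\Vert K_{j}\Vert_{j}A^{-1}$ (Lemma~\ref{lem:intproperties}, using $C_{s}\le C_{\Lambda}$) together with the $O(1)$ sum over small $X\supseteq B$ complete the bound $O(L^{-d})h^{-4}\Vert K_{j}\Vert_{j}A^{-1}$.

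I expect the main obstacle to be the third step: establishing the lattice derivative estimates for the boundary‑correction kernel $P_{D^{+}}C_{\Lambda}$ with the right scaling, and — more subtly — extracting the \emph{full} factor $L^{-d}$, which requires arranging the double contraction against $\dot C$ so that both lattice derivatives act on $\dot C$; a crude estimate using only $|\partial_{x}\dot C|\le O(L^{-(d-1)(j+1)})$ loses one power of $L$. The sign‑indefiniteness that normally accompanies a change of covariance does not occur here, since $\dot C=P_{D^{+}}C_{\Lambda}\ge0$; in any case it is handled by the interpolation. Uniformity in $B$ is automatic once the problem is reduced to $\alpha_{\mu\nu}(B)-\alpha_{\mu\nu}^{\Lambda}(B)$, because every small polymer containing $B$ stays deep inside $D^{+}$ no matter where $B$ sits in $D$.
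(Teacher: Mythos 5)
Your proof is correct and uses essentially the same mechanism as the paper: both reduce the lemma to a change-of-covariance estimate via an interpolation $C(t)$ of covariances, control the derivative of the Gaussian expectation through the fourth derivative of $K_j$ (hence the $h^{-4}$ and $\Vert K_j\Vert_j A^{-1}$), and extract the decisive factor $L^{-d}$ from the bound $|\nabla\nabla\dot C|\le O(L^{-d(j+1)})$, which in turn rests on the small polymer $X\supseteq B$ sitting at distance $O(L^{j+1})$ from $\partial D^{+}$. The only structural difference is the intermediate you pass through: you introduce the periodic-covariance version $\alpha^{\Lambda}_{\mu\nu}$ and argue it is $B$-independent by translation covariance of the RG and constant-shift invariance of $K_j$, whereas the paper translates directly, rewriting $\alpha_{\mu\nu}(B)$ with the Dirichlet field on $TD^{+}$ so that the interpolation runs between $C_{D^{+}}$ and $C_{TD^{+}}$ (with $C_{\mathbb Z^d}$ as the reference in the derivative bound). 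Both routes rely on the same covariance properties of $K_j$ that you make explicit and the paper uses implicitly, so the variation is cosmetic rather than a different method.
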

\begin{proof}
Let $T$ be a translation so that $TB=B_{ct}$, and $\zeta_{D^{+}},\zeta_{TD^{+}}$
be Gaussian fields on $D^{+},TD^{+}$ with Dirichlet Green's functions
$C_{D^{+}},C_{TD^{+}}$ as covariances respectively. 
Then $\alpha_{\mu\nu}(B)$ 
can be rewritten as the right side of \eqref{eq:Bdependent_alpha}
with $B$ replaced by $B_{ct}$ and $\zeta=\zeta_{D^+}$ replaced by $\zeta=\zeta_{TD^+}$, so that
\begin{equation}
\begin{aligned}
\big| &\alpha_{\mu\nu} (B)  -\alpha_{\mu\nu}(B_{ct})\big|\\
 & \leq
 \frac{1}{8|B_{ct}|}\sum_{X\in\mathcal{S}_{j},X\supseteq B_{ct}}
 \bigg|\partial_{t_{1}t_{2}}^{2}\big|_{t_{i}=0}
 \Big(\mathbb{E}_{\zeta_{TD^{+}}}\left[K_{j}(X,t_{1}x_{\mu}+t_{2}x_{\nu}+\zeta_{TD^{+}})\right]\\
 & \qquad\qquad
  \:-\mathbb{E}_{\zeta_{D^{+}}}\left[K_{j}(X,t_{1}x_{\mu}+t_{2}x_{\nu}+\zeta_{D^{+}})\right]\Big)
  \bigg| \;.
\end{aligned}
\end{equation}
To estimate the difference of the two expectations, define
\[
C(t):=tC_{D^{+}}+(1-t)C_{TD^{+}}
\]
and recall that $K_{j}$ depends on $\zeta$ via $\nabla\zeta$, let
\[
\mathcal{K}(\nabla\zeta):=K_{j}(X,t_{1}x_{\mu}+t_{2}x_{\nu}+\zeta)  \;.
\]
Then, one has the formula
\[
\mathbb{E}_{\nabla^{2}C(1)}\mathcal{K}-\mathbb{E}_{\nabla^{2}C(0)}\mathcal{K}
 =\int_{0}^{1}\frac{d}{dt}\mathbb{E}_{\nabla^{2}C(t)}\mathcal{K}dt
 =\frac{1}{2}\int_{0}^{1}\mathbb{E}_{\nabla^{2}C(t)}\left[\Delta_{\nabla^{2}\dot{C}(t)}\mathcal{K}\right]dt
\]
where for any covariance $C$ (in our case $C=\nabla^{2}\dot C(t)$) the Laplacian is defined as 
\[
\Delta_{C}:=\sum_{x,y}C(x,y)\frac{\delta}{\delta\phi(x)}\frac{\delta}{\delta\phi(y)} \;.
\]

Now we aim to show a pointwise bound for $\nabla^{2}\dot{C}(t)=\nabla^{2}C_{D^{+}}-\nabla^{2}C_{TD^{+}}$.
One has
\[
 \nabla^{2}C_{\mathbb Z^d}  (x,y)  -\nabla^{2}C_{D^{+}}(x,y)
 = \nabla^{2}P_{D^{+}}C_{\mathbb Z^d}(x,y)
\]
Observe that $x,y$ have distance of $O(L^{j+1})$ from $\partial D^{+}$,
because $\mathcal K$ only depends on the field on $\partial X^+$.
We can proceed as the arguments following (\ref{eq:alpha12-1}) in
proof of Lemma \ref{lem:derbnd}, or the arguments following (\ref{eq:intprop_trace})
in proof of Lemma \ref{lem:intproperties}, to show that $\nabla^{2}P_{D^{+}}C_{\mathbb Z^d}(x,y)$ is
bounded by $O(L^{-d(j+1)})$. 
Analogously, $\nabla^{2}C_{\mathbb Z^d}   -\nabla^{2}C_{TD^{+}}$
satisfies the same bound. Therefore 
\begin{equation}\label{eq:nabla2Cdot}
|\nabla^{2}\dot{C}(t)|\leq O(L^{-d(j+1)}).
\end{equation}

Our situation is that we would like to bound the fourth derivative of $K_j$ by
$\|K_j\|_j$. This is the reason we incorporated the fourth derivative in the definition
of $\|K_j\|_j$, see \eqref{eq:allders}.
Note that $\partial/\partial\phi(x_0)$ acting on 
$\mathcal K$ is equivalent  with
\[
\partial_s |_{s=0 } K_j (X,t_1 x_\mu+t_2 x_\nu+\zeta+s\delta_{x_0})
\]
where $\delta_{x_0}$ is the Kronecker function at $x_0$.
In fact, we have $\| \delta_{x_0}\|_{\Phi_{j}(X)}\leq h^{-1}L^{-dj/2}$
because the $\partial_e P_{X^+}$ in the definition of $\Phi_{j}(X)$ norm 
acting on $\delta_{x_0}$ gives a factor $O(L^{-dj})$.
Proceeding as in \eqref{eq:bound-alphamn}, we have  $\| x_{\mu}\|_{\Phi_{j}(X)}\leq h^{-1}L^{dj/2}$, and $|B_{ct}|^{-1}=O(L^{-dj})$, and the sum $\sum_{x,y}$ gives a factor $O(L^{2dj})$. Combining these with \eqref{eq:nabla2Cdot}, we then
obtain the desired bound.
\end{proof}

Let $D\in\mathcal{B}_{j+1}$. Define $\alpha_{\mu\nu}:=\alpha_{\mu\nu}(B_{ct})$
where $B_{ct}\in\mathcal{B}_{j}$ is at the center of $D$. Clearly
it's well defined (independent of $D$). By reflection and rotation
symmetries, there exists an $\alpha$ so that $\alpha_{\mu\nu}=\frac{1}{2}\alpha(\delta_{\mu\nu}+\delta_{\mu,-\nu})$.

\begin{lem}
\label{lem:alpha_munu}
Let $\psi=P_{U^{+}}\phi+\xi$ and $L$ be sufficiently large. Then,
\begin{equation}
\mathcal{L}_{3}^{\prime\prime}:=\frac{1}{4}\sum_{\bar{B}=D}\bigg(\sum_{\substack{x\in B,e\in\mathcal{E}}
}\alpha\left(\partial_{e}\psi(x)\right)^{2}-\sum_{\substack{x\in B,e\in\mathcal{E}}
}\alpha_{\mu\nu}\left(\partial_{e}\psi(x)\right)^{2}\bigg)
\end{equation}
is contractive with norm going to zero as $L\to\infty$.
\end{lem}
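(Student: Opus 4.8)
The plan is to recognize $\mathcal{L}_3''$ as a sum over the $L^d$ many $j$-blocks $B\subseteq D$ of a quadratic form in $\partial\psi$ supported on $B$, whose coefficient is the tensor difference $\alpha_{\mu\nu}(B)-\alpha_{\mu\nu}(B_{ct})$, and then to feed in Lemma~\ref{lem:alpha_nontransinv}: that difference is smaller than the individual coefficient \eqref{eq:bound-alphamn} by an extra factor $O(L^{-d})$, and this extra decay is exactly what cancels the $L^d$ blocks and leaves a net contractive factor. First I would note that $\mathcal{L}_3''K_j(U)=0$ unless $U=D$ is a single $(j+1)$-block (as for $\mathcal{L}_3'$ in Lemma~\ref{lem:nonlocal}), and that for such $D$, expanding the symmetrized center coefficient $\alpha_{\mu\nu}=\alpha_{\mu\nu}(B_{ct})=\tfrac12\alpha(\delta_{\mu\nu}+\delta_{\mu,-\nu})$ puts $\mathcal{L}_3''K_j(D)$ in the form $\sum_{\bar B=D}Q_B$ with $Q_B=\sum_{x\in B}\sum_{\mu,\nu}\big(\alpha_{\mu\nu}(B)-\alpha_{\mu\nu}(B_{ct})\big)\,\partial_\mu\psi(x)\partial_\nu\psi(x)$ and $\psi=P_{D^+}\phi+\xi$. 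Since $Q_B$ depends on $\phi$ only via $P_{D^+}\phi+\xi$ and is a quadratic polynomial in $\psi$, its $T_\phi(\Phi_{j+1}(D))$ norm \eqref{eq:allders} reduces to the value at $\phi$ together with the first and second differentials, all higher ones vanishing.

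Next I would bound these three contributions via the definition \eqref{eq:def_Phi}--\eqref{eq:def-Phi-j-X} of the $\Phi_{j+1}$ norm. Every $B$ with $\bar B=D$ satisfies $B\subseteq D\subseteq\dot D$, so on $B$ one has $|\partial_e\psi(x)|\le hL^{-d(j+1)/2}\big(\Vert\phi\Vert_{\Phi_{j+1}(D)}+1\big)$ (the $+1$ absorbing $\partial\xi$ by \eqref{eq:smallness_xi}, \eqref{eq:estimate-psiPhi}) and, for a normalized test function $f$, $|\partial_e P_{D^+}f(x)|\le hL^{-d(j+1)/2}$. Summing over the $O(L^{dj})$ sites of $B$ and the $2d$ directions produces a factor $O(L^{-d})h^2$, and multiplying by the coefficient bound $O(L^{-d})h^{-4}\Vert K_j\Vert_j A^{-1}$ from Lemma~\ref{lem:alpha_nontransinv} shows that each $Q_B$ contributes $O(L^{-2d})h^{-2}A^{-1}\Vert K_j\Vert_j\big(1+\Vert\phi\Vert_{\Phi_{j+1}(D)}\big)^2$ to the $T_\phi$ norm. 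Summing over the $L^d$ blocks $B\subseteq D$,
\[
\big\Vert\mathcal{L}_3''K_j(D)\big\Vert_{T_\phi(\Phi_{j+1}(D))}\le O(L^{-d})\,h^{-2}A^{-1}\Vert K_j\Vert_j\big(1+\Vert\phi\Vert_{\Phi_{j+1}(D)}\big)^2 .
\]

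Finally I would dominate $\big(1+\Vert\phi\Vert_{\Phi_{j+1}(D)}\big)^2\le C\,G(\ddot D,D^+)$ with an $L$-independent $C$ — exactly the regulator estimate already used in the proofs of Lemmas~\ref{lem:integrab_poly} and \ref{lem:nonlocal}, built from an inequality of type $(1+s)^2\le 2e^{\kappa s^2}$ as in \eqref{eq:existsq-1}, the mean-value bound of Lemma~\ref{lem:derbnd} relating $\sup_{\dot D}|\partial\psi|^2$ to $\sum_{\ddot D}(\partial\psi)^2$, and the lower bound in Lemma~\ref{lem:simproperties}(\ref{enu:reg2bds}) applied with $\psi=P_{D^+}\phi$ — and then divide by $G(\ddot D,D^+)$ and multiply by $A^{|D|_{j+1}}=A$. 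The factor $A$ cancels the $A^{-1}$, whence $\Vert\mathcal{L}_3''K_j\Vert_{j+1}\le O(L^{-d})h^{-2}\Vert K_j\Vert_j$; since the $O(1)$ here does not grow with $L$ and $h^{-2}\le1$, the operator norm satisfies $\Vert\mathcal{L}_3''\Vert=O(L^{-d})\to0$ as $L\to\infty$.

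The only point requiring genuine care is the power counting: the entropy factor $L^d$ from summing over the $j$-blocks in $D$ must be absorbed by the extra $O(L^{-d})$ supplied by Lemma~\ref{lem:alpha_nontransinv} — the cruder bound \eqref{eq:bound-alphamn} on $\alpha_{\mu\nu}(B)$ alone would only give $\Vert\mathcal{L}_3''\Vert\le O(1)$ (contractive in $A$ but not in $L$) — so it is essential to compare $\alpha_{\mu\nu}(B)$ against the value at the center block rather than against $0$; everything else is a mechanical repetition of the regulator bookkeeping already performed for $\mathcal{L}_2$ and $\mathcal{L}_3'$.
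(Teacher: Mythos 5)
Your proof is off-target: you have misread what $\mathcal{L}_3''$ is. In the lemma both coefficients $\alpha$ and $\alpha_{\mu\nu}$ are the $B$-\emph{independent} quantities fixed in the paragraph just above the statement --- $\alpha_{\mu\nu}:=\alpha_{\mu\nu}(B_{ct})$ and $\alpha$ the scalar for which $\alpha_{\mu\nu}=\tfrac12\alpha(\delta_{\mu\nu}+\delta_{\mu,-\nu})$ --- and the quantity being bounded is the discrepancy between the diagonal Dirichlet form $\alpha\sum_{e}(\partial_e\psi)^2$ and the full tensor form $\sum_{\mu,\nu}\alpha_{\mu\nu}\,\partial_\mu\psi\,\partial_\nu\psi$ at that single, fixed coefficient. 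No difference $\alpha_{\mu\nu}(B)-\alpha_{\mu\nu}(B_{ct})$ is hiding inside $\mathcal{L}_3''$, so Lemma~\ref{lem:alpha_nontransinv} and the extra $O(L^{-d})$ it supplies are simply not available as the source of smallness here. That is exactly why the paper's proof is the citation to Lemma~10 of \cite{dimock_infinite_2009}: the contractivity of $\mathcal{L}_3''$ is a lattice-symmetry / summation-by-parts phenomenon --- the cross terms $\partial_\mu\psi(x)\,\partial_{-\mu}\psi(x)$ differ from $-(\partial_\mu\psi(x))^2$ by a first difference times a second difference of $\psi$, and that is made small using harmonicity of $P_{D^+}\phi$, derivative estimates of the type \eqref{eq:diff-est-any}, and boundary-versus-bulk counting --- and none of that machinery appears in your argument. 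With your reading the cruder bound \eqref{eq:bound-alphamn} is all one has for the coefficient, which, as you yourself note, gives $\Vert\mathcal{L}_3''\Vert=O(1)$ and no contraction in $L$.

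What you have in fact proven is a different --- and genuinely needed --- intermediate step: that the operator taking $K_j$ to $\sum_{\bar B=D}\sum_{x\in B}\sum_{\mu,\nu}\bigl(\alpha_{\mu\nu}(B)-\alpha_{\mu\nu}(B_{ct})\bigr)\partial_\mu\psi(x)\partial_\nu\psi(x)$ is contractive with norm $O(L^{-d})$. This is precisely the reduction the paper performs implicitly when it invokes Lemma~\ref{lem:alpha_nontransinv} in the proof of Proposition~\ref{prop:L3}, and your bookkeeping --- the $L^d$ block entropy against the $O(L^{-d})$ gain from Lemma~\ref{lem:alpha_nontransinv}, the regulator bound $(1+\Vert\phi\Vert_{\Phi_{j+1}(D)})^2\le C\,G(\ddot D,D^+)$, and the cancellation of $A^{-1}$ against $A^{|D|_{j+1}}$ --- is correct and is a useful supplement to the text, which leaves that step unstated. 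But it is not a proof of Lemma~\ref{lem:alpha_munu}; for that you would need to reproduce the Dimock summation-by-parts argument.
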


\begin{proof}
This is essentially Lemma 10 of \cite{dimock_infinite_2009}, so the
proof is omitted.
\end{proof}
\begin{prop}\label{prop:L3}
We can choose $E_{j+1}$ and $\sigma_{j+1}$ so that if $L$ be sufficiently
large then $\mathcal{L}_{3}$ in Proposition \ref{prop:The-linearization}
is contractive, with arbitrarily small norm as $L\to \infty$.
\end{prop}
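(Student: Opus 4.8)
The plan is to produce $\sigma_{j+1}$ and $E_{j+1}$ as explicit linear functionals of $(\sigma_j,K_j)$ so that, once substituted into the formula for $\mathcal L_3$ in Proposition~\ref{prop:The-linearization}, every term in $\mathcal L_3$ either cancels or becomes one of the three operators already known to be contractions with norm $\to 0$ as $L\to\infty$: $\mathcal L_3'$ of \eqref{eq:def-CL3prime} (Lemma~\ref{lem:nonlocal}), $\mathcal L_3''$ of Lemma~\ref{lem:alpha_munu}, and the position--dependence error of Lemma~\ref{lem:alpha_nontransinv}. First I would remove the conditional expectation from the $\sigma_j$--term. Writing $\phi=P_{U^+}\phi+\zeta$ with $\zeta$ the Dirichlet Gaussian field on $U^+$ and using $P_{B^+}P_{U^+}=P_{U^+}$ (valid since $B^+\subset U^+$ when $\bar B=U$, for $L$ large), so that $\partial_eP_{B^+}\phi(x)=\partial_e\psi(x)+\partial_eP_{B^+}\zeta(x)$ on $B$ with $\psi:=P_{U^+}\phi+\xi$, one expands the square and uses $\mathbb E_\zeta[\partial_eP_{B^+}\zeta(x)]=0$ to get
\[
\mathbb E\big[(\partial_eP_{B^+}\phi(x)+\partial_e\xi(x))^2\,\big|\,(U^+)^c\big]=(\partial_e\psi(x))^2+\mathbb E_\zeta\big[(\partial_eP_{B^+}\zeta(x))^2\big],
\]
whose last summand is a $\phi,\xi$-independent constant (it is $O(L^{-dj})$ by Proposition~\ref{prop:covest}, but only its constancy matters here).

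Next I would analyse the $Tay$--term. By the special structure \eqref{eq:depend-via}, $\mathbb E[K_j(X)\mid(U^+)^c]$ is a function of $\psi$ alone; moreover $K_j$ is even under $(\phi,\xi)\mapsto(-\phi,-\xi)$ — this holds for $K_0$ and is preserved along the RG iteration, because every factor in \eqref{eq:Ksharp}--\eqref{eq:def-KnextU} is even and conditional expectations, being symmetric Gaussian integrations, preserve evenness — so the first-order Taylor coefficient vanishes and $Tay\,\mathbb E[K_j(X)\mid(U^+)^c]$ equals a constant, namely $\mathbb E_\zeta[K_j(X,\zeta,0)]$ (using $P_{X^+}P_{U^+}=P_{U^+}$ again), plus the quadratic form $\tfrac12\widetilde F_X^{(2)}(U,0;\psi,\psi)$. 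I then split the quadratic form as in \eqref{eq:1-loc}, $\tfrac12\widetilde F_X^{(2)}(U,0;\psi,\psi)=\text{Loc}\,K_j(B,X,U)+(1-\text{Loc})K_j(B,X,U)$; summing the $(1-\text{Loc})$ pieces over $B$ with $\bar B=U$ and over $X\supseteq B$ produces $\mathcal L_3'$ of \eqref{eq:def-CL3prime}, contractive with norm $\to 0$ by Lemma~\ref{lem:nonlocal} (which in turn rests on the gain $O(L^{-d/2-1})$ of Lemma~\ref{lem:psi-xdpsi}).

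It remains to handle $\text{Loc}\,K_j$. After the summations it is the quadratic form $\sum_{x\in D}\sum_{\mu,\nu}\alpha_{\mu\nu}(B)\,\partial_\mu\psi(x)\partial_\nu\psi(x)$ with $\alpha_{\mu\nu}(B)$ as in \eqref{eq:Bdependent_alpha}, the coefficient depending on the position of the $j$-block $B$ inside $D=\bar B$ because $\zeta$ is not translation invariant. I would first replace $\alpha_{\mu\nu}(B)$ by the symmetrised coefficient $\alpha_{\mu\nu}:=\alpha_{\mu\nu}(B_{ct})$ read off at the centre $j$-block $B_{ct}$ of $D$; the resulting error, summed over the $L^d$ blocks of $D$, with the accompanying quadratic form bounded by the $\Phi_{j+1}$-norm and absorbed into $G(\ddot U,U^+)$ by Lemmas~\ref{lem:integrab_poly} and \ref{lem:simproperties}, is contractive with small norm thanks to Lemma~\ref{lem:alpha_nontransinv} (using also that $A$ is taken large depending on $L$). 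By reflection and rotation invariance $\alpha_{\mu\nu}=\tfrac12\alpha(\delta_{\mu\nu}+\delta_{\mu,-\nu})$, and the discrepancy between $\sum_{x\in D,\mu,\nu}\alpha_{\mu\nu}\partial_\mu\psi(x)\partial_\nu\psi(x)$ and the isotropic form $\tfrac{\alpha}{4}\sum_{x\in D,e}(\partial_e\psi(x))^2$ is exactly $\mathcal L_3''$ of Lemma~\ref{lem:alpha_munu}, again contractive with small norm.

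Assembling, modulo the three contractive remainders above, $\mathcal L_3(U)$ equals
\[
\begin{aligned}
\sum_{\bar B=U}\Big(
&\tfrac{\sigma_{j+1}-\sigma_j+\alpha}{4}\sum_{x\in B,e}(\partial_e\psi(x))^2+E_{j+1}(B)\\
&\qquad-\tfrac{\sigma_j}{4}\sum_{x\in B,e}\mathbb E_\zeta[(\partial_eP_{B^+}\zeta(x))^2]
+\sum_{X\in\mathcal S_j,\,X\supseteq B}\tfrac1{|X|_j}\mathbb E_\zeta[K_j(X,\zeta,0)]\Big),
\end{aligned}
\]
so I would set $\sigma_{j+1}:=\sigma_j-\alpha$ (a legitimate linear choice, since $\alpha$ is a second derivative of $K_j$ and hence linear in $K_j$) and $E_{j+1}(B):=\tfrac{\sigma_j}{4}\sum_{x\in B,e}\mathbb E_\zeta[(\partial_eP_{B^+}\zeta(x))^2]-\sum_{X\in\mathcal S_j,\,X\supseteq B}\tfrac1{|X|_j}\mathbb E_\zeta[K_j(X,\zeta,0)]$, which makes the displayed sum vanish; then $\mathcal L_3$ is a linear combination of $\mathcal L_3'$, $\mathcal L_3''$ and the position error, so $\|\mathcal L_3\|\to 0$ as $L\to\infty$ by Lemmas~\ref{lem:nonlocal}, \ref{lem:alpha_nontransinv} and \ref{lem:alpha_munu}. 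I expect the main obstacle to be organisational rather than analytic: one must check carefully that with these choices the quadratic and constant parts cancel identically, so that only the three controlled remainders survive. The single genuinely new phenomenon, absent from \cite{brydges_lectures_2007}, is the failure of translation invariance of the fluctuation field $\zeta$, which is what forces the replacement $\alpha_{\mu\nu}(B)\to\alpha_{\mu\nu}(B_{ct})$ and makes Lemma~\ref{lem:alpha_nontransinv} the crux of the argument.
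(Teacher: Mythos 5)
Your proposal follows essentially the same route as the paper's proof: decompose the $\sigma_j$ conditional expectation into the quadratic form in $\psi=P_{D^+}\phi+\xi$ plus the constant $\delta E_j$; decompose $Tay$ into a constant plus a quadratic form; split the quadratic form via $\text{Loc}$ and $(1-\text{Loc})$, with the latter giving $\mathcal L_3'$ of Lemma~\ref{lem:nonlocal}; cure the position-dependence of $\alpha_{\mu\nu}(B)$ with Lemma~\ref{lem:alpha_nontransinv}; isotropize with Lemma~\ref{lem:alpha_munu}; and finally choose $\sigma_{j+1}=\sigma_j-\alpha$ and $E_{j+1}$ so that the remaining explicit terms vanish identically. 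The one point where you go beyond what the paper writes is the observation that the first-order Taylor coefficient vanishes by evenness of $K_j$ in $(\phi,\xi)\mapsto(-\phi,-\xi)$, this parity being present at $j=0$ (since $W$ is a sum of cosines and $V$ is quadratic) and propagated by the RG map because both \eqref{eq:Ksharp} and the symmetric Gaussian conditional integration in \eqref{eq:def-KnextU} preserve it. The paper asserts in the outline that ``$Tay$ \dots consists of constant and quadratic terms'' without giving the reason, so your argument supplies a small but real gap. Otherwise the two proofs coincide (and your formula for $E_{j+1}$ correctly carries the factor $\tfrac{\sigma_j}{4}$ and the sum $\sum_{X\in\mathcal S_j,\,X\supseteq B}\tfrac1{|X|_j}$, which the paper's final display writes in abbreviated form).
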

\begin{proof}
As the first step with $D=\bar{B}\in\mathcal{P}_{j+1}(\Lambda)$,
$\phi=P_{D^{+}}\phi+\zeta$ we compute 
\begin{equation}
\mathbb{E}\bigg[\sum_{x\in B,e\in\mathcal{E}}(\partial_{e}P_{B^{+}}\phi+\partial_{e}\xi(x))^{2}\big|(D^{+})^{c}\bigg]=\sum_{\substack{x\in B}
,e\in\mathcal{E}}(\partial_{e}P_{D^{+}}\phi(x)+\partial_{e}\xi(x))^{2}+\delta E_{j}
\end{equation}
where $\delta E_{j}=\sum_{x\in B,e\in\mathcal{E}}\mathbb{E}_{\zeta}\left[(\partial_{e}P_{B^{+}}\zeta)^{2}\right]=O(1)$
by Proposition~\ref{prop:covest}. 

Let $\psi=P_{D^{+}}\phi+\xi$. By Lemma \ref{lem:nonlocal}, Lemma
\ref{lem:alpha_nontransinv} and Lemma \ref{lem:alpha_munu}, it remains
to show the contractivity of
\begin{equation}
\begin{aligned}
\tilde{\mathcal{L}}_{3}
	= & \sum_{\bar{B}=U}
	\Big[ E_{j+1}(B)
	+\frac{\sigma_{j+1}}{4} \!\!\!\! \sum_{\substack{x\in B},e\in\mathcal{E}} \!\!\!\!(\partial_{e}\psi(x))^{2}
	-\frac{\sigma_{j}}{4}  \Big(\!\!\!\! \sum_{\substack{x\in B,e\in\mathcal{E}} \!\!\!\!
}(\partial_{e}\psi(x))^{2}+\delta E_{j}\Big)\\
 & \qquad\qquad
 +\mathbb{E}_{\zeta}\left[K_{j}(X,\zeta)\right]+\frac{\alpha}{4}\sum_{\substack{x\in B}
,e\in\mathcal{E}}(\partial_{e}\psi(x))^{2}\Big] \;.
\end{aligned}
\end{equation}
where $\alpha$ is given before Lemma~\ref{lem:alpha_munu}.
Choose 
\begin{equation}
\begin{aligned}
\sigma_{j+1} & =\sigma_{j}-\alpha\\
E_{j+1} & =\sigma_{j}\delta E_{j}-\mathbb{E}_{\zeta}\left[K_{j}(X,\zeta)\right]
\end{aligned}
\end{equation}
then we actually have $\tilde{\mathcal{L}}_{3}=0$.
\end{proof}
By the above choice of $E_{j+1}$ we can easily see that it's the
same number for $Z_{N}^{\prime}(\xi)$ and $Z_{N}^{\prime}(0)$. Therefore
$e^{\mathcal{E}_{j}}$ is the same for $Z_{N}^{\prime}(\xi)$ and
$Z_{N}^{\prime}(0)$, for all $j$.

\section{Proof of scaling limit of the generating function\label{sec:Proof-of-scaling}}
\begin{prop}
\label{prop:mainest}
Let $L$ be sufficiently large; $A$ sufficiently
large depending on $L$; $\kappa$ sufficiently small depending on
$L,A$; $h$ sufficiently large depending on $L,A,\kappa$; and $r$
sufficiently small depending on $L,A,\kappa,h$. Then for $|z|<r$
there exists a constant $\sigma$ depending on $z$ so that the dynamic
system 
\begin{equation}
\begin{aligned}\sigma_{j+1} & =\sigma_{j}+\alpha(K_{j})\\
K_{j+1} & =\mathcal{L}K_{j}+f(\sigma_{j},K_{j})
\end{aligned}
\end{equation}
satisfies
\begin{equation}
\left|\sigma_{j}\right|\leq r2^{-j}\qquad\left\Vert K_{j}\right\Vert _{j}\leq r2^{-j}\label{eq:mainest}
\end{equation}
\end{prop}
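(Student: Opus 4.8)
The plan is to run a stable-manifold / contraction-mapping argument in sequence space; the one genuine subtlety is that the coupling $\sigma$ is a \emph{marginal} direction of the linearized map, so it cannot be iterated forward but must be tuned globally. First I would collect the inputs already established. By the contractivity results of Section~\ref{sec:Linearized-RG} (Proposition~\ref{prop:large-sets}, Lemma~\ref{lem:L2}, Proposition~\ref{prop:L3}), once $L$ and then $A$ are chosen large the linear part $\mathcal{L}$ (taken with the choices of $E_{j+1},\sigma_{j+1}$ from Proposition~\ref{prop:L3}, so that the $\mathcal{L}_3$ block is replaced by its contractive remainder) satisfies $\Vert\mathcal{L}K_j\Vert_{j+1}\le\lambda\Vert K_j\Vert_j$ with $\lambda$ as small as we like, in particular $\lambda<\tfrac14$, and with $\lambda$ independent of $j$ and $N$. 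By Proposition~\ref{prop:smoothness} the full RG step $(\sigma_j,K_j)\mapsto K_{j+1}$ is smooth on a small ball; it vanishes at the origin together with its first derivative there (setting $\sigma_j=K_j=0$ forces $I_j=1$ and, by Proposition~\ref{prop:L3}, $E_{j+1}=\sigma_{j+1}=0$, hence $K_{j+1}=0$; and the derivative at $0$ is $\mathcal{L}$ by Proposition~\ref{prop:The-linearization}), so the Taylor remainder $f$ obeys $\Vert f(\sigma,K)\Vert_{j+1}\le C_f(|\sigma|+\Vert K\Vert_j)^2$ with a derivative of size $O(|\sigma|+\Vert K\Vert_j)$ on that ball. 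By \eqref{eq:bound-alphamn} the feedback is a bounded linear functional, $|\alpha(K_j)|\le C_\alpha\Vert K_j\Vert_j$ with $C_\alpha=O(h^{-2}A^{-1})$ as small as we wish. Finally the initial datum $K_0$ of Proposition~\ref{prop:firstprop} satisfies $\Vert K_0\Vert_0=O(z)$, since it is built from $e^{zW}-1$, hence $\Vert K_0\Vert_0\le r$ for $|z|$ small relative to $r$.

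Next I would set up the fixed point. Let $\mathbf{K}=(K_j)_{j\ge0}$ range over the Banach space of sequences with $K_j\in\mathcal{N}^{\mathcal{P}_{j,c}}$ and norm $\Vert\mathbf{K}\Vert_\ast:=\sup_{j\ge0}2^j\Vert K_j\Vert_j$, and let $\mathcal{B}_r$ be the closed ball of radius $r$. To $\mathbf{K}\in\mathcal{B}_r$ I associate the \emph{slaved} coupling $\sigma_j(\mathbf{K}):=-\sum_{i\ge j}\alpha(K_i)$; this converges absolutely with $|\sigma_j(\mathbf{K})|\le 2C_\alpha r\,2^{-j}$, and it is exactly the unique solution of $\sigma_{j+1}=\sigma_j+\alpha(K_j)$ that tends to $0$. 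Then I define $\Phi(\mathbf{K})$ by $\Phi(\mathbf{K})_0:=K_0$ (the fixed RG initial datum) and $\Phi(\mathbf{K})_{j+1}:=\mathcal{L}K_j+f\big(\sigma_j(\mathbf{K}),K_j\big)$. Using $\lambda<\tfrac14$, the bound $|\sigma_j(\mathbf{K})|+\Vert K_j\Vert_j\le(2C_\alpha+1)r\,2^{-j}$, $\Vert K_0\Vert_0\le r$, and $r$ small, one checks $\Phi(\mathcal{B}_r)\subseteq\mathcal{B}_r$; using in addition the Lipschitz bound on $f$ and $|\sigma_j(\mathbf{K})-\sigma_j(\mathbf{K}')|\le 2C_\alpha\Vert\mathbf{K}-\mathbf{K}'\Vert_\ast\,2^{-j}$ one gets $\Vert\Phi(\mathbf{K})-\Phi(\mathbf{K}')\Vert_\ast\le(2\lambda+Cr)\Vert\mathbf{K}-\mathbf{K}'\Vert_\ast<\tfrac12\Vert\mathbf{K}-\mathbf{K}'\Vert_\ast$. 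Banach's theorem then yields a unique fixed point $\mathbf{K}^\star\in\mathcal{B}_r$.

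Finally I would read off the conclusion. Put $\sigma:=\sigma_0(\mathbf{K}^\star)=-\sum_{i\ge0}\alpha(K_i^\star)$ and $\sigma_j:=\sigma_j(\mathbf{K}^\star)$. Since $\Phi(\mathbf{K}^\star)=\mathbf{K}^\star$, the pair $(\sigma_j,K_j^\star)$ satisfies the RG recursion of the proposition with $\sigma_0=\sigma$ and with $K_0^\star$ the correct initial $K$, i.e.\ it \emph{is} the trajectory generated by the RG when the a~priori tuning is set to $\sigma$. The bound $\Vert\mathbf{K}^\star\Vert_\ast\le r$ gives $\Vert K_j\Vert_j\le r\,2^{-j}$, and $|\sigma_j|\le 2C_\alpha r\,2^{-j}\le r\,2^{-j}$ since $2C_\alpha\le1$, which is \eqref{eq:mainest}. (For a fixed finite $N$ one may instead perform the analogous finite fixed point over $0\le j\le N-1$; since all constants are $j$- and $N$-independent, the resulting bounds are uniform in $N$, which is what Section~\ref{sec:Proof-of-scaling} needs.)

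The step I expect to be the main obstacle — conceptually, since all the quantitative work is already done in the preceding sections — is exactly the handling of the marginal direction: one cannot iterate forward from a small $\sigma_0$, because the accumulated feedback $\sum_i\alpha(K_i)$ would push $\sigma_j$ to a generically nonzero limit rather than to $0$; the tuning $\sigma_0$ must equal minus the tail sum of $\alpha$ along the whole, as-yet-unknown, trajectory, and it is this circularity that forces the simultaneous fixed point in sequence space (the ``stable manifold'' content of the statement). A lesser point is the bookkeeping of constants, i.e.\ ensuring that the smallness of $\lambda$ (large $L$, $A$), of $C_\alpha$ (large $h$, $A$), of $r$ and of $|z|$ is compatible with the order $L\to A\to\kappa\to h\to r$ in which they are fixed.
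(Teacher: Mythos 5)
Your proposal takes a genuinely different route from the paper: the paper simply invokes a ready-made stable manifold theorem (Theorem 2.16 of Brydges' lecture notes) and then composes with the implicit function theorem, whereas you re-prove the stable manifold content directly via a Banach fixed point in the sequence space $\Vert\mathbf{K}\Vert_\ast=\sup_j 2^j\Vert K_j\Vert_j$, with the marginal variable slaved by $\sigma_j(\mathbf{K})=-\sum_{i\ge j}\alpha(K_i)$. This is an entirely legitimate and in fact more self-contained argument, and your identification of the marginality of $\sigma$ as the main conceptual point (so that it must be tuned globally rather than iterated forward) is exactly right.

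However, there is a genuine gap in the way you close the argument. You treat $K_0$ as a \emph{fixed} datum: you set $\Phi(\mathbf{K})_0:=K_0$ and, after finding the fixed point $\mathbf{K}^\star$, declare $\sigma:=\sigma_0(\mathbf{K}^\star)$ and assert that the resulting pair is "the trajectory generated by the RG when the a priori tuning is set to $\sigma$." But the initial datum $K_0$ of Proposition~\ref{prop:firstprop} is \emph{not} independent of $\sigma$: it carries a factor $e^{-\frac14\sigma\sum_e(\partial_e\phi+\partial_e\xi)^2}$ and the scaling $(\phi+\xi)/\sqrt{\epsilon}$ with $\epsilon=1/(1+\sigma)$, so $K_0=K_0(z,\sigma)$. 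Your fixed point yields, for the particular input $K_0$ you started from, a stabilizing choice $\sigma_0(\mathbf{K}^\star)$; but there is no reason this $\sigma$ is the same as the one implicit in the $K_0$ you used, and hence the trajectory you construct need not be a trajectory of the actual RG. This is precisely the second circularity that the paper's proof closes with the implicit function theorem: after obtaining $\sigma=h(K_0)$ from the stable manifold theorem, it solves $\sigma-h(K_0(z,\sigma))=0$ using the smoothness of $K_0$ in $(z,\sigma)$ from Lemma~\ref{lem:to_startRG} and the fact that $\partial_\sigma[\sigma-h(K_0(z,\sigma))]=1$ at $(z,\sigma)=0$ since $K_0(0,\sigma)\equiv 0$. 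Your proof can be repaired either by appending exactly this implicit function step, or, more in the spirit of your approach, by building the dependence into the fixed point from the start, i.e.\ defining $\Phi(\mathbf{K})_0:=K_0\bigl(z,\sigma_0(\mathbf{K})\bigr)$ and using the Lipschitz bound on $K_0$ in $\sigma$ (again Lemma~\ref{lem:to_startRG}) to preserve the contraction estimate; as written, neither is done, so the final claim that "$K_0^\star$ is the correct initial $K$" is unjustified.
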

\begin{proof}
By contractivity of $\mathcal{L}$ we apply Theorem 2.16 in \cite{brydges_lectures_2007}
(i.e. the stable manifold theorem) to obtain a smooth function $\sigma=h(K_{0})$
so that (\ref{eq:mainest}) hold. Since $K_{0}$ depends on $z$ and
$\sigma$, we solve $\sigma$ from equation $\sigma-h(K_{0}(z,\sigma))=0$,
using Lemma \ref{lem:to_startRG}. Noting that this equation holds
with $(\sigma,z)=0$, and that $K_{0}(z=0,\sigma)=0$, the derivative
of left hand side w.r.t. $\sigma$ is $1$. So by implicit function
theorem there exists a $\sigma$ depending on $z$ so that $\sigma=h(K_{0}(z,\sigma))$.
Therefore the proposition is proved.
\end{proof}
With the generating function $Z_{N}(f)$ defined in (\ref{eq:scalinglimit-2}),
we have
\begin{thm}
For any $p>d$ there exists constants $M>0$ and $z_{0}>0$ so that
for all $\Vert\tilde{f}\Vert_{L^{p}}\leq M$, and all $\left|z\right|\leq z_{0}$
there exists a constant $\epsilon$ depending on $z$ so that
\[
\lim_{N\rightarrow\infty}Z_{N}(f)=\exp\left(-\frac{1}{2}\int_{\tilde{\Lambda}}\tilde{f}(x)(-\epsilon\bar{\Delta})^{-1}\tilde{f}(x)d^{d}x\right)
\]
where $\bar{\Delta}$ is the Laplacian in continuum.\end{thm}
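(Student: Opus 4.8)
\emph{Proof proposal.} The plan is to run the renormalization group up to the top scale $j=N-1$, where the polymer expansion \eqref{eq:basic_structure} has only finitely many terms (at most $2^{L^d}$), show that it collapses to $e^{\mathcal E_{N-1}}(1+o(1))$, and then read off the limit from the explicit Gaussian prefactor in \eqref{eq:generating_good}; all estimates are to be made uniformly in the mass $m>0$, which is what permits interchanging the $m\to 0$ and $N\to\infty$ limits. Concretely, take $M:=h/R$ and $z_0:=r$, and let $\sigma=\sigma(z)$ — hence $\epsilon=(1+\sigma)^{-1}$ — be the constant produced by Proposition~\ref{prop:mainest}, which is available once $\|\tilde f\|_{L^p}\le M$ (so that \eqref{eq:smallness_xi} holds) and $|z|\le z_0$. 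Iterating the RG step of Subsection~\ref{subsec:RG-steps} from scale $0$ to scale $N-1$ and invoking \eqref{eq:basic_structure} gives
\[
Z_N^\prime(\xi)=e^{\mathcal E_{N-1}}\,\mathbb E\Big[\sum_{X\in\mathcal P_{N-1}}I_{N-1}(\Lambda\backslash\hat X,\phi,\xi)\,K_{N-1}(X,\phi,\xi)\Big],
\]
and the same identity with $\xi=0$ for $Z_N^\prime(0)$; by Proposition~\ref{prop:mainest}, $|\sigma_{N-1}|\le r2^{-(N-1)}$ and $\|K_{N-1}\|_{N-1}\le r2^{-(N-1)}$, and by the remark closing Section~\ref{sec:Linearized-RG} the constant $e^{\mathcal E_{N-1}}$ is the \emph{same} for $Z_N^\prime(\xi)$ and $Z_N^\prime(0)$.

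Next I show the top-scale expansion collapses. An $(N-1)$-polymer is a union of the $L^d$ blocks tiling the torus, so $|\mathcal P_{N-1}|\le 2^{L^d}$, independently of $N$. Isolating $X=\emptyset$: by \eqref{eq:Gaussianint}, the trace estimate of Proposition~\ref{prop:covest} (which, as in the proof of Lemma~\ref{lem:intproperties}, gives $Tr=O(L^{-d(N-1)})\cdot|\Lambda|=O(L^d)$) and the bound $\sum_{x\in\Lambda,e}(\partial_e\xi(x))^2=O(h^2)$ from \eqref{eq:smallness_xi}, one gets $\log\mathbb E[I_{N-1}(\Lambda,\phi,\xi)]=O\!\big(2^{-N}(L^d+h^2)\big)\to 0$. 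For nonempty $X$, bound pointwise $|K_{N-1}(X,\phi,\xi)|\le\|K_{N-1}(X)\|_{N-1}\,G(\ddot X,X^+)$ (the $n=0$ term of the $T_\phi$ norm), and then $\mathbb E[I_{N-1}(\Lambda\backslash\hat X)\,G(\ddot X,X^+)]\le C(d,L,A,\kappa)$ uniformly in $N,m$: indeed $I_{N-1}(\Lambda\backslash\hat X)\le 1$ when $\sigma_{N-1}\ge 0$ and is otherwise a regulator with parameter $|\sigma_{N-1}|<\kappa$, so a Cauchy--Schwarz step together with the regulator integration bound of Lemma~\ref{lem:intproperties} and the properties of $G$ in Lemma~\ref{lem:simproperties} suffice. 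Since $\|K_{N-1}(X)\|_{N-1}\le\|K_{N-1}\|_{N-1}A^{-|X|_{N-1}}\le r2^{-(N-1)}$ and there are at most $2^{L^d}$ terms, the sum over $X\ne\emptyset$ is $O(2^{-N})\to 0$. Hence $Z_N^\prime(\xi)/e^{\mathcal E_{N-1}}\to 1$ and likewise $Z_N^\prime(0)/e^{\mathcal E_{N-1}}\to 1$, uniformly in $m>0$; cancelling $e^{\mathcal E_{N-1}}$, the number $\rho_N:=\lim_{m\to 0}Z_N^\prime(\xi)/Z_N^\prime(0)$ is well defined and $\rho_N\to 1$.

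For the Gaussian prefactor, use $(-\epsilon\Delta_m)^{-1}f=\epsilon^{-1/2}\xi$ together with $f(x)=L^{-(d+2)N/2}\tilde f(L^{-N}x)$ and $L^{-\frac{d-2}{2}N}\tilde\xi_{L^N m}(L^{-N}x)=\xi(x)$ to rewrite
\[
\tfrac12\!\sum_{x\in\Lambda}f(x)(-\epsilon\Delta_m)^{-1}f(x)=\tfrac{1}{2\sqrt\epsilon}\,L^{-dN}\!\sum_{x\in\Lambda}(\tilde f\cdot\tilde\xi_{L^N m})(L^{-N}x).
\]
Letting $m\to 0$ at fixed $N$ (mean-zero of $\tilde f$ is used so that $\tilde\xi_{L^N m}\to\tilde\xi_0:=(-\sqrt\epsilon\tilde\Delta)^{-1}\tilde f$, smooth on the torus), and then $N\to\infty$, the Riemann sum converges, $L^{-dN}\sum_{x\in\Lambda}(\tilde f\tilde\xi_0)(L^{-N}x)\to\int_{\tilde\Lambda}\tilde f\,\tilde\xi_0\,d^dx=\epsilon^{-1/2}\int_{\tilde\Lambda}\tilde f(-\tilde\Delta)^{-1}\tilde f\,d^dx$ (invoking, as usual, convergence of the discretized inverse Laplacian to its continuum counterpart on mean-zero functions), so $\lim_{m\to0}e^{\frac12\sum f(-\epsilon\Delta_m)^{-1}f}=:P_N$ tends to $\exp\!\big(\tfrac1{2\epsilon}\int_{\tilde\Lambda}\tilde f(-\tilde\Delta)^{-1}\tilde f\,d^dx\big)=\exp\!\big(\tfrac12\int_{\tilde\Lambda}\tilde f(-\epsilon\tilde\Delta)^{-1}\tilde f\,d^dx\big)$. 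By \eqref{eq:generating_good} we have $Z_N(f)=P_N\,\rho_N$, so $\lim_{N\to\infty}Z_N(f)=(\lim_N P_N)(\lim_N\rho_N)$ is exactly the asserted expression, with $\epsilon=\epsilon(z)$ as above.

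The only real work beyond Proposition~\ref{prop:mainest} is in the collapse step, and the delicate point there is uniformity in $m$ of the top-scale bound: one must control the interaction between the $I_{N-1}$ factors on $\Lambda\backslash\hat X$ and $K_{N-1}(X)$ near $X$ — they are separated only by the width-$L^{N-1}$ corridor $\hat X\backslash X$ — and verify that the regulator integrations (Lemmas~\ref{lem:intproperties}, \ref{lem:simproperties}) together with $|\sigma_{N-1}|\to 0$ yield a constant independent of both $N$ and $m$; the $X=\emptyset$ term must moreover be shown to tend to $1$, not merely to stay bounded, which requires the explicit $Tr=O(L^d)$ estimate. The lattice-to-continuum limit of the Gaussian prefactor is standard but is the one remaining place where genuine analysis enters.
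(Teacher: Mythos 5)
Your proposal is correct and follows essentially the same approach as the paper's proof: iterate the RG to scale $N-1$, use $|\sigma_{N-1}|,\|K_{N-1}\|_{N-1}=O(2^{-N})$ (Proposition~\ref{prop:mainest}) together with Lemma~\ref{lem:intproperties} to show the polymer sum collapses to $e^{\mathcal E_{N-1}}(1+o(1))$, cancel the identical $e^{\mathcal E_{N-1}}$ between $Z_N'(\xi)$ and $Z_N'(0)$, and read off the continuum Gaussian prefactor from~\eqref{eq:generating_good}. The only differences are cosmetic: you expand the $X=\emptyset$ term via the trace formula where the paper simply states $|\mathbb{E}I_{N-1}^\Lambda-1|\le 2^{-N+1}$, and note that the overall sign of the exponent (positive, per~\eqref{eq:generating_good} and the outline statement~\eqref{eq:main_thm}) appears with a typo in the paper's restatement in Section~\ref{sec:Proof-of-scaling}.
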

\begin{proof}
By (\ref{eq:generating_good}),
\begin{equation} \label{eq:generating_good-1}
Z_{N}(f)=\lim_{m\rightarrow0}e^{\frac{1}{2}\sum_{x\in\Lambda}f(x)(-\epsilon\Delta_{m})^{-1}f(x)}Z_{N}^{\prime}(\xi)\big/Z_{N}^{\prime}(0) \;.
\end{equation} 
In fact, since $\int_{\tilde{\Lambda}}\tilde{f}=0$
\begin{equation}
e^{\frac{1}{2}\sum_{x\in\Lambda}f(x)(\epsilon\Delta_{m})^{-1}f(x)}\rightarrow e^{-\frac{1}{2}\int_{\tilde{\Lambda}}\tilde{f}(x)(-\epsilon\bar{\Delta})^{-1}\tilde{f}(x)d^{d}x}
\end{equation}
as $m\rightarrow0$ followed by $N\rightarrow\infty$. 

At scale $N-1$ (we do not want to continue all the way to the last
step since it would be not clear how to define $\tilde{I}_{N-1}$
and $I_{N}$), by Proposition \ref{prop:mainest} and Lemma \ref{lem:intproperties}
\begin{equation}
\begin{aligned}
& \big| Z_{N}^{\prime}(\xi)  -e^{\mathcal{E}_{N-1}}\big|
= e^{\mathcal{E}_{N-1}}
	\big|\mathbb{E} \big[ I_{N-1}(\Lambda\backslash \hat X) K_{N-1}(X) \big]-1\big|\\
 & \leq 
 e^{\mathcal{E}_{N-1}}\Big[
 	 \sum{}_{X\neq \emptyset}
	(1+2^{-N+1})^{|\Lambda\backslash\hat{X}|_{N-1}} \cdot 2^{-N+1}\mathbb{E}G(\ddot{X},X^{+})
\big|
+\big| \mathbb E I_{N-1}^{\Lambda}-1\big|\Big]\\
 & \leq e^{\mathcal{E}_{N-1}}
 	\Big[ 2^{L^{d}}(1+2^{-N+1})^{L^{d}} \cdot 2^{-N+1}c^{L^{d}}+2^{-N+1}\Big] \;.
\end{aligned}
\end{equation}
where $X\in\mathcal P_{N-1}$.
Since the constant $e^{\mathcal{E}_{N-1}}$ is identical for $Z_{N}^{\prime}(\xi)$
and $Z_{N}^{\prime}(0)$, and $Z_{N}^{\prime}(0)$ satisfies the same
bound above, one has $Z_{N}^{\prime}(\xi)\big/Z_{N}^{\prime}(0)\rightarrow1$.
Therefore the theorem is proved.
\end{proof}

\appendix

\section{Decay of Green's functions and Poisson kernels}

The decay rates of (derivatives of) Green's functions and their derivatives are essential
in our method. 
In this section we aim to show Corollary~\ref{cor:Cdecay} on the torus.

First of all consider the Green's function of $-\Delta_{m}=-\Delta+m^{2}$
on $\mathbb{Z}^{d}$. If $d\geq3$, let $G_{m}=(-\Delta_{m})^{-1}$.
If $d=2$ let $G_{m}(x)=(-\Delta_{m})^{-1}(x)-(-\Delta_{m})^{-1}(0)$
for $m>0$ and from \cite{lawler_intersections_1991} we know that
$\lim_{m\rightarrow0}G_{m}(x)$ exists. Write $G=G_{m=0}$.

\begin{lem}
\label{lem:GreenZd_decay}
Let $\bar{G}(x)=a_{d}|x|^{2-d}$ if $d\geq3$
and $\bar{G}(x)=a_{d}\log|x|$ if $d=2$.
 Let $k=\frac{2\gamma+\log8}{\pi}$ if $d=2$ where $\gamma$
is Euler's constant and $k=0$ if $d\geq3$. 
Then there exists a constant  $a_{d}$ which only depends
on $d$, such that
as $|x|\rightarrow\infty$,
\begin{equation}
G(x)=\bar{G}(x)+k+O(|x|^{-d}) \;.
\end{equation}
Furthermore, for all $e\in\mathcal{E}$
\begin{equation}
\partial_{e}G(x)=\partial_{e}\bar{G}(x)+O(|x|^{-(d+1)})\label{eq:sharpererror}
\end{equation}
where \textup{$\partial_{e}\bar{G}(x)$ is also discrete derivative.}\end{lem}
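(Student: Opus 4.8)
The plan is to work from the plane-wave representation of the lattice Green's function and compare it with its continuum counterpart. For $d\ge 3$ one has
\[
G(x)=\frac{1}{(2\pi)^d}\int_{[-\pi,\pi]^d}\frac{e^{ix\cdot\theta}}{\omega(\theta)}\,d^{d}\theta ,
\qquad \omega(\theta):=2\sum_{j=1}^{d}\big(1-\cos\theta_j\big),
\]
since $\omega$ is precisely the symbol of $-\Delta$; for $d=2$ one works instead with the symbol $\big(m^2+\omega(\theta)\big)^{-1}$, subtracts its value at $x=0$ as in the definition of $G_m$, and lets $m\downarrow 0$ at the end (existence of that limit is exactly the content of \cite{lawler_intersections_1991}). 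The only singularity of the integrand sits at $\theta=0$, where $\omega(\theta)=|\theta|^2-\tfrac1{12}\sum_j\theta_j^4+O(|\theta|^6)$, so $\omega(\theta)^{-1}=|\theta|^{-2}+\big(\text{bounded}\big)$ near the origin.

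Next I would peel off the main term. Fix $\chi\in C_c^\infty\big((-\pi,\pi)^d\big)$ with $\chi\equiv 1$ near $0$ and split
\[
G(x)=\frac{1}{(2\pi)^d}\int_{\mathbb{R}^d}\frac{\chi(\theta)}{|\theta|^2}e^{ix\cdot\theta}\,d^{d}\theta
+\frac{1}{(2\pi)^d}\int_{[-\pi,\pi]^d}\Big(\frac{1}{\omega(\theta)}-\frac{\chi(\theta)}{|\theta|^2}\Big)e^{ix\cdot\theta}\,d^{d}\theta .
\]
The first integral differs from the Riesz potential of $|\theta|^{-2}$ only by a rapidly decaying correction, hence equals $a_d|x|^{2-d}$ for $d\ge 3$; in $d=2$ its $m$-regularised analogue produces $a_d\log|x|$ plus an additive constant. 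The precise value of $a_d$, and of the constant $k$ in two dimensions (which is where $\gamma$ and $\log 8$ appear), is pinned down by the explicit evaluation in \cite{lawler_intersections_1991}, which I would quote rather than reproduce. For the remainder I would use a standard oscillatory-integral bound: the function $e(\theta):=\omega(\theta)^{-1}-\chi(\theta)|\theta|^{-2}$ is bounded on $[-\pi,\pi]^d$, is $C^\infty$ away from $\theta=0$, and near $\theta=0$ equals a function homogeneous of degree $0$ up to a smoother error. Splitting its inverse transform into the regions $|\theta|<1/|x|$ and $|\theta|>1/|x|$, one bounds the former crudely by its volume, getting $O(|x|^{-d})$; on the latter the symbol is smooth, so integrating by parts $N>d$ times with $\Delta_\theta e^{ix\cdot\theta}=-|x|^2 e^{ix\cdot\theta}$, and using that $2N$ derivatives of a degree-$0$ function are $O(|\theta|^{-2N})$, again gives $O(|x|^{-d})$. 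This yields $G(x)=\bar G(x)+k+O(|x|^{-d})$.

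The derivative statement follows the same scheme applied to the symbol $\big(e^{ie\cdot\theta}-1\big)\omega(\theta)^{-1}$, whose inverse transform is $\partial_e G$ for $e\in\mathcal{E}$. Subtracting the model symbol $\chi(\theta)\big(e^{ie\cdot\theta}-1\big)|\theta|^{-2}$, whose inverse transform equals $\partial_e\bar G(x)$ up to a rapidly decaying correction, leaves the symbol $\big(e^{ie\cdot\theta}-1\big)e(\theta)$, which is $C^\infty$ off the origin and near the origin is the product of an $O(|\theta|)$ factor that vanishes at $0$ with the bounded function $e$; hence it is homogeneous of degree $1$ up to a smoother error. The identical split-and-integrate-by-parts argument then produces decay $O(|x|^{-(d+1)})$, which is \eqref{eq:sharpererror}.

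The main obstacle is bookkeeping rather than conceptual: making the ``conical singularity $\Rightarrow$ power-law Fourier decay'' estimate clean and uniform, in particular controlling the boundary terms generated by the repeated integration by parts at $|\theta|\sim 1/|x|$, handling the passage from $[-\pi,\pi]^d$ to $\mathbb{R}^d$ in the main term, and, in $d=2$, carrying out the $m\downarrow 0$ limit together with the identification of the additive constant $k$. For the last of these I would simply invoke \cite{lawler_intersections_1991} rather than redo the two-dimensional constant computation.
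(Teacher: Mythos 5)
Your proposal is correct and reaches the same conclusion, but by a genuinely different route. The paper proves the lemma by citation to Theorem 4.3.1, 4.4.4 and Corollaries 4.3.3, 4.4.5 of \cite{lawler_random_2010} (together with a remark there giving the sharper $O(|x|^{-(d+1)})$ error for $\nabla G$); those references derive the Green's function asymptotics from the local central limit theorem, summing transition probabilities. You instead work directly from the Fourier representation of $G$, split off the model symbol $\chi(\theta)|\theta|^{-2}$ whose inverse transform gives $\bar G$ up to a rapidly decaying correction, and control the remainder $e(\theta)=\omega(\theta)^{-1}-\chi(\theta)|\theta|^{-2}$, which near the origin is bounded with a degree-zero homogeneous leading singularity, via the split into $|\theta|<1/|x|$ and $|\theta|>1/|x|$ together with repeated integration by parts against $e^{ix\cdot\theta}$. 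The same mechanism applied to the symbol $(e^{ie\cdot\theta}-1)e(\theta)$, whose singularity is one degree milder, gives the improved $O(|x|^{-(d+1)})$ bound for $\partial_e G$. This buys a more self-contained argument that transparently explains the exact exponents; the cost is more explicit bookkeeping of boundary terms in the integration by parts and of the Taylor expansion of $\omega^{-1}$ at the origin, and, like the paper, you still outsource the identification of the two-dimensional constant $k$ to \cite{lawler_intersections_1991}.
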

\begin{proof}
See \cite{lawler_random_2010} Theorem 4.3.1, 4.4.4, Corollary 4.3.3,
4.4.5. The only difference here is a sharper estimate of the error
term for $\nabla G$, which is remarked after those corollaries and
thus the proof is omitted.\end{proof}
\begin{lem}
\label{lem:sum_periods}
Let $d\geq2$. For all $e\in\mathcal{E}$,
$x\in\Lambda$ where $\Lambda$ is the torus defined in subsection
\ref{sub:Definition-of-model} and $m\geq0$,
\begin{equation}
\bigg|\sum_{y\in\mathbb{Z}^{d}\backslash\{0\}}\partial_{e}G_{m}(x+L^{N}y)\bigg|\leq c_{d}L^{-(d-1)N}
\end{equation}
where $c_{d}$ only depends on $d$.\end{lem}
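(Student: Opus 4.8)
The plan is to pass to a heat‑kernel representation, which exposes the decay that is invisible to a term‑by‑term estimate (indeed, when $m=0$ the series is only conditionally convergent). It suffices to prove the bound for $m>0$ with a constant independent of $m$; the case $m=0$ then follows by letting $m\downarrow 0$, the $m=0$ sum being understood as (and equal to) this limit. For $m>0$ write $G_m(z)=\int_0^\infty e^{-m^2 t}q_t(z)\,dt$, where $q_t=e^{t\Delta}(0,\cdot)$ is the continuous‑time simple‑random‑walk heat kernel on $\mathbb Z^d$ (for $d=2$ the additive normalization constant in the definition of $G_m$ is annihilated by $\partial_e$ and may be ignored). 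Since $q_t$ decays exponentially in $|z|$ for $|z|\gg t$, Fubini applies and
\[
\sum_{y\in\mathbb Z^d\setminus\{0\}}\partial_e G_m(x+L^Ny)
=\int_0^\infty e^{-m^2 t}\bigl(\partial_e q^{\Lambda}_t(x)-\partial_e q_t(x)\bigr)\,dt ,
\]
where $q^{\Lambda}_t$ is the heat kernel of $e^{t\Delta}$ on the torus $\Lambda$ and we used $\sum_{y}q_t(x+L^Ny)=q^{\Lambda}_t(x)$. Throughout we bound $e^{-m^2 t}\le 1$, so every estimate below is uniform in $m\ge 0$.

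I would split the integral at $t=(L^N)^2$. For $t\le (L^N)^2$ use $\partial_e q^{\Lambda}_t(x)-\partial_e q_t(x)=\sum_{y\neq 0}\partial_e q_t(x+L^Ny)$ together with the standard sub‑Gaussian gradient bound $|\partial_e q_t(z)|\le C(t+1)^{-(d+1)/2}e^{-c|z|^2/(t+1)}$ (see e.g.\ \cite{delmotte_parabolic_1999}) and the elementary geometric fact that $|x+L^Ny|\ge \tfrac{1}{2}L^N|y|_\infty$ for $x\in\Lambda$ and $y\neq 0$. Summing the Gaussian over $y$ produces a factor $\le C_d\,e^{-c'(L^N)^2/(t+1)}$, and after the substitution $t+1=(L^N)^2 s$ the resulting integral is bounded by $(L^N)^{1-d}\int_0^{2}s^{-(d+1)/2}e^{-c'/s}\,ds=O(L^{-(d-1)N})$ (the integral converges because $e^{-c'/s}$ beats the singularity at $s=0$).

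For $t\ge (L^N)^2$ I would estimate the two terms separately. For $\partial_e q_t(x)$ the pointwise bound $|\partial_e q_t(x)|\le C(t+1)^{-(d+1)/2}$ and $\int_{(L^N)^2}^\infty (t+1)^{-(d+1)/2}\,dt=O(L^{-(d-1)N})$ suffice (here $d\ge 2$ is exactly what makes the exponent $<-1$). For $\partial_e q^{\Lambda}_t(x)$ use the spectral representation $\partial_e q^{\Lambda}_t(x)=|\Lambda|^{-1}\sum_{k\neq 0}(e^{i\theta_k\cdot e}-1)e^{i\theta_k\cdot x}e^{-\lambda_k t}$ with $|e^{i\theta_k\cdot e}-1|\le|\theta_k|$, $\lambda_k\ge c|\theta_k|^2$, and $|\theta_k|\ge 2\pi/L^N$: peeling off the $O_d(1)$ lowest modes gives a term $\le C_d(L^N)^{-d-1}e^{-ct/(L^N)^2}$, while a Riemann‑sum comparison (the summand $|\theta_k|e^{-c|\theta_k|^2t}$ is monotone decreasing in $|\theta_k|$ once $|\theta_k|$ exceeds a fixed multiple of $1/L^N$ and $t\ge(L^N)^2$) bounds the remaining modes by $C_d(t+1)^{-(d+1)/2}$; integrating both over $t\ge(L^N)^2$ again yields $O(L^{-(d-1)N})$. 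Adding the three contributions proves the claim.

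The main obstacle is the large‑time estimate of $\partial_e q^{\Lambda}_t(x)$: one must extract decay at rate $(L^N)^{-2}$ — the spectral gap of $-\Delta$ on $\Lambda$ — with constants independent of $N$ and $m$, and it is the extra power $(L^N)^{-d-1}$ coming from the lowest modes that upgrades a merely bounded estimate to the required $O(L^{-(d-1)N})$; the short‑time part, the $d=2$ normalization, and the Fubini/limit bookkeeping for $m=0$ are routine by comparison. (One may also observe that $S(x):=\sum_{y\neq 0}\partial_e G_m(x+L^Ny)$ is $(-\Delta+m^2)$‑harmonic on the discrete ball $\{|x|_\infty\le L^N-2\}$ and anti‑symmetric about $-e/2$, but converting this structure into a quantitative bound still requires the cancellation supplied by the heat‑kernel splitting above.)
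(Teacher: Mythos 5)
Your argument is correct and takes a genuinely different route from the paper's. The paper works directly at the level of the massless Green's function: it invokes the sharp asymptotics $\partial_e G(x)=\partial_e\bar G(x)+O(|x|^{-(d+1)})$ from Lemma~\ref{lem:GreenZd_decay}, sums the $O(|x|^{-(d+1)})$ error absolutely, and then Taylor-expands $\bar G(x+e_1+L^Ny)-\bar G(x+L^Ny)$ in $x$ to second order about $L^Ny$, killing the first three terms by parity/symmetry of $\bar G$ and bounding the Taylor remainder by $O(L^{2N}\sup|D^3\bar G|)\sim L^{2N}(L^N|y|)^{-(d+1)}$, whose sum over $y\neq 0$ is $O(L^{-(d-1)N})$. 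You instead pass to the heat-kernel representation, use the periodization identity $\sum_y q_t(x+L^Ny)=q^\Lambda_t(x)$, and split the time integral at $t=(L^N)^2$: sub-Gaussian gradient bounds handle short times, and a spectral estimate on the torus (lowest modes versus a Riemann-sum tail) handles long times, all uniformly in the mass. What your approach buys: you never need the refined $O(|x|^{-(d+1)})$ error bound on $\partial_e G$ (which the paper has to borrow from Lawler--Limic as a ``remark after the corollaries''), and you treat $m>0$ uniformly and obtain $m=0$ by a clean limit, whereas the paper's reduction ``it's enough to show the proof for $m=0$'' is left without justification. What the paper's approach buys: it is shorter given that Lemma~\ref{lem:GreenZd_decay} is needed elsewhere in the paper anyway, and it makes the cancellation mechanism for the conditionally convergent massless sum completely explicit, rather than hiding it inside the heat-kernel splitting. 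One small caveat in your write-up: the phrase ``the $m=0$ sum being understood as (and equal to) this limit'' is a stipulation rather than a proof that the limit agrees with whatever regularized sum the paper has in mind; since the paper is equally informal on this point, this is a fair convention, but it is worth stating as such. Also, the justification ``Fubini applies'' deserves a line: for $m>0$ the double sum-integral is absolutely convergent because $e^{-m^2t}$ kills large $t$ and the Gaussian factor $e^{-c(L^N|y|_\infty)^2/(t+1)}$ kills small $t$ and large $|y|$; you gesture at this but do not spell it out.
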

\begin{rem}
Note that the left hand side is not absolutely summable uniformly
in $m>0$.
\end{rem}

\begin{proof}
It's enough to show the proof for $m=0$. Denote $D_{\mu}$ to be
the smooth derivative. Without loss of generality assume $e=e_{1}$.
The term $O(|x|^{-(d+1)})$ in (\ref{eq:sharpererror}) is summable:
\begin{equation}
\bigg|\sum_{y\in\mathbb{Z}^{d}\backslash\{0\}}O(|x+L^{N}y|^{-(d+1)})\bigg|=O(L^{-(d+1)N}) \;.
\end{equation}
Up to this term, $\partial_{e_{1}}G(x+L^{N}y)$ is equal to 
\begin{equation}
\begin{aligned} & \bar{G}(x+e_{1}+yL^{N})-\bar{G}(x+yL^{N})\\
= & \Big(\bar{G}(yL^{N})+(x+e_{1})\cdot D\bar{G}(yL^{N})+\frac{1}{2}(x+e_{1})^{2}\cdot D^{2}\bar{G}(yL^{N})+Err\Big)\\
 & -\Big(\bar{G}(yL^{N})+x\cdot D\bar{G}(yL^{N})+\frac{1}{2}x^{2}\cdot D^{2}\bar{G}(yL^{N})+Err\Big)\\
= & D_{1}\bar{G}(yL^{N})+(x\cdot DD_{1}\bar{G}(yL^{N})+\frac{1}{2}D_{1}^{2}\bar{G}(yL^{N}))+Err
\end{aligned}
\end{equation}
where 
\[
Err = O(L^{2N}\sup\left|D^{3}\bar{G}\right|)
\]
which comes from Taylor remainder theorem. It's a straightforward
calculation to see that the summation over $y\neq0$ of the first
three terms is zero due to cancellations. The summation over $y\neq0$
of the  term $Err$ gives $O(L^{-(d-1)N})$.
\end{proof}

\begin{cor}
\label{cor:Cdecay}
Let $d\geq2$ and $C_{m}$ be the Green's function
of $-\Delta+m^{2}$ on the torus $\Lambda$. For all $e\in\mathcal{E}$,
$x\in\Lambda$ and $m\geq0$,
\begin{equation}
\left|\partial_{e}C_{m}(x)\right|\leq c_{d} \Vert x\Vert^{-(d-1)}_\Lambda
\end{equation}
where the constant $c_{d}$ only depends on $d$,
and $\Vert x\Vert_\Lambda = d(0,x)$ with $d(-,-)$ being the 
distance function on $\Lambda$ defined in Subsection
\ref{sub:Conventions-about-notations}.
\end{cor}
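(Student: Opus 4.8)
The plan is to realize $C_m$ as the periodization over $L^N\mathbb Z^d$ of the whole-space Green's function $G_m$, and to peel off the ``diagonal'' copy. Given a torus point, let $x\in\mathbb Z^d$ be its representative with all coordinates in $[-L^N/2,L^N/2]$, so that $\|x\|_\Lambda=\sum_i|x_i|$ is the $\ell^1$-norm of $x$ and hence $\|x\|_\Lambda\le\sqrt d\,|x|$. For $m>0$ the exponential decay of $G_m$ makes $C_m(x)=\sum_{y\in\mathbb Z^d}G_m(x+L^Ny)$ absolutely convergent, and it solves $(-\Delta+m^2)C_m=\delta$ on the torus; for $m=0$ the identity persists in $d\ge3$, while in $d=2$ one keeps $m>0$ throughout and passes to the limit at the end, which is legitimate because $\lim_{m\to0}G_m$ exists by Lemma~\ref{lem:GreenZd_decay} and the paper's statements are taken in the $m\to0$ limit anyway. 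Applying the finite difference $\partial_e$ term by term gives $\partial_eC_m(x)=\partial_eG_m(x)+\sum_{y\neq0}\partial_eG_m(x+L^Ny)$, so it suffices to bound each piece by $O(\|x\|_\Lambda^{-(d-1)})$, uniformly in $m\ge0$.

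The periodic tail is exactly the quantity estimated in Lemma~\ref{lem:sum_periods}, which yields $\big|\sum_{y\neq0}\partial_eG_m(x+L^Ny)\big|\le c_dL^{-(d-1)N}$ uniformly in $m\ge0$. Since the torus has $\ell^1$-diameter $\le\tfrac d2 L^N$, we have $\|x\|_\Lambda^{d-1}\le(\tfrac d2)^{d-1}L^{(d-1)N}$, hence $c_dL^{-(d-1)N}\le c_d'\|x\|_\Lambda^{-(d-1)}$ (vacuous when $x=0$). This reduces the corollary to the single-copy bound $|\partial_eG_m(x)|\le c_d|x|^{-(d-1)}$ for $x\neq0$, uniform in $m\ge0$, which then converts to $\|x\|_\Lambda$ at the cost of a factor $d^{(d-1)/2}$ via $\|x\|_\Lambda\le\sqrt d\,|x|$.

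For the single-copy bound the delicate point is uniformity in $m$ down to $m=0$: a fixed $m>0$ gives only $m$-dependent exponential decay, and Lemma~\ref{lem:GreenZd_decay} only addresses $m=0$. I would use the continuous-time random-walk representation $G_m(x)=\int_0^\infty e^{-m^2t}p_t(0,x)\,dt$, which in $d\ge2$ is valid for the difference $\partial_eG_m(x)$ even when $d=2$, so that $|\partial_eG_m(x)|\le\int_0^\infty|\partial_ep_t(0,x)|\,dt$ with the right side independent of $m$; standard discrete Gaussian heat-kernel gradient estimates then give $\int_0^\infty|\partial_ep_t(0,x)|\,dt\le c_d|x|^{-(d-1)}$. (For $m=0$ and $|x|$ large one may alternatively read this off Lemma~\ref{lem:GreenZd_decay} directly, since $\partial_e\bar G(x)=O(|x|^{-(d-1)})$ and the error is $O(|x|^{-(d+1)})$; the finitely many small $|x|$ carry a crude constant bound that is still $\le c_d|x|^{-(d-1)}$, and $m>0$ is brought back to $m=0$ by the same random-walk domination.) Adding the two contributions gives $|\partial_eC_m(x)|\le c_d\|x\|_\Lambda^{-(d-1)}$, as desired, with $c_d$ depending only on $d$.

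The main obstacle is this uniformity in $m$ together with the $d=2$ regularization; once the heat-kernel representation is in place it is routine, but it is the only step where genuine care is needed. Everything else — the periodization, the termwise differencing, and the passage between Euclidean distance and the torus metric — is bookkeeping, with Lemma~\ref{lem:sum_periods} supplying the real estimate on the periodic part.
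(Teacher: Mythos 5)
Your decomposition — periodize $G_m$ over $L^N\mathbb Z^d$, differentiate term by term, isolate the $y=0$ copy, and feed the $y\ne0$ tail into Lemma~\ref{lem:sum_periods} — is exactly the paper's argument, and the $\ell^1$/$\ell^2$ bookkeeping between $\Vert\cdot\Vert_\Lambda$ and $|\cdot|$ is handled correctly. Where you go beyond the paper is in the single diagonal term $\partial_eG_m(x)$: the paper simply cites Lemma~\ref{lem:GreenZd_decay}, which is proved only for $m=0$, and never spells out why the resulting bound holds uniformly for $m\ge0$. Your heat-kernel (equivalently discrete random-walk) domination, $|\partial_eG_m(x)|\le\int_0^\infty|\partial_ep_t(0,x)|\,dt\le c_d|x|^{-(d-1)}$, is a legitimate way to supply that $m$-uniformity, and it also handles the $d=2$ regularization cleanly since the $e^{-m^2t}$ factor is just dropped. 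So the route is structurally the same, but you have noticed and patched a point the paper leaves implicit; the only caveat is that ``standard discrete Gaussian heat-kernel gradient estimates'' is itself a nontrivial input that should be cited (e.g.\ from Lawler--Limic), on par with the LCLT estimates underlying Lemma~\ref{lem:GreenZd_decay}.
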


Note that the distant function $d(-,-)$ is not to be confused with the dimension $d$.

\begin{proof}
The statement is immediately shown by 
\begin{equation}
\partial_{e}C_{m}(x)=\sum_{y\in\mathbb{Z}^{d}}\partial_{e}G_{m}(x+L^{N}y)
\end{equation}
and Lemma \ref{lem:GreenZd_decay}, \ref{lem:sum_periods}.
\end{proof}

%
%

\section{Estimates\label{sec:Estimates}}

In Section \ref{sec:Norms} we defined norms for functions of the
fields. In the Appendix we give estimates in terms of these norms
of some functions of interest. 
\begin{lem}
\label{lem:estimateV} 
There exists a constant $c>0$ so that if $\sigma/\kappa<c$
and $h^{2}\sigma<c$, then for every $B\in\mathcal{B}_{j}$, $j<N-1$, one has
\begin{equation} \label{eq:Iest}
\Vert e^{-\frac{\sigma}{2}\sum_{x\in B,e}(\partial_{e}P_{B^{+}}\phi(x)+\partial_{e}\xi(x))^{2}}\Vert_{T_{\phi}(\Phi_{j}(B))}\leq2e^{\frac{\kappa}{4}\sum_{B}(\partial P_{B^{+}}\phi)^{2}} \;,
\end{equation}
\begin{equation} \label{eq:tildeIest}
\Vert e^{-\frac{\sigma}{2}\sum_{x\in B,e}(\partial_{e}P_{(\bar{B})^{+}}\phi(x)+\partial_{e}\xi(x))^{2}}\Vert_{T_{\phi}(\Phi_{j}(\dot{\bar{B}},\bar{B}^{\,+}))}\leq2e^{\frac{\kappa}{4}\sum_{B}(\partial P_{(\bar{B})^{+}}\phi)^{2}} \;,
\end{equation}
\begin{equation} \label{eq:I-1est}
\Vert e^{-\frac{\sigma}{2}\sum_{x\in B,e}(\partial_{e}P_{B^{+}}\phi(x)+\partial_{e}\xi(x))^{2}}-1\Vert_{T_{\phi}(\Phi_{j}(B))}\leq4c^{-1}h^{2}|\sigma|e^{\frac{\kappa}{4}\sum_{B}(\partial P_{B^{+}}\phi)^{2}} \;,
\end{equation}
\begin{equation} \label{eq:tildeI-1est}
\Vert e^{-\frac{\sigma}{2}\sum_{x\in B,e}(\partial_{e}P_{(\bar{B})^{+}}\phi(x)+\partial_{e}\xi(x))^{2}}-1\Vert_{T_{\phi}(\Phi_{j}(\dot{\bar{B}},\bar{B}^{\,+}))}\leq4c^{-1}h^{2}e^{\frac{\kappa}{4}\sum_{B}(\partial P_{(\bar{B})^{+}}\phi)^{2}} \;.
\end{equation}
\end{lem}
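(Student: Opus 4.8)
The plan is to reduce all four estimates to a single computation of the $T_\phi(\Phi_j(\cdot))$ norm of an exponential quadratic form, and then use the smallness of $\sigma/\kappa$ and $h^2\sigma$. Write $F(\phi)=e^{-\frac{\sigma}{2}Q(\phi)}$ with $Q(\phi)=\sum_{x\in B,e}(\partial_e P\phi(x)+\partial_e\xi(x))^2$, where $P$ denotes either $P_{B^+}$ or $P_{(\bar B)^+}$ according to the case. The first step is to compute the derivatives $F^{(n)}(\phi;(f,\lambda)^{\times n})$ explicitly: each derivative brings down either a factor linear in the test function of the form $-\sigma\sum_{x\in B,e}(\partial_e P\phi(x)+\partial_e\xi(x))(\partial_e Pf(x)+\lambda\partial_e\xi(x))$, or a quadratic factor $-\sigma\sum_{x\in B,e}(\partial_e Pf_i(x)+\lambda_i\partial_e\xi(x))(\partial_e Pf_k(x)+\lambda_k\partial_e\xi(x))$, times $F(\phi)$ itself. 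By definition of the $\Phi_j(B)$-norm, $\|(f,\lambda\xi)\|_{\Phi_j(B)}\le 1$ means $|L^j\partial_e(P f(x)+\lambda\xi(x))|\le h_j$ for $x\in B$, $e\in\mathcal E$; since $|B|\sim L^{dj}$ and $h_j=hL^{-(d-2)j/2}$, each factor of the linear type is bounded (after summing over $x\in B$ and $e$) by $O(|\sigma|)\cdot L^{dj}\cdot L^{-j}h_j\cdot \sup_{x\in B,e}|\partial_e(P\phi(x)+\partial_e\xi(x))|$, and each quadratic factor by $O(|\sigma|)\,L^{dj}\,L^{-2j}h_j^2=O(h^2|\sigma|)$.

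The second step is to absorb the "field-dependent'' linear factors into the regulator $e^{\frac{\kappa}{4}\sum_B(\partial P\phi)^2}$. Here one uses the elementary bound $|\sigma|\,|a|\,|b|\le \frac{\kappa}{8}a^2 + \frac{2\sigma^2}{\kappa}b^2$ with $a=\partial_e(P\phi(x)+\xi(x))$ and $b$ a test-function quantity; the $a^2$ part, summed over $x\in B$ and $e$, contributes $\frac{\kappa}{4}\sum_B(\partial(P\phi+\xi))^2$, which by $(a+b)^2\le 2a^2+2b^2$ and \eqref{eq:smallness_xi} (giving $\sum_B(\partial\xi)^2=O(L^{dj}h^2 L^{-(d-2)j-2j})=O(h^2)\cdot o(1)$ when $j<N-1$) is dominated by $\frac{\kappa}{4}\sum_B(\partial P\phi)^2$ plus an $O(1)$ constant that can be made $\le\log 2$; the $b^2$ part is bounded by $O(\sigma^2/\kappa)\cdot h_j^2 L^{-2j}L^{dj}=O(\sigma h^2)\cdot O(\sigma/\kappa)$, a small constant. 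Thus $\frac{1}{n!}\|F^{(n)}\|_{T^n_\phi}\le (\text{small})^{\max(n,1)}e^{\frac{\kappa}{4}\sum_B(\partial P\phi)^2}$ up to the constant factor, and summing $n=0,\dots,4$ with $\sigma/\kappa$ and $h^2\sigma$ small enough gives \eqref{eq:Iest} and \eqref{eq:tildeIest}; the only difference between these two is which Poisson kernel appears and the domain of the test-function space, which changes nothing in the estimate since $\dot{\bar B}\supset B$ and Lemma~\ref{lem:Tphiproperties} is available.

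For \eqref{eq:I-1est} and \eqref{eq:tildeI-1est} one repeats the computation but notes that $F-1$ has vanishing $0$-th order term, so the sum starts at $n=1$; by the Taylor remainder theorem $\|(F-1)(\phi)\|_{T_\phi(\Phi)}\le \sup_{t\in[0,1]}\|F^{(1)}(t\phi)\|_{T^1_{t\phi}(\Phi)}\cdot(1+\cdots)$ in the spirit of Lemma~\ref{lem:3rdder}, and each first derivative already carries an explicit factor $O(h^2|\sigma|)$ (from the quadratic test-function piece) plus the field-linear piece controlled as above; collecting constants yields the stated bound $4c^{-1}h^2|\sigma|$ (the "$c$'' being the threshold constant from the hypotheses $\sigma/\kappa<c$, $h^2\sigma<c$). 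The main obstacle is bookkeeping: one must track carefully that the constant multiplying $\sum_B(\partial P\phi)^2$ in the exponent stays exactly $\frac{\kappa}{4}$ after splitting off the $\xi$-contributions and the Young-inequality remainders, and that the residual additive constants in the exponent are absorbed into the prefactor $2$ (resp.\ into $4c^{-1}h^2|\sigma|$); this is where the hypothesis $j<N-1$ is used, guaranteeing $\|\partial\xi\|_{L^\infty}$ on $B$ is small enough at scale $j$ that the $\xi$ terms do not eat into the $\frac{\kappa}{4}$ budget.
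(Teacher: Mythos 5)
The strategy for \eqref{eq:Iest} and \eqref{eq:tildeIest} is essentially the paper's: compute the derivatives of $e^{\sigma V}$ (they vanish for order $\ge 3$ in $(\sigma V)$), use Young's inequality to split the field-dependent pieces into a small multiple of $\sum_B(\partial P\phi)^2$ plus an $O(\sigma h^2)$ additive constant, and absorb the former into the regulator budget $\frac{\kappa}{4}$ and the latter into the factor~$2$. Your bookkeeping differs in detail from the paper's (the paper bounds $|(\sigma V)^{(n)}|\le \frac{\kappa}{2^{n+4}}\sum_B(\partial P_{B^+}\phi)^2+2\sigma h^2$ and then uses $e^{|\sigma V|}e^{|\sigma V^{(1)}|+|\sigma V^{(2)}|}$), and your intermediate display $(a+b)^2\le 2a^2+2b^2$ applied to a coefficient $\frac{\kappa}{4}$ would actually overshoot to $\frac{\kappa}{2}\sum(\partial P\phi)^2$, so you should track the Young coefficients so that the total $\phi$-dependent budget across all derivative orders stays under $\frac{\kappa}{4}$ --- but this is a fixable constant issue.

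The real gap is in \eqref{eq:I-1est} and \eqref{eq:tildeI-1est}. Your opening claim that ``$F-1$ has vanishing $0$-th order term, so the sum starts at $n=1$'' is false: the $n=0$ term in $\|F-1\|_{T_\phi(\Phi)}$ is $|F(\phi)-1|$, which does not vanish for generic $\phi$ (and $F(0)=e^{-\frac{\sigma}{2}\sum(\partial\xi)^2}\ne 1$ either). The Taylor-remainder-in-$\phi$ device in the spirit of Lemma~\ref{lem:3rdder} produces smallness when $\phi$ is near zero, not the global multiplicative prefactor $h^2|\sigma|$ that \eqref{eq:I-1est} requires. Likewise, ``each first derivative already carries an explicit factor $O(h^2|\sigma|)$'' is not right: $F^{(1)}(\phi;f)=-\sigma\sum(\partial_e(P\phi+\xi))(\partial_e(Pf+\lambda\xi))\,F(\phi)$ involves only one test function and hence one power of $h_j$, and a field-linear factor $\partial(P\phi+\xi)$ which, after Young, becomes a $\kappa\sum(\partial P\phi)^2$ piece plus an \emph{additive} $O(\sigma h^2)$ --- not a multiplicative $O(\sigma h^2)$ prefactor. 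To turn the smallness of $\sigma$ into a uniform multiplicative factor across all derivative orders, the paper instead uses analyticity in $\sigma$ and a Cauchy integral on the circle $|z|=ch^{-2}$: this keeps the $e^{zV}$ factor under the already-established bound $2e^{\frac{\kappa}{4}\sum(\partial P\phi)^2}$ while the kernel $\frac{\sigma}{z(z-\sigma)}$ delivers the factor $4c^{-1}h^2|\sigma|$ cleanly. That analytic device (or some equivalent way of expanding in $\sigma$ rather than in $\phi$) is the missing ingredient in your proposal.
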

\begin{rem}
Note that the prefactors on the exponentials on the right hand sides
are always $\frac{\kappa}{4}$, whereas the prefactor in our regulator
defined in Section \ref{sec:Norms} is $\frac{\kappa}{2}$.\end{rem}
\begin{proof}
To prove (\ref{eq:Iest}), let
\[
V=-\frac{1}{2}\sum_{x\in B,e}(\partial_{e}P_{B^{+}}\phi(x)+\partial_{e}\xi(x))^{2}
\]
and let $\Vert(f,\lambda\xi)^{\times n}\Vert_{\Phi_{j}(B)}\leq1$.
By $|\partial\xi|^{2}\leq h^{2}L^{-dN}$ it is straightforward to check
that if $\sigma/\kappa$ is sufficiently small, for $n=0,1,2$, 
\begin{equation}
\left|(\sigma V)^{(n)}(\phi,\xi;(f,\lambda\xi)^{\times n})\right|\leq\frac{\kappa}{2^{n+4}}\sum_{x\in B,e}(\partial_{e}P_{B^{+}}\phi(x))^{2}+2\sigma h^{2}
\end{equation}
and for $n\geq3$, $V^{(n)}=0$. Therefore for $n=0,\dots,4$,
\[
\begin{aligned}
\frac{1}{n!}\big|\big(e^{\sigma V}\big)^{(n)} & (\phi,\xi;(f,\lambda\xi)^{\times n})\big|
\leq e^{|\sigma V|} e^{|\sigma V^{(1)}|+|\sigma V^{(2)}|}\\
 & \leq e^{\frac{\kappa}{4}\sum_{x\in B,e}(\partial_{e}P_{B^{+}}\phi(x))^{2}+8\sigma h^{2}}\\
 & \leq2\, e^{\frac{\kappa}{4}\sum_{x\in B,e}(\partial_{e}P_{B^{+}}\phi(x))^{2}}
\end{aligned}
\]
if $h^{2}\sigma$ is sufficiently small, where we bounded the polynomials
in $(\sigma V)^{(n)}$ by $e^{|\sigma V^{(1)}|+|\sigma V^{(2)}|}$.
So (\ref{eq:Iest}) is proved. (\ref{eq:tildeIest}) is proved in
the same way.

To prove (\ref{eq:I-1est}), note that similarly as above one can
show that $e^{\sigma V}$ is analytic in $\sigma$, so 
\[
\begin{aligned}
\Vert e^{\sigma V}-1\Vert_{T_{\phi}(\Phi_{j}(B))}
&=\Big\Vert
	\frac{1}{2\pi i}\int_{|z|=ch^{-2}}\frac{\sigma e^{zV}}{z(z-\sigma)}dz
\Big\Vert_{T_{\phi}(\Phi_{j}(B))} \\
&\leq
4c^{-1}h^{2}|\sigma|e^{\frac{\kappa}{4}\sum_{B}(\partial P_{B^{+}}\phi)^{2}}
\end{aligned}
\]
and (\ref{eq:tildeI-1est}) is proved in the same way.
\end{proof}
Another example is the estimate of the initial interaction. At step
$j=0$ a block $B$ is a single lattice point $x$. Define 
\[
\tilde{W}(\{x\},\phi,u)
	=\frac{1}{2}\sum_{e\in\mathcal{E}}
	\cos(u\,\partial_{e}\phi(x))  \;.
\]
We also write $W(\{x\},\phi)=\tilde{W}(\{x\},\phi,\sqrt{\beta(1+\sigma)})$.
Recall that $\Vert-\Vert_{0}$ is the $\Vert-\Vert_{j}$ norm defined in \ref{eq:def_normKXj}  with $j=0$.

\begin{lem}
\label{lem:estimateK0}
If $\kappa\geq h^{-1}$, then
1) $\tilde{W}(\{x\},\phi,u)$ satisfies
\begin{equation}\label{eq:Westi}
\sum_{n=0}^{3}\frac{1}{n!}\sup_{|\partial f(x)|\leq h}
\partial_{t_{1}\dots t_{n}}\big|_{t_{i}=0}
\Big|
\partial_{u}^{m}W(\{x\},\phi+\sum_{i=1}^{n}t_{i}f_{i})
	\Big|
\leq C_{h,u} e^{\frac{\kappa}{2}\sum_{e\in\mathcal{E}}(\partial_{e}\phi(x))^{2}}
\end{equation}
for $m=0,1,2,\dots$, where $C_{h,u}=d(2h)^{m}e^{hu}$.

2) Let $\Vert-\Vert_{00}$ be the $\Vert-\Vert_{0}$ norm with $G=1$.
For $|z|$ sufficiently small, 
\begin{equation} \label{eq:eWesti} 
\Vert e^{zW(\{x\})}\Vert_{00}\leq 2 \;.
\end{equation}
\end{lem}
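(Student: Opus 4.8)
The final statement to prove is Lemma~\ref{lem:estimateK0}, which gives two bounds: a derivative estimate \eqref{eq:Westi} for the cosine interaction $\tilde W(\{x\},\phi,u)$ and its $u$-derivatives, and the bound \eqref{eq:eWesti} on $\Vert e^{zW(\{x\})}\Vert_{00}$.

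Let me think about how to prove this.

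**Part 1: the estimate \eqref{eq:Westi}.** We have $\tilde W(\{x\},\phi,u) = \frac{1}{2}\sum_{e\in\mathcal E}\cos(u\partial_e\phi(x))$, and we need to bound the sum over $n=0,\ldots,3$ of $\frac{1}{n!}$ times the $T_\phi$-type norm of $\partial_u^m W$ with test functions $f_i$ satisfying $|\partial f(x)|\le h$.

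The key observation: when we differentiate $\cos(u\partial_e\phi(x))$ with respect to $\phi$ in direction $t_i f_i$, each differentiation brings down a factor $u\partial_e f_i(x)$ times $\pm\sin$ or $\pm\cos$ evaluated at $u\partial_e\phi(x)$. Since $|\partial_e f_i(x)|\le h$ and $\sin,\cos$ are bounded by $1$, the $n$-th $\phi$-derivative of $\cos(u\partial_e\phi(x))$ is bounded in absolute value by $(uh)^n$. Similarly, differentiating $m$ times in $u$ brings down factors of $\partial_e\phi(x)$; mixed $u$- and $\phi$-derivatives give products of $\partial_e\phi(x)$ and $\partial_e f_i(x)$ factors together with bounded trig functions. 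So $\partial_u^m$ applied to the $n$-th $\phi$-differential of $\cos(u\partial_e\phi(x))$ is bounded by a sum of terms, each a product of at most $m$ factors $|\partial_e\phi(x)|$, times $h^n$, times $u^{n}\cdot(\text{bounded})$. The worst polynomial growth in $|\partial_e\phi(x)|$ is $|\partial_e\phi(x)|^m$, with combinatorial coefficient and powers of $u$ absorbed into a constant depending on $h,u$. Then we use the elementary bound that any polynomial $p(|t|)$ of degree $m$ with coefficients controlled by $h$ satisfies $|p(|t|)|\le C e^{\frac{\kappa}{2}t^2}$ for a constant $C$ depending on $h,\kappa,m$ (this is exactly the kind of inequality used in \eqref{eq:existsq-1}), and summing over the $2d$ directions $e$ gives the stated bound with $C_{h,u} = d(2h)^m e^{hu}$ — the $(2h)^m$ accounting for the $u$-derivative factors and the $e^{hu}$ being a crude majorant for the combinatorics/Taylor series of the trig functions.

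So the plan for Part 1: fix $e$; expand $\partial_u^m \partial_{t_1\cdots t_n}|_{t_i=0}\cos(u\partial_e(\phi+\sum t_i f_i)(x))$ by the chain rule and Faà di Bruno; bound each resulting term by $h^n |\partial_e\phi(x)|^{\le m}$ times a numerical constant; sum the bound over $n=0,\ldots,3$ with the $\tfrac{1}{n!}$ weights (finitely many terms); apply the polynomial-vs-Gaussian inequality using $\kappa\ge h^{-1}$; sum over $e\in\mathcal E$.

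**Part 2: the estimate \eqref{eq:eWesti}.** Here $\Vert-\Vert_{00}$ is the $\Vert-\Vert_0$ norm with regulator $G=1$, so it is just $\sup_\phi \Vert e^{zW(\{x\})}\Vert_{T_\phi(\Phi_0(\{x\}))}$ without dividing by any regulator. We want this $\le 2$ for $|z|$ small.

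The idea is that $W(\{x\},\phi)$ is bounded: $|W(\{x\},\phi)| = |\frac12\sum_e \cos(\cdots)|\le d$, uniformly in $\phi$. So $e^{zW}$ is a bounded function of $\phi$, analytic in $z$. To control the derivatives in the $T_\phi$ norm, I would use the Cauchy integral representation in $z$, exactly as in the proof of Lemma~\ref{lem:estimateV}: write $e^{zW} = \frac{1}{2\pi i}\oint_{|w|=\rho} \frac{e^{wW}}{w-z}\,dw$ for a suitable fixed radius $\rho$, so that the $T_\phi$-norm of $e^{zW}$ is bounded by $\frac{1}{\rho-|z|}\sup_{|w|=\rho}\Vert e^{wW}\Vert_{T_\phi(\Phi_0)}$. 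Then I need an a priori bound $\Vert e^{wW(\{x\})}\Vert_{T_\phi(\Phi_0)}\le C(\rho)$ uniform in $\phi$ and $|w|=\rho$; this follows from Part 1 applied with $u=\sqrt{\beta(1+\sigma)}$ (which controls the $\phi$-derivatives of $W$ itself up to order $4$ by $C_{h,u}e^{\frac{\kappa}{2}\sum_e(\partial_e\phi(x))^2}$) together with the fact that $G=1$ means... wait, actually with $G=1$ there is no regulator to absorb the $e^{\frac{\kappa}{2}(\partial\phi)^2}$ factor, so one has to be a bit more careful. Let me reconsider: since $|W|\le d$ is bounded, $e^{wW}$ is bounded by $e^{\rho d}$ pointwise, and its first few $\phi$-differentials involve $W^{(k)}$ for $k\le 4$; but $W^{(k)}(\phi;f^{\times k}) = \frac12\sum_e (\text{trig})\cdot u^k \prod\partial_e f_i(x)$ which is bounded by $\frac12\sum_e u^k h^k$ with no $\phi$-dependence at all, since each $\phi$-differential of $\cos(u\partial_e\phi(x))$ is $\pm u^k\sin/\cos(\cdots)\prod\partial_e f_i(x)$. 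That is the crucial point: unlike the quadratic interaction $V$ in Lemma~\ref{lem:estimateV}, the $\phi$-differentials of the cosine interaction are uniformly bounded in $\phi$ (the bound in Part 1 is only needed because of the $\partial_u^m$ derivatives, which bring down $\partial_e\phi(x)$ factors — but for $m=0$ there is no $\phi$-growth). Hence $\Vert e^{wW(\{x\})}\Vert_{T_\phi(\Phi_0)}\le$ a constant depending on $\rho, h, u, d$ uniformly in $\phi$, via the same "bound the polynomial in $W$-derivatives by $e^{|wW^{(1)}|+\cdots}$" trick as in Lemma~\ref{lem:estimateV}. Choosing $\rho$ a fixed small constant and then $|z|$ small enough that $\frac{1}{\rho-|z|}C(\rho)\le 2$ finishes the proof; the hypothesis $\kappa\ge h^{-1}$ is not even needed for \eqref{eq:eWesti} but is kept for uniformity with part 1.

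**Main obstacle.** The routine but slightly fiddly point is the Faà-di-Bruno bookkeeping in Part 1 when one takes $m$ mixed $u$-derivatives and up to $4$ mixed $\phi$-differentials of a composition $\cos(u\cdot g(t))$ with $g$ affine in $t$: one must verify that the resulting expression is genuinely a polynomial of degree $\le m$ in $\partial_e\phi(x)$ (degree $\le m$, not higher, because each $u$-derivative that "hits the inner argument" produces one $\partial_e\phi(x)$ and each that "hits an outer trig factor" produces none), with all coefficients bounded by an explicit constant times $h^n$, so that the stated constant $C_{h,u}=d(2h)^m e^{hu}$ is actually achievable. Getting the clean constant $C_{h,u}$ rather than some messier expression requires care but is elementary; the Gaussian-majorization step and the Cauchy-integral step are both standard and already used verbatim elsewhere in the paper.
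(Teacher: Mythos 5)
Your approach is correct and closely parallels the paper's own argument. For Part~1 the paper does exactly what you describe, but with less Faà-di-Bruno machinery: it writes out $\partial_u^m W$ explicitly as $\pm\frac12\sum_e(\text{trig})(u\partial_e\phi(x))\,(\partial_e\phi(x))^m$, then bounds the $\phi$-differentials of the trig factor by the $m=0$ case (uniformly in $\phi$, with no regulator needed) and the $\phi$-differentials of the monomial $(\partial_e\phi(x))^m$ by $(2h)^m e^{\frac{\kappa}{2}\sum_e(\partial_e\phi(x))^2}$ using $\kappa\ge h^{-1}$, and finishes by the Leibniz product rule. This is the same factorization-then-polynomial-vs-Gaussian idea you propose; the explicit computation of $\partial_u^m W$ makes the combinatorics a one-liner rather than a Faà-di-Bruno exercise.

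For Part~2 you take a genuinely different (though still valid) route. The paper's proof is just the power series together with submultiplicativity of the $T_\phi$ norm:
\[
\Vert e^{zW(\{x\})}\Vert_{00}\le\sum_{n=0}^{\infty}\frac{|z|^n}{n!}\Vert W(\{x\})\Vert_{00}^n\le\exp\big(4d|z|e^h\big)\le 2,
\]
using your (correct) observation that $\Vert W\Vert_{00}$ is bounded uniformly in $\phi$ because all $\phi$-differentials of $\cos(u\partial_e\phi(x))$ are bounded. Your Cauchy-integral scheme works too, but it is strictly more work: to bound $\sup_{|w|=\rho}\Vert e^{wW}\Vert_{T_\phi}$ you would again fall back on either the power series or the ``bound the polynomial in $W^{(k)}$ by an exponential'' trick, and one must be a bit careful since, unlike $V$ in Lemma~\ref{lem:estimateV}, $W$ has nonvanishing $\phi$-differentials of all orders (though only finitely many, $n\le 4$, enter the norm). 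The power-series route avoids both issues and also makes manifest that the hypothesis $\kappa\ge h^{-1}$ is not used in Part~2, which your discussion already noticed.
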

\begin{proof}
1) The case $m=0$ holds even without $e^{\frac{\kappa}{2}\sum_{e\in\mathcal{E}}(\partial_{e}\phi(x))^{2}}$
by straightforward computations and thus is omitted. For $m>0$, 
\[
\partial_{u}^{m}W
	=\pm\frac{1}{2}\sum_{e\in\mathcal{E}}\substack{\sin\\
\cos}
(u\partial_{e}\phi(x))\left(\partial_{e}\phi(x)\right)^{m} \;.
\]
We then have the bound
\[
\sum_{n=0}^{4}\frac{1}{n!}\sup_{\left|\partial f(x)\right|\leq h}\partial_{t_{1}\dots t_{n}}\big|_{t_{i}=0}\Big(\partial_{e}(\phi(x)+\sum_{i=1}^{n}t_{i}f_{i}(x))\Big)^{m}
\leq
(2h)^{m}e^{\frac{\kappa}{2}\sum_{e\in\mathcal{E}}(\partial_{e}\phi(x))^{2}} .
\]
The bound for $\partial_{u}^{m}W$ follows by product rule of differentiations
and the case $m=0$.

2) For $|z|$ sufficiently small, 
\[
\Vert e^{zW(B)}\Vert_{00}\leq\sum_{n=0}^{\infty}\frac{|z|^{n}}{n!}\left\Vert W(B)\right\Vert _{00}^{n}\leq\exp\left(4d|z|e^{h}\right)\leq 2 \;.
\]
This is precisely the claimed bound.
\end{proof}

\begin{lem}
\label{lem:to_startRG}
Let $K_0$ be the function defined in Proposition~\ref{prop:firstprop}.
Given $r>0$, if $|z|$ and $|\sigma|$ are
sufficiently small, then $\Vert K_{0}\Vert_0 <r$. Furthuremore, $K_{0}$
is smooth in $z$ and $\sigma$.\end{lem}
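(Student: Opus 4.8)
## Proof Proposal for Lemma~\ref{lem:to_startRG}

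The plan is to unpack the explicit formula for $K_0$ from Proposition~\ref{prop:firstprop} and estimate its $\|-\|_0$ norm factor by factor, using the $j=0$ conventions ($P_{X^+}=\mathrm{id}$, $\dot X = X$, $G_0(X)=e^{\frac{\kappa}{2}\sum_{x\in X,e}(\partial_e\phi(x))^2}$). Recall that $K_0$ factorizes over connected components, and since a $0$-polymer $X$ is connected only if $X=\{x\}$ is a single site, it suffices to bound $\|K_0(\{x\})\|_0$ uniformly in $x$ and then observe that the $A^{|X|_0}$ weight in $\|-\|_0$ together with the product structure over the $|X|$ connected singletons forces $\|K_0\|_0 = \sup_x \|K_0(\{x\})\|_0 \cdot A$ (or, more carefully, the norm on a general polymer is controlled by the product of single-site norms against the regulator, which is exactly how $\|-\|_0$ is built). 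So the whole task reduces to one single-site estimate.

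First I would write $K_0(\{x\},\phi,\xi) = e^{-\frac14\sigma\sum_e(\partial_e\phi(x)+\partial_e\xi(x))^2}\big(e^{zW(\{x\},(\phi+\xi)/\sqrt{\epsilon})}-1\big)$ and treat it as a product of two functions of $\phi$: the Gaussian-type factor and the $(e^{zW}-1)$ factor. For the first factor, this is precisely the object estimated in Lemma~\ref{lem:estimateV}, equation \eqref{eq:I-1est}-type bounds — more directly, the bound \eqref{eq:Iest} gives $\|e^{-\frac{\sigma}{2}\sum_e(\partial_e\phi+\partial_e\xi)^2}\|_{T_\phi(\Phi_0(\{x\}))}\le 2e^{\frac{\kappa}{4}\sum_e(\partial_e\phi(x))^2}$ once $\sigma/\kappa$ and $h^2\sigma$ are small (note the exponent here is $\frac14\sigma$ not $\frac12\sigma$, which only helps). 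For the second factor, I use the product rule for the $T_\phi$ norm (norm of a product $\le$ product of norms, as in Lemma~\ref{lem:normEleqEnorm} or its pointwise precursor) together with Lemma~\ref{lem:estimateK0}: part 2) of that lemma, stated for the $\|-\|_{00}$ norm (i.e.\ with $G=1$), gives $\|e^{zW(\{x\})}\|_{00}\le 2$ for $|z|$ small; but I actually need the version carrying the regulator, which follows from part 1) with $m=0$, giving $\|W(\{x\})\|_{T_\phi(\Phi_0(\{x\}))}\le C e^{\frac{\kappa}{2}\sum_e(\partial_e\phi(x))^2}$ and hence, by summing the Taylor series in $z$, $\|e^{zW(\{x\})}-1\|_{T_\phi(\Phi_0(\{x\}))}\le O(|z|)e^{\frac{\kappa}{2}\sum_e(\partial_e\phi(x))^2}$ for $|z|$ small.

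Multiplying the two bounds: $\|K_0(\{x\},\phi,\xi)\|_{T_\phi(\Phi_0(\{x\}))}\le O(|z|)\, e^{(\frac14\kappa+\frac12\kappa)\sum_e(\partial_e\phi(x))^2}$, and since $\frac34\kappa < \kappa$ this is $\le O(|z|)\, G_0(\{x\})$. Dividing by $G_0(\{x\})$ and taking $\sup_\phi$ gives $\|K_0(\{x\})\|_0 \le O(|z|)$; absorbing the $A^{|X|_0}=A$ factor, $\|K_0\|_0 \le O_A(|z|)$, which is $<r$ once $|z|$ is small enough (depending on $r$ and the fixed constants $L,A,\kappa,h$). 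Smoothness in $z$ and $\sigma$ is immediate because $K_0$ is manifestly real-analytic in both parameters — it is a product of an exponential of a quadratic form (entire in $\sigma$) and $e^{zW}-1$ (entire in $z$) — and the same Taylor-series/Cauchy-estimate argument used in Lemma~\ref{lem:estimateV} shows the series converges in the $\|-\|_0$ norm, so all parameter derivatives are bounded in norm.

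The only mild subtlety — not really an obstacle — is bookkeeping the exponential prefactors: one must check that the $\frac{\kappa}{4}$ from the Gaussian factor plus the $\frac{\kappa}{2}$ from the $W$ factor stays strictly below the $\frac{\kappa}{2}$... wait, it does not: $\frac14+\frac12 = \frac34 > \frac12$. This means one should instead use the sharper constant available because the Gaussian factor in $K_0$ has coefficient $\frac14\sigma$ (half of the $\frac12\sigma$ in Lemma~\ref{lem:estimateV}); rerunning the proof of \eqref{eq:Iest} with this smaller coefficient yields a prefactor $\frac{\kappa}{8}$ (or any small multiple of $\kappa$, since $\sigma$ is being sent to zero), so the total is $(\frac18+\frac12)\kappa$... still $>\frac12\kappa$. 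The clean fix: since $|z|,|\sigma|\to 0$ are \emph{independent} of $\kappa$, simply take the Gaussian-factor prefactor to be an arbitrarily small fraction $\delta_1\kappa$ (possible once $\sigma/\kappa$ is small) and the $W$-factor contribution — which comes with a genuine $\frac{\kappa}{2}$ in Lemma~\ref{lem:estimateK0}(1) but is multiplied by the small factor $O(|z|)$ — can be absorbed because $O(|z|)e^{\frac{\kappa}{2}(\cdots)} \le e^{\frac{\kappa}{2}(\cdots)}$ and we need room: here one uses that $G_0=e^{\frac{\kappa}{2}(\cdots)}$ exactly, so $\|K_0(\{x\})\|_{T_\phi} G_0^{-1} \le O(|z|)e^{(\delta_1\kappa + \frac{\kappa}{2} - \frac{\kappa}{2})(\cdots)} = O(|z|)e^{\delta_1\kappa(\cdots)}$, which is \emph{not} bounded uniformly in $\phi$ unless $\delta_1 = 0$. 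Hence the correct statement of Lemma~\ref{lem:estimateK0}(1) that one must invoke is with prefactor $\frac{\kappa}{2}$ \emph{or smaller}, and one must verify — by redoing that computation with a smaller $\kappa$ in the exponent at the cost of enlarging $C_{h,u}$ — that $\|W(\{x\})\|_{T_\phi(\Phi_0)} \le C\, e^{\frac{\kappa}{4}\sum_e(\partial_e\phi)^2}$; then $\frac14\kappa$ (from $W$, now with room $\frac14\kappa$ to spare against $G_0$'s $\frac12\kappa$) plus $\delta_1\kappa$ (from the Gaussian factor, $\delta_1$ as small as we like) is $\le \frac12\kappa$, and division by $G_0$ leaves a bounded expression. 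This is the one place requiring care; everything else is the routine product-rule and Taylor-series machinery already developed in the paper.
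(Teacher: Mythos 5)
Your overall strategy — reduce to a single-site estimate, factor $K_0(\{x\})$ into the Gaussian factor $e^{-\frac{\sigma}{4}\sum_e(\partial_e\phi+\partial_e\xi)^2}$ and the $(e^{zW}-1)$ factor, bound each separately, and combine by the product rule — is exactly what the paper does. But you take a wrong turn in bounding the second factor, and the difficulty you struggle with in the last paragraph is a symptom of that wrong turn, not a genuine subtlety.

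You correctly note that Lemma~\ref{lem:estimateK0}(2) gives $\Vert e^{zW(\{x\})}\Vert_{00}\le 2$, i.e.\ a bound \emph{without} any regulator — but you then discard it, saying ``I actually need the version carrying the regulator.'' You do not. The whole point of the $\Vert\cdot\Vert_{00}$ norm (the $j=0$ norm with $G\equiv 1$) is that $W(\{x\},\phi)=\frac12\sum_e\cos(u\,\partial_e\phi(x))$ is a sum of cosines, hence bounded pointwise, and its $\phi$-differentials (products of sines/cosines with the bounded test-function derivatives $|\partial f|\le h$) are bounded too. So $\Vert W\Vert_{00}$ is finite with no regulator at all, which is precisely the ``$m=0$'' observation made in the proof of Lemma~\ref{lem:estimateK0}(1). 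From that, $\Vert e^{zW(\{x\})}-1\Vert_{00}\le e^{|z|\Vert W\Vert_{00}}-1\le c|z|$ uniformly in $\phi$. The paper then uses the elementary inequality $\Vert FG\Vert_0\le\Vert F\Vert_{00}\Vert G\Vert_0$ with $F=e^{zW}-1$ and $G=e^{\sigma V_0}$; Lemma~\ref{lem:estimateV} gives $\Vert G\Vert_0\le 2$ precisely because its $\frac{\kappa}{4}$ prefactor sits strictly under the regulator's $\frac{\kappa}{2}$, and there is no competing exponential from the $W$ factor. No bookkeeping of exponential prefactors is required.

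Your proposed fix does not actually repair the problem. If you insist on a regulator-bearing bound $\Vert W\Vert_{T_\phi}\le Ce^{\delta\kappa\sum(\partial\phi)^2}$ for some $\delta>0$, then summing the Taylor series gives $\Vert e^{zW}-1\Vert_{T_\phi}\le e^{|z|Ce^{\delta\kappa\sum(\partial\phi)^2}}-1$, which is \emph{not} of the form (constant)$\cdot e^{(\text{prefactor})\sum(\partial\phi)^2}$ — it is an exponential of an exponential and cannot be compared to $G_0$. Shrinking $\delta$ does not help; you need $\delta=0$, which is exactly the $\Vert\cdot\Vert_{00}$ bound you started with and then abandoned. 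So the ``one place requiring care'' you flag is a self-created obstacle, and the clean resolution is to observe, as the paper does, that $e^{zW}-1$ is regulator-free.

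The smoothness part of your argument (``$K_0$ is manifestly real-analytic'') is in the right spirit but glosses over what the paper actually verifies: that the explicit $\sigma$- and $z$-derivatives of $K_0(X)=\prod_{x\in X}(e^{zW(\{x\})}-1)e^{\sigma V_0(\{x\})}$ are themselves bounded in the $\Vert\cdot\Vert_0$ norm, e.g.\ by writing the $\sigma$-derivative as a sum over $x\in X$ of a bounded term times the remaining product. Asserting analyticity of each factor does not by itself give differentiability of $\sigma\mapsto K_0$ as a map into the Banach space $(\mathcal N^{\mathcal P_{0,c}},\Vert\cdot\Vert_0)$; one must bound the derivative in that norm, which the paper does and you should as well.
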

\begin{proof}
As in the proof of (\ref{eq:eWesti}), one has
\[
\Vert e^{zW(\{x\})}-1\Vert_{00}\leq\exp\left(4d|z|e^{h}\right)-1\leq c|z|
\]
for some constant $c$. 
Write $V_{0}(\{x\})=-\frac{1}{2}\sum_{e}(\partial_{e}\phi(x)+\partial_{e}\xi(x))^{2}$.
By Lemma \ref{lem:estimateV},
\[
\Vert(e^{zW(\{x\})}-1)e^{\sigma V_{0}(\{x\})}\Vert_{0}\leq2c|z|  \;,
\]
therefore
\[
\left\Vert K_{0}\right\Vert _{0}=\sup_{X\in\mathcal{P}_{0,c}}\left\Vert K_{0}(X)\right\Vert _{0}A^{|X|_{0}}\leq\sup_{X\in\mathcal{P}_{0,c}}(2c|z|A)^{|X|_{0}}<r \;.
\]
The derivative of $\prod_{x\in X}(e^{zW(\{x\})}-1)$ w.r.t.
$\sigma$ is equal to
\[
\sum_{x\in X}zW^{\prime}(\{x\})\frac{1}{2\sqrt{1+\sigma}}
\prod_{y\in X\backslash \{x\}}(e^{zW(\{y\})}-1) \;,
\]
therefore its $\Vert-\Vert_{0}$ norm is bounded by $c^{\prime}A|z|$
for some constant $c^{\prime}$. The derivative of $e^{\sigma V_{0}(\{x\})}$
and higher derivatives can be bounded similarly. The derivative of
$\prod_{x\in X}(e^{zW(\{x\})}-1)$ w.r.t. $z$ is equal
to
\[
\sum_{x\in X}W(\{x\})\prod_{y \in X\backslash \{x\}}(e^{zW(\{y\})}-1)
\]
which can be bounded in the same way.
\end{proof}
\bibliographystyle{alpha}
\bibliography{condexp}

\end{document}